\def\SC { \mathscr{C}}
\def\SD { \mathscr{D}}
\def\CL { \mathcal{L}}
\def\CM { \mathcal{M}}
\def\CB {\mathcal{B}}
\def\CP { \mathcal{P}}
\def\CL { \mathcal{L}}
\def\C { \mathcal{C}}
\def\CD {\mathcal{D}}
\def\CS {\mathcal{S}}
\def\CP {\mathcal{P}}
\def \CJ{\mathcal{J}}
\def \CO{\mathcal{O}}
\def \CC{\mathcal{C}}
\def \FP{\mathfrak{q}}  
\def \EPk{\mathfrak{q}} 
\def\th {\theta}
\def\s {\sigma}
\def\n {\nu}
\def\z {\zeta}
\def\lm {\lambda}
\def\J {\mathbf{1}}
\def\N {\mathbb{N}}
\def\E {\mathbb{E}}
\def\RE {\mathbb{R}}
\def\N {\mathbb{N}}
\def\P {\mathbb{P}}
\def \XX{\mathbb{X}}
\def \tm {\tilde \mu}
\newtheorem{definition}{Definition}
\newtheorem{lemma}[definition]{Lemma}
\newtheorem{proposition}{Proposition}
\newtheorem{corollary}[definition]{Corollary}
\newtheorem{remark}{Remark}
\newtheorem{example}{Example}
\begin{document}

\title{\vspace{-50pt} Hierarchical Species Sampling Models\thanks{We thank  for their useful comments on a previous version Federico Camerlenghi and Antonio Lijoi and the participants of the Italian-French Statistics Workshop 2017, Venice.}}

\author{
\hspace{-20pt} 
Federico Bassetti\textsuperscript{a} \hspace{15pt} Roberto Casarin\textsuperscript{b} \hspace{15pt} Luca Rossini\textsuperscript{c} 
        \\ 
        \vspace{5pt}
        \\
        {\centering {\small \textsuperscript{a}Polytechnic University of Milan, Italy \hspace{5pt} \textsuperscript{b}Ca' Foscari University of Venice, Italy }} \\
         {\centering {\small 
        \textsuperscript{c}Free University of Bozen-Bolzano, Italy}}}
\date{\today}
\maketitle
\abstract{
This paper introduces a general class of hierarchical nonparametric prior distributions. The random probability measures are constructed by a hierarchy of generalized species sampling processes with possibly non-diffuse base measures. The proposed framework provides a general probabilistic foundation for hierarchical random measures with either atomic or mixed base measures and allows for studying their properties, such as the distribution of the marginal and total number of clusters. We show that hierarchical species sampling models have a Chinese Restaurants Franchise representation and can be used as prior distributions to undertake Bayesian nonparametric inference. We provide a method to sample from the posterior distribution together with some numerical illustrations. Our class of priors includes some new hierarchical mixture priors such as the hierarchical Gnedin measures, and other well-known prior distributions such as the hierarchical Pitman-Yor and the hierarchical normalized random measures.\\

\textbf{Keywords:} Bayesian Nonparametrics; Generalized species sampling; Hierarchical random measures; Spike-and-Slab
}

\section{Introduction}
Cluster structures in multiple groups of observations can be modelled by means of {\it hierarchical random probability measures} or {\it hierarchical processes} that allow for heterogenous clustering effects across groups and for sharing clusters among the groups. As an effect of the heterogeneity, in these models the number of clusters in each group (marginal number of clusters) can differ, and due to cluster sharing, the number of clusters in the entire sample (total number of clusters) can be smaller than the sum of the marginal number of clusters. An important  example of hierarchical random measure is the  Hierarchical Dirichlet Process (HDP), introduced in  the seminal paper of \cite{TehJordan2006}. The HDP involves a simple Bayesian hierarchy where the common base measure for a set of Dirichlet processes is itself distributed according to a Dirichlet process. This means that the joint law of a vector of random probability measures 
$(p_1,\dots,p_I)$ is
\begin{equation}\label{Init_hdp0}
\begin{split}
 p_{i}    | p_0  &\stackrel{iid}{\sim} DP (\theta_1,p_0), \quad i=1,\dots,I,  \\
  p_{0}    | H_0 &{\sim} DP  (\theta_0,H_0),  \\
 \end{split}
\end{equation}
where $DP (\theta, p)$ denotes the Dirichlet process with base measure 
$p$ and concentration parameter $\theta \in \RE$. Once the joint law of $(p_1,\dots,p_I)$ has been specified,  observations $[\xi_{i,j}]_{i=1,\dots,I; j \geq 1}$ 
are assumed to be conditionally independent given $(p_1,\dots,p_I)$ with 
\[
\xi_{i,j} | (p_1,\dots,p_I)  \stackrel{ind}{\sim} p_i,  \,\,i=1,\dots,I \mbox{ and } j \geq 1.
\]
If the observations take values in a Polish space, then this is equivalent to  partial exchangeability of the array $[\xi_{i,j}]_{i=1,\dots,I; j \geq 1}$  (de Finetti's representation Theorem). Hierarchical processes are widely used as prior distributions in Bayesian nonparametric inference (see \cite{Teh10} and reference therein), by assuming $\xi_{i,j}$ are hidden variables describing the clustering structures of the data and the observations in the $i$-th group, $Y_{i,j}$, are conditionally independent given $\xi_{i,j}$ with 
\[ 
Y_{i,j}| \xi_{i,j}   \stackrel{ind}{\sim} f(\cdot|\xi_{i,j}),
\]
where $f$ is a suitable kernel density.

In this paper we provide a new class of hierarchical random probability measures constructed by a hierarchy of generalized species sampling sequences, and call them Hierarchical Species Sampling Models (HSSM). A species sampling sequence is an exchangeable sequence whose directing measure is a discrete random probability 
\begin{equation}
p=\sum_{j \geq 1} \delta_{Z_j} q_j, \label{Def_P}
\end{equation} 
where $(Z_j)_{j \geq 1}$ and $(q_j)_{j \geq 1}$ are stochastically independent, $Z_j$ are i.i.d. with common distribution $H_0$ and the non-negative weights $q_j\geq 0$ sum to one almost surely. By Kingman's theory on exchangeable partitions,  any random sequence of positive weights such that $\sum_{j\geq 1}  q_{j} \leq 1$ can be associated to an exchangeable random partition of integers $(\Pi_n)_{n \geq 1}$.  Moreover, the law of an exchangeable random partition  $(\Pi_n)_{n \geq 1}$ is completely described by the so-called exchangeable probability partition function (EPPF). Hence, the law of the above defined random probability measure $p$,  turns out to be parametrized by  an EPPF  $\FP$  and  by a base measure $H_0$, which
 is diffuse in species sampling sequences, and possibly non-diffuse in our generalized species sampling construction.

A vector of random measures in the HSSM class can be described as follows. Denote by $SSrp(\FP,H_0)$ the law of the random probability measure  $p$ defined in \eqref{Def_P} and let $(p_0,p_1,\dots,p_I)$ be a vector of random probably measures such that
\begin{equation}\label{HSSMintro}
\begin{split}
 p_{i}    | p_0  &\stackrel{iid}{\sim} SSrp( \FP, p_0), \qquad i=1,\dots,I,  \\
  p_{0}     &{\sim} SSrp(\FP_0,H_0), \\
 \end{split}
\end{equation}
where $H_0$ is a base measure and $\FP_0$ and $\FP$ are two EPPFs.  A HSSM is an array of observations 
 $[\xi_{i,j}]_{i=1,\dots,I,j\geq 1 }$ conditionally independent given $(p_1,\dots,p_I)$ with 
$\xi_{i,j} | (p_1,\dots,p_I)  \stackrel{ind}{\sim} p_i$, $ i=1,\dots,I$  and  $j \geq 1$. 

The proposed framework is general enough to  provide a probabilistic foundation of both existing and novel hierarchical random measures, also allowing for non diffuse base measures. Our HSSM class includes the HDP, its generalization given by the Hierarchical Pitman--Yor process (HPYP), see \cite{Teh2006,Buntine10,Buntine16} and the hierarchical normalized  random measures with independent increments (HNRMI), recently studied  in \cite{Cam18}. Among the novel measures, we study the hierarchical generalization of the Gnedin process (\cite{Gnedin10}) and of finite mixtures (e.g., \cite{MillerHarrison}) and the asymmetric hierarchical constructions with $p_0$ and $p_i$ of different type (\cite{Buntine10,Buntine14}). 
Also, we consider new HSSM with non-diffuse base measure in the spike-and-slab class of prior introduced by \cite{George1993} and now widely studied in Bayesian parametric (e.g., \cite{Castillo2015}, \cite{George2017} and \cite{Rockova2018}) and nonparametric (e.g., \cite{Canale2017}) inference. Finally, note that non-diffuse base measures are also used in other random measures models (e.g., see \cite{Prunster13}), although these models are not in our class, the hierarchical species sampling specification can be used to generalize them.

By exploiting the properties of hierarchical species sampling sequences, we are able to provide the finite sample distribution of the number of clusters for each group of observations and the total number of clusters. Moreover, we provide some new asymptotic results when the number of observations goes to infinity, thus extending the asymptotic approximations for species sampling given in \cite{Pit06}) and for hierarchical normalized random measures given in \cite{Cam18}.

We show that the measures in the proposed class have a Chinese Restaurant Franchise representation, that is appealing for the applications to Bayesian nonparametrics, since it sheds light on the clustering mechanism of the processes and suggests a simple sampling algorithm for posterior computations whenever the EPPFs  $\FP_0$ and $\FP$ are known explicitly. In the Chinese Restaurant Franchise metaphor, observations are attributed to ``customers'', identified by the indexes $(i,j)$,  $i=1,\dots,I$ denote the  restaurants (groups) and the customers are clustered according to ``tables''. Hence, the first step of the clustering process  (from now on, bottom level) is the {\it restaurant-specific  sitting plan}. 
Tables are  then clustered, in an higher level of the hierarchy (top level), by means of ``dishes'' served to the tables. 
In a nutshell, observations driven by HSSM can be described as follows:
\begin{equation}\label{h0bis}
\begin{split}
\xi_{i,j}=\phi_{d_{i,c_{i,j}}},\quad \phi_n   \stackrel{iid}{\sim} H_0, \quad [d_{i,c},c_{i,j}] \sim Q,
\end{split}
\end{equation}
where
$c_{i,j}$ is the (random) table at which the $j$-th ``customer'' of ``restaurant'' $i$ sits,  $d_{i,c}$ is the  (random) index of the ``dish'' served at table $c$ in restaurant $i$ and the $\phi_n$ are the "dishes" drawn from the base probability measure $H_0$. The distribution of the observation process is completely described once the law $Q$ of the process $[d_{i,c},c_{i,j}]_{i,j}$ is specified. 
We will see that the process $[d_{i,c},c_{i,j}]_{i,j}$ plays a role similar to the one of the  random partition  
associated with  exchangeable species sampling sequences.

The paper is organized as follows. Section \ref{Sec_Back} reviews exchangeable random partitions, generalized species sampling sequences and species sampling random probability measures. Some examples are discussed and new results obtained under the assumption of non-diffuse base measure. Section \ref{Sec_HSSM} introduces hierarchical species sampling models, presents some special cases and shows some properties such as the Chinese restaurant franchise representation, which are useful for applications to Bayesian nonparametric inference. Section \ref{Sec_asympt} provides the finite-sample and the asymptotic distribution of the marginal and total number of clusters under both assumptions of diffuse and non-diffuse base measure. A Gibbs sampler for hierarchical species sampling mixtures is established in Section \ref{Sec_Gibbs}. Section \ref{Sec_Emp} presents some simulation studies and a real data application.


\section{Background material}
\label{Sec_Back}
Exchangeable random partitions  provide an important probabilistic tool 
for a wide range of theoretical and applied problems. They have been used in various fields such as population genetics \cite{Ewens,Kingman1980,Donnelly1986,Hoppe1984},  combinatorics, algebra and number theory \cite{Donnelly1993,Diaconis2012,Arratia2003}, machine learning  
\cite{Teh2006,WooArcGas2009a}, psychology \cite{NAVARRO2006101}, and model-based clustering \cite{pet07,pet10}.
 In Bayesian nonparametrics  they are used to describe the latent 
clustering structure  of infinite mixture models, see e.g. \cite{BNP2010} and the references therein. 
For a comprehensive review on exhangeable random partitions from a probabilistic perspective see \cite{Pit06}.

Our HSSM build on exchangeable random partitions and related processes, such as 
 species sampling sequences and species sampling random probability measures. We present their 
 definitions and some properties which will be used in this paper.

\subsection{Exchangeable partitions}\label{Sec:1-1}

A (set) partition $\pi_n$ of $[n]:=\{1,\dots,n\}$  is an unordered
collection $(\pi_{1,n},\dots,\pi_{k,n})$  of disjoint non-empty subsets (blocks) {of $\{1,\dots,n\}$}
 such that $\cup_{j=1}^k \pi_{j,n}=[n]$. 
 A partition $\pi_n=[\pi_{1,n},\pi_{2,n},\dots,\pi_{k,n}]$   has  $|\pi_n|:=k$ blocks (with $1 \leq |\pi_n|  \leq n$)  and we denote
by $|\pi_{c,n}|$, the number of  elements of the block 
$c=1,\dots,k$. 
We denote by
 $\CP_n$  the collection of all  partitions of $[n]$ and, 
given a partition,  
we list its blocks in ascending order of their smallest element. In other words,
 a partition $\pi_n \in \CP_n$ is
coded with elements {\it in order of appearance}. 

A  sequence of random partitions, $\Pi=(\Pi_n)_n$, is called 
 {\it random partition of $\N$} if 
for each $n$ the random variable  $\Pi_n$ takes  values in $\CP_n$ 
and, for $m < n$, the restriction of $\Pi_n$  to $\CP_m$ is $\Pi_m$ ({\it consistency property}).
A random partition of $\N$  is said to be {\it exchangeable} if  for every $n$ the distribution of $\Pi_n$ is invariant
under the action of all  permutations  (acting on $\Pi_n$ in the natural way). 

Exchangeable random partitions are characterized by the fact that 
their distribution depends on $\Pi_n$ only through its block sizes. In point of fact, a random partition on $\N$  is exchangeable if and only  if its distribution can be expressed  by an  {\it exchangeable partition  probability function} (EPPF).
An EPPF  is a family\footnote{
 An EPPF can be seen as a family of 
 symmetric functions $\EPk_{k}^n(\cdot)$ defined on 
 $\CC_{n,k}=\{ (n_1,\dots,n_k) \in \N^k: \sum_{i=1}^k n_{i}=n\}$. 
 To lighten the notation we simply write $\EPk$ in place of $\EPk_{k}^n$.
 Alternatively, one can think that $\EPk$ is a function on 
 $\cup_{n \in \N} \cup_{k=1}^n \CC_{n,k}$.
 }
 of 
 symmetric functions
$\EPk$
defined on 
 the integers  $(n_1,\dots,n_k)$, with $\sum_{i=1}^k n_{i}=n$,  
that satisfy the additions rule
\[
\EPk(n_1,\dots,n_k)=\sum_{j=1}^k \EPk(n_1,\dots,n_j+1,\dots,n_k)+\EPk(n_1,\dots,n_k,1),
\] 
(see \cite{Pit06}). 
In particular, if $(\Pi_n)_n$ is an exchangeable random partition of $\N$, there exists an EPPF such that for every $n$ and $\pi_n \in \CP_n$
\begin{equation}\label{EPPF1}
\P\{ \Pi_n=\pi_n\}=\EPk\left(|\pi_{1,n}|,\dots,|\pi_{k,n}|\right)
\end{equation}
 where $k=|\pi_n|$. %
 In other words, $\EPk(n_1,\dots,n_k)$
corresponds to the probability that $\Pi_n$ is equal to 
any particular partition of $[n]$  having $k$ distinct blocks with
frequencies $(n_1,\dots,n_k)$.

Given an EPPF  
 $\EPk$, one  deduces the corresponding sequence of predictive distributions. Starting with $\Pi_1=\{1\}$, given $\Pi_{n}=\pi_n$ (with $|\pi_n|=k$), 
the conditional  probability of adding a new block (containing $n+1$)  to $\Pi_n$  is
\begin{equation}\label{predEPPF1}
\nu_{n}(|\pi_{1,n}|,\dots,|\pi_{k,n}|) :=  \frac{\EPk(|\pi_{1,n}|,\dots,|\pi_{k,n}|,1)}{\EPk(|\pi_{1,n}|,\dots,|\pi_{k,n}|)};
\end{equation}
while the conditional 
 probability of adding $n+1$ to the $c$-th block of  $\Pi_n$ (for $c=1,\dots,k$) is 
\begin{equation}\label{predEPPF2}
\omega_{n,c}(|\pi_{1,n}|,\dots,|\pi_{k,n}|):= 
\frac{\EPk(|\pi_{1,n}|,\dots,|\pi_{c,n}|+1,\dots,|\pi_{k,n}|)}{\EPk(|\pi_{1,n}|,\dots,|\pi_{k,n}|)}. 
\end{equation}



An important class of exchangeable random partitions is the  Gibbs-type partitions,  
 introduced  in 
 \cite{GP2006}
  and  characterized by an EPPF
with a product form, that is
\[
\EPk(n_1,\dots,n_k):=V_{n,k} \prod_{c=1}^k (1-\sigma)_{n_c-1},
\]
where $(x)_n=x(x+1)\dots (x+n-1)$ is the rising factorial (or Pochhammer polynomial), $\sigma<1$ and  $V_{n,k}$ are positive real numbers such that  $V_{1,1}=1$ and
\begin{equation}
  (n-\sigma k) V_{n+1,k} + V_{n+1,k+1}=V_{n,k} \label{Rec_V}
\end{equation}
for every $n\geq 1$ and $1 \leq k \leq n$. Hereafter, we report some important examples of Gibbs-type random partitions.

\begin{example}[Pitman-Yor two-parameter  distribution]\label{PYEPPFex}
A noteworthy example
of Gibbs-type  EPPF is the so-called 
Pitman-Yor two-parameters  family, $PY(\sigma,\theta)$.  It is defined by  
\[
\EPk(n_1,\dots,n_k):=\frac{\prod_{i=1}^{k-1} (\theta+i\sigma)}{(\theta+1)_{n-1}} \prod_{c=1}^k (1-\sigma)_{n_c-1},
\]
where $0 \leq \sigma < 1$ and $\theta >-\sigma$; or $\sigma<0$ and $\theta=|\sigma|m$ for some integer $m$, see
\cite{Pit95,Pit97}.
This leads to the following predictive rules 
\[
\nu_n(|\pi_{1,n}|,\dots,|\pi_{k,n}|)=\frac{\theta+\sigma k}{\theta+n}  \quad \text{and} \quad \omega_{n,c} (|\pi_{1,n}|,\dots,|\pi_{k,n}|)=\frac{|\pi_{c,n}|  -\sigma}{\theta+n}.
\]
The Pitman-Yor two-parameters  family generalizes the Ewens  distribution \cite{Ewens}, 
which  is obtained for $\sigma=0$
\[
\EPk(n_1,\dots,n_k):=\frac{\theta^k}{(\theta)_n} \prod_{i=1}^k (n_i-1)!
\]
If  $\sigma<0$ and $\theta=|\sigma|m$,  then $V_{n,k}=0$ for $k>\min\{n,m\}$, which means
that the maximum number of blocks in a random partition of length $n$ is $\min\{n,m\}$ with probability one. 
It is possible to show that these random partitions can be obtained by sampling $n$ individuals from   a population composed by $m$
different species with proportions distributed according to a symmetric Dirichlet distribution of parameter $|\sigma|$, see \cite{GP2006}. 
\end{example}

\begin{example}[Partitions induced by Mixtures of Finite Mixtures]\label{Ex:MfM}
 In  \cite{GP2006}, it has been proved that any Gibbs-type
EPPF with $\sigma < 0$ is a mixture of $PY(\sigma, m|\sigma|)$ partitions with respect to $m$, with mixing probability measure $\rho=(\rho_m)_{m\geq 1}$ on the positive integers.
In this case 
$\EPk(n_1,\dots,n_k):=V_{n,k} \prod_{i=1}^k (1-\sigma)_{n_i-1}$,
where $\sigma < 0$ and 
\begin{equation}\label{MFM-EPPF-Vkn}
V_{n,k} 
=|\sigma|^{k-1} \sum_{m \geq k} \frac{\Gamma(m)\Gamma(|\sigma|m+1)}{\Gamma(m-k+1)\Gamma(|\sigma|m+n)} \rho_m.
\end{equation}
These Gibbs type EPPFs can also be  obtained by considering the random partitions induced by the so-called Mixture of Finite Mixtures (MFM), 
see  \cite{MillerHarrison}.
When $|\sigma| = 1$,  \cite{Gnedin10} shows 
a distribution on  $m$  for which $V_{n,k}$ has closed-form and this special case will be described in the following example. 
\end{example}

\begin{example}[Gnedin model]\label{ex:gnedinpart}
\cite{Gnedin10} introduced a sequence of exchangeable partitions with  explicit predictive weights
\begin{align}
\omega_{n,c}(|\pi_{1,n}|,\dots,|\pi_{k,n}|)&=\frac{(|\pi_c|+1)(n-k+\gamma)}{n^2+\gamma n+ \zeta}, \label{oldpb}\\
\nu_n(|\pi_{1,n}|,\dots,|\pi_{k,n}|)&=\frac{k^2-\gamma k+\zeta}{n^2+\gamma n+ \zeta}, \label{newpb}
\end{align}
where the parameter $(\gamma,\zeta)$ must be chosen such that $\gamma \ge 0$ and $k^2-\gamma k+\zeta$ is (i) either (strictly) positive for all $k \in \mathbb{N}$ or (ii) positive for $k \in \{1,\dots,k_0-1\}$ and has a root at $k_0$. 
In point of fact, the Gnedin model, denoted with GN($\gamma,\zeta)$, can be deduced  as special case of  Gibbs-type EPPF  with negative $\sigma=-1$
described in the previous example. As shown in Theorem 1 by \cite{Gnedin10}, these random partitions have representation 
\eqref{MFM-EPPF-Vkn} with  
\begin{equation}
\rho_m=\frac{\Gamma(z_1+1)\Gamma(z_2+1)}{\Gamma(\gamma)} \frac{\prod_{l=1}^{m-1} (l^2-\gamma l+\zeta)}{m!(m-1)!}, \label{DistXbis}
\end{equation}
where $z_1,z_2$ are complex root for the equation $(x^2+\gamma x+\zeta)=(x+z_1)(x+z_2)$, 
that is
$
z_{1,2}=(\gamma\pm \sqrt{\gamma^2-4\zeta})/{2}$. 
See also \cite{Cerquetti}.
\end{example}


\begin{example}[Poisson-Kingman partitions]  Using the ranked random discrete distribution derived from an
inhomogeneous Poisson point process,  \cite{Pitman2003}
introduced a very broad class of EPPF, the so-called \emph{Poisson-Kingman exchangeable partition probability functions}, 
\begin{equation}\label{PKEPPF}
\EPk(n_1,\dots,n_k)= \frac{\theta^{k}}{\Gamma(n)}
\int_{\RE^+} \lm^{n-1} L(\lm) \prod_{j=1}^{k} \int_{\RE^+}
y^{n_j}e^{-\lm y} \eta(y) dy d \lm,
\end{equation}
with L\'evy density $\theta \eta$, where $\theta>0$ and
$\eta$ is a measure on $\RE^+$ (absolutely continuous with respect to the Lebesgue measure) 
such that
\begin{equation}\label{PKregularity}
\int_{\RE^+}(1-e^{-\lm y}) \eta(y)dy< +\infty, \quad \text{for all} \,\, \lm>0 \quad \text{and} \quad \eta(\RE^+)=+\infty, 
\end{equation}
 and \( L(\lm)=\exp\{-\theta \int_{\RE^+}(1-e^{-\lm v}) \eta(v)dv\} \).
 This EPPF is  related to normalized homogeneous  completely random measures of 
 \cite{Lijoi2009}
 (see Example \ref{Ex:NRM} and Appendix  \ref{App:NRM} for details).
 
\end{example}

\subsection{Species Sampling Models}%
Kingman's theory of random partitions  sets up a one-one correspondence
(Kingman's correspondence) 
between EPPFs  and distributions for  decreasing sequences of random
variables $(\tilde q_k^{\downarrow })_k$ 
with $\tilde  q_i^{\downarrow }  \geq 0$ and $\sum_i \tilde  q_i^{\downarrow }   \leq 1$
a.s., by using the notion 
of random partition induced by a sequence of random variables. 

A sequence of random variables $(\xi_n)_n$    induces a 
random  partition on  $\N$
by  equivalence classes 
$i \sim j $ if and only if $\xi_i = \xi_j$.
Note that if  $(\xi_n)_n$ is exchangeable then the induced random partition is also exchangeable.  

Kingman's correspondence theorem states that 
for any exchangeable random partition $\Pi$ with EPPF $ \FP$, 
it exists a random decreasing sequence $\tilde q_1^{\downarrow }  \geq \tilde q_2^{\downarrow } \geq \dots$ with $\tilde  q_i^{\downarrow }  \geq 0$ and $\sum_i \tilde  q_i^{\downarrow }   \leq 1$, 
such that 
if $I_1,\dots,I_n,\dots$ are conditionally independent allocation variables with 
\begin{equation}\label{allocationv}
\P\{ I_i= j |  (\tilde q_k^{\downarrow })_k \}= 
\left \{
\begin{array}{cc}
  \tilde q_j^{\downarrow }&   j \geq 1,   \\
  1-\sum_k \tilde q_k^{\downarrow} &  j =-i,    \\
 0    & \text{otherwise}    \\
\end{array}
\right .
\end{equation}
the partition induced by  $(I_n)_n$ has  the same law of $\Pi$.
See \cite{Kingman78} and \cite{Aldous85}.
 When  $(\tilde q_j^{\downarrow })_j$ is such that $\sum_i \tilde  q_i^{\downarrow }   = 1$ a.s., 
 Kingman's correspondence can be made more explicit by the following result: let $\FP$ be the EPPF  of a random partition $\Pi$  built following the above construction 
and let $(\tilde q_j)_j$ be any (possibly random) permutation of $(\tilde q_j^{\downarrow })_j$, then 
 \begin{equation}\label{EPPFkingman}
 \EPk(n_1,\dots,n_k)=\sum_{j_1,\dots,j_k} \E \left[ \prod_{i=1}^{k} \tilde q_{j_i}^{n_i}  \right],
 \end{equation}
where $j_1,\dots,j_k$ ranges over all ordered $k$-tuples of distinct positive integers, see equation (2.14) in \cite{Pit06}.

We call   {\it Species Sampling random probability} of parameter $\FP$ and $H$, 
$p \sim SSrp(\FP,H)$, a  
   random distribution  $p= \sum_{j \geq 1} \delta_{Z_j} \tilde q_j$, where
 $(Z_j)_j$ are i.i.d. with common distribution $H$ (not necessarily diffuse) 
 and  the EPPF defined by  $(\tilde q_j)_{j}$ via  \eqref{EPPFkingman} is  $\FP$. 
 Such random probability  measures  are sometimes called {\it species sampling models}.

Given   an EPPF $\FP$   and a diffuse probability measure  $H$  (i.e.,  $H(\{x\})=0$ for every $x$)
on a Polish space $\XX$, 
an exchangeable sequence $(\xi_n)_n$ taking values on $\XX$ is 
 a {\it Species Sampling Sequence}, $SSS(\FP,H)$, 
if the law of  $(\xi_n)_n$ is characterized by 
the predictive system: 
 \begin{itemize}
\item[\textit{i})]  $\P\{\xi_1 \in dx\} = H(dx) $;
\item[\textit{ii})] the conditional distribution of $\xi_{n+1}$ given $(\xi_1,\dots,\xi_n)$ is 
\begin{equation}\label{predictiveSSS}
\P\left\{\xi_{n+1}  \in dx | \xi_1,\dots,\xi_n \right\}= \sum_{c=1}^{k} \omega_{n,c} \delta_{\xi^*_c}(dx) +  
\nu_{n}  H(dx), 
\end{equation}
where $(\xi_1^*,\dots,\xi_k^*)$ is the vector of distinct observations in order of appearance,
$\omega_{n,c} =\omega_{n,c}(|\Pi_{1,n}|,\dots,|\Pi_{k,n}|)$, $\nu_{n} =\nu_{n}(|\Pi_{1,n}|,\dots,|\Pi_{k,n}|)$,
$k=|\Pi_n|$, 
 $\Pi_n$ is the random partition induced by $(\xi_1,\dots,\xi_n)$ and 
 $\omega_{n,c}$ and $\nu_n$ are related with the  $\FP$ by \eqref{predEPPF1}-\eqref{predEPPF2}.
\end{itemize}
See \cite{Pitman96}. 

In point of fact, as shown in   Proposition 11 of \cite{Pitman96}, 
 $(\xi_n)_n$ is a $SSS(\FP,H)$ if and only  if the $\xi_n$ are conditionally i.i.d. given $p$, with common distribution
\begin{equation}\label{spec}
p= \sum_{j \geq 1} \delta_{Z_j} \tilde q_j +\left(1-\sum_{j \geq 1} \tilde q_j\right) H,
\end{equation}
 where $\sum_j \tilde q_j \leq 1$ a.s., $(Z_j)_j$ and $(\tilde q_j)_j$ are stochastically independent and
 $(Z_j)_j$ are i.i.d. with common diffuse distribution $H$. 
 The random probability measure $p$ in \eqref{spec}
 is said to be proper if
 $\sum_j \tilde q_j=1$ a.s.
  In this case,   $p$ is a $SSrp$ and
the EPPF of the exchangeable partition  $(\Pi_n)_n$  induced by $(\xi_n)_n$  is  $\FP$, 
where  $\FP$  and $(\tilde q_j)_j$ are related by \eqref{EPPFkingman}. For more details see  \cite{Pitman96}.

The name species sampling sequences is due to the following interpretation:
think to $(\xi_n)_n$ as an infinite population of individuals belonging to possibly infinite  different species. The number of partition blocks $|\Pi_n|$  takes on the interpretation of the number of distinct species in the
sample $(\xi_1,\dots, \xi_n)$,   the $\xi^*_c$ are the observed distinct  species types and the $|\Pi_{c,n}|$ are the corresponding species frequencies.  
In this species metaphor, $\nu_n$ is the probability of observing
a new species  at the $(n+1)$-th sample, while $\omega_{n,c}$ is the probability of observing an already sampled species of type $\xi^*_c$.

\subsection{Generalized species sampling sequences}
Usually a species sampling sequence  $(\xi_n)_n$ is defined  by the predictive system \eqref{predictiveSSS} assuming that
the measure $H$ is diffuse (i.e. non-atomic). While this assumption is  essential for the results recalled above to be valid, 
an exchangeable sequence can be defined by sampling from a $SSrp(\FP,H)$ for any measure $H$. 
More precisely,  we say that  a sequence $(\xi_n)_n$ is a 
{\it generalized species sampling sequence}, 
 $gSSS(\FP,H)$, if the variables 
$\xi_n$ are conditionally i.i.d. (given  $p$) with law $p \sim SSrp(\FP,H)$ or equivalently if 
the directing measure of  $(\xi_n)_n$ is $p$.
From the previous discussion, it should be clear that if  $(\xi_n)_n$ is  a $gSSS(\FP,H)$ with $H$ diffuse,  then it is a  $SSS(\FP,H)$, see Proposition 13 in \cite{Pitman96}.
When $H$ is not diffuse,
 the relationship between the random partition induced by the sequence $(\xi_n)_n$ and the EPPF $\FP$ is not as simple as in the non-atomic case. Understanding this relation and the  partition structure of the $gSSS$  is of paramount importance in order to define and study hierarchical models of type \eqref{HSSMintro}, since  
 the random measure $p_0$ in  the hierarchical specification \eqref{HSSMintro} is almost surely discrete (i.e. atomic).
Moreover, properties of these sequences are also relevant for studying non-parametric prior distributions with mixed base measure, such as the Pitman-Yor process with spike-and-slab base measure introduced in \cite{Canale2017}.

Given a random partition $\Pi$, let $\SC_j(\Pi)$ be the random index of the block containing $j$, that is 
\begin{equation}\label{partitionindex}
\SC_j(\Pi)=c  \quad \text{ if $j \in \Pi_{c,j} $}
\end{equation}
or equivalently  if $j \in \Pi_{c,n} $ for some (and hence all)  $n \geq j$.  In the rest of the paper,  if 
$\pi_n=[\pi_{1,n},\dots,\pi_{k,n}]$ is a partition of $[n]$ and $\EPk$ is any EPPF, we 
will write $\EPk(\pi_n)$ in place of $\EPk(|\pi_{1,n}|,\dots,|\pi_{k,n}|)$.
 
\begin{proposition}\label{prop_gsss}
Let $p \sim SSrp(\FP,H)$, with $p$ proper and $H$ not necessarily diffuse. 
Let $(I_n)_n$ be the allocation sequence defined in \eqref{allocationv} by
taking the weights of $p$ in decreasing order. Assume  that 
$(I_n)_n$ and  $(Z_j)_j$ (in the definition 
of $p$) are independent. Finally, let $\Pi$ be a random partition, with EPPF equal to $\FP$, also 
independent of  $(Z_j)_j$. 
We define, for every $n \geq 1$, 
\begin{equation*}\label{allocationxi}
\xi_n:=Z_{I_n}   \qquad \text{and} \qquad \xi_n':=Z_{\SC_n(\Pi)}. 
\end{equation*}
Then: 
\begin{itemize}
\item[(i)] the sequence $(\xi_n)_n$  is a  $gSSS(\FP,H)$;
\item[(ii)] the sequences $(\xi_n)_n$  and  $(\xi_n')_n$ have the same law;
\item[(iii)] for any $A_1,\dots,A_n$ in the Borel $\sigma$-field of $\XX$
\[
\P\{ \xi_1 \in A_1, \cdots, \xi_n \in A_n\}=  
\sum_{\pi_n \in \CP_n} \FP(\pi_n) \prod_{c=1}^{|\pi_n|} H( \cap_{j \in \pi_{c,n}} A_j ),
\]
\end{itemize}
\end{proposition}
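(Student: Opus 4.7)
The plan is to treat the three parts in sequence, reducing (ii) and (iii) to straightforward consequences once (i) is established and the partition induced by $(I_n)_n$ has been identified.

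For part (i), I would argue directly from the definition of $p$ and the allocation variables. Conditionally on the entire sequence $((\tilde q_j^{\downarrow})_j,(Z_j)_j)$, hence conditionally on $p$, the indices $(I_n)_n$ are i.i.d.\ with $\P\{I_n=j\mid p\}=\tilde q_j^{\downarrow}$ by \eqref{allocationv} (using that $p$ is proper, so the term $1-\sum_k\tilde q_k^{\downarrow}$ vanishes). Consequently $\xi_n=Z_{I_n}$ are conditionally i.i.d.\ given $p$ with common law
\[
\P\{\xi_n\in A\mid p\}=\sum_{j\geq 1}\tilde q_j^{\downarrow}\,\mathbbm{1}_A(Z_j)=p(A),
\]
which is exactly the definition of $gSSS(\FP,H)$.

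For part (ii), the key step is to identify the partition $\Pi^{I}$ generated by $(I_n)_n$ via $i\sim j\iff I_i=I_j$. By Kingman's correspondence \eqref{EPPFkingman} applied to $(\tilde q_j^{\downarrow})_j$, this partition has EPPF $\FP$, so $\Pi^{I}\stackrel{d}{=}\Pi$. Let $J_1,J_2,\ldots$ denote the distinct values of $(I_n)_n$ in order of appearance, so that $I_n=J_{\SC_n(\Pi^{I})}$ and therefore $\xi_n=Z_{J_{\SC_n(\Pi^{I})}}$. Since $(Z_j)_j$ is i.i.d.\ $H$ and independent of $(I_n)_n$, conditioning on $(I_n)_n$ shows that $(Z_{J_c})_{c\geq 1}$ is i.i.d.\ $H$ and independent of $\Pi^{I}$. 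Hence the pair $(\Pi^{I},(Z_{J_c})_c)$ has the same joint law as $(\Pi,(Z_c)_c)$, and applying the measurable map $(\pi,(z_c))\mapsto(z_{\SC_n(\pi)})_n$ to both pairs yields $(\xi_n)_n\stackrel{d}{=}(\xi_n')_n$.

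For part (iii), I would use (ii) and compute the law of $(\xi_1',\ldots,\xi_n')$ by conditioning on $\Pi_n$. Given $\Pi_n=\pi_n=[\pi_{1,n},\ldots,\pi_{k,n}]$, the definition of $\SC_j(\Pi)$ gives $\xi_j'=Z_c$ whenever $j\in\pi_{c,n}$, so the event $\{\xi_j'\in A_j\text{ for all }j\}$ equals $\{Z_c\in\bigcap_{j\in\pi_{c,n}}A_j\text{ for all }c\}$. By independence of $Z_1,\ldots,Z_k$ with law $H$ this conditional probability equals $\prod_{c=1}^{k}H(\bigcap_{j\in\pi_{c,n}}A_j)$, and summing against $\P\{\Pi_n=\pi_n\}=\FP(\pi_n)$ via \eqref{EPPF1} yields the stated formula. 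The main (and only non-routine) obstacle is the bookkeeping in (ii), namely verifying that the relabeling $c\mapsto J_c$ does not destroy the i.i.d.\ structure of $(Z_{J_c})_c$; this is where the independence assumption between $(I_n)_n$ and $(Z_j)_j$ is essential.
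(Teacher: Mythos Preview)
Your proof is correct and follows essentially the same approach as the paper. The only cosmetic difference is the order: the paper first establishes the formula in (iii) for $(\xi_n)_n$ directly (via Kingman's correspondence applied to $\pi(I_1,\dots,I_n)$ and the independence of the $Z_k$'s), then deduces (ii) by computing the law of $(\xi_n')_n$ and matching it to that same formula, whereas you prove (ii) first via the distributional identity $(\Pi^{I},(Z_{J_c})_c)\stackrel{d}{=}(\Pi,(Z_c)_c)$ and then read off (iii) from the $\xi'$ side.
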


 \begin{remark}   
 If  $(\xi_n)_n$ is  a $gSSS(\FP,H)$ 
and $H$ is not diffuse, then $\FP$ is
not necessarily the EPPF induced by $(\xi_n)_n$.
To see this, take $\XX=\{0,1\}$, $H\{0\}=p$ and $H\{1\}=1-p$. Let
 $\tilde \Pi$ be the random partition induced by $(\xi_n)_n$
 and $\Pi$ a random partition with EPPF  $\FP$.
Using {\rm Proposition \ref{prop_gsss}}, 
one gets $\P\{ \tilde \Pi_2=[(1,2)]\} =\P\{\xi_1=\xi_2\}=\P\{\xi_1'=\xi_2'\}=$ $\P\{\Pi_2=[(1,2)]\}+\P\{\Pi_2=[(1),(2)]\} \P\{Z_1=Z_2\}$ $=\P\{\Pi_2=[(1,2)]\}+\P\{\Pi_2=[(1),(2)]  \} [p^2+(1-p)^2] \not= \P\{\Pi_2=[(1,2)]\}$, if $\P\{\Pi_2=[(1),(2)]  \} >0$, which shows that the EPPF of $\tilde \Pi$ is not $\FP$. 
\end{remark}   

When the base measure is not diffuse, the representation in   {\rm Proposition \ref{prop_gsss}} can be used to derive the EPPF of the partition induced by any  $gSSS(\FP,H)$. Since this property is not used in the rest of the paper we leave it for further research.  Here we focus on the distribution of the number of distinct observations in $\xi_1,\dots,\xi_n$, i.e.  $|\tilde \Pi_n|$, for any base measure $H$. We specialize the result for the spike-and-slab type of base measures, which have been used by \cite{suarez2016} in DP and by \cite{Canale2017} in PY processes. 

\begin{corollary}\label{prop_gsssTRIS} 
Let the assumptions of {\rm Proposition \ref{prop_gsss}} hold. 
\begin{itemize}  
\item[(i)]  If   $H^*(d|k)$ (for $1\leq d\leq k$) is the probability of observing exactly $d$ distinct values 
in the vector $(Z_1,\dots,Z_k)$ and let  $\tilde \Pi$ be the random partition induced by $(\xi_n)_n$
then, 
\[
\P\{ |\tilde \Pi_n|=d\}= \sum_{k =d}^{n} H^*(d|k) \P\{|\Pi_n| =k\}.    
\]
\item[(ii)]  If the base measure is in the spike-and-slab class, i.e.
$H(dx)=a \delta_{x_0}(dx)+ (1-a) \tilde H(dx)$ where $a \in (0,1)$, $x_0$ is a point of $\XX$ and $\tilde H$ is a diffuse measure on $\XX$, then 
\[
\P\{ |\tilde \Pi_n|=d\}=(1-a)^d \P\{|\Pi_n| =d\}  +  \sum_{k =d}^{n} {k \choose d-1} a^{k+1-d}(1-a)^{d-1}  
\P\{|\Pi_n| =k\} . 
\]
\end{itemize}
\end{corollary}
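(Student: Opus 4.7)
For part (i), the plan is to use Proposition \ref{prop_gsss}(ii), which gives $(\xi_n)_n \stackrel{d}{=} (\xi_n')_n$ with $\xi_n' = Z_{\SC_n(\Pi)}$, so in particular $|\tilde\Pi_n|$ has the same distribution as the number of distinct values in $(Z_{\SC_1(\Pi)},\dots,Z_{\SC_n(\Pi)})$. The key observation is that on the event $\{|\Pi_n| = k\}$, the indices $\SC_1(\Pi),\dots,\SC_n(\Pi)$ take precisely the values $\{1,\dots,k\}$ (since blocks are coded in order of appearance), and each value $1,\dots,k$ is used at least once. Hence the number of distinct entries in $(Z_{\SC_1(\Pi)},\dots,Z_{\SC_n(\Pi)})$ equals the number of distinct entries in $(Z_1,\dots,Z_k)$. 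Since by hypothesis $\Pi$ is independent of $(Z_j)_j$, conditioning on $|\Pi_n| = k$ gives
\[
\P\{|\tilde\Pi_n| = d \mid |\Pi_n| = k\} = H^*(d|k),
\]
and the law of total probability yields the claimed formula (with the range starting at $k=d$ because $H^*(d|k) = 0$ for $k < d$).

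For part (ii), the plan is to substitute into part (i) an explicit expression for $H^*(d|k)$ in the spike-and-slab case. I would let $M$ denote the number of indices $i \in \{1,\dots,k\}$ such that $Z_i = x_0$; since the $Z_i$ are i.i.d.\ from $H = a\delta_{x_0} + (1-a)\tilde H$, we have $M \sim \mathrm{Binomial}(k,a)$. Conditional on $M = m$, the $k-m$ values drawn from $\tilde H$ are almost surely pairwise distinct and different from $x_0$ (because $\tilde H$ is diffuse), so the number of distinct values in $(Z_1,\dots,Z_k)$ equals $k$ if $m = 0$ and $k-m+1$ if $m \geq 1$. Solving $k - m + 1 = d$ gives $m = k - d + 1$, so for $d < k$,
\[
H^*(d|k) = \binom{k}{k-d+1} a^{k-d+1}(1-a)^{d-1} = \binom{k}{d-1} a^{k-d+1}(1-a)^{d-1},
\]
while $H^*(d|d) = (1-a)^d + d\,a(1-a)^{d-1}$, the two contributions coming from $M = 0$ and $M = 1$.

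Plugging this into part (i) and isolating the $k=d$ term, the contribution $d\,a(1-a)^{d-1}\P\{|\Pi_n|=d\}$ is exactly the $k=d$ term of the summand $\binom{k}{d-1}a^{k-d+1}(1-a)^{d-1}\P\{|\Pi_n|=k\}$; absorbing it into the sum leaves the leftover piece $(1-a)^d \P\{|\Pi_n|=d\}$, giving the stated identity. The only potentially tricky point is the indexing/counting when writing down $H^*(d|k)$ and checking that the $k=d$ boundary term combines correctly with the sum; this is a bookkeeping calculation rather than a substantive obstacle.
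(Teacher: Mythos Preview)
Your proof is correct and follows essentially the same approach as the paper: part (i) uses Proposition~\ref{prop_gsss}(ii) to reduce to counting distinct values among $(Z_{\SC_1(\Pi)},\dots,Z_{\SC_n(\Pi)})$, then conditions on $|\Pi_n|=k$ and invokes independence of $\Pi$ and $(Z_j)_j$; part (ii) computes $H^*(d|k)$ in the spike-and-slab case via the binomial count of atoms at $x_0$, arriving at $H^*(d|k)=\binom{k}{d-1}a^{k+1-d}(1-a)^{d-1}+\mathbf{1}\{d=k\}(1-a)^d$, exactly as the paper does. Your additional bookkeeping showing how the $k=d$ boundary term splits and recombines with the sum is a helpful elaboration of what the paper calls ``a simple computation.''
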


\begin{corollary}\label{prop_gsssBIS} Under the assumptions of
{\rm Proposition \ref{prop_gsss}},
 for every $n \geq 1$ 
and $c=1,\dots,|\Pi_n|$ one has  $Z_c=\xi'_{R(n,c)}$ 
with $R(n,c)=\min\{ j : j \in \Pi_{c,n}\}$ and 
\begin{equation}
\P\left\{\xi_{n+1}'  \in dx | \xi_1',\dots,\xi_n', \Pi_n \right\}= 
\sum_{c=1}^{|\Pi_n|} \omega_{n,c}(\Pi_n) \delta_{Z_c}(dx) +  
\nu_{n}(\Pi_n)   H(dx) \label{Eq_Corr2}
\end{equation}
 where $\omega_{n,c}$ and $\nu_{n}$
are related to  $\FP$ by \eqref{predEPPF1}-\eqref{predEPPF2},
\end{corollary}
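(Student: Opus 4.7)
My plan is to derive both assertions directly from the construction $\xi'_n = Z_{\SC_n(\Pi)}$ given in Proposition \ref{prop_gsss}, together with the assumed independence of $\Pi$ and $(Z_j)_j$.

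For the identity $Z_c = \xi'_{R(n,c)}$: the blocks of $\Pi_n$ are indexed in order of appearance of their minimal elements, so the minimum $R(n,c)$ of block $c$ is itself assigned to block $c$. Hence $\SC_{R(n,c)}(\Pi) = c$, and therefore
\[
\xi'_{R(n,c)} = Z_{\SC_{R(n,c)}(\Pi)} = Z_c.
\]

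For the predictive formula, set $k = |\Pi_n|$ and decompose via the law of total probability along the value of $\SC_{n+1}(\Pi) \in \{1,\dots,k+1\}$:
\[
\P\{\xi'_{n+1} \in dx \mid \xi'_1, \ldots, \xi'_n, \Pi_n\} = \sum_{c=1}^{k+1} \P\{\SC_{n+1}(\Pi)=c \mid \xi'_1, \ldots, \xi'_n, \Pi_n\}\, \P\{Z_c \in dx \mid \xi'_1, \ldots, \xi'_n, \Pi_n, \SC_{n+1}(\Pi)=c\}.
\]
For the first factor: by the identity just proved, $\sigma(\xi'_1,\dots,\xi'_n) \subseteq \sigma(\Pi_n, Z_1, \ldots, Z_k)$; since $\Pi$ is independent of $(Z_j)_j$, the conditional law of $\Pi_{n+1}$ (hence of $\SC_{n+1}(\Pi)$) given $(\xi'_1,\dots,\xi'_n,\Pi_n)$ coincides with its conditional law given $\Pi_n$, which by the EPPF predictive weights \eqref{predEPPF1}--\eqref{predEPPF2} equals $\omega_{n,c}(\Pi_n)$ for $c \le k$ and $\nu_n(\Pi_n)$ for $c = k+1$. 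For the second factor: if $c \le k$, then $Z_c$ is $\sigma(\xi'_1,\dots,\xi'_n,\Pi_n)$-measurable (again by the first identity), so the factor equals $\delta_{Z_c}(dx)$; if $c=k+1$, then $Z_{k+1}$ is independent of $(\Pi, Z_1, \ldots, Z_k)$ and hence of $\sigma(\xi'_1,\dots,\xi'_n,\Pi_n,\Pi_{n+1})$, so the factor equals $H(dx)$. Assembling the pieces yields \eqref{Eq_Corr2}.

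The only step requiring genuine care is the decoupling used in the first factor, i.e.\ the verification that conditioning on $(\xi'_1,\dots,\xi'_n,\Pi_n)$ does not perturb the law of $\Pi_{n+1}$ away from the one-step EPPF predictive. This is the sole place where the joint independence of $\Pi$ and $(Z_j)_j$ assumed in Proposition \ref{prop_gsss} is essential; once it is invoked, together with the trivial observation that $Z_{k+1}$ is a ``fresh'' draw from $H$, the remainder of the argument is straightforward bookkeeping.
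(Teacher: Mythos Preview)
Your proof is correct and follows the same approach as the paper: decompose according to which block $\SC_{n+1}(\Pi)$ the index $n+1$ joins, then use the independence of $\Pi$ and $(Z_j)_j$ to identify the two conditional factors. The paper's own proof is essentially a one-line statement of this decomposition with no further justification; your version spells out the measurability and independence arguments (in particular, that $\sigma(\xi'_1,\dots,\xi'_n,\Pi_n) \subseteq \sigma(\Pi_n,Z_1,\dots,Z_k)$ so that conditioning on the $\xi'$s does not affect the EPPF predictive, and that $Z_{k+1}$ is a fresh draw independent of everything seen so far), which the paper leaves implicit.
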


\begin{remark}   Equation \eqref{Eq_Corr2} in Corollary \ref{prop_gsssBIS}   differs substantially from 
   the predictive system in \eqref{predictiveSSS} due to the
 conditioning on the latent partition $\Pi_n$. Nevertheless, 
 if 
$H$ is diffuse then $\Pi_n=\tilde \Pi_n$ a.s., conditioning on $(\xi_1,\dots,\xi_n,\Pi_n)$ 
is the same as   conditioning on $(\xi_1,\dots,\xi_n)$
and $Z_c$ is equal  to the $c$-th distinct observation in order of appearance.
Hence, in this case, \eqref{Eq_Corr2} reduces to \eqref{predictiveSSS}.
\end{remark}

Hereafter, we discuss some examples of $SSrp$ and $gSSS$  which  will be used in our hierarchical constructions. 
  
\begin{example}[Pitman-Yor and Dirichlet processes]\label{Ex:PYPDP}
If $\FP$ is the two-parameter Pitman-Yor distribution $PY(\sigma,\theta)$ and $p \sim SSrp(\FP,H)$, then $p$ is a Pitman-Yor process (PYP), denoted with $p \sim PYP (\sigma, \theta,H)$ where $\sigma$ and $\theta$ are the discount and concentration parameters, respectively,  and $H$ is the  base measure (see \cite{Pitman96}). To see this equivalence, recall the description of the PYP in terms of its stick-breaking construction
\begin{equation}\label{PY1}
p=\sum_{k \geq 1} q_k \delta_{Z_k},
\end{equation}
where $(Z_k)_k$ is an i.i.d. sequence of random variables with law $H$ and 
$q_{k}=V_{k} \prod_{l=1}^{k-1} (1-V_{l})$,
with $V_{k}$ independent  $Beta(1-\sigma,\theta+k\sigma)$ random variables. From \eqref{PY1}, it is clear that a $PYP$ is a $SSrp$. Moreover, it is well-known that the EPPF associated to the weights defined above is the Pitman-Yor EPPF  of Example \ref{PYEPPFex} (see \cite{Pit95,Pitman96,Pit97}). As a special case, if $\FP$ is the Ewens distribution, $PY(0,\theta)$, and $p \sim SSrp(\FP,H)$, then $p$ is a Dirichlet process (DP) denoted with $DP(\sigma,H)$. 
Note that this is true even if $H$ have atoms. 
\end{example}

\begin{example}[Mixture of finite mixture processes]\label{Ex:GPFM} 
If $\FP$ is the distribution described in Example \ref{Ex:MfM} and $p \sim SSrp(\FP,H)$, then $p$ is a mixture of finite mixture process denoted with $MFMP(\sigma,\rho,H)$ and $p$ can be written as 
\begin{equation}\label{MFM1}
p=\sum_{k=1}^{K} q_{k} \delta_{Z_k},
\end{equation}
where $K\sim\rho$, $\rho=(\rho_m)_{m \geq 1}$ is a p.m. on $\{1,2,\dots\}$, $(Z_k)_{k\geq 1}   \stackrel{i.i.d.}{\sim} H$, and 
\begin{equation}\label{MFM2}
(q_{1},\dots,q_{K}) \mid K {\sim}\,\, \text{Dirichlet}_K(\sigma,\dots,\sigma),
\end{equation}
see   \cite{MillerHarrison}.
For $|\sigma|=1$ and $\rho$ given by \eqref{DistXbis}, an analytical expression for $\FP$ is available (see Example \ref{ex:gnedinpart})
and the process $p$ is called Gnedin process  (GP) denoted with $GP(\gamma,\zeta,H)$.
\end{example}

\begin{example}[Normalized completely random measures]\label{Ex:NRM}   Assume $\theta>0$ and let $\eta$ be a measure satisfying \eqref{PKregularity}, $\FP$ a Poisson-Kingman EPPF defined in \eqref{PKEPPF} and $H$ a probability measure (possibly not diffuse) on $\XX$. If $p  \sim SSrp(\FP,H)$,  then $p$ is a normalized homogeneous completely random measure, $NRMI(\theta,\eta,H)$, of parameters $(\theta,\eta,H)$. See Appendix  \ref{App:NRM} for the definition.  
The sequence $(\xi_n)_n$ obtained by sampling from $p$, i.e. a $gSSS(\FP,H)$, 
is a sequence from a $NRMI(\theta,\eta,H)$. 
All these facts are well known  when $H$ is a non-atomic measure (see \cite{Lijoi2009}). 
The results for general  measures and $\XX=\RE$ 
are implicitly contained, although not explicitly stated, in 
 \cite{Sangalli2007}. A detailed proof of the general case 
 is given in Proposition  \ref{prop:hNRM} of Appendix  \ref{App:NRM}. 
\end{example}

\section{Hierarchical Species Sampling Models}
\label{Sec_HSSM}
In this section we introduce hierarchical species sampling models (HSSMs) and study the relationship between HSSMs and a general class of hierarchical random measures which contains some well-known random measures  
(e.g., the HDP of \cite{TehJordan2006}). Some examples of HSSM are provided and some relevant properties of the HSSMs are given, such as the clustering structure and the induced random partitions.

\subsection{HSSM definition, properties and examples}
In the following definition a hierarchy of exchangeable random partitions is used to build hierarchical species sampling models. 
\begin{definition}\label{def_hssm} Let $\FP$ and $\FP_0$ be two EPPFs and $H_0$ a probability distribution on $\XX$. 
 We define an array $[\xi_{i,j}]_{i=1,\dots,I, j \geq 1}$ as a {\it Hierarchical  Species Sampling model}, $HSSM(\FP,\FP_0,H_0)$, of parameters
$(\FP,\FP_0,H_0)$, if for every vector of integer numbers 
$(n_1,\dots,n_I)$ and every collection of Borel sets  $\{ A_{i,j}: i=1,\dots,I, j=1,\dots,n_{i} \}$ 
 it holds 
\begin{equation}\label{defHSSM}
\begin{split}
\P\{ & \xi_{i,j} \in A_{i,j}   \,\,  i=1,\dots,I, j=1,\dots,n_{i } \} \\
& =\sum_{\pi^{(1)}_{n_1}\in \CP_{n_1}  ,\dots,\pi^{(I)}_{n_I} \in \CP_{n_{I}} } \prod_{i=1}^I 
\EPk\left( \pi^{(i)}_{n_i}\right) 
\E \left[  \prod_{i=1}^I  \prod_{c=1}^{|\pi^{(i)}_{n_i}|}  \tilde p\left( \cap_{j \in \pi^{(i)}_{c,n_i}} A_{i,j} \right)  \right],\\
\end{split}
\end{equation}
with  $\tilde p \sim SSrp(\FP_0,H_0)$. Moreover, 
the directing random measures $(p_1,\dots,p_I)$ of the array  $[\xi_{i,j}]_{i=1,\dots,I, j \geq 1}$
are called Hierarchical Species Sampling random measures, $HSSrp$.
\end{definition}
The next result states a relationship between hierarchies of $SSS$ and hierarchies of random probabilities, which are widely used in Bayesian nonparametrics, thus motivating the choice of name Hierarchical Species Sampling Model (HSSM) for the stochastic representation in Definition \ref{def_hssm}.
\begin{proposition}\label{prop_hssm}
Let $(p_0,p_1,\dots,p_I)$ be a vector of random probably measures such that
\[
\begin{split}
 p_{i}   | p_0  &\stackrel{iid}{\sim} SSrp( \FP, p_0), \qquad i=1,\dots,I,  \\
  p_{0}  &   \sim SSrp(\FP_0,H_0), \\
 \end{split}
\]
where $H_0$ is a base measure. 
Let $[\xi_{i,j}]_{i=1,\dots,I,j\geq 1 }$ be conditionally independent given $(p_1,\dots,p_I)$ with $\xi_{i,j} | (p_1,\dots,p_I)  \stackrel{ind}{\sim} p_i$, where $i=1,\dots,I$and $j \geq 1$. Then, $[\xi_{i,j}]_{i=1,\dots,I,j\geq 1}$ is a  $HSSM(\FP,\FP_0,H_0)$.
\end{proposition}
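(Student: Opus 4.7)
The plan is to reduce the identity \eqref{defHSSM} to a single application of Proposition \ref{prop_gsss}(iii) on each ``row'' of the array, after conditioning on $p_0$, and then an outer expectation over $p_0$.

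First, I would condition on $p_0$. Given $p_0$, the random measures $p_1,\dots,p_I$ are conditionally independent with $p_i \sim SSrp(\FP,p_0)$, and the rows $(\xi_{i,j})_{j\ge 1}$ of the array are conditionally independent across $i$. For each fixed $i$, conditionally on $p_0$ the sequence $(\xi_{i,j})_{j\ge 1}$ is a $gSSS(\FP,p_0)$, since its directing random measure is $p_i \sim SSrp(\FP,p_0)$. Crucially, $p_0$ is a random discrete (hence non-diffuse) measure, so I really need the generalized, ``not necessarily diffuse'' version of species sampling here; this is exactly the setting of Proposition \ref{prop_gsss}.

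Applying Proposition \ref{prop_gsss}(iii) conditionally on $p_0$ to each row gives, for every $i$ and every collection of Borel sets $A_{i,1},\dots,A_{i,n_i}$,
\[
\P\{\xi_{i,j} \in A_{i,j},\ j=1,\dots,n_i \mid p_0\}
= \sum_{\pi^{(i)}_{n_i}\in\CP_{n_i}} \EPk(\pi^{(i)}_{n_i})
\prod_{c=1}^{|\pi^{(i)}_{n_i}|} p_0\!\left(\cap_{j\in\pi^{(i)}_{c,n_i}} A_{i,j}\right).
\]
Using conditional independence across $i$, the full conditional joint probability factorizes, so
\[
\P\{\xi_{i,j}\in A_{i,j}\ \forall i,j\mid p_0\}
= \sum_{\pi^{(1)}_{n_1},\dots,\pi^{(I)}_{n_I}} \prod_{i=1}^I \EPk(\pi^{(i)}_{n_i})
\prod_{i=1}^I\prod_{c=1}^{|\pi^{(i)}_{n_i}|} p_0\!\left(\cap_{j\in\pi^{(i)}_{c,n_i}} A_{i,j}\right).
\]

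Finally, I would take expectations with respect to $p_0 \sim SSrp(\FP_0,H_0)$. Since the sum over partitions is finite and each summand is a non-negative quantity bounded by $\prod_i \EPk(\pi^{(i)}_{n_i})$, Fubini/monotone convergence lets me exchange sum and expectation. Identifying $\tilde p$ with $p_0$, this yields exactly \eqref{defHSSM}.

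The only real subtlety is the first step: asserting that the rows are $gSSS(\FP,p_0)$ even though $p_0$ is atomic, and then invoking Proposition \ref{prop_gsss}(iii) \emph{conditionally} on a random base measure. This is legitimate because Proposition \ref{prop_gsss} is stated for an arbitrary (possibly atomic) probability measure $H$; the conditional law of each row given $p_0$ is precisely that of a $gSSS(\FP,p_0)$, so the formula applies $\omega(dp_0)$-almost surely and can be integrated. Everything else is bookkeeping over the product of partitions.
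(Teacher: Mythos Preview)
Your proposal is correct and follows essentially the same route as the paper: condition on $p_0$, use conditional independence of the $p_i$ (and hence of the rows) given $p_0$, apply Proposition~\ref{prop_gsss}(iii) to each row viewed as a $gSSS(\FP,p_0)$ with a possibly atomic base measure, multiply over $i$, and take the expectation with respect to $p_0\sim SSrp(\FP_0,H_0)$. The paper's proof is organized identically, and your remark about the key subtlety---that Proposition~\ref{prop_gsss} is needed precisely because $p_0$ is a.s.\ discrete---matches the paper's emphasis.
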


Proposition \ref{prop_hssm} provides a probabilistic foundation to a wide class of hierarchical random measures. It is worth noticing that the base measure $H_0$ is not necessarily diffuse and, thanks to the properties of the SSrp and of the gSSS (see Proposition \ref{prop_gsss}), the hierarchical random measures in Proposition \ref{prop_hssm} are well defined also for non-diffuse (e.g., atomic or mixed) probability measures $H_0$. Our result is general enough to be valid for many of the existing hierarchical random measures (e.g., \cite{TehJordan2006}, \cite{Teh2006},  \cite{Bac17}). As with species sampling sequences, HSSMs enjoy some exchangeability and clustering properties stated in the following proposition, where, recalling \eqref{partitionindex},  $\SC_j(\Pi)$ denotes the random index of the block 
of the random partition $\Pi$ that contains $j$.

\begin{proposition}\label{prop0} Let $ \Pi^{(1)},\dots,\Pi^{(I)} $ be i.i.d. exchangeable partitions
with EPPF $\FP$ and $\tilde p$ a random probability measure independent of $ \Pi^{(1)},\dots,\Pi^{(I)} $. If  $[\zeta_{i,j}]_{i=1,\dots,I, j \geq 1}$  are conditionally i.i.d. with law $\tilde p$ given  $\tilde p$, then the random variables 
\begin{equation}\label{def00HSSM}
\xi_{i,j}:= \zeta_{i,\SC_{j}(\Pi^{(i)})}, \quad \mbox{ where } i=1,\dots,I \mbox{ and }  j \geq 1
\end{equation}
are partially exchangeable and satisfy Eq.  \eqref{defHSSM}.
Furthermore, if $\tilde p$ is a $SSrp(\FP_0,H_0)$ then the sequence $[\xi_{i,j}]_{ij}$ is a 
$HSSM(\FP,\FP_0,H_0)$.  
\end{proposition}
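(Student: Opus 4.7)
The plan is to verify the identity \eqref{defHSSM} directly by two-level conditioning, and then obtain partial exchangeability and the $HSSM$ characterization as consequences. Fix $(n_1,\dots,n_I)$ and Borel sets $\{A_{i,j}\}$. By hypothesis, $[\zeta_{i,c}]_{i,c}$ is conditionally i.i.d. with common law $\tilde p$ given $\tilde p$, the partitions $\Pi^{(i)}$ are mutually independent with EPPF $\FP$, and they are independent of $(\tilde p,\zeta)$. Conditioning on $(\tilde p,\Pi^{(1)},\dots,\Pi^{(I)})$, the event $\{\xi_{i,j}\in A_{i,j}\text{ for all } i,j\}$ becomes, through the definition $\xi_{i,j}=\zeta_{i,\SC_j(\Pi^{(i)})}$, the event that each $\zeta_{i,c}$ lies in $\cap_{j\in \Pi^{(i)}_{c,n_i}} A_{i,j}$; two indices in the same block of $\Pi^{(i)}$ force the same $\zeta_{i,c}$, producing the intersection.

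Using conditional independence of the $\zeta_{i,c}$ across $(i,c)$ given $\tilde p$, this gives
\[
\P\Bigl\{\xi_{i,j}\in A_{i,j}\ \forall\, i,j \Bigm| \tilde p,\Pi^{(1)},\dots,\Pi^{(I)}\Bigr\}=\prod_{i=1}^I\prod_{c=1}^{|\Pi^{(i)}_{n_i}|}\tilde p\Bigl(\cap_{j\in\Pi^{(i)}_{c,n_i}}A_{i,j}\Bigr).
\]
Next, I would take expectation over the partitions, each contributing the factor $\P\{\Pi^{(i)}_{n_i}=\pi^{(i)}_{n_i}\}=\FP(\pi^{(i)}_{n_i})$, and then over $\tilde p$; by Fubini (the summands are non-negative), this reproduces the right-hand side of \eqref{defHSSM} with the generic $\tilde p$. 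The second assertion is then immediate: when $\tilde p\sim SSrp(\FP_0,H_0)$, the identity just proved is exactly the defining formula of $HSSM(\FP,\FP_0,H_0)$ in Definition \ref{def_hssm}.

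For partial exchangeability, I would show that for any permutations $\sigma_i$ of $\{1,\dots,n_i\}$ (applied independently within each group), the array $[\xi_{i,\sigma_i(j)}]$ has the same joint law as $[\xi_{i,j}]$. The key observation is that the right-hand side of the displayed identity depends on each $\Pi^{(i)}$ only through its unordered block structure, which is permutation invariant in $j$ by exchangeability of $\Pi^{(i)}$; hence the left-hand side inherits the invariance once one integrates. The main subtlety I anticipate is bookkeeping: our convention lists blocks in order of appearance, so $\sigma_i$ can relabel the index $c$, but because the $\zeta_{i,c}$ are themselves exchangeable across $c$ given $\tilde p$, any such relabeling is distributionally harmless, and partial exchangeability follows.
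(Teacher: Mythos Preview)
Your proof is correct. The computation verifying \eqref{defHSSM} by conditioning first on $(\tilde p,\Pi^{(1)},\dots,\Pi^{(I)})$, then integrating over the partitions and finally over $\tilde p$, is exactly what underlies the paper's terse ``the second part follows easily''; you have simply made it explicit.

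The one genuine difference is in how partial exchangeability is obtained. The paper argues more structurally: it first freezes $\tilde p=q$ to a deterministic law, so that the $\zeta_{i,c}$ are truly i.i.d.; then the rows $[\xi_{i,j}]_{j\ge 1}$ are independent across $i$ and each row is exchangeable (since $\Pi^{(i)}$ is exchangeable and the $\zeta_{i,c}$ are i.i.d.); finally, mixing over $\tilde p$ preserves partial exchangeability. Your route instead reads partial exchangeability off the already-established identity \eqref{defHSSM}, using that a within-group permutation $\sigma_i$ induces a bijection on $\CP_{n_i}$ which leaves $\FP$ invariant (block sizes unchanged), while any induced relabelling of the block index $c$ is absorbed by the exchangeability of the $\zeta_{i,c}$. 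Both arguments are sound; the paper's is slightly quicker and more conceptual, whereas yours has the advantage that partial exchangeability and \eqref{defHSSM} are obtained from a single computation.
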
 

The stochastic representation given in Proposition \ref{prop0} allows us to find a simple representation of the HSSM clustering structure (see Section \ref{S:CRF}). In Bayesian nonparametric inference, such representation turns out to be very useful because it leads to a generative interpretation of the nonparametric-priors in the HSSM class, and also makes possible to design general procedures for approximated posterior inference (see Section \ref{Sec_Gibbs}).

The definition of Hierarchical Species Sampling models includes some well known hierarchical processes and allows for the definition of new processes, as showed in the following set of examples.
\begin{example}[Hierarchical Pitman-Yor process]\label{HPY}
Let $PYP (\sigma, \theta,H)$ and $DP (\theta,H)$ denote PY and DP processes, respectively, given in Example \ref{Ex:PYPDP}. A vector of dependent random measures $(p_1,\dots,p_I)$, with law characterized by the following hierarchical structure
\begin{equation}\label{hdp0}
\begin{split}
 p_{i}  &  | p_0  \stackrel{iid}{\sim} PYP (\sigma_1, \theta_1,p_0), \quad i=1,\dots,I,  \\
  p_{0}  &  | H_0 {\sim} PYP (\sigma_0, \theta_0,H_0)  \\
 \end{split}
\end{equation}
is called \emph{Hierarchical Pitman-Yor Process}, $HPYP (\sigma_0, \theta_0,\sigma_1,\theta_1,H_0)$, of parameters $0 \leq \sigma_i \leq 1$, $-\sigma_i<\theta_i$, $i=0,1$ and $H_0$ 
(see \cite{Teh2006,Buntine10,Buntine16}).  Combining Example \ref{Ex:PYPDP}  with Proposition \ref{prop_hssm}, it is apparent that samples from a  $HPYP (\sigma_0, \theta_0,\sigma_1,\theta_1,H_0)$ form a $HSSM$ of parameters $(\FP,\FP_0,H_0)$ where $\FP$ and $\FP_0$ are Pitman-Yor EPPFs of parameters $(\sigma_1,\theta_1)$ and $(\sigma_0,\theta_0)$, respectively. If $\sigma_0=\sigma_1=0$, then 
one recovers the  $HDP(\theta_0,\theta_1,H_0)$. It is also possible to define some \emph{mixed cases}, where one considers a DP in one of the two stages of the hierarchy and a PYP with strictly positive discount parameter in the other, that are: $HDPYP(\theta_0,\sigma_1,\theta_1,H_0)=HPYP (0, \theta_0,\sigma_1,\theta_1,H_0)$ and $HPYDP(\sigma_0,\theta_0,\theta_1,H_0)=HPYP (\sigma_0, \theta_0,0,\theta_1,H_0)$. 
For an example of $HDPYP$ see  \cite{Dubey14}.
\end{example}

\begin{example}[Hierarchical homogeneous normalized random measures]
Hierarchical homogeneous Normalized Random Measures (HNRMI) introduced in \cite{Cam18} are defined by 
\[
\begin{split}
 p_{i}  &  | p_0  \stackrel{iid}{\sim} NRMI(\theta_1,\eta_1,p_0), \quad i=1,\dots,I,  \\
  p_{0}  &  | H {\sim}  NRMI(\theta_0,\eta_0,H_0),  \\
 \end{split}
\]
where $NRMI(\theta,\eta,H)$ denotes a normalized homogeneous random measure
with parameters $(\theta,\eta,H)$. For the definition of a $NRMI(\theta,\eta, H)$ see 
Appendix   \ref{App:NRM}. Since $p_0$ is almost surely a discrete measure, it is essential  in the definition 
of NRMI  to allow for a non diffuse measure $H$. Since as observed in Example \ref{Ex:NRM}
a NRMI is a SSrp, then  a HNRMI is a HSSrp and  observations sampled from a HNRMI are a HSSM.
\end{example}

Our class of HSSM includes also  new hierarchical mixture of finite mixture processes as detailed in the following examples.

\begin{example}[Hierarchical mixture of finite mixture processes]\label{EX:HMFM}  
Let $MFMP(\sigma,\rho,H)$ be the mixture of finite mixture process defined in Example \ref{Ex:GPFM}, then one can define the following hierarchical structure
\begin{equation}\label{hgp0}
\begin{split}
 p_{i}  &  | p_0  \stackrel{iid}{\sim} MFMP (\sigma_1,\rho^{(1)},p_0), \quad i=1,\dots,I,  \\
  p_{0}  &  | H_0 {\sim} MFMP (\sigma_0,\rho^{(0)},H_0),  \\
 \end{split}
\end{equation}
which is a Hierarchical MFMP with parameters $\sigma_i$, $\rho^{(i)}=(\rho^{(i)}_k)_{k \geq 1}$, $i=0,1$ and base measure $H_0$, and is denoted with $HMFMP (\gamma_0,\rho^{(0)},\rho^{(1)},H_0)$. This process extends, to the hierarchical case, the finite mixture model of \cite{MillerHarrison}.  As a special case, when $|\sigma_i|=1$ and $\rho^{(i)}$ is given by \eqref{DistXbis} $i=0,1,\dots$, one obtains the Hierarchical Gnedin Process with parameters $(\gamma_0,\zeta_0,\gamma_1,\zeta_1,H_0)$, denoted with $HGP (\gamma_0,\zeta_0,\gamma_1,\zeta_1,H_0)$. 
\end{example}
Finally, new hierarchical processes can also be obtained by assuming a finite mixture process at some level of the hierarchy and a $PYP$ at the other level, 
as showed in the following.

\begin{example}[Mixed Cases]\label{EX:MixCases} The following hierarchical structure
\begin{equation}\label{gppy0}
\begin{split}
 p_{i}  &  | p_0  \stackrel{iid}{\sim} PYP (\sigma_1,\theta_1,p_0), \quad i=1,\dots,I,  \\
  p_{0}  &  | H_0 {\sim} GP (\gamma_0, \zeta_0,H_0)  \\
 \end{split}
\end{equation}
is a hierarchical Gnedin-Pitman-Yor process, denoted with $HGPYP(\gamma_0,\zeta_0,\sigma_1,\theta_1,H_0)$. The hierarchical Gnedin-Dirichlet process is then obtained as special case for $\sigma_1=0$ and denoted with $HGDP(\gamma_0,\zeta_0,\theta_1,H_0)$. Exchanging the role of $PYP$ and $GP$ in the above construction, 
one gets the $HPYGP(\sigma_0,\theta_0,\gamma_1,\zeta_1,H_0)$.
\end{example}

\subsection{HSSM Clustering Structure:  Chinese Restaurant  Franchising}\label{S:CRF}
In this section, based on Proposition \ref{prop0}, we prove that the clustering structure of a HSSM can be described with the same metaphor used to describe 
the HDP: the Chinese Restaurant  Franchise. In this metaphor,  observations are attributed to ``customers'', identified by the indices $(i,j)$,  and groups are described as ``restaurants'' ($i=1,\dots,I$).  In each ``restaurant'',  ``customers'' are clustered  according to ``tables'', which are then clustered in an higher hierarchy by means of ``dishes''. Observations  are ``attached''  at the second level of the clustering process, when dishes are associated to tables. 
One can think that the first customer sitting at each table chooses a dish from a common menu and this dish is
shared by all other customers who join the same table afterwards.

The  first  step of the clustering process, acting 
within each group, will be driven  by independent random partitions $\Pi^{(1)},\dots,\Pi^{(I)}$
with EPPF $\FP$. 
The second level, acting between groups,  will be driven by a random partition $\Pi^{(0)}$ with EPPF $\FP_0$. 

Given  $n_1, \dots, n_I$ integer numbers,  
we introduce the following set of observations:
\[
\CO:=\{ \xi_{i,j}: j=1, \dots, n_{i};  i=1, \dots I\}.
\]

\begin{proposition}\label{prop1} 
 If  $[\xi_{i,j}]_{i=1,\dots,I, j \geq 1}$ is a  $HSSM(\FP,\FP_0, H_0)$, 
then
the law of $\CO$ is the same as the law of  $[\phi_{ d^*_{i,j} }: j=1, \dots, n_{i};  i=1, \dots I],$ where
\[
d_{i,j}^*:=\SC_{\SD(i,c_{i,j}) } \left(\Pi^{(0)}\right),
\quad
\SD(i,c):=\sum_{i'=1}^{i-1} |\Pi^{(i')}_{n_{i'}}| + c,
\quad
 c_{i,j}:=\SC_j\left(\Pi^{(i)}\right),
\]
$(\phi_n)_n$ is a sequence of i.i.d. random variables with distribution $H_0$, 
   $\Pi^{(1)},\dots,\Pi^{(I)} $ are i.i.d. exchangeable partitions
with EPPF $\FP$ and $\Pi^{(0)}$ is an exchangeable partition
with EPPF $\FP_0$. All the previous random elements are independent. 
\end{proposition}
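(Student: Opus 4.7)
The plan is to build on Proposition \ref{prop0}. By that result, a $HSSM(\FP,\FP_0,H_0)$ admits the representation
\[
\xi_{i,j}=\zeta_{i,c_{i,j}},\qquad c_{i,j}=\SC_j(\Pi^{(i)}),
\]
where $\Pi^{(1)},\dots,\Pi^{(I)}$ are i.i.d.\ exchangeable partitions with EPPF $\FP$ and, independently, $[\zeta_{i,c}]$ are conditionally i.i.d.\ given $\tilde p\sim SSrp(\FP_0,H_0)$. Since $\CO$ involves only the first $|\Pi^{(i)}_{n_i}|$ tables in restaurant $i$, the task reduces to re-expressing this finite collection of table-level ``dishes'' by means of Proposition \ref{prop_gsss}.

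First I would enumerate the tables across restaurants in a single linear order using the bijection $(i,c)\mapsto\SD(i,c)=\sum_{i'<i}|\Pi^{(i')}_{n_{i'}}|+c$, setting $\eta_m:=\zeta_{i,c}$ when $m=\SD(i,c)$. Conditional on $\mathbf{\Pi}:=(\Pi^{(1)},\dots,\Pi^{(I)})$, the sequence $(\eta_m)_{m\ge 1}$ is i.i.d.\ with common law $\tilde p\sim SSrp(\FP_0,H_0)$, i.e.\ a generalized species sampling sequence $gSSS(\FP_0,H_0)$ independent of $\mathbf{\Pi}$. By Proposition \ref{prop_gsss}(ii), such a sequence has the same joint law as $(Z_{\SC_m(\Pi^{(0)})})_{m\ge 1}$, where $\Pi^{(0)}$ is an exchangeable partition with EPPF $\FP_0$, the $Z_j$'s are i.i.d.\ with law $H_0$, and $\Pi^{(0)},(Z_j)_j$ are independent of each other; I would realize these on the same probability space jointly independent of $\mathbf{\Pi}$, so that $(\eta_m)_m$ may be replaced in distribution by $(Z_{\SC_m(\Pi^{(0)})})_m$ while preserving the joint law with $\mathbf{\Pi}$.

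Substituting back gives
\[
\xi_{i,j}=\eta_{\SD(i,c_{i,j})}\stackrel{d}{=} Z_{\SC_{\SD(i,c_{i,j})}(\Pi^{(0)})}=\phi_{d^*_{i,j}},
\]
with $\phi_n:=Z_n$ and $d^*_{i,j}$ exactly as in the statement, which is the desired identity in distribution for the whole array $\CO$.

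The main obstacle is purely bookkeeping rather than probabilistic: the enumeration $\SD(i,c)$ and the total number of tables $M=\sum_i|\Pi^{(i)}_{n_i}|$ are measurable functions of $\mathbf{\Pi}$, so Proposition \ref{prop_gsss}(ii) must be invoked conditionally on $\mathbf{\Pi}$ and then marginalized. A clean way is to verify equality of joint laws by testing against arbitrary bounded Borel functions $f(\CO,\mathbf{\Pi})$: conditioning on $\mathbf{\Pi}$, both sides yield expectations with respect to $\tilde p$ of products of the form $\prod_{i,c}\tilde p(A_{i,c})$, and Proposition \ref{prop_gsss}(iii) identifies these with the partition-weighted sums generated by $\Pi^{(0)}$ and $(Z_j)_j$.
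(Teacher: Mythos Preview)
Your proposal is correct and follows essentially the same route as the paper. The paper isolates your linearization step into a separate Lemma~\ref{prop2}, which shows directly from the definition~\eqref{defHSSM} that the law of $\CO$ equals that of $\{\zeta_{\SD(i,\SC_j(\Pi^{(i)}))}\}$ for a single-indexed exchangeable sequence $(\zeta_n)_n$ with directing measure $\tilde p\sim SSrp(\FP_0,H_0)$; it then applies Proposition~\ref{prop_gsss}(ii) to write $\zeta_n=\phi_{\SC_n(\Pi^{(0)})}$, exactly as you do.
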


Since  $d_{i,j}^*=d_{i,c_{i,j}}$ for $d_{i,c}:=\SC_{\SD(i,c) } \left(\Pi^{(0)}\right)$, 
then the construction in Proposition \ref{prop1} can be summarized by the following  hierarchical  structure
\begin{equation}\label{h0}
\begin{split}
& \xi_{i,j}=\phi_{d_{i,c_{i,j}}}, \,\, d_{i,c}=\SC_{\SD(i,c) } \left(\Pi^{(0)}\right), \,\, c_{i,j}=\SC_j\left(\Pi^{(i)}\right),\,\, \phi_n \stackrel{i.i.d.}{\sim} H_0, \\
& \left(\Pi^{(0)},\Pi^{(1)},\dots,\Pi^{(I)}\right) {\sim} \FP_0 \otimes  \FP \otimes \cdots \otimes  \FP,
 \end{split}
\end{equation}
where, following the Chinese Restaurant Franchise metaphor, $c_{i,j}$ is the table at which the $j$-th ``customer'' of the ``restaurant'' $i$ sits,  $d_{i,c}$ is the
index of the ``dish'' served at table $c$ in restaurant $i$ and $d^*_{i,j}$  is the index of the ``dish''   served  to the $j$-th customer of the $i$-th restaurant. 

Proposition \ref{prop1} can also be used to describe in a "generative" way the array $\CO$. Having in mind the Chinese Restaurant Franchising, we shall denote by $n_{icd}$  the number of customers in restaurant $i$ seated at table $c$ and being served dish $d$ and $m_{id}$ the number of tables in restaurant $i$ serving dish $d$. We denote with dots the marginal counts. Thus, $n_{i\cdot d}$ is the number of customers in restaurant $i$ being served dish $d$, $m_{i \cdot}$ is the number of tables in restaurant $i$, $n_{i\cdot\cdot}$ is the number of customers in restaurant $i$ (i.e. the $n_i$ observations), and $m_{\cdot \cdot}$ is the number of tables.
\par Finally, let $\omega_{n,k}$ and $\nu_n$ be the weights of the predictive distribution of the random partitions $\Pi^{(i)}$ ($i=1,\dots,I$)  with EPPF $\FP$ (see Section \ref{Sec:1-1}). Also, let $\tilde \omega_{n,k}$ and $\tilde \nu_n$ be the weights of the predictive distribution of the random partitions $\Pi^{(0)}$ with EPPF $\FP_0$ defined in an analogous way by using 
$\FP_0$ in place of $\EPk$.
\par We can  sample $\{ \xi_{i,j}; i=1, \dots I, j=1, \dots, n_{i} \}$  starting with $i=1$, $m_{1 \cdot}=1$, $n_{11\cdot}=1$, $D=1$, $m_{\cdot 1}=1$ and $\xi_{1,1}=\xi^*_{1,1}=\phi_1 \sim H_0$ and then iterating for $i=1,\ldots,I$ the following steps:
\begin{itemize}
\item[(S1)]  for $t=2,\dots,n_{i \cdot \cdot}$,  sample $\xi_{i,t}$ given $\xi_{i,1},\dots, \xi_{i,t-1}$ and $k:=m_{i\cdot}$ from $G_{it}^{*}+\nu_{t}(n_{i 1\cdot}, \dots,n_{ik \cdot} ) G_{it}( \cdot )$ where $G_{it}( \cdot )=\tilde{G}_{it}+\tilde \nu_{m_{\cdot\cdot}}(m_{\cdot 1}, \dots,m_{ \cdot D} )  H_0( \cdot )$ and
\begin{align*}
G_{it}^{*}=\sum_{c=1}^{k} \omega_{t,c} (n_{i 1\cdot}, \dots,n_{ik \cdot }) \delta_{\xi^*_{i,c}}( \cdot ), \, \tilde{G}_{it}=\sum_{d=1}^D   \tilde \omega_{m_{\cdot \cdot} c} (m_{\cdot 1}, \dots,m_{\cdot D}) \delta_{\phi_d}( \cdot )
\end{align*}
\item[(S2)]
If $\xi_{i,t}$ is sampled from $G_{it}^{*}$, then we set $\xi_{i,t}=\xi^*_{i,c}$ and let $c_{it}=c$ for the chosen $c$, we leave $m_{i\cdot}$ the same and set $n_{ic\cdot}=n_{ic\cdot}+1$, while, if $\xi_{i,t}$ is sampled from $G_{it}$, then we set $m_{i\cdot}=m_{i\cdot}+1$, $\xi_{i,t}=\xi^*_{i,m_{i\cdot}}$ and $c_{it}=m_{i\cdot}$. If $\xi_{i,t}$ is sampled from $\tilde{G}_{it}$, we set $\xi^*_{i,c}=\phi_d$, let $d_{ic}=d$ for chosen $d$ and increment $m_{\cdot c}$ by one. If $\xi_{i,t}$ is sampled from $H_0$, then we increment $D$ by one and set $\phi_{D}=\xi_{it}$, $\xi^*_{i,c}=\xi_{i,t}$ and $d_{ic}=D$. In both cases, we increment $m_{\cdot\cdot}$ by one. 
\item[(S3)] Having sampled $\xi_{i,t}$ with $t=n_{i \cdot \cdot}$ in the previous Step,  set $i=i+1$,  $m_{i \cdot}=1$, $n_{i 1\cdot}=1$ take 
$\xi_{i,1}=\xi_{i,1}^*$ where 
 $\xi_{i,1}^*$ is sampled from $G_{it}$.  Update D,$m_{\cdot c}$, $d_{ic}$ and $m_{\cdot\cdot}$ as described in the previous step. 
\end{itemize}

\subsection{Random partitions induced by a HSSM}

\newcommand{\bn}{\boldsymbol{n}}
\newcommand{\bmm}{\boldsymbol{m}}
\newcommand{\bl}{\boldsymbol{ \ell}}
\newcommand{\bq}{\boldsymbol{ q}}
\newcommand{\blam}{\boldsymbol{ \lambda}}

The Chinese Restaurant Franchising representation in Section \ref{S:CRF} provides 
a description of the HSSM clustering structure 
which is satisfactory for both theoretical and computational aspects. Nevertheless, 
the theoretical description can be completed  by deriving explicit expression for   
the law of the random partition induced by 
an HSSM. 

A partially exchangeable random array induces a random partition in this way: two couples of indices $(i,j)$ and $(i',j')$
are in the same block of the partition if and only if   $\xi_{i,j}$  and  $\xi_{i','j'}$ take on the same value.  A possible way of coding this random partition is the following. 
Let $D$ be the (random) number of distinct values, say $\phi_1,\dots,\phi_D$,  in the set of observations $\CO$. 
While in the case of a single sequence of observations there is a natural order  to enumerate distinct observations (the order of appearance), 
here one needs to choose a suitable order to list the different values $\phi_i$s.  In what follows, we 
choose the lexicographical order, which depends on the fact that we fix the numbers of observations $n_1,\dots,n_I$ for each group. 

 Given $\phi_1,\dots,\phi_D$, for $i=1,\dots,I$ and $d=1,\dots,D$, 
 let $\Pi^*_{i,d}$  the set of indices of the observations of the $i$-th group that belongs to the $d$-th cluster, that is
\[
\Pi^*_{i,d}=\left \{   j \in \{1,\dots,n_{i..}\}:   \xi_{i,j}=\phi_d  \right \}=\{    j \ \in \{1,\dots,n_{i}\}: d_{i,c_{ij}}=d  \} .
\]
Note that
for some $i$ it is possible that $\Pi^*_{i,d}=\emptyset$ but clearly, identifying  any element $j$ in $\Pi^*_{i,d}$ with 
the couple $(i,j)$, one has
\[
\cup_i \Pi^*_{i,d}  = \{ (i,j): \xi_{i,j}=\phi_d\}  \not = \emptyset.
\]
We shall say that  
\[
\Pi^*=\Pi^*(\CO):=[\Pi^*_{i,d}: i=1,\dots,I, d=1,\dots,D]
\]
is the random partition induced by the set of observations  $\CO$. Of course others coding are possible. 

In order to express  the law of $\Pi^*$ we need some more notation. 
Given integer numbers
 $n_{i  d}$ for $ i=1,\dots,I$ and  $d=1,\dots,D$,
 set   $\bn_i:=(n_{i  1},\dots, n_{i  D})$,
  $\bn=[\bn_1,\dots,\bn_I]$, 
and define 
  \[
 \CM[{\bn_i}]:=\Big \{  \bmm_i=[m_{i 1},\dots,m_{i D}] :  1 \leq m_{i d} \leq n_{i d} \,\, \text{if $n_{i d}>0$} ,   m_{i d} =0 \text{ if $n_{i d}=0$} \Big \}.
\]
For $\bmm_i$ in  $\CM[{\bn_i}]$ define 
\[
 \Lambda(\bmm_i)
 :=\left \{ 
\begin{array}{lll}
&  
\text{$ \blam_i=[\blam_{i1},\dots,\blam_{iD}]$ where $\blam_{id}=(\lambda_{id1},\dots,\lambda_{idn_{id}}) \in \N^{n_{id}} $}:  \\
%
& \quad  \sum_{j=1}^{n_{id}} j \lambda_{idj}=n_{id},  \sum_{j=1}^{n_{id}}  \lambda_{idj}=  m_{id} \,\, \text{ for $d=1,\dots,D$ } \\
\end{array}
\right\}.
\]
Set also 
\[
\CM[{\bn}]:= \CM[{\bn_1}] \times  \CM[{\bn_2}] \cdots \times  \CM[{\bn_I}] \quad \text{and} \quad   
\Lambda(\bmm) : = \Lambda(\bmm_1) \times \cdots \times  \Lambda(\bmm_I).
\] 
For  $\bmm$ in  $\CM[\bn]$ and 
$\blam=[\blam_1,\dots,\blam_I]$   in $\Lambda(\bmm)  $, following the 
convention of the previous section, set 
 $m_{i \cdot}=\sum_{d=1}^D m_{ i d}$,  $m_{\cdot d}=\sum_{i=1}^I m_{ i d}$ and  $m_{\cdot \cdot}=\sum_{d=1}^D \sum_{i=1}^I m_{ i d}$
and define
\[
 \FP[[\blam_i]] :=  \FP (\ell_{i11},\dots,\ell_{im_{i1}1},\dots,\ell_{i,m_{i,D},D} ),  
 \]
where $\ell_{i11},\dots,\ell_{im_{i1}1},\dots,\ell_{i,m_{i,D},D} $ are any integer numbers such that 
 $ \sum_{c=1}^{m_{id}} \ell_{i c d} = n_{i  d}$ for every $d$ and 
$\#\{ c: \ell_{icd} =j \}= \lambda_{idj} $ for every $d$ and $j$.
Note that in the previous definition  if $n_{i d}=0$, and hence  $m_{id}=0$,  
then $\ell_{i c d}=0$ and the expression $\FP_{m_{i \cdot}} (\ell_{i11},\dots,\ell_{im_{i1}1},\dots,\ell_{i,m_{i,D},D} )$ 
must be read as the EPPF obtained erasing all the zeros. For instance, if $n_{1,d}>0$ for all $d=1,\dots,D-1$ and
 $n_{1,D}=0$, then  $\FP (\ell_{111},\dots,\ell_{1 m_{11}1},\dots,\ell_{1 ,m_{1D},D} )$
is simply $\FP (\ell_{i11},\dots,\ell_{im_{i1}1},\dots,\ell_{1,m_{1,D-1},D-1} )$. Note also 
that  $\FP [[\blam_i]]$ is well defined since the value of $\FP(\ell_{i11},\dots,\ell_{im_{i1}1},\dots,\ell_{i,m_{i,D},D} ) $
depends only on the statistics $\blam_i$. See e.g. \cite{Pit06}.

For expository purposes and differently from all the other results in this paper, the law of the random partition is given under the assumption $H_0$ is non-atomic. 

\begin{proposition}\label{prop_part}
Under the same assumptions of Proposition \ref{prop1}, assume further that $H_0$ is non-atomic. 
Let $\pi^*$ be a possible  realisation of the random partition  $\Pi^*$ induced by $\CO$. If $\pi^*$ 
has 
$D$ distinct block and the cardinality of $\pi^*_{i,d}$ is $n_{i d}$ for $i=1,\dots,I$ and $d=1,\dots,D$, then 
 \[
\P\{ \Pi^*=\pi^*\}= \sum_{ \bmm \in \CM[{\bn}] } 
\FP_0 (m_{\cdot 1},\dots,m_{\cdot D}) 
\sum_{  \blam  \in \Lambda(\bmm) }
 \prod_{i=1}^I 
  \FP [[ \blam_i ]] \prod_{d=1}^D  \frac{n_{i d}!}{\prod_{j=1}^{n_{i d}} \lambda_{idj}! (j!)^{\lambda_{idj}}    }.
\]
\end{proposition}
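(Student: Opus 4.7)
The plan is to start from the Chinese Restaurant Franchise representation given in Proposition \ref{prop1}, which writes $\xi_{i,j}=\phi_{d_{i,c_{i,j}}}$ in terms of three independent ingredients: the i.i.d.\ sequence $(\phi_n)_n$ with common law $H_0$, the bottom-level partitions $\Pi^{(i)}$ with EPPF $\FP$, and the top-level partition $\Pi^{(0)}$ with EPPF $\FP_0$. Since $H_0$ is non-atomic, the $\phi_n$ are a.s.\ pairwise distinct, so $\xi_{i,j}=\xi_{i',j'}$ if and only if $d_{i,c_{i,j}}=d_{i',c_{i',j'}}$. Hence $\Pi^*$ is a deterministic function of $(\Pi^{(0)},\Pi^{(1)},\dots,\Pi^{(I)})$, and I can compute $\P\{\Pi^*=\pi^*\}$ by summing the joint EPPF weights over every configuration of this triple that reproduces $\pi^*$.

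Second, I describe the compatible configurations. A tuple of table partitions $\pi^{(i)}$ (restricted to $\{1,\dots,n_i\}$) is compatible with $\pi^*$ iff each of its blocks (i.e.\ each table) is entirely contained in some $\pi^*_{i,d}$, i.e.\ iff $\pi^{(i)}$ refines $[\pi^*_{i,1},\dots,\pi^*_{i,D}]$. Once the $\pi^{(i)}$'s are fixed, the dish partition $\pi^{(0)}$ of the $m_{\cdot\cdot}$ tables is forced: tables $(i,c)$ and $(i',c')$ lie in the same block of $\pi^{(0)}$ precisely when the customers they carry end up in the same block of $\pi^*$. Thus the sum reduces to one over compatible tuples $(\pi^{(1)},\dots,\pi^{(I)})$ only, with the $\pi^{(0)}$ weight evaluated at the forced partition.

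Third, I repackage the sum by the sufficient statistics $(\bmm,\blam)$: $m_{id}$ is the number of restaurant-$i$ tables landing in $\pi^*_{i,d}$, and $\lambda_{idj}$ counts how many of those tables have cardinality $j$. The block sizes of the forced $\pi^{(0)}$ are then $(m_{\cdot 1},\dots,m_{\cdot D})$, so by symmetry of $\FP_0$ its weight equals $\FP_0(m_{\cdot 1},\dots,m_{\cdot D})$; likewise $\FP(\pi^{(i)})$ depends only on $\blam_i$ and equals $\FP[[\blam_i]]$. The number of ways to split the prescribed $n_{id}$-element set $\pi^*_{i,d}$ into blocks with size multiplicities $\blam_{id}$ is the classical
\[
\frac{n_{id}!}{\prod_{j=1}^{n_{id}}\lambda_{idj}!\,(j!)^{\lambda_{idj}}}.
\]
Multiplying these factors over $(i,d)$ and summing over $\bmm\in\CM[\bn]$ and, for each $\bmm$, over $\blam\in\Lambda(\bmm)$ produces exactly the asserted expression.

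The main obstacle is bookkeeping rather than conceptual: one must verify that the map from compatible tuples $(\pi^{(1)},\dots,\pi^{(I)})$ to $(\bmm,\blam)$ has fibres of exactly the stated multinomial sizes, and that the degenerate case $n_{id}=0$ is handled consistently by forcing $m_{id}=0$, a trivial factor $1$ in the product, and the convention for $\FP[[\blam_i]]$ that erases zero arguments. Since these conventions are already built into the definitions of $\CM[\bn_i]$ and $\Lambda(\bmm_i)$, the combinatorial identity closes cleanly.
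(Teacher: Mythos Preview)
Your proposal is correct and follows essentially the same approach as the paper: both start from the Chinese Restaurant Franchise representation of Proposition~\ref{prop1}, use non-atomicity of $H_0$ to reduce $\{\Pi^*=\pi^*\}$ to a disjoint union over compatible tuples $(\pi^{(1)},\dots,\pi^{(I)})$ with $\pi^{(0)}$ uniquely determined, and then stratify the sum by the sufficient statistics $(\bmm,\blam)$ using the standard count of set partitions with prescribed block-size multiplicities. Your write-up is slightly more compact but the logic, including the handling of the degenerate case $n_{id}=0$ and the appeal to symmetry of $\FP_0$, matches the paper's proof.
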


\begin{remark}
For a suitable choice of $\FP$ and $\FP_0$, one obtains as special cases of Proposition \ref{prop_part} the results in Theorem 3 and 4 of \cite{Cam18}. The proof of Proposition \ref{prop_part} relies on combinatorical arguments combined with the results in Proposition \ref{prop0} and \ref{prop1}, whereas \cite{Cam18} use mainly specific properties of normalized random measures.
\end{remark}

\section{Cluster sizes distributions}
\label{Sec_asympt}
We study the distribution of the number of clusters in each group of observations 
(i.e.,  the number of distinct dishes served in the restaurant $i$), as well as the global number of clusters 
(i.e. the total number of distinct dishes in the restaurant franchise). We introduce a sequence of observation sets, $\CO_t$, $t=1,2,\dots$, each containing $n_i(t)$ elements in the $i$-th group and $n(t)=\sum_{i=1}^I n_i(t)$ total elements. We show the exact finite sample distribution of the number of clusters for given $n(t)$ and $n_{i}(t)$ when $t<\infty$. Some properties, such as the prior mean and variance, are discussed in order to provide some guidelines to setting HSSM parameters in applications. Moreover, we give some new asymptotic results when the number of observations goes to infinity, such that $n(t)$ diverges to $+\infty$ as $t$ goes to $+\infty$. The results extend existing asymptotic approximations for species sampling (\cite{Pit06}) and for hierarchical normalized random measures (\cite{Cam18}) to our HSSM. Finally, we provide a numerical study of the approximation accuracy.

\subsection{Distribution of the cluster size under the prior}
For every $i=1,\dots,I$,   we define
\[
K_{i,t}:=| \Pi^{(i)}_{n_i(t)}|,\quad K_t:=\sum_{i=1}^I | \Pi^{(i)}_{n_i(t)}|,\quad D_{i,t}=|\Pi^{(0)}_{K_{i,t}}|,\quad D_t=|\Pi^{(0)}_{K_t}|.
\]
By Proposition \ref{prop1},  for every fixed $t$, the laws of $K_{i,t}$ and $K_t$ are the same as the ones of the number of "active tables" in "restaurant" $i$ and 
of the total number of "active tables" in the whole franchise, respectively. Analogously, the laws of $D_t$ and $D_{i,t}$ are the same as the laws of the number of dishes served in the restaurant $i$ and in the whole franchise, respectively. If $H_0$ is diffuse, then $D_t$ and the number of distinct clusters in $\CO_t$ have the same law and also $D_{i,t}$ and the number of clusters in the group $i$ follow the same law.

The distribution of both  $D_t$ and $D_{i,t}$ can be derived as follows.

\begin{proposition}\label{Prop:clusterdistribution} For every $n\geq 1$ and $k=1,\dots,n$,
we define
$q_n(k):=\P\left\{ |\Pi^{(i)}_{n}|=k\right\}$  and $q^{(0)}_n(k):=\P\left\{ |\Pi^{(0)}_{n}|=k\right\}$.
Then, for every $i=1,\dots,I$, 
\[
\P\{ D_{i,t}=k\}= \sum_{m =k}^{n_i(t)}  q_{n_i(t)} (m) q^{(0)}_m(k), \quad \text{$k=1,\dots,n_i(t)$}, 
\]
and
\[
\P\{ D_{t}=k\} 
=   \sum_{m =\max(I,k)}^{n(t)} \left(  \sum_{ \substack{
m_1+\dots+m_I=m,\\ 1\leq m_i \leq n_i(t)}} \prod_{i=1}^I q_{n_i(t)} (m_i)   \right)  q^{(0)}_m(k),
  \quad \text{$k=1,\dots,n(t)$}.
\]
Moreover,  for every $r>0$ 
\[
\E\left[ D_{i,t}^r\right]=   
 \sum_{m =1}^{n_i(t)}     \E \left[|\Pi^{(0)}_{m}|^r \right]  q_{n_i(t)} (m)
 \]
and
\[
\quad \E\left[ D_{t}^r\right]=    \sum_{ \substack{ m_1,\dots,m_I: \\ 
1\leq m_i \leq n_i(t)}}   \E \left[  \Big  |\Pi^{(0)}_{\sum_{i=1}^I m_i} \Big  |^r \right] \prod_{i=1}^I q_{n_i(t)} (m_i) .
\]
\end{proposition}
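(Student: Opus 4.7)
The plan is to exploit the hierarchical independence structure contained in the Chinese Restaurant Franchise representation of Proposition \ref{prop1}: the restaurant-level partitions $\Pi^{(1)},\dots,\Pi^{(I)}$ are i.i.d.\ with EPPF $\FP$ and jointly independent of the cross-restaurant partition $\Pi^{(0)}$ (which has EPPF $\FP_0$). Once this independence is in hand, both claimed distributions and both moment identities reduce to a single conditioning step on the table counts.

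First I would observe that $K_{i,t}=|\Pi^{(i)}_{n_i(t)}|$ has law $q_{n_i(t)}(\cdot)$ by definition, and that, because the $\Pi^{(i)}$ are mutually independent, $K_t=\sum_{i=1}^I K_{i,t}$ is a convolution,
\[
\P\{K_t=m\}=\sum_{\substack{m_1+\cdots+m_I=m\\ 1\le m_i\le n_i(t)}}\prod_{i=1}^I q_{n_i(t)}(m_i),\qquad m\ge I.
\]
For the marginal cluster count, I condition on $K_{i,t}$. Since $\Pi^{(0)}$ is independent of $K_{i,t}$ and $D_{i,t}=|\Pi^{(0)}_{K_{i,t}}|$ by definition,
\[
\P\{D_{i,t}=k\mid K_{i,t}=m\}=\P\{|\Pi^{(0)}_m|=k\}=q^{(0)}_m(k),
\]
and summing over $m\in\{k,\dots,n_i(t)\}$ yields the first displayed formula. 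The formula for $\P\{D_t=k\}$ follows by the same argument, now conditioning on $K_t$ and substituting the convolution above; the lower summation bound $\max(I,k)$ encodes both $K_t\ge I$ and $D_t\le K_t$.

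The moment identities come from iterated expectation applied to $D_{i,t}^{\,r}=|\Pi^{(0)}_{K_{i,t}}|^{\,r}$ and $D_t^{\,r}=|\Pi^{(0)}_{K_t}|^{\,r}$: conditioning on the relevant table count and again invoking independence of $\Pi^{(0)}$ replaces the inner conditional expectation by $\E[|\Pi^{(0)}_m|^r]$ evaluated at a deterministic $m$, after which one averages against the laws computed in the previous step.

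There is no serious obstacle; the statement is essentially a bookkeeping corollary of Proposition \ref{prop1}. The only points requiring care are the ranges of summation (dictated by the bounds $1\le K_{i,t}\le n_i(t)$ and $K_t\ge I$) and the observation that although one might identify $D_{i,t}$ with the number of distinct dishes appearing on the block indices $\SD(i,1),\dots,\SD(i,K_{i,t})$ of $\Pi^{(0)}$ rather than on $\{1,\dots,K_{i,t}\}$, exchangeability of $\Pi^{(0)}$ makes the two equal in distribution given $K_{i,t}$, matching the proposition's definition $D_{i,t}=|\Pi^{(0)}_{K_{i,t}}|$.
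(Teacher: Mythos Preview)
Your proposal is correct and follows essentially the same approach as the paper: condition on the table counts $K_{i,t}$ (respectively $K_t$), invoke the independence of $\Pi^{(0)}$ from $\Pi^{(1)},\dots,\Pi^{(I)}$ supplied by Proposition~\ref{prop1}, and sum. In fact you supply more detail than the paper does---the explicit convolution for the law of $K_t$, the justification of the summation ranges, and the exchangeability remark reconciling $|\Pi^{(0)}_{K_{i,t}}|$ with the block indices $\SD(i,c)$---whereas the paper simply writes the conditioning step for $D_{i,t}$ and declares the $D_t$ case ``analogous''.
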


In particular, we note that, for every $i=1,\dots,I$, $n\geq 1$ and $k=1,\dots,n$,
\[
q_n(k)
=\sum_{(\lambda_1,\dots,\lambda_n) \in \Lambda_{n,k} } 
\frac{n!}{\prod_{j=1}^n ( \lambda_j ! )(j!)^{\lambda_j} }
  \EPk \left[\left[\lambda_1,\dots,\lambda_n\right]\right],
\]
where $ \Lambda_{n,k}$ is the set of integers $(\lambda_1,\dots,\lambda_n)$ such that $\sum_j \lambda_j=k$ and $\sum_j j \lambda_j=n$, $\EPk [[\lambda_1,\dots,\lambda_n]]$
is the common value of the symmetric function $\EPk$ for all $n_1,\dots,n_k$ with
$|\{ i: n_i =j \}|= \lambda_j $ for $j=1,\dots,n$ and 
\[
\frac{n!}{\prod_{j=1}^n ( \lambda_j ! )(j!)^{\lambda_j} }
\]
is the number  of partitions of $[n]$ with 
$\lambda_j$ blocks of cardinality $j=1,\dots,n$
(see Eq. (11) in \cite{Pit95}).
Similar expressions can be obtained  for  $q^{(0)}_n(k)$. 

The results in Proposition \ref{Prop:clusterdistribution} generalize to HSSM those given in Theorem 5 of \cite{Cam18} for HNRMI. Our proof relies on the hierarchical SSS construction (see Proposition \ref{prop1}), whereas the proof in \cite{Cam18} builds on the partial exchangeable partition function given in Proposition \ref{prop_part}.

Also, the generalized SSS construction allows us to derive the distribution of the number of clusters when $H_0$ is not diffuse. Indeed, it can be deduced by considering  possible coalescences  of  latent clusters (due to  ties in the i.i.d. sequence $(\phi_n)_n$ of Proposition \ref{prop1}) forming a true cluster. Let us denote with
$\tilde D_t$ and $\tilde D_{i,t}$  the number of distinct clusters in $\CO_t$ and in the group $i$, respectively, at ``time'' $t$.

\begin{proposition}\label{Prop:clusterdistribution00}
Let  $H^*_0(d|k)$ (for $1\leq d\leq k$) be the probability of observing exactly $d$ distinct values 
in the vector $(\phi_1,\dots,\phi_k)$ where the  $\phi_n$s are i.i.d. $H_0$. 
Then, 
\[
\P\{ \tilde D_{i,t}=d\}= \sum_{k =d}^{n_i(t)} H^*_0(d|k) \P\{ D_{i,t}=k\}    
\]
for $d=1,\dots,n_i(t)$. The probability of $\tilde D_t$ has the same 
expression as above with $D_t$ in place of $D_{i,t}$ and $n_t$ in place of $n_{i,t}$. 
If $H_0$ is diffuse, then
$\P\{ \tilde D_{i,t}=d\}=\P\{ D_{i,t}=d\}$ and $\P\{ \tilde D_{t}=d\}=\P\{ D_{t}=d\}$,
 for every $d \geq 1$.
\end{proposition}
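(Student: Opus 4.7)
The plan is to condition on the number of latent clusters $D_{i,t}$ (respectively $D_t$) and exploit the independence of the dish sequence $(\phi_n)_n$ from the Chinese Restaurant Franchise partition structure.

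By the representation in Proposition \ref{prop1}, the observations can be written as $\xi_{i,j} = \phi_{d^*_{i,j}}$, where $(\phi_n)_n$ is i.i.d. with law $H_0$ and independent of the family $(\Pi^{(0)},\Pi^{(1)},\dots,\Pi^{(I)})$, which determines the indices $d^*_{i,j}$. Set $S_i := \{d^*_{i,j} : 1 \leq j \leq n_i(t)\}$, the set of distinct dish indices served in restaurant $i$. By construction $|S_i| = D_{i,t}$. The number $\tilde D_{i,t}$ of distinct values in $\{\xi_{i,j} : 1 \leq j \leq n_i(t)\}$ coincides with the number of distinct values in the subfamily $\{\phi_s : s \in S_i\}$.

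Conditional on $D_{i,t} = k$, the independence of $(\phi_n)_n$ from the partitions implies that $\{\phi_s : s \in S_i\}$ is a collection of $k$ i.i.d. draws from $H_0$. Therefore $\P\{\tilde D_{i,t} = d \mid D_{i,t} = k\} = H^*_0(d|k)$ by the very definition of $H^*_0$, and the law of total probability gives the claimed formula. The argument for $\tilde D_t$ is identical, using $S := \bigcup_{i=1}^I S_i$ of cardinality $D_t$ in place of $S_i$.

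For the diffuse case, a collection of $k$ i.i.d.\ draws from a diffuse measure has $k$ distinct values almost surely, so $H^*_0(d|k) = \mathbf{1}\{d=k\}$ and the sum collapses to a single term, yielding $\P\{\tilde D_{i,t} = d\} = \P\{D_{i,t} = d\}$ (and analogously for $\tilde D_t$). This is essentially the two-level analogue of Corollary \ref{prop_gsssTRIS}(i); I do not anticipate any genuine obstacle, the only delicate point being to invoke cleanly the independence of $(\phi_n)_n$ from the partitions, which is built into the construction of Proposition \ref{prop1}.
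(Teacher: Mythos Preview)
Your proof is correct and follows essentially the same approach as the paper's: both invoke the Chinese Restaurant Franchise representation of Proposition \ref{prop1}, observe that $\tilde D_{i,t}$ is the number of distinct values among the $\phi$'s indexed by the dish set of restaurant $i$, and condition on $D_{i,t}=k$ using the independence of $(\phi_n)_n$ from the partition structure to obtain the conditional probability $H^*_0(d|k)$. Your write-up is in fact slightly more explicit than the paper's (which merely says the argument is analogous to Corollary \ref{prop_gsssTRIS}), and your remark about the diffuse case collapsing via $H^*_0(d|k)=\mathbf{1}\{d=k\}$ is exactly right.
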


The assumption of atomic base measures behind HDP and HPYP has been used in many studies, and some of its theoretical and computational implications have been investigated (e.g., see \cite{nguyen2016}) and \cite{sohn2009}), whereas mixed base measures appeared only recently in Bayesian nonparametrics and their implications are not yet well studied, especially in hierarchical constructions. In the following we state some new results for the case of spike-and-slab base measures.

\begin{proposition}\label{HierclustrH0spikeandslub} Assume that $H_0(dx)=a \delta_{x_0}(dx)+ (1-a) \tilde H(dx)$, 
where $a \in (0,1)$, $x_0$ is a point of $\XX$ and $\tilde H$ is a diffuse measure on $\XX$, then 
\[
\P\{ \tilde D_{i,t}=d\}=(1-a)^d \P\{ D_{i,t}=d\}+  \sum_{k =d}^{n_i(t)} {k \choose d-1} a^{k+1-d}(1-a)^{d-1}  
\P\{ D_{i,t}=k\}, 
\]
for $d=1,\dots,n_i(t)$. The probability of $\tilde D_t$ has the same 
expression as above with $D_t$ in place of $D_{i,t}$ and $n_t$ in place of $n_{i,t}$.  
Moreover, 
\[
\E[\tilde D_{i,t}]=1-\E[(1-a)^{D_{i,t}}]+(1-a)\E[{D_{i,t}}] \leq \E[D_{i,t}]
\]
and  $\E[\tilde D_t]$ has an analogous expression   with $D_{i,t}$ replaced 
by $D_{t}$. 
\end{proposition}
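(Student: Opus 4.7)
The plan is to derive the formula by applying Proposition \ref{Prop:clusterdistribution00}, which reduces the problem to computing the coalescence kernel $H^*_0(d\mid k)$ under the spike-and-slab base measure, and then to push the resulting identity through to obtain the expectation and the inequality.

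First I would analyze the structure of an i.i.d.\ sample $(\phi_1,\dots,\phi_k)$ from $H_0=a\delta_{x_0}+(1-a)\tilde H$. Let $J=\#\{n\le k:\phi_n=x_0\}$; then $J\sim\mathrm{Binomial}(k,a)$, and because $\tilde H$ is diffuse, the $k-J$ draws that fall under the slab component are almost surely pairwise distinct and all distinct from $x_0$. Consequently, the number of distinct values in $(\phi_1,\dots,\phi_k)$ equals $k$ when $J=0$ and equals $k-J+1$ when $J\ge 1$. Solving $k-J+1=d$ gives $J=k-d+1$, so for each $k\ge d$,
\[
H^*_0(d\mid k)=(1-a)^{d}\mathbf{1}_{\{k=d\}}+\binom{k}{k-d+1}a^{k-d+1}(1-a)^{d-1},
\]
where I used that the $J=1$ contribution when $k=d$ is already included in the second term (one checks $\binom{d}{d-1}a(1-a)^{d-1}=da(1-a)^{d-1}$). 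Since $\binom{k}{k-d+1}=\binom{k}{d-1}$, plugging this into
\[
\P\{\tilde D_{i,t}=d\}=\sum_{k=d}^{n_i(t)}H^*_0(d\mid k)\,\P\{D_{i,t}=k\},
\]
furnished by Proposition \ref{Prop:clusterdistribution00}, immediately yields the claimed identity. The identity for $\tilde D_t$ follows verbatim with $D_t$ and $n(t)$ in place of $D_{i,t}$ and $n_i(t)$.

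For the mean, I would avoid summing $d\,\P\{\tilde D_{i,t}=d\}$ directly and instead exploit the conditional representation: given $D_{i,t}=k$, the number of distinct dishes coincides in law with $k-J+\mathbf{1}_{\{J\ge 1\}}$ for $J\sim\mathrm{Binomial}(k,a)$. Hence
\[
\E[\tilde D_{i,t}\mid D_{i,t}=k]=k-ka+\bigl(1-(1-a)^{k}\bigr)=(1-a)k+1-(1-a)^{k},
\]
and taking expectations gives $\E[\tilde D_{i,t}]=1-\E[(1-a)^{D_{i,t}}]+(1-a)\E[D_{i,t}]$, with the analogous formula for $\tilde D_t$.

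Finally, the inequality $\E[\tilde D_{i,t}]\le\E[D_{i,t}]$ is equivalent to $1-\E[(1-a)^{D_{i,t}}]\le a\,\E[D_{i,t}]$, which follows pointwise from Bernoulli's inequality $(1-a)^{k}\ge 1-ak$ for $k\in\mathbb{N}$ and $a\in(0,1)$ after taking expectations. I do not expect a genuine obstacle here; the only delicate point is the boundary case $k=d$ in computing $H^*_0(d\mid k)$, where one must be careful to separate the event $J=0$ (which keeps all $k$ observations distinct) from $J=1$ (which still yields $d=k$ distinct values), and to verify that the compact expression $(1-a)^d\mathbf{1}_{\{k=d\}}+\binom{k}{d-1}a^{k+1-d}(1-a)^{d-1}$ correctly encodes both contributions.
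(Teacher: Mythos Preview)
Your argument is correct and follows the paper's strategy for the distributional identity: both derive the coalescence kernel $H^*_0(d\mid k)$ for the spike-and-slab base measure and then invoke Proposition~\ref{Prop:clusterdistribution00}. Your handling of the boundary case $k=d$ is accurate.

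For the expectation you take a slightly different (and cleaner) route than the paper. The paper computes $\E[\tilde D_{i,t}]$ by direct summation of $d\,\P\{\tilde D_{i,t}=d\}$, swapping the order of the double sum and evaluating $\sum_{d=1}^{k}d\binom{k}{d-1}a^{k+1-d}(1-a)^{d-1}$ via a shift of index and standard binomial-moment identities. Your conditional representation $\tilde D_{i,t}\mid D_{i,t}=k\stackrel{d}{=}k-J+\mathbf{1}_{\{J\ge 1\}}$ with $J\sim\mathrm{Bin}(k,a)$ bypasses that algebra entirely and makes the structure of the formula transparent. You also supply a proof of the inequality $\E[\tilde D_{i,t}]\le\E[D_{i,t}]$ via Bernoulli's inequality, which the paper states but does not argue; this is a small but genuine addition.
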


For a Gibbs type EPPF with $\sigma>0$, using results in 
\cite{GP2006}, we get
\[
q_n(k)= 
V_{n,k}  S_{\sigma}(n,k) 
\]
where $V_{n,k}$ satisfies the partial difference equation in \eqref{Rec_V} and $S_{\sigma}(n,k)$ is a generalized Stirling number of the first kind, defined as
\[
S_{\sigma}(n,k)=\frac{1}{\sigma^k k!}  \sum_{i=1}^k (-1)^i { k \choose i} (-i\sigma)_n,
\]
for $\sigma\not= 0$ and $S_{0}(n,k)=|s(n,k)|$ for $\sigma=0$, where $|s(n,k)|$ is the unsigned Stirling number of the first kind, see \cite{Pit06}.  See \cite{Deb2017} for an up-to-date review of Gibbs-type prior processes.

For the hierarchical PY process $\Pi^{(i)} \sim PY(\sigma,\th)$, the distribution $q_{n}(k)$ has closed-form expression
 \[
  q_n(k) =\frac{\prod_{i=1}^{k-1} (\theta+i\sigma)}{(\theta+1)_{n-1}}   \frac{1}{\sigma^k k!}  \sum_{i=1}^k (-1)^i { k \choose i} (-i\sigma)_n, 
 \]
 when $0 < \sigma<1$ and $\th>-\sigma$, whilst
\[
  q_n(k) =\frac{\theta^k \Gamma(\theta)}{\Gamma(\theta+n)} |s(n,k)|,
\]
when $\sigma=0$.


For the Gnedin model (\cite{Gnedin10}), $\Pi^{(i)} \sim GN(\gamma,\zeta)$, 
of Example \ref{ex:gnedinpart} it is possible to derive explicit expression for $q_n$
\begin{equation}
q_n(k)  =\binom{n-1}{k-1} \frac{n!}{k!}  \nu_{n,k},\,\, \hbox{with}\,\, \nu_{n,k}=\frac{(\gamma)_{n-k} \prod_{i=1}^{k-1} (i^2-\gamma i+\zeta)}{\prod_{m=1}^{n-1} (m^2+\gamma m +\zeta)}. \label{PrKn}
\end{equation}

Fig. \ref{all} in Appendix \ref{App:NumRes} shows the prior distribution 
of the number of clusters for each restaurant (group) $\P\{ D_{i,t}=k\}$  (left panel) and for the restaurant franchise $\P\{ D_{t}=k\}$ (right panel), when $I=2$ and $n_1(t)=n_2(t)=50$, for the processes $HDP(\theta_0,\theta_1,H_0)$ (black solid), $HPYP(\sigma_0,\theta_0,\sigma_1,\theta_1,H_0)$ (blue) and $HGP(\gamma_0,\zeta_0,\gamma_1,\zeta_1,H_0)$ (red). 

The values of the parameters are chosen in such a way that $E[D_{i,t}]$ is approximately equal to 25 when the number of customers is $n_i=50$ in each restaurant $i=1,2$. The HSSM parameter settings and the resulting prior means and variances of the number of clusters are given in Tab. \ref{Tab:priorEV}. Both $HPYP(H_0,\sigma_0,\theta_0;\sigma_1,\theta_1)$ and $HGP(H_0,\gamma_0,\zeta_0;\gamma_1,\zeta_1)$ offer more flexibility than the hierarchical Dirichlet process, in setting the prior number of clusters, since their parameters can be chosen to allow for different variance levels while keeping fix the expected number of clusters. An example is given by the blue and red dotted lines in Fig. \ref{all} for the two settings reported in the last two rows of Tab. \ref{Tab:priorEV} in Appendix \ref{App:NumRes}.

Further results on the sensitivity of the prior distribution to changes in the HSSM parameters are reported in Appendix \ref{App:NumRes}. In Fig. \ref{EffectsSymHierarchy}-\ref{EffectsSymHierarchy_scale}  in Appendix \ref{App:NumRes}, the analysis is done when the parameters of the top-hierarchy and of the bottom-hierarchy random measures are all equals (homogeneous case). For the easy of comparison, in each plot the blue lines represent the baseline cases corresponding to the black lines of Fig. \ref{all}. Changes in the concentration parameters, $\theta_i$ and $\zeta_i$, have effects on the mean and dispersion of the distribution as in the non-hierarchical case, see Fig. \ref{EffectsSymHierarchy}. Changes in the discount parameters $\sigma_i$ and $\gamma_i$ affect mainly the dispersion. In Fig. \ref{EffectsTopHierarchy}, we study the sensitivity in the non-homogeneous case, assuming the measure at the top of the hierarchy has different parameter values with respect to the measures at the bottom of the hierarchy which are assumed identical.

The results of the sensitivity analysis discussed in this section show that the expected global number of clusters is smaller than the sum of the expected marginal number of clusters, indicating that all the hierarchical prior measures considered allow for various degrees of information pooling across the different groups (restaurants). This information sharing effect is illustrated in Fig. \ref{Exp_N_Clusters} where the expected value (top panels) and variance (bottom panels) of the number of clusters are exhibited increasing $n(t)$ from $2$ to $1000$, assuming $I=2$ and $n_1(t)=n_2(t)=t$, with $t=1,\dots,500$. The expected global number of clusters, $E[D_t]$ (blue solid lines), is larger than the sum of the expected marginal number of clusters, $\sum_{i=1,2} E[D_{i,t}]$, for HSSM (red dashed lines) and independent SSM (black dash-dot lines). Qualitatively, the variance of the global number of clusters $V(D_t)$ (blue solid lines) is closed to the sum of the marginal variances $\sum_{i=1,2} V[D_{i,t}]$ for HSSM and much smaller than in the independent SSM (black dash-dot lines).
\begin{figure}[t]
  \caption[]{Expected value (top) and variance (bottom) of the number of clusters when $I=2$, $n=2,\ldots, 1000$ for: i) the HDP with $\theta_0 = \theta_1 = 43.3$ (first column); ii) the HPYP with $(\theta_0,\sigma_0) = (\theta_1,\sigma_1) = (29.9,0.25)$ (second column); iii) the HGP with $(\gamma_0,\zeta_0) = (\gamma_1,\zeta_1) = (15,1450)$ (third column). In each plot of the first (second) row, the blue solid lines refer to $E[D_t]$ ($V[D_t]$), the red dashed lines to $\sum_{i=1,2} E[D_{i,t}]$ ($\sum_{i=1,2} V[D_{i,t}]$) for HSSM and the black dash-dot lines to $\sum_{i=1,2} E[D_{i,t}]$ ($\sum_{i=1,2} V[D_{i,t}]$) for SSM.}
  \label{Exp_N_Clusters}
\begin{center}
  \begin{tabular}{ccc}
   \includegraphics[width=3.5cm]{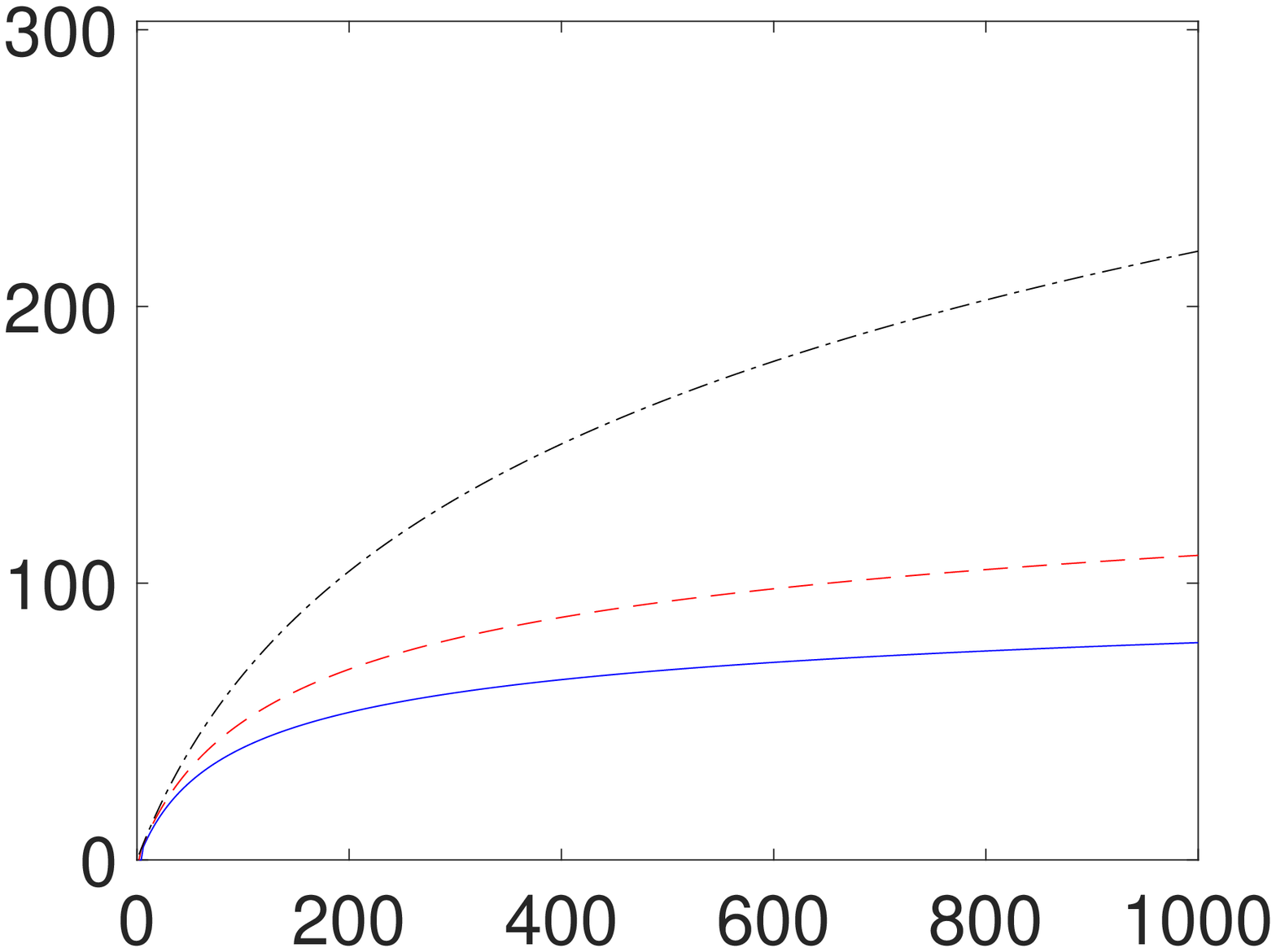} \hspace{-10pt} &
   \includegraphics[width=3.5cm]{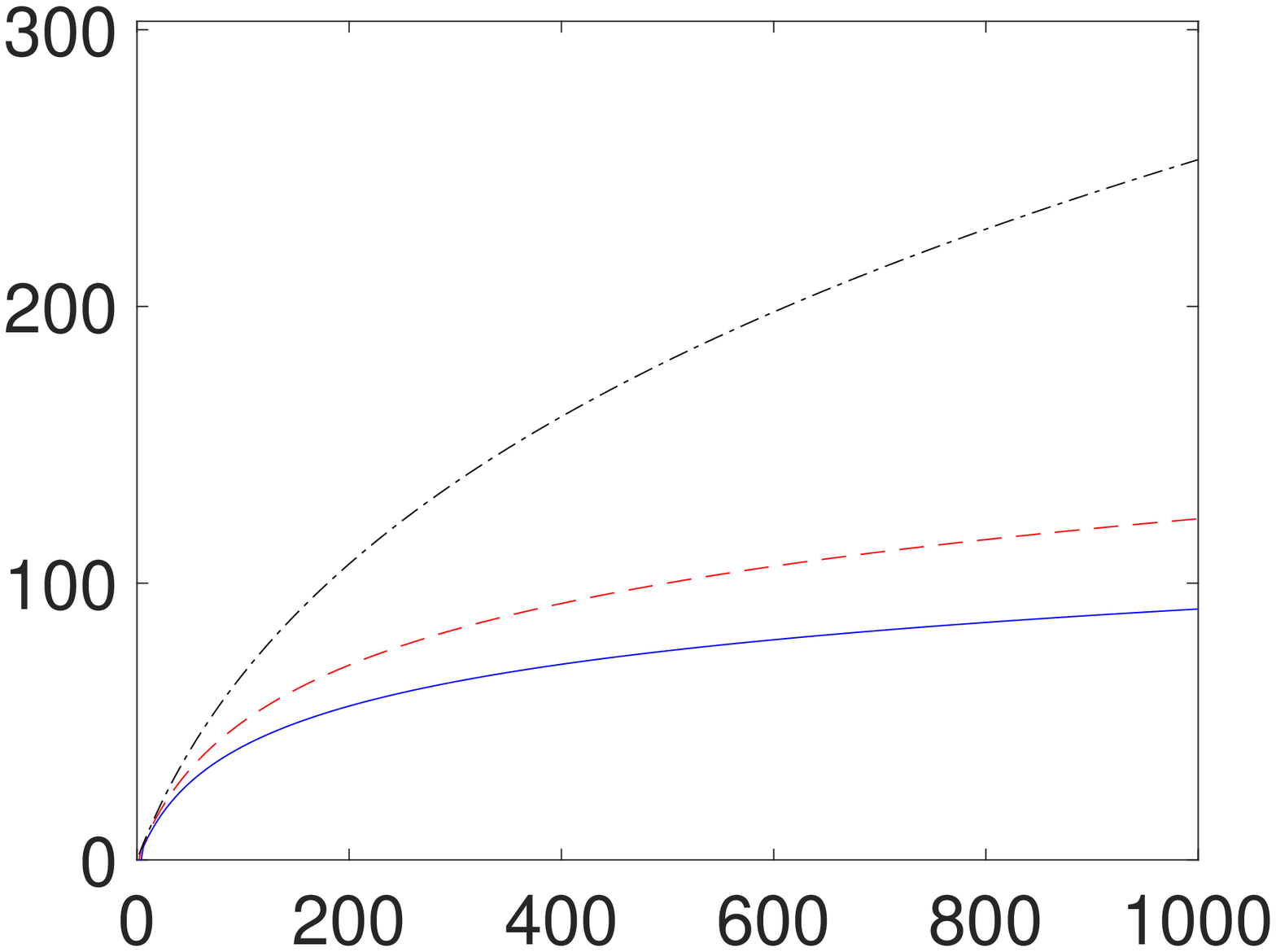} \hspace{-10pt} &
   \includegraphics[width=3.5cm]{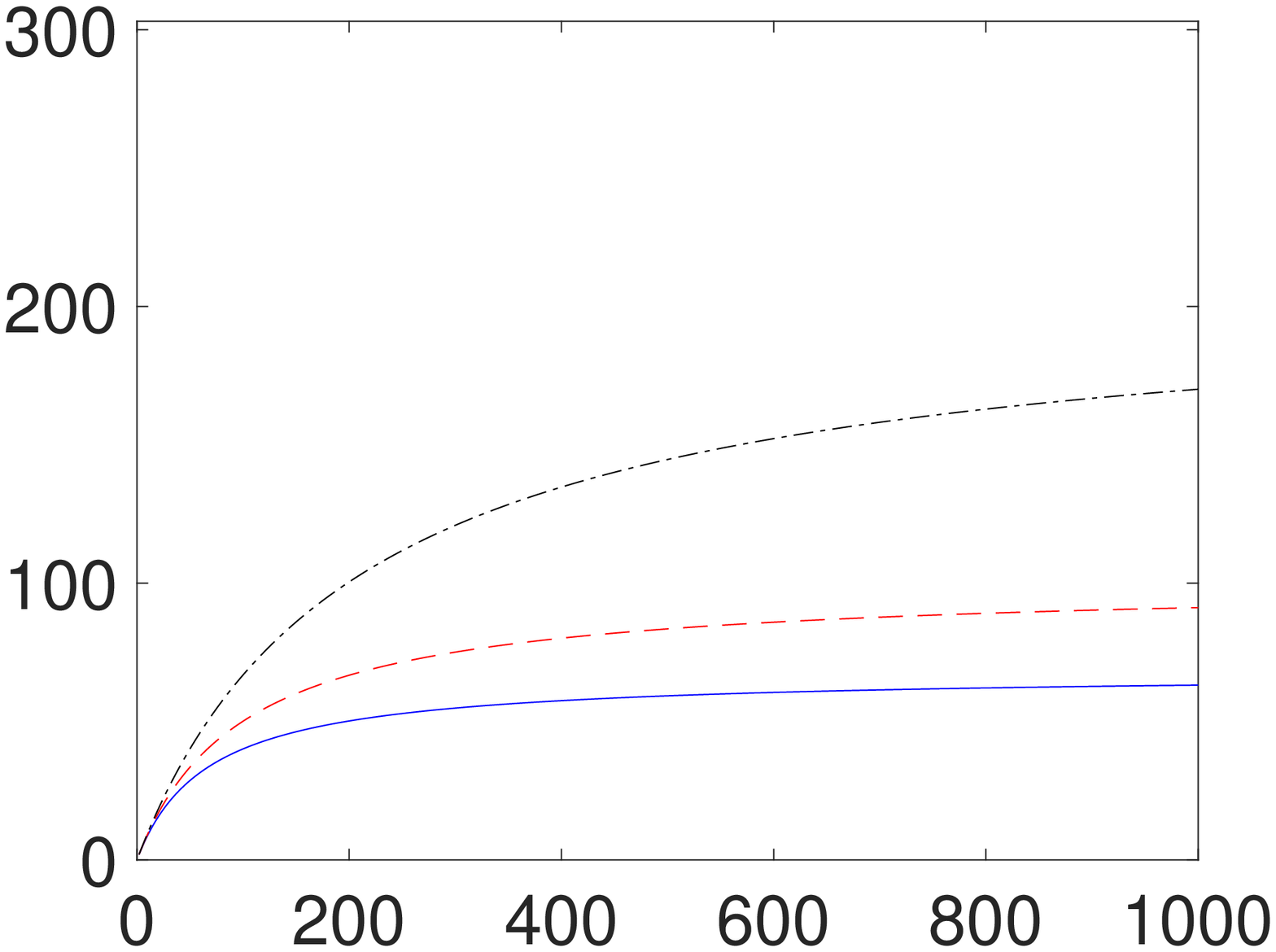} \hspace{-10pt} \\
   \includegraphics[width=3.5cm]{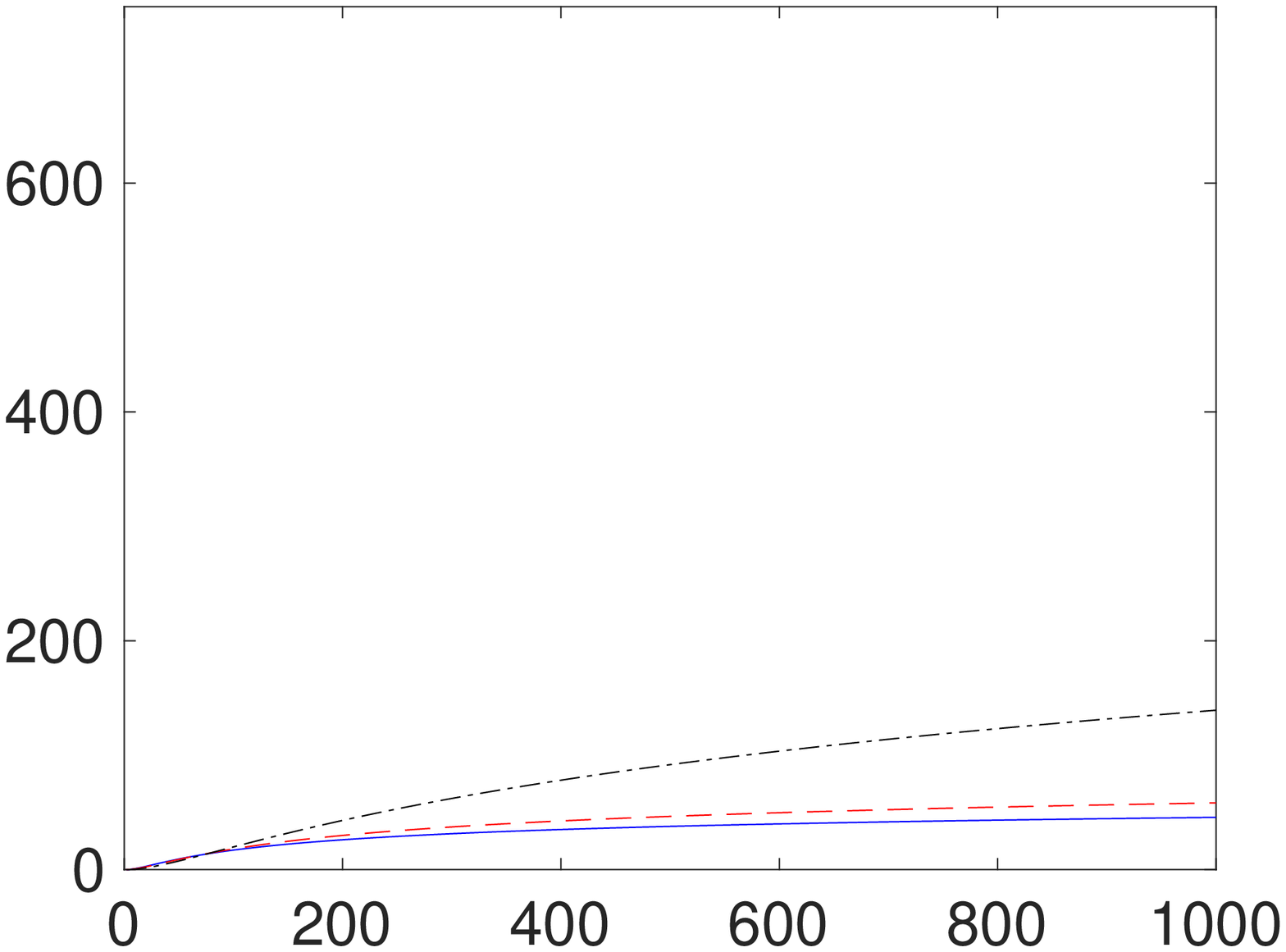} \hspace{-10pt} &
   \includegraphics[width=3.5cm]{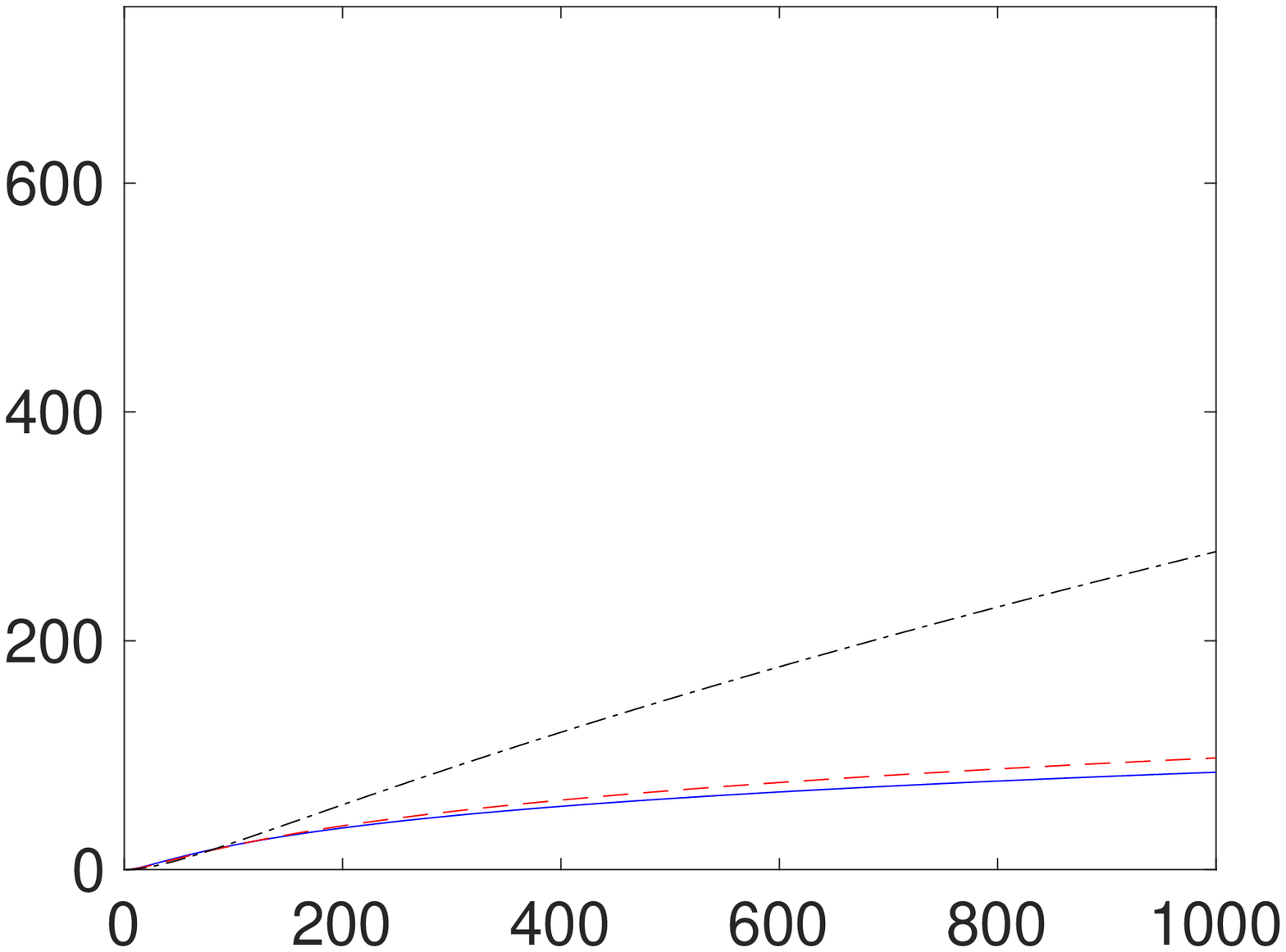} \hspace{-10pt} &
   \includegraphics[width=3.5cm]{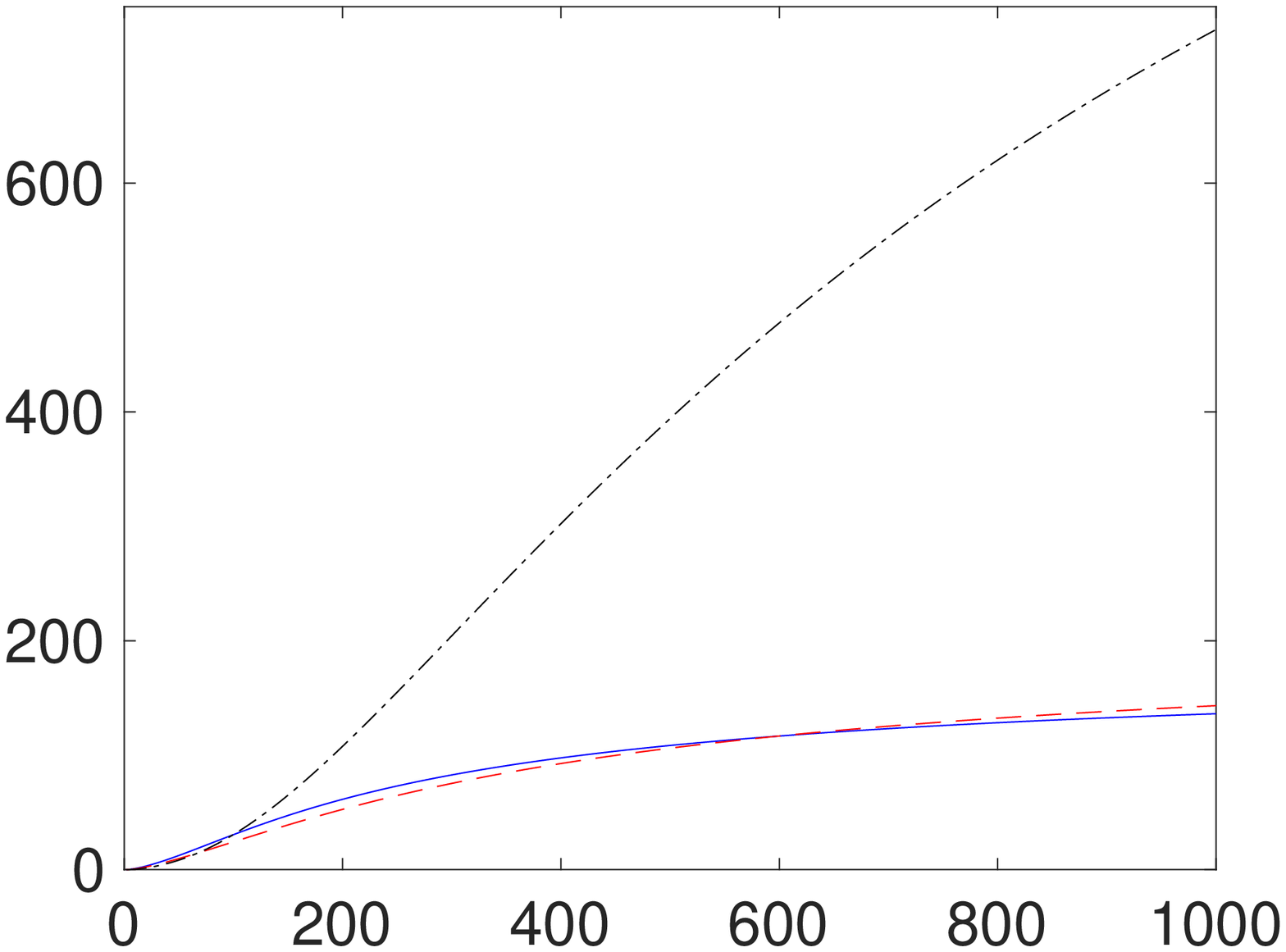} \hspace{-10pt} \\
  \end{tabular}
\end{center}
\end{figure}

\subsection{Asymptotic distribution of the cluster size} 
An exchangeable random partition $(\Pi_n)_{n \geq 1}$ has asymptotic 
diversity $S$ if  
\begin{equation}\label{sigmadiversity}
   |\Pi_n|/c_n \to S \quad a.s. 
\end{equation}
for  a positive random variable $S$ and a suitable normalizing  sequence $(c_n)_{n\geq 1}$. Asymptotic diversity generalizes the notion of $\sigma$-diversity, see 
Definition 3.10 in  \cite{Pit06}. An exchangeable random partition $(\Pi_n)_{n \geq 1}$ 
has $\sigma$-diversity $S$ if  \eqref{sigmadiversity} holds with $c_n=n^{\sigma}$. 
For any Gibbs-type partition 
  $(\Pi_n)_{n \geq 1}$, \eqref{sigmadiversity}
holds with 
 \[
c_n:=\left \{ 
 \begin{array}{ll}
1 & \text{if $\sigma<0$}, \\
\log(n) & \text{if $\sigma=0$}, \\
n^\sigma & \text{if $\sigma>0$}, \\
 \end{array}
 \right . 
 \]
 see Section 6.1 of \cite{Pitman2003}. Other characterizations of $\sigma$-diversity can be found in Lemma 3.1 in \cite{Pit06}. 


In the following propositions,  we use the (marginal) limiting behaviour  \eqref{sigmadiversity} of the random partitions $\Pi_n^{(i)}$ ($i=0,\dots,I$), to obtain the asymptotic distribution of $D_{i,t}$ and $D_t$ assuming  $c_n=n^\sigma L(n)$, with $L$ slowly-varying. 

The first general result deals with HSSM where $\Pi_{n}=\Pi_{n}^{(i)}$ satisfies  \eqref{sigmadiversity} for every $i=1,\dots,I$ and  $c_n \to +\infty$ and hence the cluster size $|\Pi_{n}^{(i)}|$ diverges to $+\infty$.

\begin{proposition}\label{PropAsym1} Assume that $\Pi^{(0)}$ and $\Pi^{(i)}$ {\rm (}for $i=1,\dots,I${\rm)} are independent exchangeable 
random partitions such that
$|\Pi^{(0)}_n|/a_n$  {\rm (}$|\Pi^{(i)}_n|/b_n$ for $i=1,\dots,I$, respectively{\rm)} converges almost surely 
to a strictly positive random variable $D^{(0)}_\infty$  {\rm (}$D^{(i)}_\infty$, respectively{\rm)} for 
suitable diverging sequences $a_n$ and $b_n$.
Moreover assume that 
 $a_n=n^{\sigma_0} L_0(n)$ and $b_n=n^{\sigma_1} L_1(n)$, with $\sigma_i \geq 0$ and $L_i$ slowly varying function, $i=0,1$, and set $d_n:=a_{b_{n}}=n^{\sigma_0\sigma_1} L_0(n^{\sigma_1}L_1(n))$. 
\begin{itemize}
\item[(i)]
If $\lim_{t \to +\infty} n_i(t)=+\infty$ for some $i$, 
then for $t \to +\infty$
\[
\frac{D_{i,t}}{d_{n_i(t)}} \to D^{(0)}_\infty  \left(D^{(i)}_\infty\right)^{\sigma_0} \quad \text{a.s.} 
\]  
\item[(ii)] If $\lim_{t \to +\infty} n_i(t) =+\infty$ and ${n_i(t)}/n(t) \to \ w_i>0$  for every $i=1,\dots,I$ then for $t \to +\infty$
\[
\frac{D_{t}}{d_{n(t)}} \to D^{(0)}_\infty  \left(\sum_{i=1}^I w_i^{\sigma_1} D^{(i)}_\infty \right)^{\sigma_0} \quad \text{a.s.} 
\]

\end{itemize}
\end{proposition}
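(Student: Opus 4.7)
The plan is to exploit the hierarchical representation from Proposition \ref{prop1}, which tells us that $D_{i,t} = |\Pi^{(0)}_{K_{i,t}}|$ and $D_t = |\Pi^{(0)}_{K_t}|$. In both cases, the number of clusters is the block count of the top-level partition $\Pi^{(0)}$ evaluated at a random index that depends only on the (independent) lower-level partitions $\Pi^{(i)}$. The whole proof then reduces to transferring the asymptotic diversity of $\Pi^{(0)}$ from a deterministic index to these random indices, combined with the regular variation of the normalizing sequences.

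For part (i), I would decompose
\[
\frac{D_{i,t}}{d_{n_i(t)}} = \frac{|\Pi^{(0)}_{K_{i,t}}|}{a_{K_{i,t}}} \cdot \frac{a_{K_{i,t}}}{a_{b_{n_i(t)}}}.
\]
Since $K_{i,t}/b_{n_i(t)} \to D^{(i)}_\infty > 0$ a.s.\ and $b_{n_i(t)} \to \infty$, we have $K_{i,t} \to \infty$ a.s. Because $\Pi^{(0)}$ is independent of $\Pi^{(1)},\dots,\Pi^{(I)}$, the a.s.\ event $\{|\Pi^{(0)}_m|/a_m \to D^{(0)}_\infty\}$ intersected with $\{K_{i,t} \to \infty\}$ still has full probability, and on this intersection the first factor converges to $D^{(0)}_\infty$. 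For the second factor, use the regular variation $a_n = n^{\sigma_0} L_0(n)$ to write
\[
\frac{a_{K_{i,t}}}{a_{b_{n_i(t)}}} = \bigl(K_{i,t}/b_{n_i(t)}\bigr)^{\sigma_0} \cdot \frac{L_0(K_{i,t})}{L_0(b_{n_i(t)})} \longrightarrow (D^{(i)}_\infty)^{\sigma_0},
\]
where the ratio of slowly varying functions converges to $1$ by the uniform convergence theorem, applied along sample paths on which $K_{i,t}/b_{n_i(t)}$ eventually stays in a compact subset of $(0,\infty)$.

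Part (ii) follows the same pattern once we identify the limit of $K_t/b_{n(t)}$. Writing
\[
\frac{K_t}{b_{n(t)}} = \sum_{i=1}^I \frac{K_{i,t}}{b_{n_i(t)}} \cdot \frac{b_{n_i(t)}}{b_{n(t)}},
\]
the first factor in each summand converges to $D^{(i)}_\infty$ a.s., and the second to $w_i^{\sigma_1}$ by regular variation of $b_n$ combined with $n_i(t)/n(t) \to w_i > 0$ (so that $L_1(n_i(t))/L_1(n(t)) \to 1$). This yields $K_t/b_{n(t)} \to \sum_{i=1}^I w_i^{\sigma_1} D^{(i)}_\infty$ a.s., and the conclusion follows by repeating the two-factor decomposition of part (i) with $K_t$ in place of $K_{i,t}$ and $n(t)$ in place of $n_i(t)$.

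The main delicacy is the composition of two a.s.\ limits at random indices. The independence between $\Pi^{(0)}$ and the $\Pi^{(i)}$'s is indispensable: without it one could only hope for convergence in probability via a subsequence argument. A minor auxiliary point is the use of the uniform convergence theorem for slowly varying functions along random arguments, but since the random ratios converge almost surely to strictly positive finite limits, one may localize to compact subsets of $(0,\infty)$ on which uniform convergence is available. No deeper obstacle is expected.
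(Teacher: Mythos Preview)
Your proposal is correct and follows essentially the same route as the paper: the same two-factor decomposition $D_{i,t}/d_{n_i(t)} = (|\Pi^{(0)}_{K_{i,t}}|/a_{K_{i,t}})\cdot(a_{K_{i,t}}/a_{b_{n_i(t)}})$, the same regular-variation argument for the second factor, and the same treatment of $K_t/b_{n(t)}$ as a sum in part (ii). The only cosmetic difference is that you appeal to the uniform convergence theorem for slowly varying functions while the paper cites the equivalent property $L(x_n y_n)/L(y_n)\to 1$ for $y_n\to\infty$, $x_n\to x>0$; you are also slightly more explicit than the paper about why independence of $\Pi^{(0)}$ from the $\Pi^{(i)}$ is needed to pass the a.s.\ convergence of $|\Pi^{(0)}_m|/a_m$ to the random index $K_{i,t}$.
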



\begin{remark}
Part (ii) extends to HSSM with different group sizes, $n_i(t)$, the results given in Theorem 7 of \cite{Cam18} for HNRMI with groups of equal size. Both part (i) and (ii) provide  deterministic scaling of  diversities, in the spirit of \cite{Pit06}, and differently from \cite{Cam18} where a random scaling is obtained. 
\end{remark}

The second general result describes  the asymptotic behaviour of $D_{i,t}$ and $D_t$ 
in presence of  random partitions for which $c_n=1$ for every $n$.

\begin{proposition}\label{PropAsym2} Assume that $\Pi^{(0)}$ and $\Pi^{(i)}$, $i=1,\dots,I$ are independent exchangeable 
random partitions and that $\lim_{t \to \infty} n_i(t)=+\infty$ for every $i=1,\dots,I$.
\begin{itemize}
\item[(i)] If $|\Pi^{(i)}_n|$ converges a.s. to a positive random variable  $K_i$ as $n \to +\infty$, then 
for every $k\geq 1$
\[
\lim_{t \to +\infty}\P\left\{ D_{i,t}=k\right\}= \sum_{m  \geq k}  \P\left\{ K_i= m\right\}  q^{(0)}_m(k),
\]
and 
\[
\lim_{t \to +\infty} \P\{ D_{t}=k\} =   \sum_{m \geq \max(I,k)}\sum_{ \substack{
m_1+\dots+m_I=m,\\ 1\leq m_i }} q^{(0)}_m(k)\prod_{i=1}^I\P\{ K_i= m_i\}.
\]
\item[(ii)]  If $|\Pi^{(i)}_n|/b_n$ converges a.s. 
to a strictly positive random variable $D^{(i)}_\infty$ for a suitable diverging sequences $b_n$
and $|\Pi^{(0)}_n|$ converges a.s. to a positive random variable  $K_0$ as $n \to +\infty$,
then, for every $k \geq 1$,
\[
\lim_{t \to +\infty}\P\{ D_{t}=k\}=\lim_{t \to +\infty}\P\{ D_{i,t}=k\}=  \P\{ K_0= k\} . 
\]
\end{itemize}
\end{proposition}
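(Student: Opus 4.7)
My plan is to exploit a simple but crucial observation: since $|\Pi^{(i)}_n|$ is non-decreasing in $n$ (each newly added integer either joins an existing block or opens a new one) and takes values in $\mathbb{N}$, an almost sure convergence of $|\Pi^{(i)}_n|$ to a finite integer-valued random variable $K_i$ forces $|\Pi^{(i)}_n|=K_i$ for all sufficiently large $n$, almost surely. The same monotone/integer-valued remark applies to $|\Pi^{(0)}_n|$ in part (ii). This turns both statements into assertions about an eventual substitution of a fixed random index into an independent partition process, and the limiting laws then follow by conditioning and independence.

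For part (i), I first note that $K_{i,t}=|\Pi^{(i)}_{n_i(t)}|\to K_i$ a.s.\ as $t\to\infty$, because $n_i(t)\to\infty$; by the eventual-equality observation, $K_{i,t}=K_i$ for all $t$ large enough a.s., so $D_{i,t}=|\Pi^{(0)}_{K_{i,t}}|=|\Pi^{(0)}_{K_i}|$ eventually a.s. Since $\Pi^{(0)}$ is independent of $K_i$, I condition on $K_i$ to obtain
\[
\lim_{t\to\infty}\P\{D_{i,t}=k\}=\P\{|\Pi^{(0)}_{K_i}|=k\}=\sum_{m\geq k}\P\{K_i=m\}\,q^{(0)}_m(k),
\]
the range $m\geq k$ being forced by $|\Pi^{(0)}_m|\leq m$. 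The formula for $D_t$ follows identically: $K_t=\sum_{i=1}^I K_{i,t}\to\sum_{i=1}^I K_i$ a.s.\ and is integer-valued, hence eventually $D_t=|\Pi^{(0)}_{\sum_i K_i}|$ a.s.; conditioning on $(K_1,\dots,K_I)$, using the mutual independence of $\Pi^{(0)},\Pi^{(1)},\dots,\Pi^{(I)}$, and writing $m=\sum_i m_i$, produces the stated sum, where $m_i\geq 1$ is automatic since each $K_i\geq 1$ and therefore $m\geq I$.

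For part (ii), the scaling $|\Pi^{(i)}_n|/b_n\to D^{(i)}_\infty>0$ with $b_n\to\infty$ forces $K_{i,t}\to\infty$ a.s.\ for every $i$, and hence $K_t\to\infty$ a.s.\ as well. The hypothesis $|\Pi^{(0)}_n|\to K_0$ a.s.\ as $n\to\infty$, combined with the eventual-equality remark applied now to $\Pi^{(0)}$, produces a random $N_0$ with $|\Pi^{(0)}_n|=K_0$ for all $n\geq N_0$ a.s. Since $K_{i,t}\geq N_0$ eventually a.s.\ (and likewise $K_t\geq N_0$ eventually), I conclude $D_{i,t}=K_0$ and $D_t=K_0$ for all $t$ large enough a.s., whence $\P\{D_{i,t}=k\}\to\P\{K_0=k\}$ and the same for $D_t$.

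The only point I expect to require care is the independence bookkeeping in part (ii): the substitution of the random argument $K_{i,t}$ (or $K_t$) into $|\Pi^{(0)}_{\cdot}|$ is valid precisely because $\Pi^{(0)}$ is independent of the family $(\Pi^{(i)})_{i\geq 1}$, so the random threshold $N_0$ (a functional of $\Pi^{(0)}$ alone) is independent of the random levels $(K_{i,t})_t$; the event $\{K_{i,t}\geq N_0\}$ has probability tending to one for each fixed threshold, which combined with monotonicity justifies the pointwise convergence of distributions. Beyond that, no analytic estimates (dominated convergence, tightness, etc.) are needed — the argument is essentially combinatorial and rests entirely on monotonicity of block counts and mutual independence of the partitions in the hierarchy.
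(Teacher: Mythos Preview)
Your proof is correct. For part (ii) your argument coincides with the paper's: both observe that $K_{i,t}\to\infty$ a.s., so $D_{i,t}=|\Pi^{(0)}_{K_{i,t}}|$ converges a.s.\ to $K_0$ because $|\Pi^{(0)}_n|$ is eventually equal to $K_0$; convergence in distribution for integer-valued variables then gives the pointwise limit of the pmf. Your extra remark about the random threshold $N_0$ and independence is unnecessary here---on the intersection of the two full-probability events $\{K_{i,t}\to\infty\}$ and $\{N_0<\infty\}$ the inequality $K_{i,t}\ge N_0$ holds eventually, regardless of any independence.

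For part (i) you take a slightly different route from the paper. The paper simply passes to the limit $t\to\infty$ inside the finite-sample formula of Proposition~\ref{Prop:clusterdistribution}, using that $q_{n_i(t)}(m)\to\P\{K_i=m\}$ pointwise (with an implicit dominated-convergence step to exchange limit and sum). You instead argue probabilistically: monotonicity plus integer values force $K_{i,t}=K_i$ eventually a.s., hence $D_{i,t}=|\Pi^{(0)}_{K_i}|$ eventually, and the limiting pmf is obtained by conditioning on $K_i$. Your argument is a bit more self-contained (it does not invoke the explicit mixture formula) and makes the a.s.\ convergence of $D_{i,t}$ transparent; the paper's route is shorter because it reuses a formula already established. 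Both are elementary and equally valid.
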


Starting from Propositions \ref{PropAsym1} and \ref{PropAsym2},  analytic  expressions for the asymptotic  distributions of $D_{i,t}$ and $D_{t}$
can be deduced for some special HSSMs.

As an example, consider the HGP and the mixed hierarchical models 
described in Examples \ref{EX:HMFM} and \ref{EX:MixCases}. 
If  $(\Pi_n)_n$ is the Gnedin's partition  of parameters $(\gamma,\zeta)$, then $| \Pi_{n}|$ converges almost surely to a random variable $K$ with distribution \eqref{DistXbis},  see \cite{Gnedin10}.  
Hence, the asymptotic behaviour of the number of clusters in a HGP 
and in HPYGP  can be derived from Proposition \ref{PropAsym2} as stated here below. 

\begin{proposition}\label{Prop:AsymGnedin}
In  a $HGP (\gamma_0,\zeta_0,\gamma_1,\zeta_1,H_0)$, one has
\[
\lim_{t \to +\infty}\P\left\{ D_{i,t}=k\right\}=\frac{c_{\gamma_1,\zeta_1}}{k!}
\left (
\prod_{i=1}^{k-1}(i^2-\gamma_0i+\zeta_0)  \right )
\sum_{m \geq k} \frac{(\gamma_0)_{m-k}}{(k-1)!(m-k)!}  
\prod_{j=1}^{m-1} \frac{ (j^2-\gamma_1 j+\zeta_1)  }{ (j^2+\gamma_0 j+\zeta_0)}
\]
 with 
  \[
c_{\gamma_1,\zeta_1}= \frac{\Gamma(1 + (\gamma_1 + \sqrt{\gamma_1^2-4\zeta_1})/{2})\Gamma(1 + (\gamma_1 - \sqrt{\gamma_1^2-4\zeta_1})/{2})}{\Gamma(\gamma_1)}.
 \]
In contrast, in a $HPYGP(\sigma_0,\theta_0,\gamma_1,\zeta_1,H_0)$, 
\[
\lim_{t \to +\infty}\P\{ D_{t}=m\}=\lim_{t \to +\infty}\P\{ D_{i,t}=m\}= c_{\gamma_1,\zeta_1}
 \frac{\prod_{l=1}^{m-1} (l^2-\gamma l+\zeta)}{m!(m-1)!}.
\]
 \end{proposition}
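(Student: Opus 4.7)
The two claims follow from Proposition \ref{PropAsym2} combined with the explicit Gnedin formulas \eqref{PrKn} and \eqref{DistXbis}, using the fact from \cite{Gnedin10} that the cluster count $|\Pi_n|$ of a $GN(\gamma,\zeta)$ partition converges almost surely to a positive integer random variable $K$ with p.m.f.\ $\rho$ given by \eqref{DistXbis}.

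For the HGP, both $\Pi^{(0)}\sim GN(\gamma_0,\zeta_0)$ and $\Pi^{(i)}\sim GN(\gamma_1,\zeta_1)$ have a.s.\ finite asymptotic cluster counts, so Proposition \ref{PropAsym2}(i) applies with $K_i\sim\rho^{(1)}$, the Gnedin limit law at parameters $(\gamma_1,\zeta_1)$. This gives
\[
\lim_{t\to+\infty}\P\{D_{i,t}=k\}=\sum_{m\ge k}\rho^{(1)}_m\, q^{(0)}_m(k),
\]
with $q^{(0)}_m(k)$ read from \eqref{PrKn} at parameters $(\gamma_0,\zeta_0)$. Substituting both expressions, the $m!(m-1)!$ denominator in $\rho^{(1)}_m$ cancels against the $\binom{m-1}{k-1} m!/k!$ factor of $q^{(0)}_m(k)$; the $k$-only pieces collect into the prefactor $c_{\gamma_1,\zeta_1}\prod_{i=1}^{k-1}(i^2-\gamma_0 i+\zeta_0)/k!$, and the remaining $m$-indexed summand takes exactly the form $(\gamma_0)_{m-k}\prod_{l=1}^{m-1}(l^2-\gamma_1 l+\zeta_1)/\bigl[(k-1)!(m-k)!\prod_{j=1}^{m-1}(j^2+\gamma_0 j+\zeta_0)\bigr]$ stated in the proposition.

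For the HPYGP, the top $\Pi^{(0)}\sim PY(\sigma_0,\theta_0)$ has $\sigma_0$-diversity (diverging cluster count of order $n^{\sigma_0}$) while each bottom $\Pi^{(i)}\sim GN(\gamma_1,\zeta_1)$ has a.s.\ finite limit $K_i^{(1)}\sim\rho^{(1)}$. Invoking Proposition \ref{PropAsym2}(ii) with the appropriate matching of converging and diverging levels, both $D_{i,t}$ and $D_t$ converge in distribution to a Gnedin random variable with parameters $(\gamma_1,\zeta_1)$, whose p.m.f.\ is \eqref{DistXbis} with normalising constant $c_{\gamma_1,\zeta_1}$ as read off from Example \ref{ex:gnedinpart}.

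The main obstacle is the algebraic bookkeeping in the HGP step: one must combine two Gnedin densities whose products carry opposite signs in front of $\gamma$, keep careful track of the Pochhammer $(\gamma_0)_{m-k}$ and the binomial $\binom{m-1}{k-1}$, and regroup factors so that $k$-only and $m$-only pieces cleanly separate. Once this regrouping is in place, the identification with the stated compact form is routine. The HPYGP step is essentially a direct application of Proposition \ref{PropAsym2}, with care only needed to match indices to the convergent-versus-divergent hypothesis structure.
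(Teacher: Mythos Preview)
Your HGP argument is correct and is exactly what the paper does: Proposition \ref{PropAsym2}(i) applies because the bottom-level Gnedin partition has an a.s.\ finite limit $K_i\sim\rho^{(1)}$, and one then substitutes \eqref{DistXbis} for $\P\{K_i=m\}$ and \eqref{PrKn} (with parameters $(\gamma_0,\zeta_0)$) for $q^{(0)}_m(k)$; the algebraic regrouping you describe is precisely the ``algebraic manipulations'' the paper alludes to.

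The HPYGP step, however, has a genuine gap. In an $HPYGP(\sigma_0,\theta_0,\gamma_1,\zeta_1,H_0)$ one has $\Pi^{(0)}\sim PY(\sigma_0,\theta_0)$ at the top (diverging block count) and $\Pi^{(i)}\sim GN(\gamma_1,\zeta_1)$ at the bottom (a.s.\ finite limit $K_i$). This configuration matches the hypothesis of Proposition \ref{PropAsym2}(i), \emph{not} (ii): part (ii) requires the bottom level $|\Pi^{(i)}_n|$ to diverge and the top level $|\Pi^{(0)}_n|$ to converge to some $K_0$, and its conclusion is the law of $K_0$. Your phrase ``with the appropriate matching of converging and diverging levels'' suggests the two roles can simply be swapped, but they cannot: since $D_{i,t}=|\Pi^{(0)}_{K_{i,t}}|$, when $K_{i,t}\to K_i<\infty$ a.s.\ the partition $\Pi^{(0)}$ is only ever evaluated on a bounded (random) number of elements, so its $\sigma_0$-diversity at infinity is never triggered. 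The limit one actually obtains from part (i) is the law of $|\Pi^{(0)}_{K_i}|$, namely $\sum_{m\ge k}\rho^{(1)}_m\, q^{(0)}_m(k)$ with $q^{(0)}$ the $PY(\sigma_0,\theta_0)$ block-count distribution, which is in general \emph{not} equal to $\rho^{(1)}_k$. Hence the claimed Gnedin$(\gamma_1,\zeta_1)$ limiting law does not follow from the route you propose; the ``matching'' step is where the argument breaks down.
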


%

Also for HPYPs  one can derive  explicit asymptotic distributions using the previous general results. Indeed, if $(\Pi_n)_n \sim PY(\sigma,\theta)$ with $0<\sigma<1$ and $\theta>-\sigma$, then $ | \Pi_{n}|/n^{\sigma}$ converges almost surely and in $L^p$ (for every $p>0$) to a strictly positive random variable $S_{\sigma,\theta}$ with density 
\begin{equation}\label{mittlefftilted}
g_{\sigma,\theta}(s):=\frac{\Gamma(\theta+1)}{\Gamma(\frac{\theta}{\sigma}+1)} s^{\theta/\sigma} g_\sigma(s), \quad s>0,
\end{equation}
where $g_\sigma$ is the  type-2 Mittag-Leffler density, i.e.  the unique density such that
\begin{equation}\label{momentML}
\int_0^{+\infty} x^p g_\sigma(x)dx=\frac{\Gamma(p+1)}{\Gamma(p\sigma+1)}.
\end{equation}
See Theorem 3.8 in \cite{Pit06}.   Moreover, if $\sigma=0$,  we have that 
$ | \Pi_{n}|/\log(n)$ converges almost surely  
and in $L^p$ for every $p>0$
to $\theta>0$. 

On the basis of these results, Proposition \ref{PropAsym1} can be specialized for the case of HPYPs and convergence in $L^{p}$ obtained.

\begin{proposition}\label{asymptoticPY} Assume that $\Pi^{(0)} \sim PY(\sigma_0,\th_0)$ and $\Pi^{(i)} \sim PY(\sigma_1,\th_1)$ (for $i=1,\dots,I$)
with $\sigma_0,\sigma_1\geq 0$.
Then (i) and (ii) of {\rm Proposition \ref{PropAsym1}} hold a.s. and in $L^{p}$, $p>0$, with 
the following specifications:
\begin{itemize}
\item[(i)] for $HPYP(\sigma_0,\theta_0;\sigma_1,\theta_1)$ with $\sigma_0,\sigma_1>0$, $d_n=n^{\sigma_0 \sigma_1}$ and 
\[
D_{i,\infty} \stackrel{\CL}{=} S_{\sigma_0,\th_0} \left(S_{\sigma_1,\th_1}^{(i)}\right)^{\sigma_0}, \quad 
D_\infty \stackrel{\CL}{=} S_{\sigma_0,\th_0} \left(    \sum_{i=1}^I w_i^{\sigma_1} S_{\sigma_1,\th_1}^{(i)}  \right)^{\sigma_0}, 
\]
with $S_{\sigma_0,\th_0},S_{\sigma_1,\th_1}^{(1)},\dots,S_{\sigma_1,\th_1}^{(I)}$ independent random variables with densities $g_{\sigma_0,\th_0}$ and $g_{\sigma_1,\th_1}$, respectively;
\item[(ii)] for $HPYDP(\sigma_0,\theta_0;\theta_1)$ with $\sigma_0>0$, $d_n=\log(n)^{\sigma_0}$ and 
\[
D_{i,\infty} \stackrel{\CL}{=} D_\infty \stackrel{\CL}{=} S_{\sigma_0,\th_0}  \th_1^{\sigma_0},
\]
 with $S_{\sigma_0,\th_0}$ random variable with density $g_{\sigma_0,\th_0}$;
 \item[(iii)] for $HDPYP(\theta_0;\sigma_1,\theta_1)$ with $\sigma_1>0$, $d_n=\sigma_1 \log(n)$ and $D_{i,\infty}=D_\infty= \th_0$;
 \item[(iv)] for $HDP(\theta_0;\theta_1)$,  $d_n=\log( \log(n))$ and 
$D_{i,\infty}=D_\infty= \th_0$.
\end{itemize}
\end{proposition}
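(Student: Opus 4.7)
The plan is to derive the result by combining Proposition \ref{PropAsym1} with the classical diversity results for Pitman--Yor partitions recalled before the statement, and then upgrade the resulting almost sure convergence to $L^p$ convergence via a uniform integrability argument.

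First, for each of the four cases I would identify the normalizing sequences $a_n$ (for $\Pi^{(0)}$) and $b_n$ (for the $\Pi^{(i)}$) required by Proposition \ref{PropAsym1}. Recall that for $PY(\sigma,\theta)$ with $\sigma>0$ one has $|\Pi_n|/n^{\sigma}\to S_{\sigma,\theta}$ a.s.\ (a strictly positive variable with density $g_{\sigma,\theta}$), while for $\sigma=0$ one has $|\Pi_n|/\log n\to \theta$ a.s. Plugging these choices into $d_n:=a_{b_n}$ gives (i)~$d_n=(n^{\sigma_1})^{\sigma_0}=n^{\sigma_0\sigma_1}$; (ii)~$b_n=\log n$, hence $d_n=(\log n)^{\sigma_0}$; (iii)~$a_n=\log n$, hence $d_n=\log(n^{\sigma_1})=\sigma_1\log n$; (iv)~$a_n=b_n=\log n$, hence $d_n=\log\log n$. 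The limit random variables from Proposition \ref{PropAsym1} then specialize to $D^{(0)}_\infty(D^{(i)}_\infty)^{\sigma_0}$ and $D^{(0)}_\infty(\sum_i w_i^{\sigma_1}D^{(i)}_\infty)^{\sigma_0}$ with $D^{(0)}_\infty,D^{(i)}_\infty$ replaced either by the appropriate Mittag-Leffler type variables $S_{\sigma_0,\theta_0},S^{(i)}_{\sigma_1,\theta_1}$ or, when the corresponding discount parameter vanishes, by the deterministic constants $\theta_0$ or $\theta_1$. This gives the announced a.s.\ limits.

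To upgrade to $L^p$ convergence for arbitrary $p>0$, I would show that the family $\{(D_{i,t}/d_{n_i(t)})^p\}_t$ (and the analogue for $D_t$) is uniformly integrable by establishing a uniform bound on some higher moment. Using Proposition \ref{Prop:clusterdistribution},
\[
\E\left[D_{i,t}^r\right]=\sum_{m=1}^{n_i(t)}\E\left[|\Pi^{(0)}_m|^r\right]q_{n_i(t)}(m),
\]
so the moment factorizes over the two hierarchical levels. For each level the $L^r$ convergence of the normalized cluster count is known: for $PY(\sigma,\theta)$ with $\sigma>0$, $|\Pi_n|/n^{\sigma}$ converges in $L^r$ to $S_{\sigma,\theta}$ by Theorem 3.8 of \cite{Pit06} and \eqref{momentML}, while for $DP(\theta)$ the convergence in $L^r$ is classical. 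Consequently $\E[|\Pi^{(0)}_m|^r]\le C_r a_m^r$ and $\E[|\Pi^{(i)}_n|^{r\sigma_0}]\le C'_r b_n^{r\sigma_0}$, so substituting the first bound in the display above gives $\E[D_{i,t}^r]\le C_r\,\E[a_{|\Pi^{(i)}_{n_i(t)}|}^{r}]$. Using slow variation of $L_0,L_1$ and Jensen-type estimates, this is bounded by a constant times $d_{n_i(t)}^{r}$, uniformly in $t$. For the global quantity $D_t$ the same argument with the joint moment formula yields a uniform bound on $\E[(D_t/d_{n(t)})^r]$. Taking any $r>p$, this uniform $L^r$ bound combined with the a.s.\ convergence yields the claimed $L^p$ convergence.

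The main obstacle I anticipate is checking the uniform moment bound when one of the discount parameters equals zero, since the normalizing sequences then involve iterated logarithms and slowly varying functions, and the substitution $m=|\Pi^{(i)}_{n_i(t)}|$ must be controlled carefully through the tail behaviour of the inner partition. In the fully power-law case (i) this is handled cleanly by scaling, but in cases (ii)--(iv) one has to verify that $\E[a_{b_{n}}^r]=o(d_n^r\cdot\text{const})$ using slow variation, which is the only delicate technical point in the argument.
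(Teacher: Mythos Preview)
Your identification of the almost sure limits via Proposition \ref{PropAsym1} is exactly what the paper does, including the four choices of $a_n,b_n$ and the resulting $d_n$.

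For the $L^p$ upgrade, however, your route differs from the paper's. The paper does \emph{not} go through the exact moment formula of Proposition \ref{Prop:clusterdistribution}; instead it keeps the product decomposition
\[
\frac{D_{i,t}}{d_{n_i(t)}}=\frac{|\Pi^{(0)}_{K_{i,t}}|}{a_{K_{i,t}}}\cdot\frac{a_{K_{i,t}}}{a_{b_{n_i(t)}}}
\]
and shows each factor converges in $L^p$ separately. The first factor is handled by independence of $\Pi^{(0)}$ and $K_{i,t}$ together with the known $L^p$ convergence of $|\Pi^{(0)}_m|/a_m$. For the second factor (illustrated for case (iii), where $a_m=\log m$), the paper uses the elementary bound $\log x\le x$ to dominate $a_{K_{i,t}}/a_{b_{n_i(t)}}$ by $1+(K_{i,t}/b_{n_i(t)})/\log(b_{n_i(t)})$, a sequence that is $L^p$-bounded because $K_{i,t}/b_{n_i(t)}$ converges in $L^p$; then (generalized) dominated convergence gives the result. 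This is shorter and more transparent than uniform integrability via moment formulas.

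Your uniform-$L^r$ strategy can be made to work, but the step you label ``Jensen-type estimates'' is not quite right in the logarithmic cases. When $a_m=\log m$, you need $\sup_t \E[(\log K_{i,t})^r]/a_{b_{n_i(t)}}^r<\infty$; since $x\mapsto(\log x)^r$ is not concave, Jensen does not apply, and one is forced back to the same elementary inequality $\log x\le x$ (plus the deterministic lower bound $K_{i,t}\ge1$) that the paper uses directly. So your approach is valid but, once the sketchy step is made precise, it ends up relying on the same key estimate while adding an unnecessary detour through Proposition \ref{Prop:clusterdistribution}.
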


Proposition \ref{asymptoticPY} can be used for approximating the moments (e.g., expectation and variance) of the number of clusters as stated in the following
\begin{corollary}\label{corolASY} 
Let $x_n \simeq y_n$ if and only if $\lim_{n \to +\infty} x_n/y_n=1$, then under the same assumptions of Proposition \ref{asymptoticPY},
for every $r>0$:
\begin{itemize}
\item[(i)] for $HPYP(\sigma_0,\theta_0,\sigma_1,\theta_1)$ with $\sigma_0,\sigma_1>0$:
\[
\E\left[D_{i,t}^r\right] \simeq 
n_i(t)^{r \sigma_0 \sigma_1}
\frac{\Gamma(\theta_0+1)}{\Gamma\left(\frac{\theta_0}{\sigma_0}+1\right)} 
\frac{\Gamma\left(r+\frac{\theta_0}{\sigma_0}+1\right)}{\Gamma(\theta_0 + r \sigma_0+1)}
\frac{\Gamma(\theta_1+1)}{\Gamma\left(\frac{\theta_1}{\sigma_1}+1\right)} 
\frac{\Gamma\left(r\sigma_0+\frac{\theta_1}{\sigma_1}+1\right)}{\Gamma(\theta_1 + r \sigma_0 \sigma_1+1)};
\]
\item[(ii)] for $HPYDP(\theta_0,\sigma_0;\theta_1)$ with $\sigma_0>0$: 
\[
\E\left[D_{i,t}^r\right] \simeq  (\log(n_i(t)))^{r\sigma_0}    \th_1^{r \sigma_0} \frac{\Gamma(\theta_0+1)}{\Gamma\left(\frac{\theta_0}{\sigma_0}+1\right)} 
\frac{\Gamma\left(r+\frac{\theta_0}{\sigma_0}+1\right)}{\Gamma(\theta_0 + r \sigma_0+1)};
\]
  \item[(iii)] for $HDPYP(\theta_0,\sigma_1,\theta_1)$ with $\sigma_1>0$:
  $
\E\left[D_{i,t}^r\right] \simeq     \log(n_i(t))^r  {(\sigma_1 \th_0)^r}   $ ;

 \item[(iv)] for $HDP(\theta_0,\theta_1)$:
  $
\E\left[D_{i,t}^r\right] \simeq   {\th_0^r}  \log(\log(n_i(t)))^r$.
\end{itemize}
\end{corollary}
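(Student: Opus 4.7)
The plan is to turn the $L^p$ convergence stated in Proposition~\ref{asymptoticPY} into an asymptotic equivalence of $r$-th moments, and then compute the moments of the limits $D_{i,\infty}$ explicitly using the moment formula for the polynomially tilted Mittag-Leffler density in \eqref{mittlefftilted}--\eqref{momentML}.

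First I would observe that since Proposition~\ref{asymptoticPY} gives $D_{i,t}/d_{n_i(t)} \to D_{i,\infty}$ in $L^p$ for every $p>0$, convergence in $L^r$ implies $\E[(D_{i,t}/d_{n_i(t)})^r] \to \E[D_{i,\infty}^r]$, so that
\[
\E[D_{i,t}^r] \simeq d_{n_i(t)}^r \, \E[D_{i,\infty}^r] \qquad (t \to +\infty).
\]
Thus the only remaining task is to compute $\E[D_{i,\infty}^r]$ in each of the four cases using the description of $D_{i,\infty}$ in Proposition~\ref{asymptoticPY} and to plug in the corresponding $d_n$. The key auxiliary identity is obtained from \eqref{mittlefftilted} and \eqref{momentML}: for $S_{\sigma,\theta}$ with density $g_{\sigma,\theta}$,
\[
\E[S_{\sigma,\theta}^{p}]
=\frac{\Gamma(\theta+1)}{\Gamma(\theta/\sigma+1)}\int_0^{+\infty}\! s^{p+\theta/\sigma} g_\sigma(s)\,ds
=\frac{\Gamma(\theta+1)}{\Gamma(\theta/\sigma+1)}\cdot\frac{\Gamma(p+\theta/\sigma+1)}{\Gamma(p\sigma+\theta+1)}.
\]

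Case (i): By independence of $S_{\sigma_0,\theta_0}$ and $S_{\sigma_1,\theta_1}^{(i)}$, we have $\E[D_{i,\infty}^r] = \E[S_{\sigma_0,\theta_0}^r] \cdot \E[S_{\sigma_1,\theta_1}^{r\sigma_0}]$. Applying the formula above with $(p,\sigma,\theta)=(r,\sigma_0,\theta_0)$ and then with $(p,\sigma,\theta)=(r\sigma_0,\sigma_1,\theta_1)$, and multiplying by $d_{n_i(t)}^r = n_i(t)^{r\sigma_0\sigma_1}$, yields the claimed expression. Case (ii) follows analogously: $D_{i,\infty}=\theta_1^{\sigma_0}\,S_{\sigma_0,\theta_0}$ gives $\E[D_{i,\infty}^r]=\theta_1^{r\sigma_0}\,\E[S_{\sigma_0,\theta_0}^r]$, and the scaling $d_{n_i(t)}^r=(\log n_i(t))^{r\sigma_0}$ produces the stated equivalence. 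Cases (iii) and (iv) are immediate since $D_{i,\infty}\equiv \theta_0$ is deterministic, so $\E[D_{i,\infty}^r]=\theta_0^r$, and one multiplies by $d_{n_i(t)}^r=(\sigma_1\log n_i(t))^r$ and $(\log\log n_i(t))^r$, respectively.

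The only non-routine point is the uniform integrability used in the first step, but this is provided for free by the $L^p$-strengthening already established in Proposition~\ref{asymptoticPY}. Hence there is no substantial obstacle: the whole corollary is a direct quantitative reading of the limit theorem combined with a moment calculation for the tilted Mittag-Leffler law.
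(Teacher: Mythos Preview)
Your proof is correct and follows essentially the same approach as the paper: derive the moment formula for $S_{\sigma,\theta}$ from \eqref{mittlefftilted}--\eqref{momentML}, use the $L^p$ convergence in Proposition~\ref{asymptoticPY} to conclude $\E[D_{i,t}^r]\simeq d_{n_i(t)}^r\,\E[D_{i,\infty}^r]$, and then compute $\E[D_{i,\infty}^r]$ case by case via independence. The paper's proof is slightly terser but the logic is identical.
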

\begin{figure}[p]
  \caption[]{Exact (dashed lines) and asymptotic (solid lines) expected marginal number of clusters $E(D_{i,t})$ when $n_i(t)=1, \dots, 500$ for different HSSMs.}
  \label{Asym_Exp_N_Clusters}
\begin{center}
  \setlength{\tabcolsep}{10pt}
  \begin{tabular}{cc}  
   \includegraphics[width=4cm]{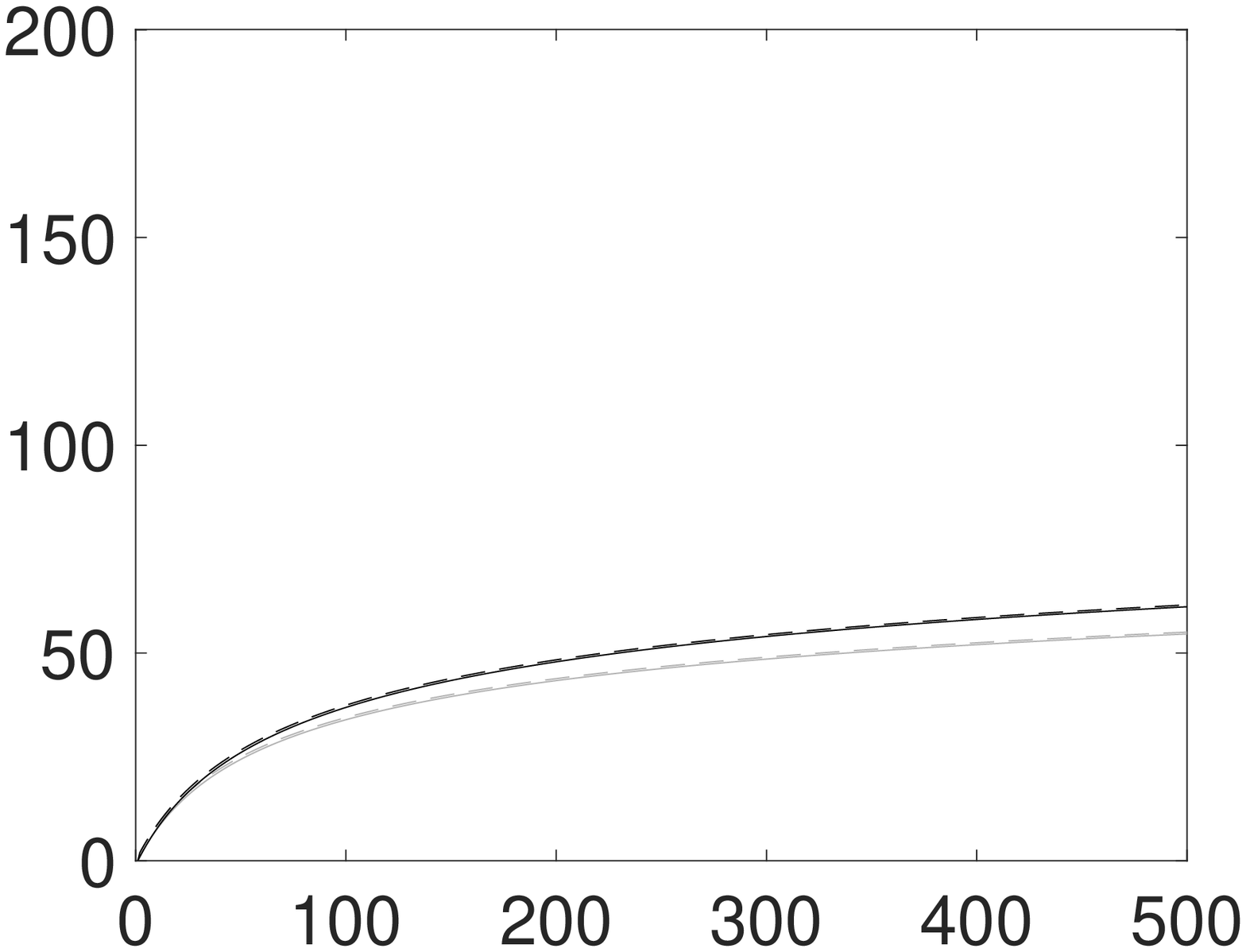} &
   \includegraphics[width=4cm]{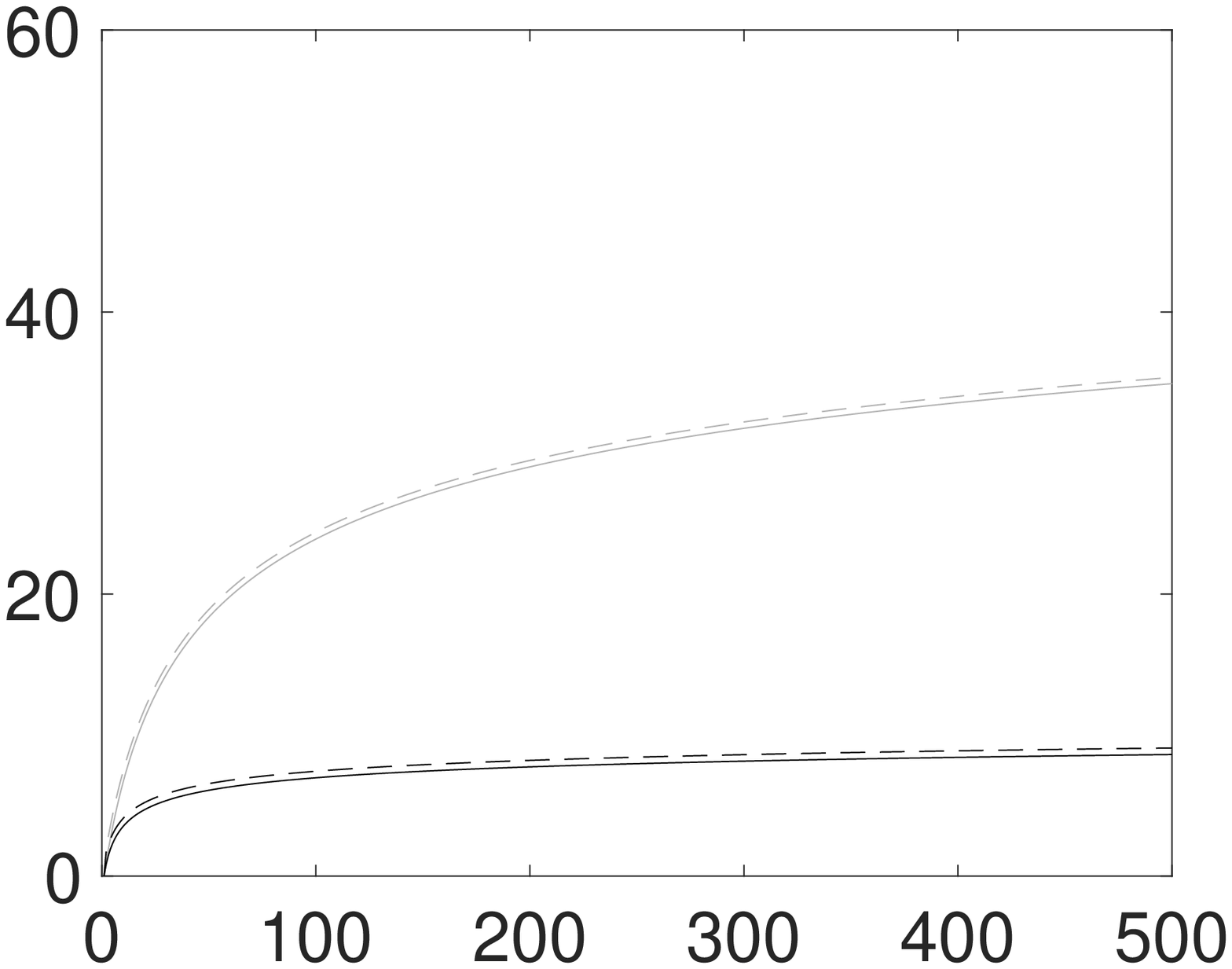} \\
\end{tabular}
\end{center}
\begin{minipage}[]{0.84\textwidth}
\begin{small}
(i) HDP with $\theta_0 = \theta_1 = 43.3$ (left, gray), $\theta_0 = \theta_1 = 50$ (left, black), $\theta_0 = \theta_1 = 25$ (right, gray) and $\theta_0 = \theta_1 = 5$ (right, black).
\end{small}
\end{minipage}
\begin{center}
  \setlength{\tabcolsep}{10pt}
  \begin{tabular}{cc}
   \includegraphics[width=4cm]{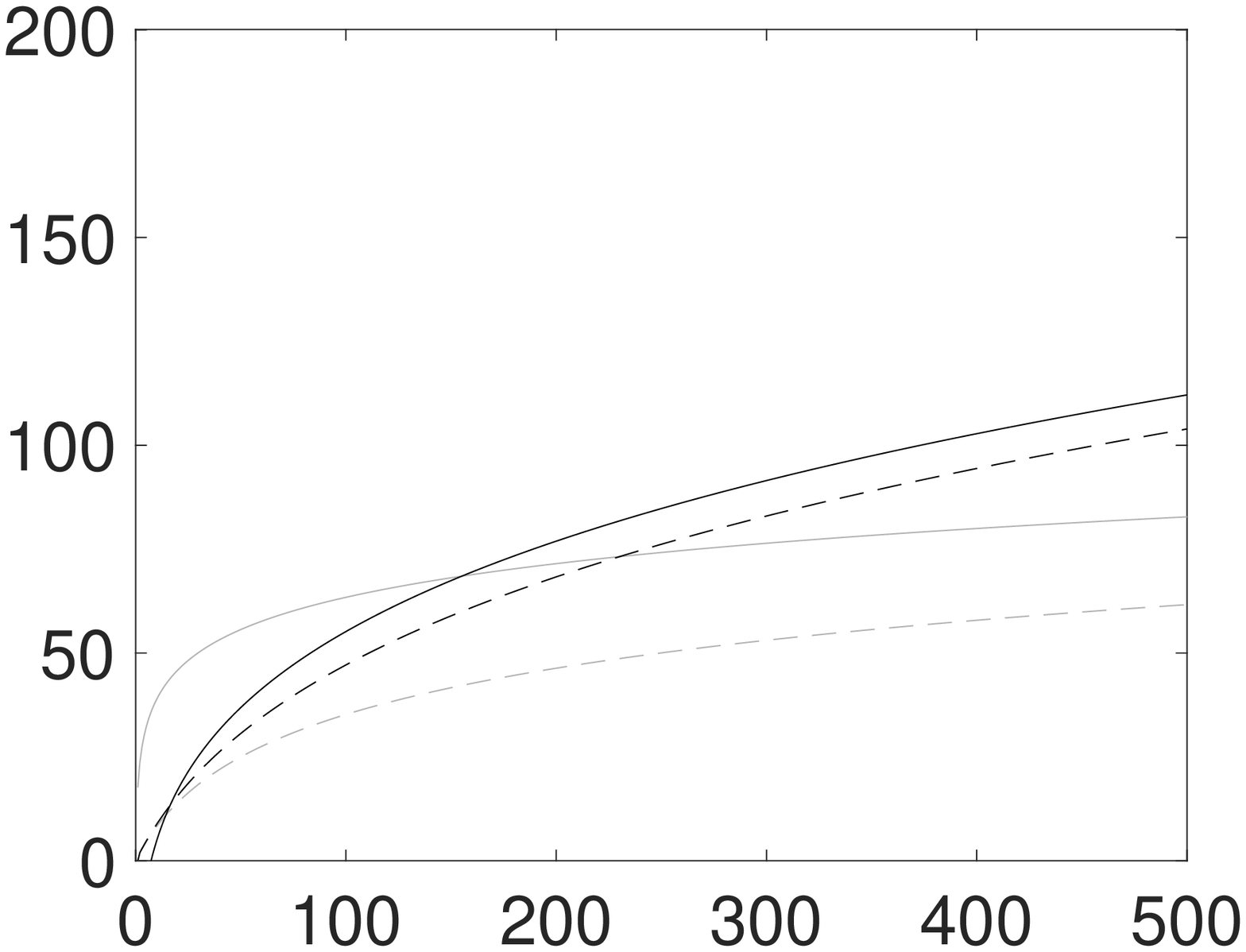}&
   \includegraphics[width=4cm]{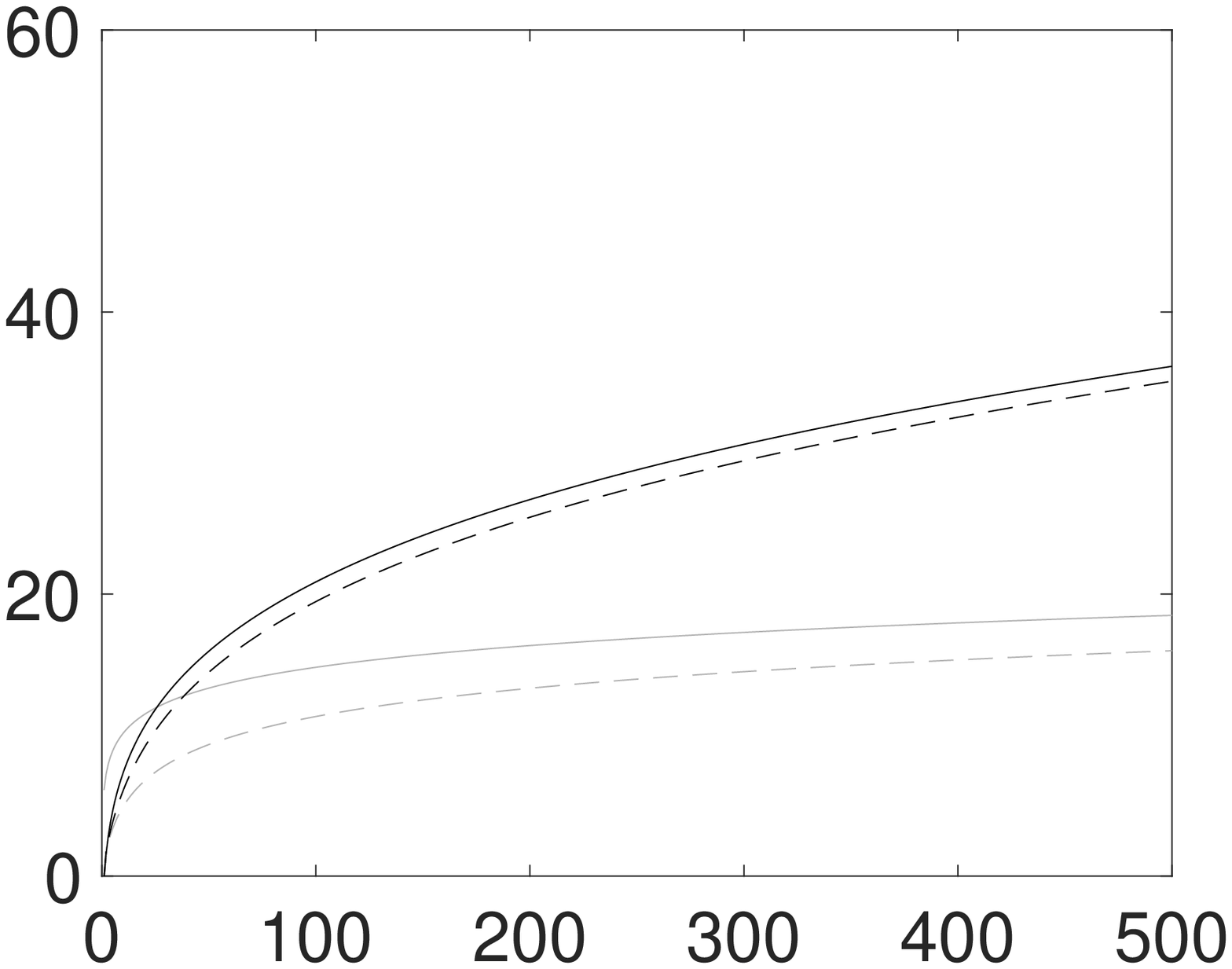}\\
\end{tabular}
\end{center}
\begin{minipage}[]{0.84\textwidth}
\begin{small}
(ii) HPYP with $(\theta_0,\sigma_0) = (\theta_1,\sigma_1) = (29.9,0.25)$ (left, gray), $(\theta_0,\sigma_0) = (\theta_1,\sigma_1) = (29.9,0.5)$ (left, black), $(\theta_0,\sigma_0) = (\theta_1,\sigma_1) = (5,0.25)$ (right, gray) and $(\theta_0,\sigma_0) = (\theta_1,\sigma_1) = (5,0.5)$ (right, black).
\end{small}
\end{minipage}
\begin{center}
  \setlength{\tabcolsep}{10pt}
  \begin{tabular}{cc}
   \includegraphics[width=4cm]{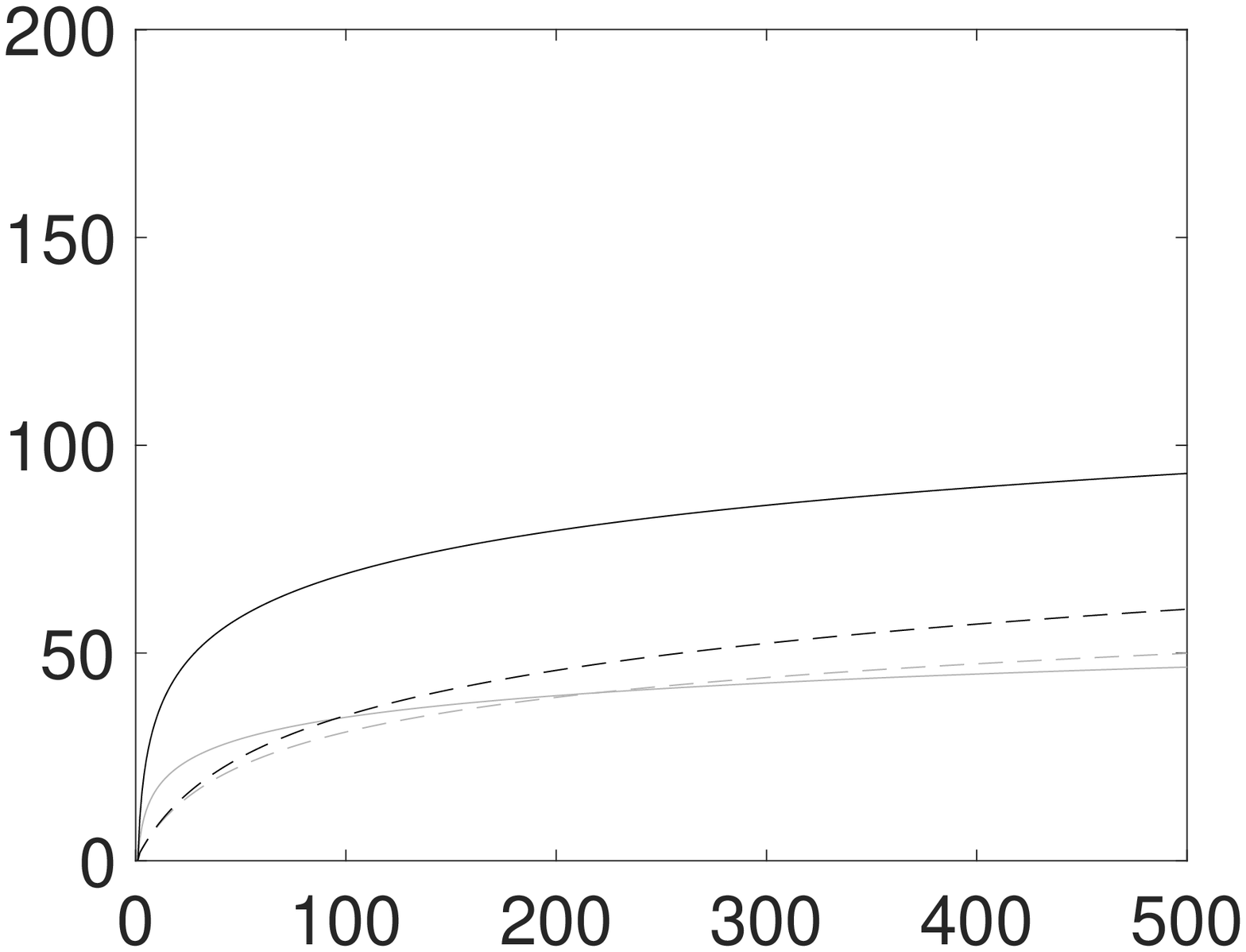}&
   \includegraphics[width=4cm]{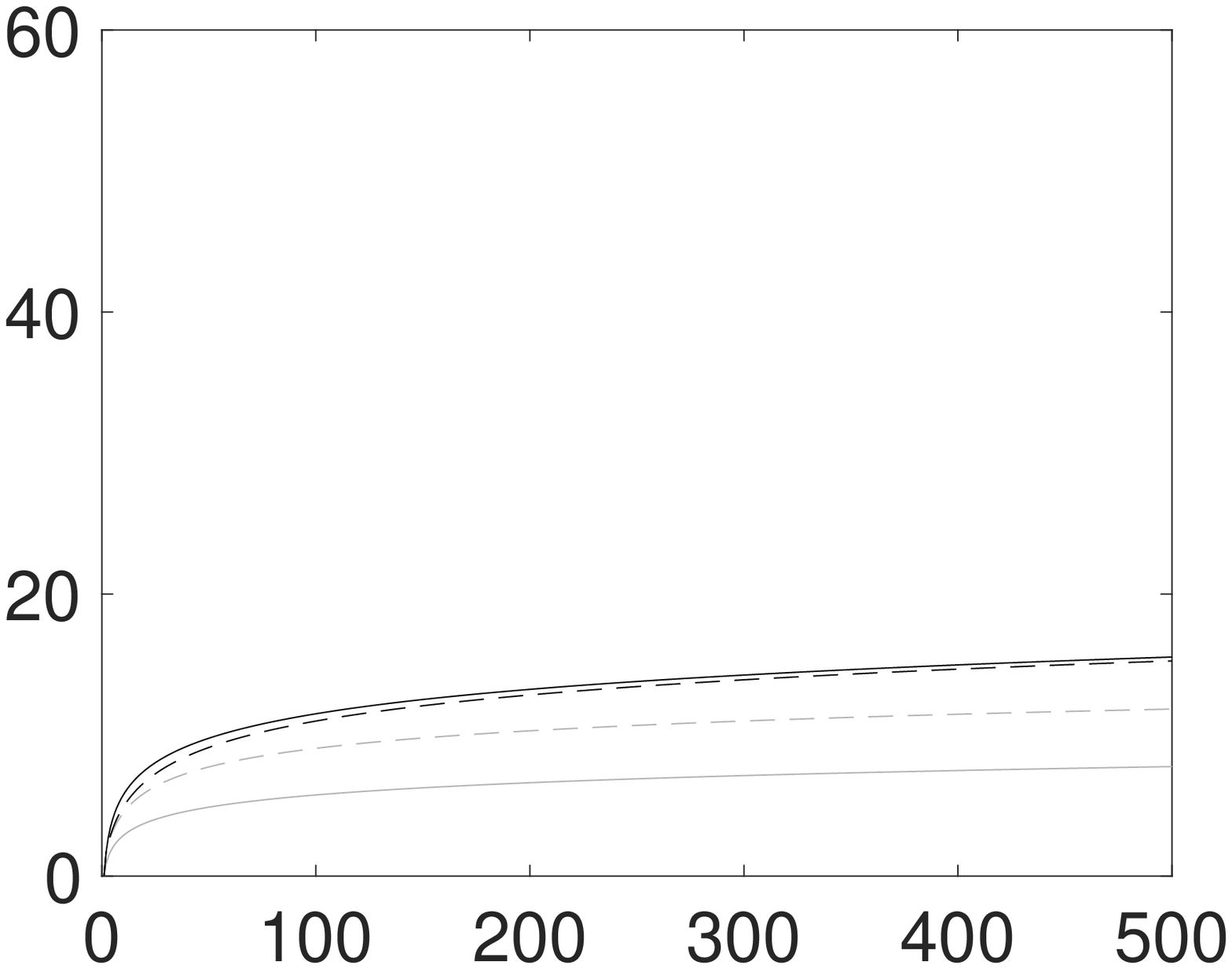} \\
\end{tabular}
\end{center}
\begin{minipage}[]{0.84\textwidth}
\begin{small}
(iii) HDPYP with $\theta_0 = 30$, $(\theta_1,\sigma_1) = (30,0.25)$ (left, gray), $\theta_0 = 30$, $(\theta_1,\sigma_1) = (30,0.5)$ (left, black), $\theta_0 = 5$, $(\theta_1,\sigma_1) = (5,0.25)$ (right, gray) and $\theta_0 = 5$, $(\theta_1,\sigma_1) = (5,0.5)$ (right, black).
\end{small}
\end{minipage}
\begin{center}
  \setlength{\tabcolsep}{10pt}
  \begin{tabular}{cc}
   \includegraphics[width=4cm]{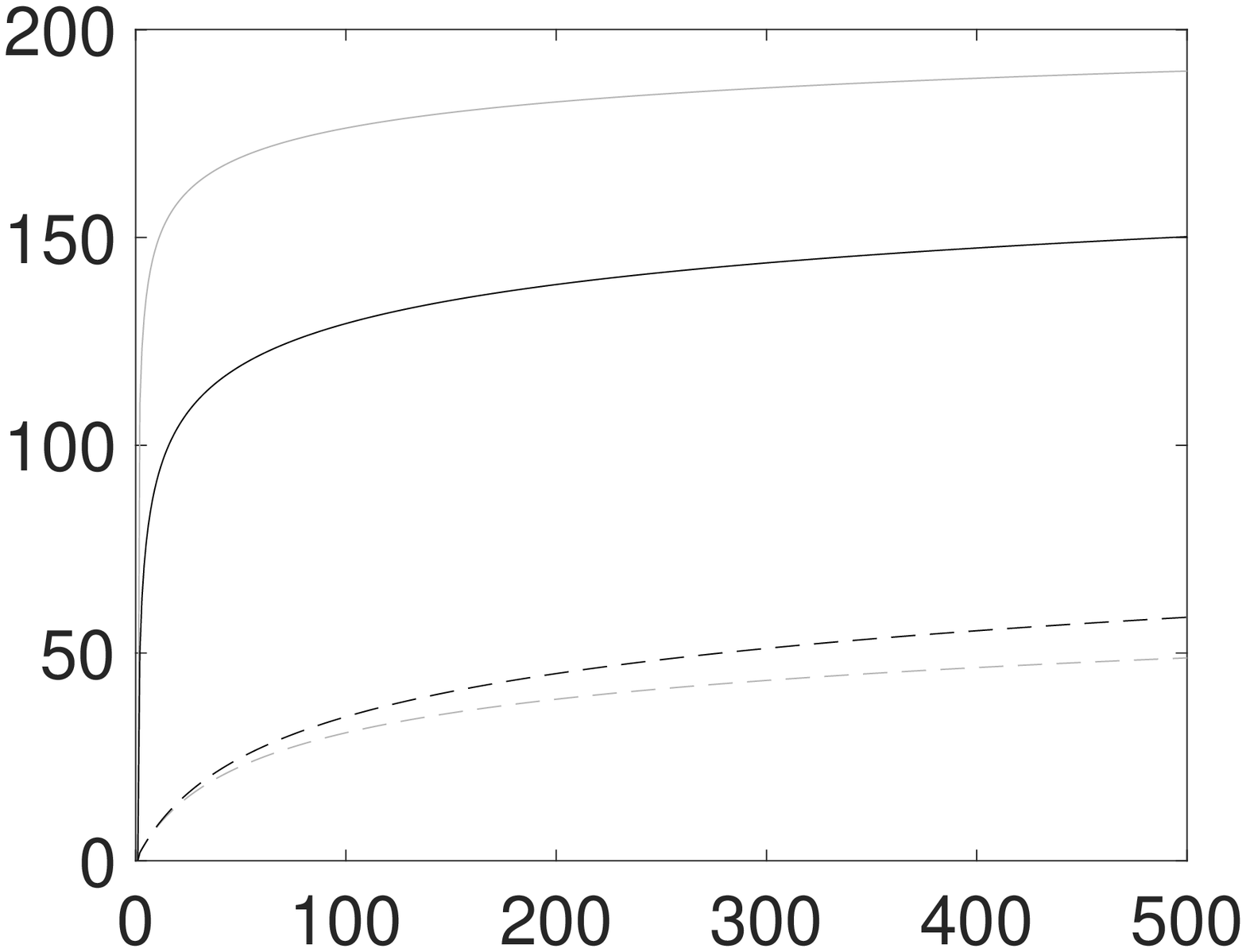}&
   \includegraphics[width=4cm]{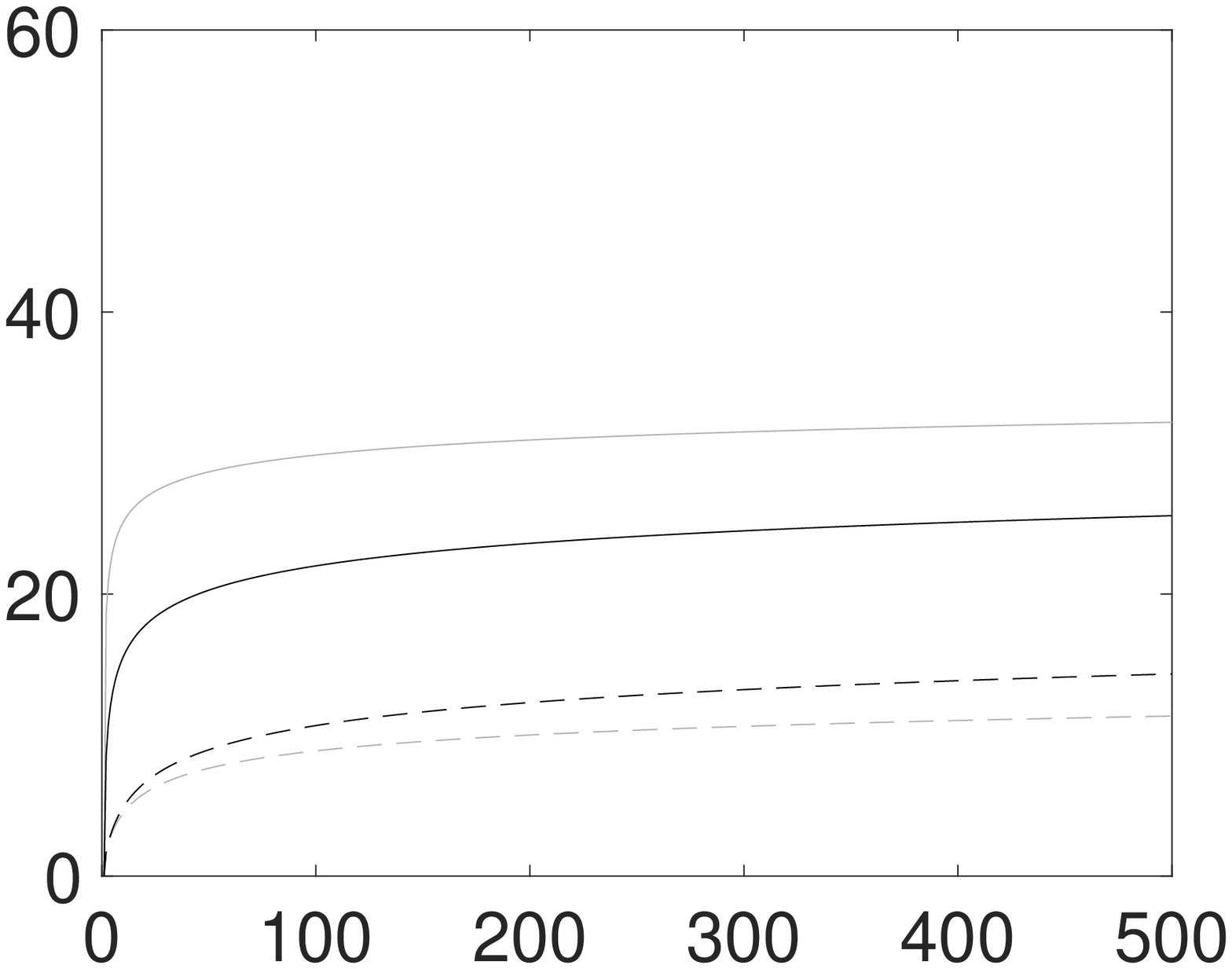}\\
\end{tabular}
\end{center}
\begin{minipage}[]{0.84\textwidth}
\begin{small}
(iv) HPYDP with $(\theta_0,\sigma_0) = (30,0.25)$, $\theta_1 = 30$ (left, gray), $(\theta_0,\sigma_0) = (30,0.5)$, $\theta_1 = 30$ (left, black), $(\theta_0,\sigma_0) = (5,0.25)$, $\theta_1 = 5$ (right, gray) and $(\theta_0,\sigma_0) = (5,0.5)$, $\theta_1 = 5$ (right, black).
\end{small}
\end{minipage}
\end{figure}

In Fig. \ref{Asym_Exp_N_Clusters}, we compare exact and asymptotic values (see Proposition \ref{Prop:clusterdistribution} and Corollary \ref{corolASY}, respectively) of the expected marginal number of clusters for the HSSMs in the PY family: $HDP(\theta_0;\theta_1)$, $HDPYP(\theta_0;\sigma_1,\theta_1)$, $HPYP(\sigma_0,\theta_0;\sigma_1,\theta_1)$ and $HPYDP(\theta_0,\sigma_0;\theta_1)$ (different rows of Fig. \ref{Asym_Exp_N_Clusters}). For each HSSM we consider $n_i(t)$ increasing from 1 to 500 and different parameter settings (different columns and lines). For the HDP the exact value (dashed lines) is well approximated by the asymptotic one (solid line) for all sample sizes $n_{i}(t)$, and different values of $\theta_i$ (gray and blacks lines in the left and right plots of panel (i)). For the HPYP, the results in panel (ii) show that there are larger differences when $\theta_i$, $i=0,1$ are large and $\sigma_0$ and $\sigma_1$ are close to zero (left plot). The approximation is good for small $\theta_i$ (right plot) and improves slowly with increasing $n_{i}(t)$ for smaller $\sigma_i$ (gray lines in the right plot). In the panels (iii) and (iv) for HDPYP and HPYDP, there exist parameter settings where the asymptotic approximation is not satisfactory and is not improving when $n_{i}(t)$ increases. 
Our numerical results point out that the asymptotic approximation for both PY and HPY may lack of accuracy. Thus, the exact formula for the number of clusters should be used in the applications when calibrating the parameters of the process.

\section{Samplers for Hierarchical Species Sampling Mixtures}
\label{Sec_Gibbs}
\newcommand{\bd}{\boldsymbol{d}}
\newcommand{\bphi}{\boldsymbol{\phi}}
\newcommand{\bc}{\boldsymbol{c}}
\newcommand{\bx}{\boldsymbol{Y}}
Random measures and hierarchical random measures are widely used in Bayesian nonparametric inference (see \cite{BNP2010} for an introduction) as prior distributions for the parameters of a given density function. In this context a further stage is added to the hierarchical structure of Eq. \eqref{h0} involving an observation model
\[
Y_{i,j}| \xi_{i,j}   \stackrel{ind}{\sim} f(\cdot|\xi_{i,j}),
\]
where $f$ is a suitable kernel density (e.g., with respect to the Lebesgue measure). The resulting model is an infinite mixture, which is then the object of Bayesian inference. In this framework, the posterior distribution is usually not tractable and Gibbs sampling has been proposed to approximate the posterior quantities of interest. There are two main classes of samplers for posterior approximation in infinite mixture models: the marginal (see \cite{Escobar1994} and \cite{EscobarWest1995}) and the conditional (\cite{Wal07}, \cite{Pap08}, \cite{walker2011}) samplers. See also \cite{favaro2013} for an up-to-date review. In this section, we extend the marginal sampler for HDP mixture (see \cite{TehJordan2006}, \cite{Teh2006} and \cite{Teh10}), to our general class of HSSM. We present the sampler for the conjugate case, where the atoms can be integrated out analytically, nevertheless the sampling method can be modified following the auxiliary variable sampler of \cite{Neal2000} and \cite{favaro2013}.

Following the notation in Section \ref{S:CRF}, we consider the data structure
\[
\begin{split}
 Y_{i,j},c_{i,j} &:   i \in \CJ, \mbox{ and } j=1,\dots, n_{i\cdot\cdot} \\
 d_{i,c} &:   i \in \CJ, \mbox{ and } c=1,\dots, m_{i \cdot} \\
 \phi_d &: d \in \CD,
\end{split}
\]
where $Y_{i,j}$ is the $j$-th observation in the $i$-th group, 
$n_{i\cdot\cdot}=n_i$ is the total number of observations in the $i$-th group,
and $\CJ=\{1,\dots,I\}$ is the set of group indexes. 
The latent variable $c_{i,j}$ denotes the table at which the $j$-th ''customer`` of ''restaurant" $i$ sits and $d_{i,c}$ the index of the ''dish`` served at table $c$ in restaurant $i$. The random variables $\phi_d$ are the ''dishes`` and $
\CD=\left \{d:  d= d_{i,c} \,\, \text{for some} i \in \CJ \text{and} c \in \{1,\dots, m_{i\cdot}\} \right \}$ is the set of indexes of the served dishes. 

Let us assume that the distribution $H$ of the atoms $\phi_d$s has  density $h$  (with respect to the Lebesgue measure, or any other reference measure) and the observations $Y_{i,j}$ have a kernel density $f(\cdot|\cdot)$, then our hierarchical infinite mixture model is
$$
Y_{i,j} |\bphi, \bc, \bd  \stackrel{ind}{\sim} f\left(\cdot|\phi_{d_{i,c_{i,j}}}\right),\quad \bphi | \bc, \bd \stackrel{i.i.d.}{\sim} h(\cdot),\quad [\bc, \bd] \sim HSSM
$$
where
\[
\begin{split}
& \bc=[\bc_{i}: i \in \CJ ], \,\, \mbox{ and } \bc_i=[c_{i,j}: j=1,\dots,n_{i\cdot\cdot}], \\
& \bd=[d_{i,c}: i \in \CJ \mbox{ and } c=1,\dots, m_{i \cdot}], \\
& \bphi=[\phi_{d}: d \in \CD], \\
\end{split}
\]
and, with a slight abuse of notation, we write $[\bc, \bd] \sim HSSM$ in order to 
denote the distribution of the labels  $[\bc, \bd]$  obtained from a $HSSM$ as in
\eqref{h0}. If we define 
\[
d^*_{i,j}=d_{i,c_{i,j}} \quad \text{ and} \quad  \bd^*=[d^*_{i,j}:  i \in \CJ, j=1,\dots, n_{i\cdot\cdot}]
\]
then $[\bc,\bd]$ and $[\bc,\bd^*]$ contain the same amount of information, indeed 
$\bd^*$ is a function of $\bd$ and $\bc$, while $\bd$ is a function of $\bd^*$ and $\bc$. 
From now on, we denote with $\bx=[Y_{i,j}: i \in \CJ, j=1,\dots, n_{i\cdot\cdot}]$ the set of observations.

\subsection{Chinese Restaurant Franchise Sampler} 
If $f$ and $H$ are conjugate,   the {\it Chinese Restaurant Franchise Sampler} of \cite{TehJordan2006} can be generalized and a new sampler can be obtained for our  class of models. 

Denotes with the superscript $^{\neg ij}$ the counts and sets in which customer $j$ in restaurant $i$ is removed and, 
analogously, with $^{ \urcorner ic}$ the counts and sets in which all the customers in table $c$ of restaurant $i$ are removed. 
We denote with $p(X)$ the density of the random variable $X$. It should be clear from the context if this density is with respect to the counting measure (i.e., a discrete distribution), or with respect to the Lebesgue measure. In order to avoid the proliferation of the symbols, we shall use the same letter 
(e.g., $\bx, \bphi, \bc, \bd$) to denote both the random variables and their realizations. 

The proposed Gibbs sampler simulates iteratively the elements of 
$\bc$ and $\bd$ from their full conditional distributions, where 
the latent variables $\phi_d$  are integrated out analytically. In sampling 
 the latent variable $\bc$, we need to sample jointly $[\bc,\bd^*]$ and, since $\bd$ is a 
 function of $[\bc,\bd^*]$, this also gives a sample for $\bd$. In order to improve the mixing we re-sample $\bd$ given $\bc$ in a second step. In summary, the sampler iterates for $i=1,\dots,I$ according to the following steps: 
\begin{itemize}
\item[(i)] sample  $[c_{i,j},d^*_{i,j}]$ from  $p(c_{i,j},d^*_{i,j}| \bx, \bc^{ \urcorner ij}, {\bd^*}^{ \urcorner ij})$ (see Eq. \eqref{fullcd-final}), for $j=1,\dots,n_{i\cdot\cdot}$;
\item[(ii)] (re)-sample $d_{i,c}$ from $p(d_{i,c}|\bx, \bc, {\bd}^{ \urcorner ic})$ (see Eq. \eqref{full-dishes}), for $c=1,\dots,m_{i \cdot}$.
\end{itemize}

In what follows, $\omega_{n,k}$ and $\nu_n$ indicate the weights of the predictive distribution 
  of the random partitions $\Pi^{(i)}$ ($i=1,\dots,I$)  with EPPF $\FP$ (see Section \ref{Sec:1-1}) and
   $\tilde \omega_{n,k}$ and $\tilde \nu_n$ the weights of the predictive distribution 
  of the random partitions $\Pi^{(0)}$ with EPPF $\FP_0$. 
Moreover,  we set
\[
\CC_i=\{c: c=c_{i,j} \,  \text{for some some $j=1,\dots,n_{i \cdot \cdot}$} \}.
\]

Finally, for  an arbitrary index set  $\CS$ and a "dish" label $d$, the marginal conditional density  of $\{Y_{i,t}\}_{it \in \CS}$ given all the other observations  assigned to  the cluster $d$ is 
\begin{equation}\label{Ml}
p(\{Y_{i,t}\}_{it \in \CS} | Y_{i',t'} :  (i't') \in \CS_d \setminus \CS,  \bc, \bd  )
= \frac{\int h(\phi) \prod_{i't' \in \CS_d \cup \CS} f(Y_{i',t'}|\phi)d\phi}{\int h(\phi) 
\prod_{i't' \in \CS_d \setminus \CS} f(Y_{i',t'}|\phi) d\phi},
\end{equation}
where  $\CS_d=\{(i',t') : d_{i',t'}^*=d\}$, and, following \cite{TehJordan2006}, will be denoted by 
$f_{d}(\{Y_{i,t}\}_{it \in \CS})$.

As regards the full conditional $p(c_{i,j},d^*_{i,j}| \bx, \bc^{ \urcorner ij}, {\bd^*}^{ \urcorner ij})$, the outcomes of the sampling are of three types. The customer $j$ can sit ''alone`` at a table, 
$c_{ij}=c^{new}$, or he/she can sit at a table with other customers, $c_{i,j}=c^{old}$ where $c^{old}$ is a table index already present in  $\CC_i^{ \urcorner ij}$. If $c_{i,j}=c^{old}$, then $d^*_{i,j}=d_{i,c^{old}}\in{\CD}^{ \urcorner ij}$, whereas if $c_{i,j}=c^{new}$, then we can have two disjoint events: either 
$d^*_{i,j}=d$ for some $d\in{\CD}^{ \urcorner ij}$ or $d^*_{i,j}=d$ with $d \not \in {\CD}^{ \urcorner ij}$, say $d^*_{i,j}=d^{new}$. In formula, the full conditional for $[c_{i,j},d^*_{i,j}]$ is  
\begin{equation}\label{fullcd-final}
\begin{split}
& p(c_{i,j}=c^{old},d^*_{i,j}=d_{i,c^{old}}| \bx, \bc^{ \urcorner ij}, {\bd^*}^{ \urcorner ij}) \propto
 \omega_{n_{i\cdot\cdot}-1,c^{old}}(\bc_i^{\urcorner ij})  f_{d_{i,c^{old}}}(\{Y_{i,j}\}),  \\
&p(c_{i,j}=c^{new},d^*_{i,j}=d^{old}|\bx, \bc^{ \urcorner ij}, {\bd^*}^{ \urcorner ij}) 
\propto  \ \nu_{n_{i\cdot\cdot}-1}(\bc_i^{\urcorner ij}) \tilde
 \omega_{m_{\cdot \cdot}^{ \urcorner ij},d^{old}}(\bd^{\urcorner ij}) f_{d^{old}}(\{Y_{i,j}\}), \\
& p(c_{i,j}=c^{new},d^*_{i,j}=d^{new}|\bx, \bc^{ \urcorner ij}, {\bd^*}^{ \urcorner ij})\propto
 \nu_{n_{i\cdot\cdot}-1}(\bc_i^{\urcorner ij}) 
\tilde \nu_{m_{\cdot \cdot}^{ \urcorner ij}}(\bd^{\urcorner ij}) 
f_{d^{new}}(\{Y_{i,j}\}), \\
\end{split}
\end{equation}
where 
\[ 
\omega_{n_{i\cdot\cdot}-1,c^{old}}(\bc_i^{\urcorner ij})= \omega_{n_{i\cdot\cdot}-1,c^{old}}(n_{i1\cdot}^{\urcorner ij},\dots,n_{i m_{i \cdot}^{\urcorner ij} \cdot}^{\urcorner ij}) ,
\quad \nu_{n_{i\cdot\cdot}-1}(\bc_i^{\urcorner ij})=\nu_{n_{i\cdot\cdot}-1}(n_{i1\cdot}^{\urcorner ij},\dots,n_{i m_{i \cdot}^{\urcorner ij} \cdot}^{\urcorner ij}),
\]
and
\[ \tilde
 \omega_{m_{\cdot \cdot}^{ \urcorner ij},d^{old}}(\bd^{\urcorner ij})= \tilde
 \omega_{m_{\cdot \cdot}^{ \urcorner ij},d^{old}} (m_{\cdot 1}^{\urcorner ij},\dots,m_{\cdot |\CD^{\urcorner ij}|}^{\urcorner ij}),  \quad 
 \tilde \nu_{m_{\cdot \cdot}^{ \urcorner ij}} (\bd^{\urcorner ij})=
\tilde \nu_{m_{\cdot \cdot}^{ \urcorner ij}} (m_{\cdot 1}^{\urcorner ij},\dots,m_{\cdot |\CD^{\urcorner ij}|}^{\urcorner ij}).
\]

The second step of the sampler is related to the full conditional $p(d_{i,c}|\bx, \bc, {\bd}^{ \urcorner ic})$. Denote with $^{ \urcorner ic}$  the counts and sets in which all the customer in table $c$ of restaurant $i$ are removed. Then, the full conditional for $d_{i,c}$ is
\begin{equation}\label{full-dishes}
\begin{split}
& p(d_{i,c}=d^{new} |\bx, \bc, {\bd}^{ \urcorner ic}) \propto \tilde \nu_{m_{\cdot\cdot}^{ \urcorner ic} }(\bd^{ \urcorner ic}) f_{d^{new}}( \{Y_{i,j}: (i,j) \in \CS_{ic} \}  ), \\
&   p(d_{i,c}=d^{old} |\bx, \bc, {\bd}^{ \urcorner ic}) \propto \tilde \omega_{m_{\cdot\cdot}^{ \urcorner ic}, d^{old} }(\bd^{ \urcorner ic}) f_{d^{old}}( \{Y_{i,j}: (i,j) \in \CS_{ic} \}  ), \\
\end{split}
\end{equation}
where $d^{old}$ runs in $\CD^{ \urcorner ic}$.

If needed one can always sample  $\bphi$ given $[\bx,\bc,\bd]$,  
from the conditional distribution 
\begin{equation}\label{phifull_out}
p(\bphi |\bx,\bc,\bd) \propto \prod_{d \in \CD} h(\phi_d) \prod_{(i,j): d^*_{i,j}=d} f(Y_{i,j}|\phi_d).
\end{equation}
The detailed  derivations of the full conditional distributions are given in Appendix \ref{App:proofs}.

\subsection{Approximating  predictive distributions}

The posterior predictive distribution $p(Y_{i,n_{i}+1}|\bx)$ can be approximated by 
\[
 \frac{1}{M}\sum_{m=1}^{M} p(Y_{i,n_{i}+1}|\bx, \bc^{(m)}, \bd^{*(m)}), 
\]
where $(\bc^{(m)}, \bd^{*(m)})_{m=1,\dots,M}$ is the output of $M$ iterations of Gibbs sampler and 
\begin{equation}\label{pred1bis}
p(Y_{i,n_{i}+1}|\bx, \bc, \bd^*) = \!\!\!\! \! \sum_{c_{i,n_i+1},d_{i,n_i+1}^*} p(Y_{i,n_{i}+1}|\bx, \bc, \bd^*,c_{i,n_i+1},d_{i,n_i+1}^*) p(c_{i,n_i+1},d_{i,n_i+1}^*|\bx, \bc, \bd^*)
\end{equation}
is the conditional predictive distribution. The first term appearing in the conditional predictive can be written as
\begin{equation}\label{pred2bis}
p(Y_{i,n_{i}+1}|\bx, \bc, \bd^*,c_{i,n_i+1},d_{i,n_i+1}^*)  \propto {\int
f(Y_{i,n_{i}+1}|\phi)  \prod_{i't' :d_{i't'}^*=d_{i,n_i+1}^*  } 
f(Y_{i',t'}|\phi)  h(\phi)  d\phi}. 
\end{equation}
As for  the second term is concerned,  one gets
\begin{equation}\label{cdpredictive00}
p(c_{i,n_i+1},d_{i,n_i+1}^*|\bx, \bc, \bd^*)  = p\left(c_{i,n_i+1},d_{i,n_i+1}^*| \bc, \bd \right) 
\end{equation}
with 
\begin{equation}\label{cdpredictive}
p\left(c_{i,n_i+1}=c,d_{i,n_i+1}^*=d | \bc, \bd \right) = 
\left \{
\begin{array}{ll}
 \omega_{n_{i\cdot\cdot},c^{old}}(\bc_i),
 & \text{if  $c=c^{old}$, $d=d_{i,c^{old}}$},    \\
  \nu_{n_{i\cdot\cdot}}(\bc_i) \tilde
 \omega_{m_{\cdot \cdot},d^{old}}(\bd), &    \text{if $c=c^{new}$, $d=d^{old}$},   \\
 \nu_{n_{i\cdot\cdot}}(\bc_i) \tilde
 \nu_{m_{\cdot \cdot}}(\bd) &  \text{if $c=c^{new}$, $d=d^{new}$.}  
\end{array}
\right .
\end{equation}
See  Appendix \ref{App:proofs} for details.

Summation and integration in  \eqref{pred1bis}-\eqref{pred2bis}  can be avoided by 
using the following approximation of  the predictive $p\left(Y_{i,n_{i}+1}|\bx\right)$
\[
 \frac{1}{M}\sum_{m=1}^{M} f\left( Y_{i,n_{i}+1}|  \phi_{d_{i,n_i+1}^*}^{(m)}\right),
 \]
where  $(c_{i,n_i+1}^{(m)},$ ${d_{i,n_i+1}^*}^{(m)})$ is sampled  from
\eqref{cdpredictive}, 
and $\phi_{d_{i,n_i+1}^*}^{(m)}$
 given   $(c_{i,n_i+1}^{(m)},{d_{i,n_i+1}^*}^{(m)})$ from 
\[
p\left(\phi_{d_{i,n_i+1}^*}^{(m)}| c_{i,n_i+1}^{(m)}, {d_{i,n_i+1}^*}^{(m)}, 
\bx,\bc,\bd\right) 
\propto h\left(\phi_{d_{i,n_i+1}^*}^{(m)}\right) \prod_{(i,j): d^*_{i,j}=d_{i,n_i+1}^*} f\left(Y_{i,j}|\phi_{d_{i,n_i+1}^*}\right)
\]
at each Gibbs iteration.


\section{Illustrations}\label{Sec_Emp}

\subsection{Simulation Experiments}
This section illustrates the performance of the Gibbs sampler described in the previous section when applied to different processes (HDP, HPYP, HGP, HDPYP, HPYDP, HGDP and HGPYP) and sets of synthetic data. In the first experimental setting, we consider three groups of observations from three-component normal mixtures with common mixture components, but different mixture probabilities:
\begin{eqnarray*}
Y_{1j}&\overset{iid}{\sim}&0.3\mathcal{N}(-5,1)+0.3\mathcal{N}(0,1)+0.4\mathcal{N}(5,1),\,\, j=1,\ldots,100,\\
Y_{2j}&\overset{iid}{\sim}&0.3\mathcal{N}(-5,1)+0.7\mathcal{N}(0,1),\,\, j=1,\ldots,50,\\
Y_{3j}&\overset{iid}{\sim}&0.8\mathcal{N}(-5,1)+0.1\mathcal{N}(0,1)+0.1\mathcal{N}(5,1),\,\, j=1,\ldots,50.
\end{eqnarray*}
The parameters of the different prior processes are chosen such that the marginal expected value is $\mathbb{E}(D_{i,t}) = 5$ and the variance is between $1.97$ and $3.53$ (see panel (a) in Tab. \ref{Tab:priorSIM} and left panel in Fig. \ref{all_Experm_1}, Appendix \ref{App:EmpRes}).

In the second experimental setting, we consider ten groups of observations from two-component normal mixtures with one common component and different mixing probabilities:
\begin{eqnarray*}
Y_{ij}&\overset{iid}{\sim}&0.7\mathcal{N}(-5,1)+0.3\mathcal{N}(-4+i,1),\,\,j=1,\ldots,50.
\end{eqnarray*}
The parameters of the different prior processes are chosen such that the marginal expected value is $\mathbb{E}(D_{i,t}) = 10$ and the variance is between $4.37$ and $6.53$ (see panel (b) in Tab. \ref{Tab:priorSIM} and right panel in Fig. \ref{all_Experm_1}, Appendix \ref{App:EmpRes}).

For each settings we generate 50 independent dataset and run the marginal sampler described in the previous section with $6000$ iterations to approximate the posterior predictive distribution and the posterior distribution of the clustering variables $\bc$ and $\bd$. We discard the first $1.000$ iterations of each run. All inferences are averaged over the 50 independent runs. 

We study the goodness of fit of each model by evaluating its co-clustering errors and predictive abilities (see \cite{favaro2013} and \cite{Dahl}). We put $\bd^{(m)}$ in the vector form $
\tilde{\mathbf{d}}^{(m)} = (d_{1,c_{11}}^{(m)},\dots, d_{1,c_{1n_1}}^{(m)}, \dots,d_{I,c_{I1}}^{(m)},\dots, d_{I,c_{In_I}}^{(m)})$, $m=1,\ldots,M$, where $M$ is the number of Gibbs iterations. The co-clustering matrix of 
posterior pairwise probabilities of joint classification is estimated by:
\begin{align*}
P_{lk} = \frac{1}{M}\sum_{m=1}^M \delta_{\left\{\tilde{d}_l^{(m)}\right\}}\left(\tilde{d}_k^{(m)}\right)\quad \,l,k=1,\ldots,n_{...}
\end{align*}

Let $\tilde{\bd}_{0}$ be the vector of the true values of the clustering variable $\tilde{\mathbf{d}}$. The co-clustering error can be measured as the average $L_1$ distance between the true pairwise co-clustering matrix, $\delta_{\left\{d_{0l}\right\}}\left(d_{0k}\right)$ and the estimated co-clustering probability matrix, $P_{lk}$, i.e.:
\begin{align}
CN = \frac{1}{n_{...}^2} \sum_{l=1}^{n_{...}} \sum_{k=1}^{n_{...}} |\delta_{\left\{d_{0l}\right\}}\left(d_{0k}\right) - P_{lk}|. \label{Cocl}
\end{align}

An alternative measure can be defined by using the Hamming norm and the estimated co-clustering matrix, $\mathbb{I}(P_{lk}>0.5)$, i.e. 
\begin{align}
CN^{\ast} = \frac{1}{n_{...}^2} \sum_{l=1}^{n_{...}} \sum_{k=1}^{n_{...}} |\delta_{\left\{d_{0l}\right\}}\left(d_{0k}\right) - \mathbb{I}(P_{lk}>0.5)|. \label{Cocl_thr}
\end{align}
Both accuracy measures $CN$ and $CN^{\ast}$ attain $0$ in absence of co-clustering error and $1$ when co-clustering is mispredicted.

The $L_1$ distance between the true group-specific densities, $f(Y_{i,n_i+1})$ and the corresponding posterior predictive densities, $p(Y_{i,n_i+1}|\mathbf{Y})$, can be used to define the predictive score, i.e.:
\begin{align*}
SC = \frac{1}{I}\sum_{i=1}^{I}\int \left| f(Y_{i,n_i+1})- p(Y_{i,n_i+1}|\mathbf{Y}) \right| dY_{i,n_i+1}.
\end{align*}

Finally, we consider the posterior median ($\widehat{q_{0.5}(D)}$) and variance ($\widehat{V(D)}$) of the total number of clusters $D$.

\begin{table}[t]                                
\caption{Model accuracy for seven HSSMs in two experimental settings (panel (a) and (b)) using different measures: co-clustering norm ($CN$), threshold co-clustering norm ($CN^{\ast}$), predictive score ($SC$), posterior median ($\widehat{q_{0.5}(D)}$) and variance ($\widehat{V(D)}$) of the number of clusters. The accuracy and its standard deviation (in parenthesis) have been estimated with 50 independent MCMC experiments. Each experiment consists in 6000 MCMC iterations.}                     
\label{table:1st_Experim}                   
\centering   
\begin{scriptsize}
\setlength{\tabcolsep}{2pt}                           
\begin{tabular}{lccccccc}   
\hline              
            &HDP     & HPYP      & HGP   & HDPYP   & HPYDP & HGDP & HGPYP\\      
\hline
&\multicolumn{7}{c}{(a) Three-component normal mixtures}\\
\hline
$CN$         &  0.0975 & 0.0829  & 0.1220 & 0.0668  & 0.0888 & 0.1018 &0.0982 \\
           & (0.0032)& (0.0035)&(0.0040)& (0.0026)&(0.0037)&(0.0032)&(0.0033) \\
$CN^{\ast}$ & 0.0073  & 0.0056  & 0.0311 & 0.0053 & 0.0057 & 0.0079 &0.0070\\
           & (0.0031)& (0.0014)&(0.0018)& (0.0024)&(0.0020)&(0.0028)&(0.0023) \\
$SC$          & 0.5732  & 0.5556  & 0.6121 & 0.5368  &0.5651  &0.5872  &0.5917\\
           & (0.0187)& (0.0186)&(0.0193)& (0.0197)&(0.0199)&(0.0200)&(0.0205)\\
$\widehat{q_{0.5}(D)}$ &  7 & 7 & 6.7 & 5 & 6.96 & 6.04 & 6 \\
& (0) & (0) & (0.4629) & (0) & (0.1979) & (0.1979) & (0) \\

$\widehat{V(D)}$ & 3.3365 & 4.5520 & 2.5166 & 2.1800 & 4.2211 & 2.3580 & 2.3509 \\
& (0.1195) & (0.2014) & (0.0994) & (0.0882) & (0.1563) & (0.1144) & (0.0958) \\
\hline
&\multicolumn{7}{c}{(b) Two-component normal mixtures}\\
\hline
$CN$       &     0.3120      &   0.2570      &  0.4115      & 0.2674        &  0.2825      &  0.4003     &  0.3967     \\
            &   (0.0078)     &   (0.0061)      &  (0.0062)      &(0.0076)          &  (0.0073)       &  (0.0066)      &  (0.0057)      \\
$CN^{\ast}$ &  0.1870      &   0.1598      &  0.5558      & 0.1568        & 0.1617       &   0.5508    &   0.5462    \\
            &  (0.0091)      &   (0.0011)      &  (0.0016)       &  (0.0012)        &   (0.0035)      &   (0.0047)     &  (0.0053)      \\
$SC$           &   2.2666     &   2.2217      &  2.2657      &        2.1186 &  2.1612      &   2.2855    &   2.3054    \\
            &  (0.0239)      &   (0.0300)      &  (0.0222)       &   (0.0239)       & (0.0222)        & (0.0254)       &   (0.0238)     \\
$\widehat{q_{0.5}(D)}$ & 19.14 & 19 & 14.98 & 15.34 & 22 & 14.14 & 14.04 \\
 & (0.3505) & (0) & (0.1414) & (0.4785) & (0) & (0.3505) & (0.1979) \\
 $\widehat{V(D)}$ & 9.3477 & 11.8293 & 6.4858 & 6.5009 &  13.2239 & 6.0336 & 5.8835 \\
 & (0.2949) & (0.4377) & (0.3676) & (0.2179) & (0.4273) & (0.3515) & (0.3315) \\
\hline                    
\end{tabular}                                                                  
\end{scriptsize}
\end{table}           

The results in Tab. \ref{table:1st_Experim} point out similar co-clustering accuracy across HSSMs and experiments. For the exemplification purposes we report in Fig. \ref{Fig:coclust} and \ref{Fig:coclust_r} in Appendix \ref{App:EmpRes} $P_{lk}$ and $\mathbb{I}(P_{lk}> 0.5)$ for one of the $50$ experiments. In comparison to the other HSSMs, HPYP and HDPYP have significantly small co-clustering errors, $CN$ and $CN^{\ast}$. As regard the predictive score $SC$, the seven HSSMs behave similarly in the three-component mixture experiment, whereas in the two-component experiment the HDPYP performs slightly better with respect to the other HSSMs (see also predictive densities Fig. \ref{Fig:simpredictive} in Appendix \ref{App:EmpRes}).

The posterior number of clusters for the HDPYP and the Hierarchical Gnedin processes, HGP, HGPDP and HGPYP (see Tab. \ref{table:1st_Experim}), is significantly closer to the true value ($3$ and $11$ for the first and second experiment, respectively). The HDP, HPYP and HPYDP processes tend to have extra clusters causes the posterior number of clusters to be inflated (see $\widehat{V(D)}$ in Tab \ref{table:1st_Experim} and Fig. \ref{Fig:SimNumberClust} in Appendix \ref{App:EmpRes}); conversely the HDPPY and the Hierarchical Gnedin processes have a smaller dispersion of the number of clusters.

In our set of experiments, we can conclude that using the Pitman-Yor process at some stage of the hierarchy may lead to a better accurancy. The HDPYP did reasonably well in all our experiments in line with previous findings on hierarchical Dirichlet and Pitman-Yor process for topic models (see \cite{Buntine10, Buntine14}). Also Hierarchical Gnedin processes tend to over-perform other HSSMs in estimating the posterior number of clusters.

Finally, we would not say that either model is uniformly better than the other, rather, following \cite{MillerHarrison} one should use the model that is best suited to the application. A Bayesian nonparametric analysis should include robustness checks of the results not only with respect to the hyperparameters of the chosen prior, but also with respect to the prior in the general HSSM class. Alternatively, averaging of models with different HSSM priors could be considered (see \cite{HoeMadRafVol99})
 
%

\subsection{Real Data Application}

\begin{figure}[t]
  \caption[]{(a) Co-clustering matrix for the US (bottom left block) and EU (top right block) business cycles and cross-co-clustering (main diagonal blocks) between US and EU. (b) Posterior number of clusters. Total (b.1), marginal for US (b.2) and EU (b.3) and common (b.4).}\label{Fig:Real Data}
 \centering
\setlength{\tabcolsep}{1pt} 
\begin{tabular}{cc}
\scriptsize{(a)} & \\ 
\includegraphics[trim=0cm 1cm 1.9cm 0.9cm, clip=true, width=6.8cm]{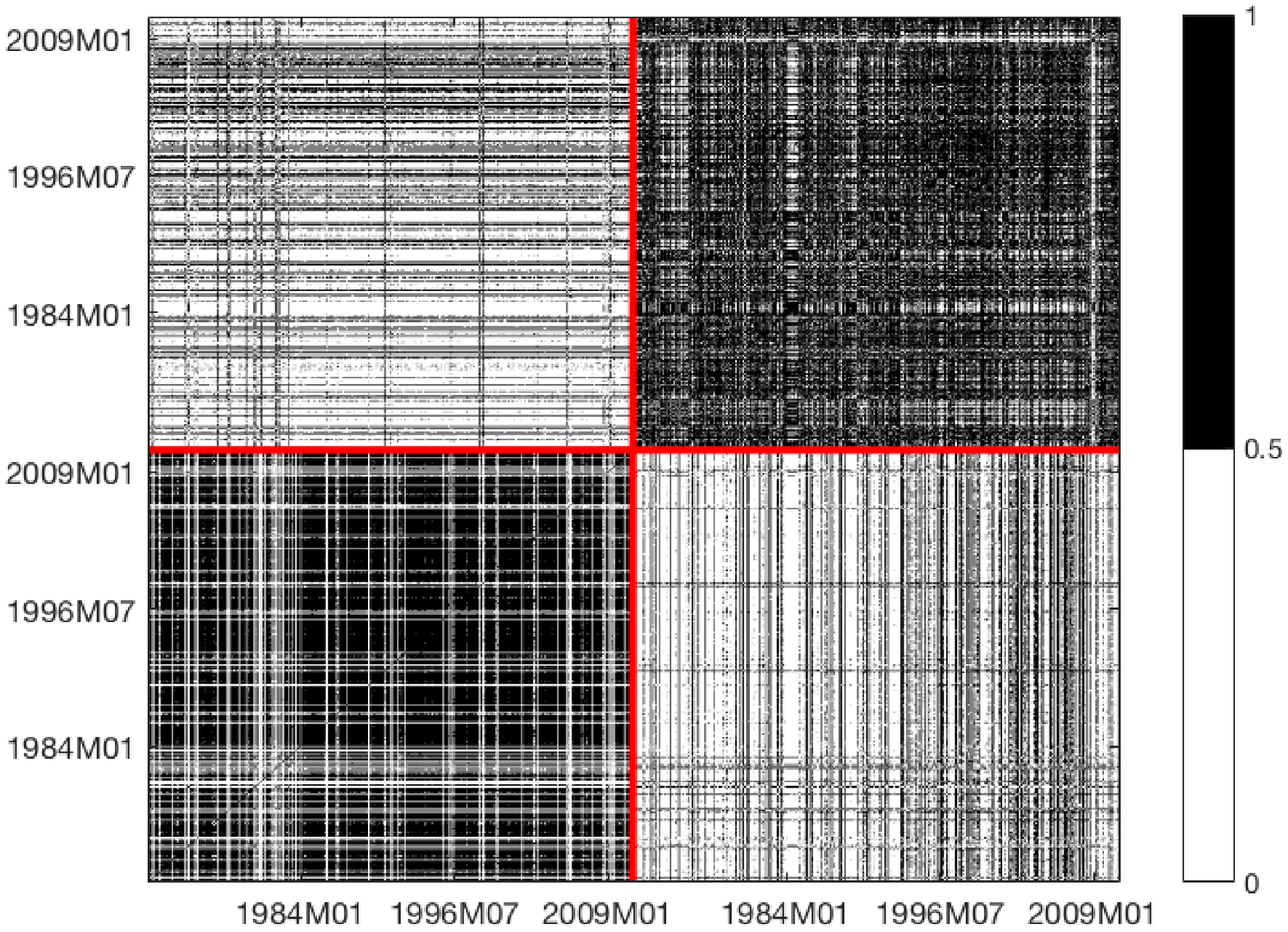}&
\begin{minipage}[]{0.5\textwidth}
\vspace{-147pt}
  \begin{tabular}{cc}
      \scriptsize{(b.1)} & \scriptsize{(b.2)}\\ 
 \includegraphics[width=2.7cm]{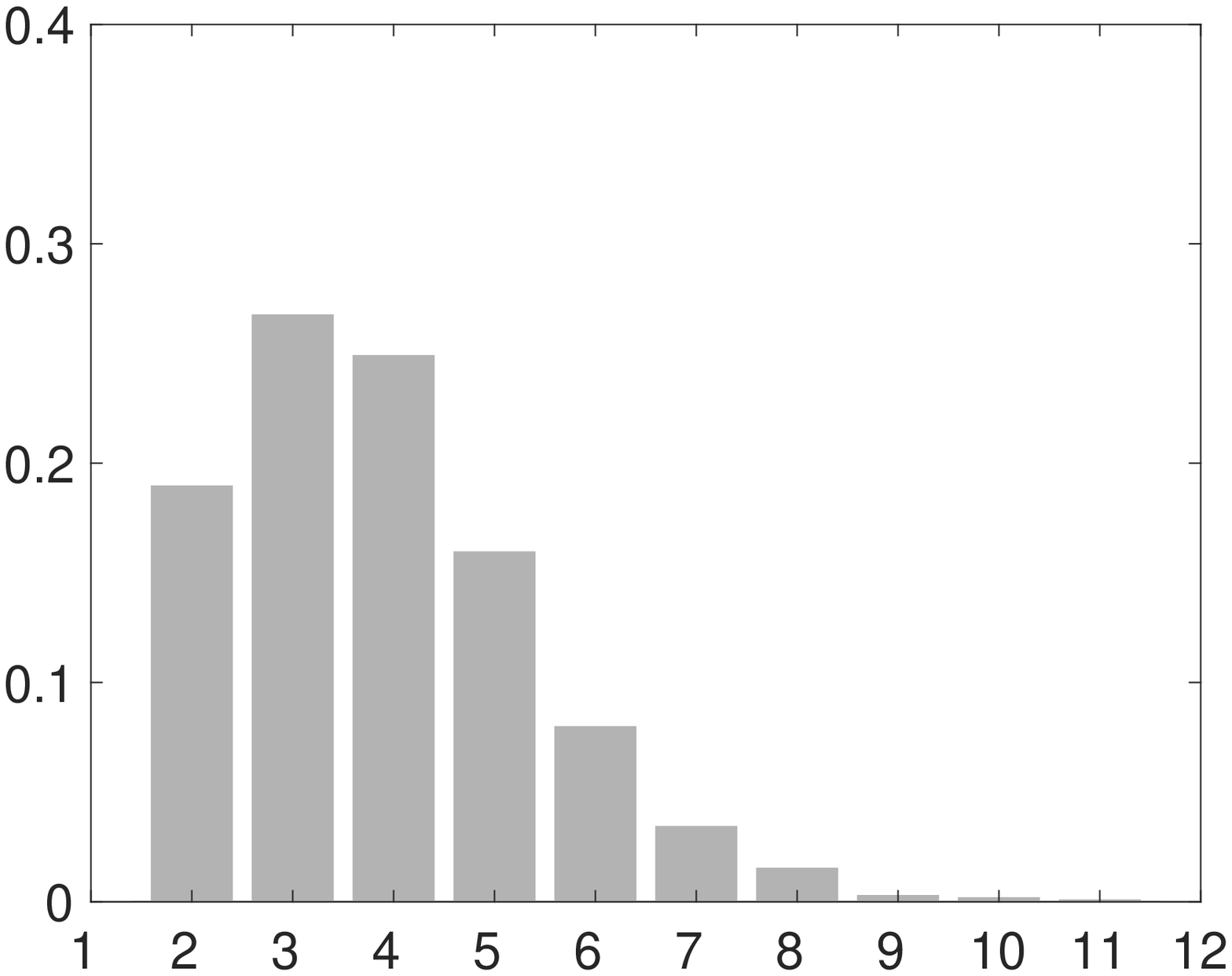}&
     \includegraphics[width=2.7cm]{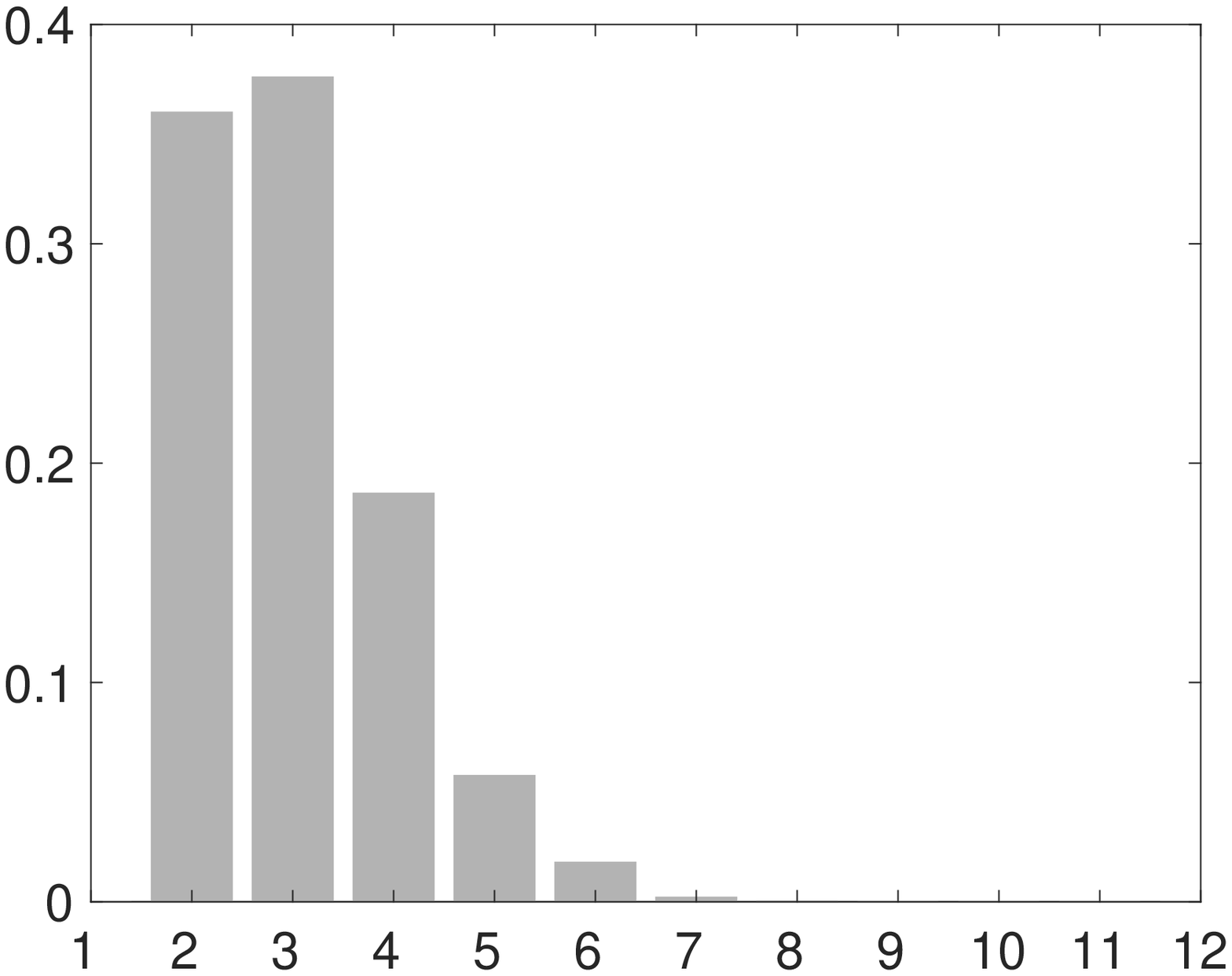}\\ \scriptsize{(b.3)}  & \scriptsize{(b.4)}\\
     \includegraphics[width=2.7cm]{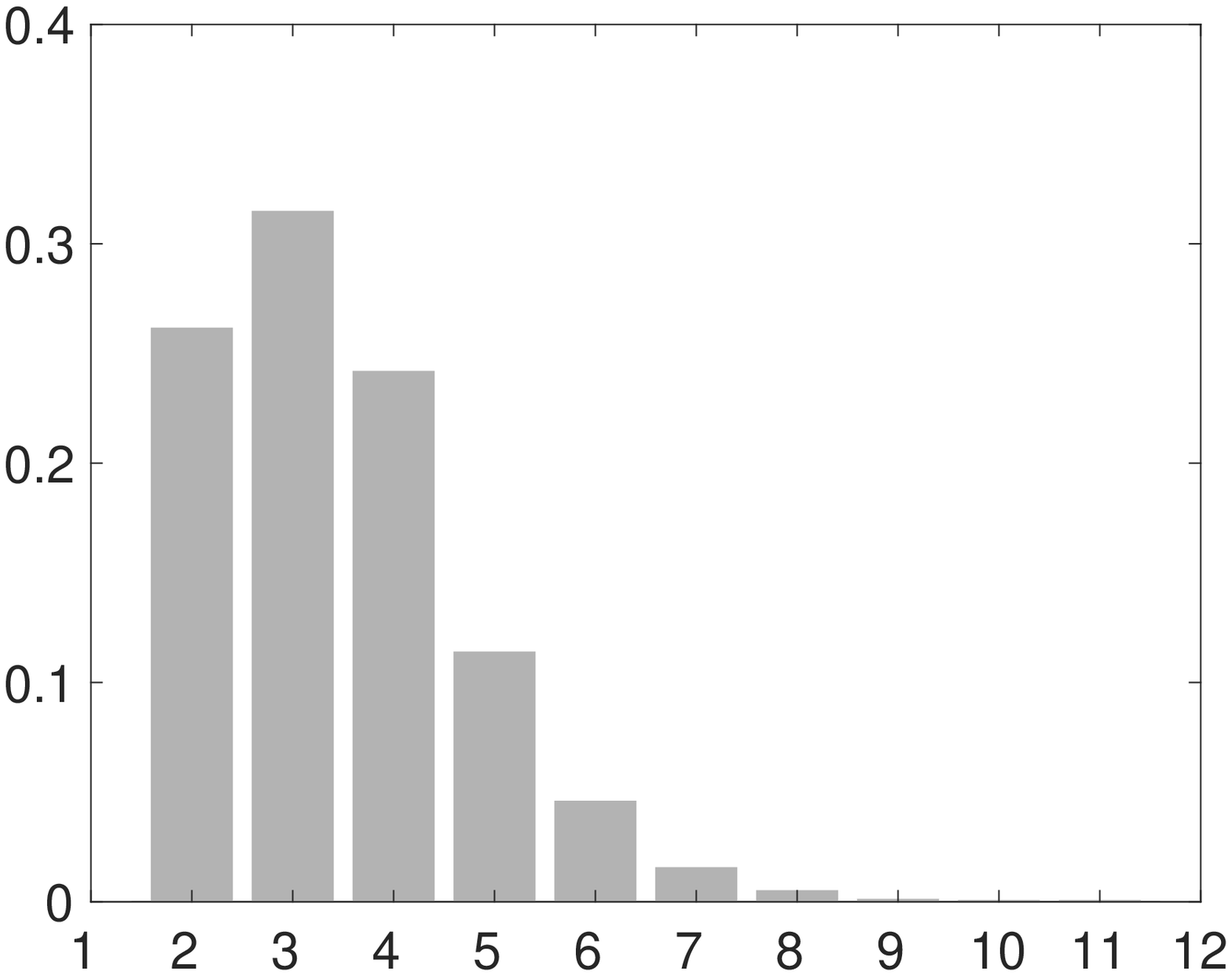}&
\includegraphics[width=2.7cm]{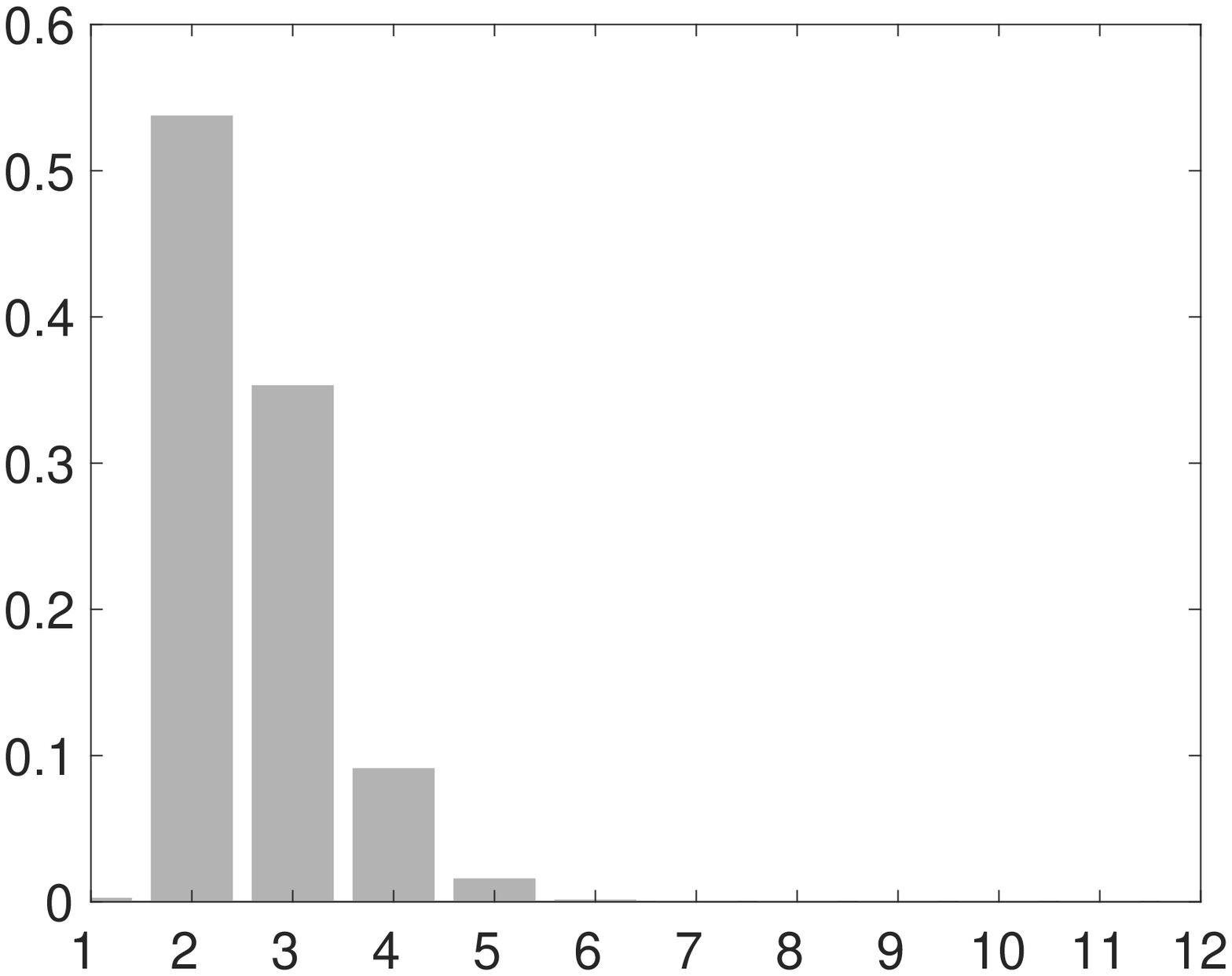}
     \\     
  \end{tabular}
\end{minipage}
\end{tabular} 
\end{figure}

The data contains the seasonally and working day adjusted industrial production indexes (IPI) at a monthly frequency from April 1971 to January 2011 for both United States (US) and European Union (EU) and has been previously analysed by \cite{Bas13}. We generate  autoregressive-filtered IPI quarterly growth rates by calculating the residuals of Vector autoregressive model of order $4$. 

We follow a Bayesian nonparametric approach based on HSSM prior for the estimation of the number of regimes or structural breaks. Based on the simulation results, we focus on the HPYP with hyperparameters, $(\theta_0, \sigma_0) = (1.2,0.2)$ and $(\theta_1,\sigma_1) = (2,0.2)$ such that the prior mean and variance of the number of clusters are $5.48$ and $23.71$, respectively. The main results of the nonparametric inference can be summarized through the implied data clustering (panel (a) of Fig. \ref{Fig:Real Data}) and the marginal, total and common posterior number of clusters (panel (b)).

One of the most striking feature of the co-clustering in Fig. \ref{Fig:Real Data} is that in the first and second block of the minor diagonal there are vertical and horizontal white lines. They correspond to observations of the two series, which belong to the same cluster and are associated with crisis periods.

Another feature that motivates the use of HSSMs are the black horizontal and vertical lines in the two main diagonal blocks. They correspond to observation from two different groups allocated to common clusters.

The appearance of the posterior total number of clusters (see panel b.1) suggests that at least three clusters should be used in a joint modelling of the US and EU business cycle. The larger dispersion of the marginal number of cluster for EU (b.3) with respect to US (b.2) confirms the evidence in \cite{Bas13} of a larger heterogeneity in the EU cycle.  Finally, we found evidence (panel b.4) two common clusters of observations between the EU and the US business cycles.

\section{Conclusions}
We propose generalized species sampling sequences as a general unified framework for constructing hierarchical random probability measures. The new class of hierarchical species sampling models (HSSM) includes some existing nonparametric priors, such as the hierarchical Dirichlet process, and other new measures such as the hierarchical Gnedin and the hierarchical mixtures of finite mixtures. In the proposed framework we derive the distribution of the marginal and total number of clusters under general assumptions for the base measure, which is useful for setting prior distribution in the applications to Bayesian nonparametric inference. Also, our assumptions allow for non-diffuse base measures, such as the spike-and-slab prior, used in sparse Bayesian nonparametric modeling. We show that HSSMs allow for the franchise Chinese restaurant representation and provide a general Gibbs sampler, which is appealing for posterior approximation in Bayesian inference.

\bibliographystyle{imsart-nameyear}
\bibliography{HDPBiblio}

\vfill

\pagebreak

\newpage

\appendix

\section{Homogeneous Normalized Random Measures}\label{App:NRM}

A {\it completely additive random measure} on a Polish space $\XX$ 
is  a random measure  $\tm$  such that, for any
measurable collection $\{A_1,\dots,A_k \}$ ($k \geq 1$) of pairwise
disjoint measurable subsets of $\XX$, the random variables
$\tm(A_1),\dots, \tm(A_k)$ are stochastically independent. 
Under very general assumption ($\Sigma$-boundeness), 
such random measures can be written as the sum of three 
independent random measures: a deterministic measure, an atomic random measure 
with fixed atoms $\sum_{i \geq 1} U_i \delta_{x_i}$  (where 
the points $x_1,x_2,\dots$ are fixed in $\XX$ and $U_i$ are independent positive random variables) 
and the  ordinary component, i.e. a discrete random measure $\tm_O$ without fixed atoms. This last measure 
can be express as an integral of a Poisson random measure $N$ on $\XX \times \RE^+$, 
more precisely as  $\tm_O(A)=\int_{A \times \RE^+} y N(dx dy)$ (see \cite{Kingman67}).

To define the Normalized Random Measures, we
consider the sub-class of completely random measures, without deterministic component, characterized 
by the Laplace functional 
\begin{equation}\label{zero}
\E(e^{-\lm \tm(A)})=\exp \left \{-\int_{A \times \RE^+}(1-e^ {-\lm
y}) \nu(dx dy) \right \} \qquad (\lm > 0, \,\, A \in \CB(\XX)),
\end{equation}
where $\CB(\XX)$ is the Borel $\s$--field on $\XX$ and 
$\nu$ a $\sigma$-finte measure   on $\XX
\times \RE^+$. In particular, if $\nu(\{x\} \times \RE^+)=0$ for every $x \in \XX$, then 
$\tm$ is the most general form of the ordinary component of a completely random measure,
see Chapter 10.1 in \cite{DaleyVere-Jones}. If we allow $\nu$ to have atomic components in $\XX$, i.e. if $\nu(\{x\} \times \RE^+)>0$
for some $x \in D \subset \XX$ ($D$ countable), it is not difficult to prove that  $\tm=\tm_{FA}+\tm_O$
for two independent completely random measures, $\tm_{FA}$ with fixed atoms on $D$ and $\tm_O$ ordinary
with Levy measures $\nu(dxdy)-\sum_{x \in D} \nu(\{x\} dy)$.
Note that it is not in contradiction with  Kingman's result as stated for instance in Theorem 10.1.III of  \cite{DaleyVere-Jones}, 
since in that statement  the nonatomicity of $\nu$ is assumed only in order 
to ensure  uniqueness in the representation. 
With few exceptions,  fixed
atoms are in general  ignored in the Bayesian nonparametrics literature, but for our proposes 
it is fundamental to assume possible atoms  (at least of the previous particular type).  

To go further, we require the following 
two regularity conditions:
\begin{equation}\label{CMRr1}
\int_{\XX \times \RE^+} (1-e^ {-\lm y}) \nu(dx dy) < +\infty, \quad \forall \lm>0,
\end{equation}
which entails that $\tm(\XX)<+\infty$ a.s., while the second condition
\begin{equation}\label{CMRr2}
 \nu(\XX \times \RE^+) = +\infty
 \end{equation}
entails that $\tm(\XX)>0$ a.s. (see \cite{RLP}).
Under these regularity conditions, following \cite{RLP}, 
 one  can define a {\it normalized completely random measure} (or  equivalently a {\it normalized random measure with independent increments},  NRMI) setting
\[
p(\cdot):=\frac{\tm(\cdot)}{\tm(\XX)}.
\]
The so--called {\it normalized homogeneous random measure}  of parameter 
$(\theta,\rho,H)$, $NRMI(\theta,\eta,H)$,
 is obtained for the special case 
\begin{equation}\label{fact}
\n(dx dy)=\theta H(dx) \eta(dy),
\end{equation}
 $\theta$ being a positive number, $H$ a probability measure on $\XX$ and
$ \eta$ a measure on $\RE^+$ such that $\eta(\RE^+)=+\infty$ and $\int_{\RE^+} (1-e^ {-\lm y}) 
\eta(dy) < +\infty $ for every positive $\lm$. 

The most classical example of NRMI is the Dirichlet
process, characterized by $\eta(dv)=v^{-1} e^{-v} dv$ and the connection between NRMI and $SSrp$ is clarified in the following proposition. 

\begin{proposition}\label{prop:hNRM}
Let $p$ be a normalized homogeneous random measure of parameter $(\theta,\eta,H)$,
where $\eta$ is absolutely continuous with respect to the Lebesgue measure
and $H$ any probability measures on $\XX$, 
 then 
$p$ is a $SSrp(\FP,H)$ for  $\FP$ specified by 
\eqref{PKEPPF}-\eqref{PKregularity}.
\end{proposition}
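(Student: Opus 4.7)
The plan is to exhibit the normalized random measure $p$ explicitly in the form $p=\sum_{k\geq 1}\tilde q_k\delta_{X_k}$ required by the definition of $SSrp$, and then to verify that the EPPF induced by $(\tilde q_k)$ via \eqref{EPPFkingman} coincides with the Poisson--Kingman formula \eqref{PKEPPF}.

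\textbf{Step 1 (Poisson point process representation).} By Kingman's representation theorem for completely random measures, the Lévy intensity $\nu(dx\,dy)=\theta H(dx)\eta(dy)$ allows one to write $\tilde\mu=\sum_{k\geq 1} J_k\delta_{X_k}$, where $\{(X_k,J_k)\}_k$ are the atoms of a Poisson point process $N$ on $\XX\times\RE^+$ with intensity $\theta H(dx)\eta(dy)$, listed so that $J_1>J_2>\cdots$ (possible because $\eta((\varepsilon,\infty))<\infty$ for every $\varepsilon>0$). By \eqref{CMRr1}--\eqref{CMRr2}, $T:=\tilde\mu(\XX)=\sum_k J_k$ is almost surely finite and strictly positive, so $\tilde q_k:=J_k/T$ are well-defined weights summing to $1$ and $p=\sum_k\tilde q_k\delta_{X_k}$. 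The product form of the intensity, combined with the marking theorem for Poisson point processes, yields that $(X_k)$ and $(J_k)$ are independent and that $(X_k)$ is i.i.d.\ with common law $H$. Since $(\tilde q_k)$ is a measurable function of $(J_k)$ alone, the pair $((\tilde q_k),(X_k))$ has exactly the structure required for $p$ to be a species sampling random probability. Crucially, the marking theorem requires no diffuseness assumption on the mark distribution, so this step works for an arbitrary $H$.

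\textbf{Step 2 (identifying the EPPF).} I compute \eqref{EPPFkingman} starting from the gamma identity $T^{-n}=\Gamma(n)^{-1}\int_0^\infty \lambda^{n-1}e^{-\lambda T}d\lambda$ with $n=\sum_i n_i$, which via Fubini gives
\[
\sum_{j_1,\dots,j_k}\E\Bigl[\prod_{i=1}^k\tilde q_{j_i}^{n_i}\Bigr]
=\frac{1}{\Gamma(n)}\int_0^\infty\lambda^{n-1}\,
\E\Bigl[e^{-\lambda T}\sum_{j_1,\dots,j_k}\prod_{i=1}^k J_{j_i}^{n_i}\Bigr]d\lambda,
\]
where the inner sum runs over ordered $k$-tuples of distinct positive integers. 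I then apply the multivariate Mecke (Slivnyak) formula: singling out $k$ distinguished atoms $(x_i,y_i)$ at intensity $\prod_i\theta H(dx_i)\eta(y_i)dy_i$, one has $T=T_0+\sum_{i=1}^k y_i$ where $T_0$ is the total mass of the remaining (independent) PPP configuration, so $e^{-\lambda T}=e^{-\lambda T_0}\prod_i e^{-\lambda y_i}$ and
\[
\E\Bigl[e^{-\lambda T}\sum_{j_1,\dots,j_k}\prod_{i=1}^k J_{j_i}^{n_i}\Bigr]
=L(\lambda)\prod_{i=1}^k\int_{\XX\times\RE^+}y^{n_i}e^{-\lambda y}\,\theta H(dx)\eta(y)dy
=\theta^k L(\lambda)\prod_{i=1}^k\int_{\RE^+}y^{n_i}e^{-\lambda y}\eta(y)dy,
\]
the $x$-integral being $1$ since $H$ is a probability. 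Substituting back recovers precisely \eqref{PKEPPF}.

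\textbf{Main obstacle.} The one delicate point is applying Mecke's formula to a functional depending on the whole PPP configuration through $e^{-\lambda T}$, not just on the $k$ distinguished atoms. The clean multiplicative factorization $e^{-\lambda T}=e^{-\lambda T_0}\prod_i e^{-\lambda y_i}$ sidesteps this by reducing the computation to a product of single-atom integrals multiplied by the Laplace functional $L(\lambda)$ of the background PPP. The absolute continuity of $\eta$ enters only so that the density $\eta(y)$ appears explicitly in \eqref{PKEPPF}; the genuinely new content beyond the classical diffuse-$H$ case of \cite{Lijoi2009} is that the entire derivation is insensitive to atoms in $H$, thanks to the product structure of the Lévy intensity.
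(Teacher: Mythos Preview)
Your proof is correct. Step~1 is essentially the same as the paper's, only run in the opposite direction: the paper constructs $\mu=\sum_j J_j\delta_{Z_j}$ by first taking a PPP on $\RE^+$ with intensity $\theta\eta$, attaching independent i.i.d.\ marks $Z_j\sim H$, and then verifying the Laplace functional; you instead start from the PPP on $\XX\times\RE^+$ with product intensity and invoke the marking theorem to read off the independence of locations and jumps. These are the forward and backward directions of the same construction. The genuine difference is Step~2: the paper simply cites \cite{Pitman2003} for the identification of the EPPF, whereas you carry out the computation explicitly via the gamma-integral trick and the multivariate Slivnyak--Mecke formula. Your route is more self-contained and makes transparent why diffuseness of $H$ is irrelevant (the $H(dx)$ integrals in the Mecke step contribute only a factor of $1$); the paper's route is shorter but relies on the reader recognising that the Poisson--Kingman EPPF depends only on the jump process and not on the marks.
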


\begin{proof} We start by proving that 
$p$ can be represented as a $SSrp$. 
Let $\tilde N$ be a inhomogeneous Poisson point process on $\RE^+$ 
with $\sigma$-finte L\'evy measure $\tilde \eta(y) dy=\theta \eta(y)dy$.  
Let $(Z_j)_j$ be a sequence of i.i.d. random variables on $\XX$ with distribution $H$, independent of $\tilde N$. 
Denote by  $(J_1,J_2,\dots)$ the jumps of the Poisson process $\tilde N$, i.e. 
$\tilde N(dy)=\sum_j \delta_{J_j}(dy)$, and set  
\[
\mu(dx)= \sum_j \delta_{Z_j} J_j.
\]
Using independence of $(Z_j)_j$ and $\tilde N$, one easily shows that
\[
\E[e^{-\lambda \mu(A)}]
=\exp \left \{-\int_{A \times \RE^+}(1-e^ {-\lm
y}) \tilde \eta(y)dy H(dx) \right \}, \qquad \lm > 0, \,\, A \in \CB(\XX).
\]
Hence 
\[
p(dx)= \frac{\mu(dx)}{\mu(\XX)}=\sum_j \delta_{Z_j}  q_j,
\]
with $q_j=J_j/(\sum_k J_k)$,
is a $NRMI(\theta,\eta,H)$. The thesis follows since, by results in \cite{Pitman2003},
the EPPF of the random partition derived by sampling from  $(q_j)_j$ is specified by 
\eqref{PKEPPF}-\eqref{PKregularity}.
\end{proof}

\begin{remark}
If $H$ has atoms, then 
$\FP$ is not the EPPF induced by a sequence of exchangeable random variables sampled from $p$. Moreover, $p$ can not be derived by normalization of an ordinary completely random measure, since it 
has fixed atoms. 
\end{remark}

\section{Proofs of the results in the paper}\label{App:proofs}

\subsection{Proofs of the results in Section \ref{Sec_Back}}

\begin{proof}[Proof of Proposition  \ref{prop_gsss}]
We assume without loss of generality that 
$ \tilde q= \sum_{j \geq 1} \delta_{Z_j} \tilde q_j^{\downarrow}$. 
Given the Borel sets, $A_1,\dots,A_n$, and the integers numbers $i_1,\dots,i_n$, then we have 
\[
\P\left\{\xi_1 \in A_1, \dots, \xi_n \in A_n, I_1=i_1,\dots,I_n =i_n \middle \lvert  \tilde q,  \left(\tilde q_j^{\downarrow}\right)_n, \left(Z_n\right)_n \right\}=
\prod_{j=1}^n \delta_{Z_{i_j}} (A_j) \tilde q_{i_j}^{\downarrow},
\]
and by marginalising,  
\[
\P\left\{\xi_1 \in A_1, \dots, \xi_n \in A_n \middle \lvert  \tilde q,  \left(\tilde q_j^{\downarrow}\right)_n, (Z_n)_n \right\}=
\sum_{i_1 \geq 1, \dots, i_n \geq 1} \prod_{j=1}^n \delta_{Z_{i_j}} (A_j) \tilde q_{i_j}^{\downarrow}
=\prod_{j=1}^n \tilde q (A_j).
\]
Hence, given $A_1,\dots,A_n$, 
\[
\P\left\{\xi_1 \in A_1, \dots, \xi_n \in A_n |  \tilde q\right\}=\prod_{j=1}^n \tilde q (A_j)
\]
almost surely. Since $\XX$ is Polish, we prove (i). 
Let us denote by $\pi(I_1,\dots,I_n)$ the partition induced by $I_1,\dots,I_n$ and by  Kingman's correspondence
its law is characterised by the EPPF $\FP$. Hence the law of  $\pi(I_1,\dots,I_n)$ is the same as the law of $\Pi_n$. 
If $i$ and $j$ belong to the same block of  $\pi(I_1,\dots,I_n)$, it follows that $I_i=I_j=k$ for some $k$ and 
  then $\xi_i=\xi_j=Z_k$. In particular, using the independence of the $Z_k$s, we can write 
\[
\P\left\{\xi_1 \in A_1, \dots, \xi_n \in A_n \right\}= \sum_{\pi_n \in \CP_n} \P\left\{ \pi(I_1,\dots,I_n)=\pi_n\right\} \prod_{c=1}^{|\pi_n|} H( \cap_{j \in \pi_{c,n}} A_j ). 
\]
Since $ \P\{ \pi(I_1,\dots,I_n)=\pi_n\} = \P\{ \Pi_n=\pi_n\}=\EPk(\pi_n)$, then (iii) follows immediately. 
Finally, we immediately check that 
$\P\left\{ \xi_1' \in A_1, \cdots, \xi_n' \in A_n\right\}= \sum_{\pi_n \in \CP_n} \P\{ \Pi_n=\pi_n\} \prod_{c=1}^{|\pi_n|} H( \cap_{j \in \pi_{c,n}} A_j )$, and hence, by (iii), we obtain
(ii).
\end{proof}

\begin{proof}[Proof of Corollary \ref{prop_gsssTRIS}]
By Proposition \ref{prop_gsss},  the probability of the event $\{ |\tilde \Pi_n|=d\}$
is equal to the probability of observing $d$ distinct values in $(\xi_1',\dots,\xi_n')$.
Since $(\xi_1',\dots,\xi_n')=(Z_{\SC_1(\Pi)},\dots,Z_{\SC_n(\Pi)})$, the conditional 
probability of observing $d$ distinct values  given the event $ |\Pi_n|=k$ is
zero if $k<d$ and equals to the the probability of observing exactly $d$ distinct values 
in the vector $(Z_1,\dots,Z_k)$  if $d\leq k \leq n$. This shows that 
$\P\{ |\tilde \Pi_n|=d| |\Pi_n|=k \}=H^*(d | k )$ and proves point (i). 
Point (ii) follows by specialising point (i) for $H(dx)=a \delta_{x_0}(dx)+ (1-a) \tilde H(dx)$.
In this case, a simple computation shows that  
\[
H^*(d|k)={k \choose d-1} a^{k+1-d}(1-a)^{d-1}+\J\{d=k\}(1-a)^d.
\]
\end{proof}

\begin{proof}[Proof of Corollary \ref{prop_gsssBIS}]
For   $c=1,\dots,|\Pi_n|$ by construction   $Z_c=\xi'_{R(n,c)}$ 
with $R(n,c)=\min\{ j : j \in \Pi_{c,n}$. Then 
\[
\begin{split}
\P\left\{\xi_{n+1}'  \in dx | \xi_1',\dots,\xi_n', \Pi_n \right\} & = 
\sum_{c=1}^{|\Pi_n|}P\{  (n+1) \in \Pi_{n+1,c} | \Pi_n \} 
\delta_{Z_c}(dx)  \\
& \quad+ \P\{(n+1) \in  \Pi_{n+1,|\Pi|+1} | \Pi_n \} 
\P\{ Z_{|\Pi_n|+1} \in dx\}
\\
& = \sum_{c=1}^{|\Pi_n|}
\omega_{n,c}(\Pi_n) \delta_{Z_c}(dx) +  
\nu_{n}(\Pi_n)   H(dx).
\\
\end{split}
\]
\end{proof}

\subsection{Proofs of the results in Section \ref{Sec_HSSM}}

\begin{proof}[Proof of Proposition \ref{prop_hssm}]
Since $p_1,\dots,p_I$ are conditionally independent given $p_0$, then we can write
\[
\P\left\{ \xi_{i,j} \in A_{i,j}  \,\,  \text{for  $i=1,\dots,I, j=1,\dots,n_{i }$} \right\}=\E \left[ 
 \prod_{i=1}^I   \E  \left[   \prod_{j=1}^{n_i} p_i(A_{i,j})  \middle \lvert p_0 \right]  \right].
\]
Given $p_0$,  then $\E  \left[   \prod_{j=1}^{n_i} p_i(A_{i,j}) \middle \lvert p_0 \right]$ 
is the probability that the first $n_i$ observations of  a $gSSM(\FP,p_0)$ take values in 
$A_{i1} \times \dots \times A_{in_i}$, hence by point (iii) of  Proposition \ref{prop_gsss}, we can write 
\[
\E  \left[   \prod_{j=1}^{n_i} p_i(A_{i,j})  \middle \lvert p_0 \right]=\sum_{\pi^{(i)}\in \CP_{n_i} }
 \P\left\{ \Pi^{(i)}_{n_{i}}=\pi^{(i)} \right\}    \prod_{c=1}^{|\pi^{(i)}|}  p_0 \left( \cap_{j \in \pi^{(i)}_c} A_{i,j} \right), 
\]
where $\Pi^{(i)}$ has EPPF $\FP$ and we can assume that the $\Pi^{(i)}$  are independent on all the other random elements. 
Taking the product and then the expectation,  we get 
\[
\begin{split}
\P\{ \xi_{i,j} \in A_{i,j}  & \,\,  \text{for  $i=1,\dots,I, j=1,\dots,n_{i }$} \} 
\\
&=  \E \left[  \prod_{i=1}^I 
\sum_{\pi^{(i)}\in \CP_{n_i} } 
 \P\left\{ \Pi^{(i)}_{n_{i}}=\pi^{(i)}\right \}    \prod_{c=1}^{|\pi^{(i))}|}  p_0 	\left( \cap_{j \in \pi^{(i)}_c} A_{i,j} \right)  \right] \\
 &=  \sum_{\pi^{(1)}\in \CP_{n_1}  ,\dots,\pi^{(I)} \in \CP_{n_{I}}} 
 \prod_{i=1}^I  \P\left\{ \Pi^{(i)}_{n_{i}}=\pi^{(i)} \right\}     \E \left[ 
   \prod_{i=1}^I \prod_{c=1}^{|\pi^{(i))}|}  p_0 \left( \cap_{j \in \pi^{(i)}_c} A_{i,j} \right)  \right] \\
   &=  \sum_{\pi^{(1)}\in \CP_{n_1}  ,\dots,\pi^{(I)} \in \CP_{n_{I}}} 
 \prod_{i=1}^I   \FP  \left(\pi^{(i)} \right)     \E \left[ 
   \prod_{i=1}^I \prod_{c=1}^{|\pi^{(i))}|}  p_0 \left( \cap_{j \in \pi^{(i)}_c} A_{i,j} \right)  \right] \\
 \end{split}
\]
with $p_0  \sim SSrp(\FP_0,H_0)$, that concludes the proof. 
\end{proof}

\begin{proof}[Proof of Proposition \ref{prop0}]
 Consider  an array of i.i.d. random variables 
 $[\zeta_{i,j}]_{i=1,\dots,I, j \geq 1}$ with common distribution $q$. It follows immediately that 
 \[
 \xi_{i,j}= \zeta_{i,\SC_{j}(\Pi^{(i)})}
 \]
is a partially exchangeable array. In this case, the row $[\xi_{i,j}]_{j \geq 1}$ turns out to be independent
and each row is an exchangeable sequence. 
Since mixtures of partially exchangeable random variables are still partially exchangeable, we get the first part of the proof. The second part follows easily. 
\end{proof}

The proof of Proposition \ref{prop1} 
is a consequence of  the next simple result. 
Given $\pi^{(i)}_{n_i} \in \CP_{n_{i}}$ for $i=1,\ldots,I$, let
\[
C\left(\pi^{(1)}_{n_1},\dots,\pi^{(I)}_{n_I}\right) :=\left\{ (i,c) : i =1,\dots,I; c=1, \dots |\pi^{(i)}_{n_{i}}|\right\}
\]
and fix a bijection $\SD:C\left(\pi^{(1)}_{n_1},\dots,\pi^{(I)}_{n_I}\right) \to \left\{1,\dots,\sum_i |\pi^{(i)}_{n_i}|\right\}$, 
e.g., $\SD(i,c)=\sum_{i'=1}^{i-1} |\pi^{(i')}_{n_{i'}}| + c$. Note that clearly $\SD$ depends on $C\left(\pi^{(1)}_{n_1},\dots,\pi^{(I)}_{n_I}\right)$ although 
we do not write it explicitly. 
 
\begin{lemma}\label{prop2} Under the same assumptions of Proposition \ref{prop1} , let $(\zeta_n)_n$ be a sequence of exchangeable random variables with directing random measure 
$\tilde p \sim SSrp(\FP_0,H_0)$ independent of all the others random elements. 
Then 
the law of $\CO$ is the same as the law of 
\[
\left\{ \zeta_{ \SD(i,\SC_j(\Pi^{(i)})) }; i=1, \dots I, j=1, \dots, n_{i}\right\}.
\]
\end{lemma}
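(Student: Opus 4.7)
\textbf{Proof plan for Lemma \ref{prop2}.}

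The plan is to verify directly that the finite-dimensional distributions of $\{\zeta_{\SD(i,\SC_j(\Pi^{(i)}))}: i=1,\dots,I, j=1,\dots,n_i\}$ match the defining formula \eqref{defHSSM} of a HSSM$(\FP,\FP_0,H_0)$, which by construction is the law of $\CO$. The argument proceeds by conditioning first on the random partitions $\Pi^{(1)},\dots,\Pi^{(I)}$, then on the directing measure $\tilde p$ of the exchangeable sequence $(\zeta_n)_n$, and exploiting the crucial fact that the map $\SD$ is a bijection.

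First, fix $(\pi^{(1)},\dots,\pi^{(I)})$ and condition on the event $\{\Pi^{(i)}=\pi^{(i)}, i=1,\dots,I\}$ and on $\tilde p$. Because $\SD$ is a bijection from $C(\pi^{(1)},\dots,\pi^{(I)})$ into $\{1,\dots,\sum_i|\pi^{(i)}_{n_i}|\}$, the integer indices $\{\SD(i,c):(i,c)\in C\}$ are pairwise distinct. Since $(\zeta_n)_n$ is exchangeable with directing measure $\tilde p$ and is independent of $(\Pi^{(1)},\dots,\Pi^{(I)})$, de Finetti's theorem yields that, conditionally on $\tilde p$, the variables $(\zeta_n)_n$ are i.i.d.\ with law $\tilde p$. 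Picking distinct indices therefore produces a conditionally i.i.d.\ family: $\{\zeta_{\SD(i,c)}:(i,c)\in C\}$ are i.i.d.\ with common distribution $\tilde p$ given $\tilde p$, exactly mirroring the role of the array $[\zeta_{i,c}]$ in Proposition \ref{prop0}.

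Writing $c_{i,j}:=\SC_j(\Pi^{(i)})$, observe that on the event $\Pi^{(i)}=\pi^{(i)}$, one has $c_{i,j_1}=c_{i,j_2}$ iff $j_1,j_2$ lie in the same block of $\pi^{(i)}$. Hence, for Borel sets $A_{i,j}$,
\begin{align*}
&\P\Bigl\{\zeta_{\SD(i,c_{i,j})}\in A_{i,j},\ \forall\, i,j\ \Big|\ (\pi^{(i)})_i,\ \tilde p\Bigr\} \\
&\qquad =\prod_{i=1}^{I}\prod_{c=1}^{|\pi^{(i)}_{n_i}|}\tilde p\Bigl(\bigcap_{j\in \pi^{(i)}_{c,n_i}}A_{i,j}\Bigr),
\end{align*}
where in each inner product we have used that the indices $\SD(i,c)$ for $c=1,\dots,|\pi^{(i)}_{n_i}|$ (and for varying $i$) are all distinct, so the joint probability factorizes according to the blocks.

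Finally, integrating out $\tilde p$ and summing over partition tuples with $\P\{\Pi^{(i)}=\pi^{(i)}\}=\FP(\pi^{(i)})$ (by mutual independence of the $\Pi^{(i)}$ and their common EPPF $\FP$) yields
\begin{align*}
&\P\Bigl\{\zeta_{\SD(i,\SC_j(\Pi^{(i)}))}\in A_{i,j},\ \forall\, i,j\Bigr\} \\
&\quad=\sum_{\pi^{(1)}\in\CP_{n_1},\dots,\pi^{(I)}\in\CP_{n_I}}\prod_{i=1}^{I}\FP(\pi^{(i)})\,\E\Bigl[\prod_{i=1}^{I}\prod_{c=1}^{|\pi^{(i)}_{n_i}|}\tilde p\bigl(\cap_{j\in\pi^{(i)}_{c,n_i}}A_{i,j}\bigr)\Bigr].
\end{align*}
This is precisely \eqref{defHSSM}, which characterizes the law of $\CO$, completing the proof. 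The only delicate point is the bijectivity of $\SD$, which guarantees that distinct $(i,c)$'s index distinct $\zeta$'s; beyond that, the argument is mere bookkeeping with conditional independence.
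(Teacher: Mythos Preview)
Your proposal is correct and follows essentially the same approach as the paper's proof: both condition on the partitions $(\Pi^{(1)},\dots,\Pi^{(I)})$, use that the $\zeta_n$ are conditionally i.i.d.\ given $\tilde p$ together with the bijectivity of $\SD$ to obtain the block-wise factorization, and then sum over partition tuples weighted by $\FP$ to recover the HSSM defining formula \eqref{defHSSM}. The paper's version is terser and leaves the role of the bijection implicit, but the logical skeleton is identical.
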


\begin{proof}
Given 
 $\pi_1,\dots,\pi_I$ and $\SD$ as above, we have
\[
\begin{split}
\E \Big [  \prod_{i=1}^I  & \prod_{c=1}^{|\pi^{(i))}|} \tilde q\left( \cap_{j \in \pi^{(i)}_{c}} A_{i,j} \right)  \Big ] \\
& =\P\left\{ \z_{\SD(i,c)}  \in \cap_{j \in \pi^{(i)}_{c} } A_{i,j}: \,\,  i=1,\dots,I,   c=1, \dots, |\pi^{(i)}|  \right\}  \\
& =\P\left\{ \z_{\SD(i,\C_j(\pi^{(i)}))}  \in A_{i,j}: \,\,  i=1,\dots,I,   j=1, \dots, n_{i}  \right\}.
\end{split}
\]
Hence
\[
\begin{split}
\P\{   \z_{\SD(i,\C_j(\Pi^{(i)}))} &  \in A_{i,j}: \,\,  i=1,\dots,I,   j=1, \dots, n_{i}  \}  \\
 & =
 \sum_{\pi^{(1)}\in \CP_{n_1}  ,\dots,\pi^{(I)} \in \CP_{n_{I}}} 
 \prod_{i=1}^I   \FP  \left(\pi^{(i)} \right)     \E \left[ 
   \prod_{i=1}^I \prod_{c=1}^{|\pi^{(i))}|}  \tilde p \left( \cap_{j \in \pi^{(i)}_c} A_{i,j} \right)  \right]. 
   \\
   \end{split}
\]
\end{proof}

\begin{proof}[Proof of Proposition \ref{prop1}]
The thesis follows by combining Lemma \ref{prop2} with 
Proposition \ref{prop_gsss}.  
Indeed, by (ii) of Proposition  \ref{prop1}, one can take in 
 Lemma \ref{prop2} 
\[
\zeta_n=\phi_{\SC_n \left(\Pi^{(0)}\right)},
\]
getting 
\[
\zeta_{\SD(i,\SC_j(\Pi^{(i)}))}=\phi_{\SC_{\SD(i,\SC_j(\Pi^{(i)}))} \left(\Pi^{(0)}\right)}, 
\]
which proves the thesis. 
%

\end{proof}

\begin{proof}[Proof of Proposition \ref{prop_part}]
Following Proposition \ref{prop1}, we can assume that $\CO$ is described by \eqref{h0}.
Hence, using
 \eqref{h0} and the fact that $H$ is non-atomic, we can express the event 
$\Pi^*=\pi^*$  as the union of  disjoint  events of the form 
$\{\Pi^{(0)}=\pi^{(0)},\Pi^{(1)}=\pi^{(1)},\dots,\Pi^{(I)}=\pi^{(I)}\}$, 
where $(\pi^{(1)},\dots,\pi^{(I)})$ run over all 
the possible partitions compatible with $\pi^*$. 

Note that, given $\Pi^{(1)}=\pi^{(1)},\dots,\Pi^{(I)}=\pi^{(I)}$
with $(\pi^{(1)},\dots,\pi^{(I)})$ compatible with $\pi^*$, we have that necessarily 
(on the event $\Pi^*=\pi^*$) $\Pi^{(0)}=\pi^{(0)}$ for a partition $\pi^{(0)}$ as function of
$\pi^*$ and  $(\pi^{(1)},\dots,\pi^{(I)})$. In what follow we shall write this partition as
$\pi^{(0)}(\pi^*,\pi^{(1)},\dots,\pi^{(I)})$.

For example, if $I=2$, $n_1=4$, $n_2=3$ and $D=3$ with
$\phi_1=\xi_{1,1}=\xi_{1,4}=\xi_{2,2}$, $\phi_2=\xi_{1,2}=\xi_{1,3}$
and $\phi_3=\xi_{2,1}=\xi_{2,3}$, we have 
$\pi^*_{1,1}=[1,4]$, $\pi^*_{1,2}=[2,3]$, $\pi^*_{1,3}=\emptyset$,
$\pi^*_{2,1}=[2]$, $\pi^*_{2,2}=\emptyset$, $\pi^*_{2,3}=[1,3]$. In this case
$\pi^{(1)}$ can be one of the partitions: $[(1),(2),(3),(4)]$, 
$[(1,4),(2),(3)]$, $[(1,4),(2,3)]$, $[(1),(2,3),(4)]$ and, analogously, 
$\pi^{(2)}$ can be $[(1),(2),(3)]$ or $[(1,3),(2)]$.  Finally, 
assuming for instance that 
$\pi^{(1)}=[(1,4),(2),(3)]$  and that $\pi^{(2)}=[(1,3),(2)]$ we have
that necessarily $\pi^{(0)}=[(1,5),(2,3),(4)]$. 

In general, given $i=1,\dots,I$, the subset  of partitions in $\CP_{n_{i}}$ that are compatible 
with $[\pi^*_{i,1},\dots,\pi^*_{i,D}]$ is 
\[
\cup_{ \bmm_i \in \CM[{\bn_i}] } \cup_{  \blam_i  \in \Lambda(\bmm_i) } \CP(i,\pi^*;\blam_i),
\]
 where $ \CP(i,\pi^*;\blam_i)$ are partitions $\pi$ in $ \CP_{n_{i}}$ with $\sum_{d=1}^D m_{i d}$ blocks
such that for every $d=1,\dots,D$: (i)  there are $m_{id}$ blocks in $\pi$ with cardinality $\ell_{i c d}$ ($c=1,\dots,m_{i,d}$), such that
 $ \sum_{c=1}^{m_{id}} \ell_{i c d} = n_{i  d}$ for every $d$ and 
$\#\{ c: \ell_{i c d} =j \}= \lambda_{idj} $ for every $d$ and $j$;
(ii) the union of these blocks coincides with  $\pi^*_{i,d}$. 
Using the well-known fact that 
the number $N(\ell_1,\dots,\ell_n)$ of partitions of $[n]$ with 
$\ell_j$ blocks of cardinality $j=1,\dots,n$ 
can be written as
\[
N(\ell_1,\dots,\ell_n)=\frac{n!}{\prod_{j=1}^n ( \ell_j ! )(j!)^{\ell_j} },
\]
see equation  (11) in \cite{Pit95}, 
it is easy to see that 
\begin{equation}\label{cardPi}
| \CP(i,\pi^*;\blam_i) |=  \prod_{d=1}^D  \frac{n_{i d}!}{\prod_{j=1}^{n_{i d }} \lambda_{idj}! (j!)^{\lambda_{idj}}    }.
\end{equation} 
Setting
\[
A_{\pi^*}:=\cup_{ \bmm \in \CM[{\bn}] } \cup_{  \blam  \in \Lambda(\bmm) }   \,\, \CP(1,\pi^*;\blam_1) \times \cdots \times  \CP(I,\pi^*;\blam_I)
\]
we can write
\[
\left\{ \Pi^*=\pi^*\right\} = \cup_{(\pi^{(1)},\dots,\pi^{(I)} )\in A_{\pi^*} } 
\left\{ \Pi^{(1)}= \pi^{(1)},\dots,\Pi^{(I)}= \pi^{(I)}, \Pi^{(0)}=   
  \pi^{(0)}\left(\pi^*,\pi^{(1)},\dots,\pi^{(I)}\right)\right\}.
\]
Now,  given $\bmm \in \CM[{\bn}]$, $ \blam  \in \Lambda(\bmm)$  and
$(\pi^{(1)},\dots,\pi^{(I)} ) \in \CP(1,\pi^*;\blam_1) \times \cdots \times  \CP(I,\pi^*;\blam_I) $,
since $\Pi^{(0)},\Pi^{(1)},\dots,\Pi^{(I)}$ are independent exchangeable random partitions, we have that  
\[
\begin{split}
\P\{ \Pi^{(1)} & = \pi^{(1)},\dots,\Pi^{(I)}= \pi^{(I)}, \Pi^{(0)} =   
  \pi^{(0)}(\pi^*,\pi^{(1)},\dots,\pi^{(I)} )\} \\
&   = 
  \FP_0(m_{\cdot 1},\dots,m_{\cdot D}) 
 \prod_{i=1}^I 
  \FP [[ \blam_i ]] .
\\
  \end{split}
\]
Combining it with \eqref{cardPi}, 
we have finally that
\[
\begin{split}
 \P\{ \Pi^*=\pi^*\} 
& =\sum_{ \substack{  \bmm \in \CM[{\bn}] \\ \blam  \in \Lambda(\bmm)} } 
\sum_{*}
\P\left\{ \Pi^{(1)}  = \pi^{(1)},\dots,\Pi^{(I)}= \pi^{(I)}, \Pi^{(0)} =   
  \pi^{(0)}\left(\pi^*,\pi^{(1)},\dots,\pi^{(I)} \right)\right\} \\
  & = \sum_{ \bmm \in \CM[{\bn}] } 
 \FP_0 (m_{\cdot 1},\dots,m_{\cdot D}) 
\sum_{  \blam  \in \Lambda(\bmm) }
 \prod_{i=1}^I 
  \FP  [[ \blam_i ]] \prod_{d=1}^D  \frac{n_{i d}!}{\prod_{j=1}^{n_{i d}} \lambda_{idj}! (j!)^{\lambda_{idj}}    }
  \\
  \end{split}
\]
where $*= \left\{ (\pi^{(1)},\dots,\pi^{(I)} )  \in \CP(1,\pi^*;\blam_1) \times \cdots \times  \CP(I,\pi^*;\blam_I)\right\}$.
\end{proof}

\subsection{Proofs of the results in Section \ref{Sec_asympt}}

\begin{proof}[Proof of Proposition \ref{Prop:clusterdistribution}]
It is prompt to see that 
\[
\begin{split}
\P\left\{ D_{i,t}=k\right\} & =\sum_{m=k}^{n_i(t)} \P\left\{   |\Pi_m^{(0)}| =k  \middle \lvert  
|\Pi^{(i)}_{n_i(t)}|  =m \right\} \P\left\{  |\Pi^{(i)}_{n_i(t)}|=m \right\}  \\
& =\sum_{m=k}^{n_i(t)} \P\left\{   |\Pi_m^{(0)}| =k \right\} \P\left\{  |\Pi^{(i)}_{n_i(t)}|=m\right \},  \\
\end{split}
\]
where we use the independence of $\Pi^{(i)}$ and $\Pi^{(0)}$. Moreover,
\[
\E\left[ D_{i,t}^r\right]=  
 \sum_{m =1}^{n_i(t)} \E\left[   \left \lvert\Pi_m^{(0)}\right \lvert^r \middle \lvert     \left \lvert \Pi^{(i)}_{n_i(t)}\right \lvert
 =m\right] P\left\{ \left \lvert\ \Pi^{(i)}_{n_i(t)}\right \lvert=m \right\} 
 = \sum_{m =1}^{n_i(t)}\E\left[   \left\lvert\Pi_m^{(0)}\right\lvert^r \right] P\left\{  \left \lvert\Pi^{(i)}_{n_i(t)}\right \lvert=m \right\}.
\]
 The second part of the proposition can be proved in 
an analogous way.  
\end{proof}

\begin{proof}[Proof of Proposition \ref{Prop:clusterdistribution00}]
The proof is analogous to the one of Proposition \ref{prop_gsssTRIS}. 
By Proposition \ref{prop1} the probability of the event $\{ D_{i,t}=d\}$
is equal to the probability of observing $d$ distinct values in $(\phi_1,\dots,\phi_{|\Pi^{(0)}_{K_t^i}|})$.
Hence, conditionally on $\{ |\Pi^{(0)}_{K_t^i}|=D_{i,t}=k\}$ this probability is
$H^*_0(d|k)$ and the thesis follows. Analogously one proves the statement for $D_t$, since in this case 
the probability of the event $\{ D_{t}=d\}$
is equal to the probability of observing $d$ distinct values in $(\phi_1,\dots,\phi_{|\Pi^{(0)}_{K_t}|})$.
\end{proof}

\begin{proof}[Proof of Proposition \ref{HierclustrH0spikeandslub}]
As already observed in the proof of Proposition \ref{prop_gsssTRIS},   if $H_0$ is a spike-and-slab prior, then 
\[
H^*_0(d|k)={k \choose d-1} a^{k+1-d}(1-a)^{d-1}  +\J\{k=d\} (1-a)^{d}.
\]
Hence, the first part of the thesis follows immediately from Proposition \ref{Prop:clusterdistribution00}.
Now 
\[
\E[\tilde D_{i,t}]=\sum_{d=0}^{n_{i}(t)} d  (1-a)^d \P\{ D_{i,t}=d\}+
\sum_{d=0}^{n_{i}(t)} d   \sum_{k =d}^{n_i(t)} {k \choose d-1} a^{k+1-d}(1-a)^{d-1}  \P\{ D_{i,t}=k\},
\]
where
\[
\begin{split}
\sum_{d=0}^{n_{i}(t)} d & \sum_{k =d}^{n_i(t)} {k \choose d-1} a^{k+1-d}(1-a)^{d-1}\P\{ D_{i,t}=k\} \\
& 
= \sum_{k =0}^{n_{i}(t)}  \sum_{d=1}^k   d {k \choose d-1} a^{k+1-d}(1-a)^{d-1}\P\{ D_{i,t}=k\}\\
& =\sum_{k =0}^{n_i(t)}  \sum_{l=0}^{k-1}   (l+1) {k \choose l} a^{k-l}(1-a)^{l}\P\{ D_{i,t}=k\} \\
& =\sum_{k =0}^{n_i(t)}[1+(1-a)k-(k+1)(1-a)^k] P\{ D_{i,t}=k\}. \\
\end{split}
\]
Which gives 
\[
\begin{split}
\E[\tilde D_{i,t}]& =\sum_{k=0}^{n_{i,t}} k  (1-a)^k \P\{ D_{i,t}=k\}
+\sum_{k =0}^{n_i(t)}[1+(1-a)k-(k+1)(1-a)^k] P\{ D_{i,t}=k\} \\
& =1+\E[(1-a)D_{i,t}-(1-a)^{D_{i,t}}].\\
\end{split}
\]

\end{proof}

\begin{proof}[Proof of Proposition \ref{PropAsym1}] Since $\lim_{t \to \infty} n_i(t) =+\infty$ and $\lim_n b_n = +\infty$, we get that   $b_{n_i(t)} \to \infty$. 
By assumption,  $\left\lvert\Pi^{(i)}_n\right\lvert/b_n \to D^{(i)}_\infty$ a.s. 
and then  $\left\lvert\Pi^{(i)}_{n_i(t)}\right\lvert/b_{n_i(t)} =K_{i,t}/b_{n_i(t)} \to D^{(i)}_\infty$ a.s. and  $K_{i,t} \to +\infty$ a.s.. 
Then, we can write
\[
\frac{D_{i,t}}{d_{n_i(t)}}=\frac{\Pi^{(0)}_{K_{i,t}}}{a_{K_{i,t}}}  \frac{{a_{K_{i,t}}}}{a_{b_{n_i(t)}}}.
\]
Combining the fact that  $\left\lvert\Pi^{(0)}_n\right\lvert/a_n \to D^{(0)}_\infty$  a.s. and  $K_{i,t} \to +\infty$ a.s., 
we have that ${\Pi^{(0)}_{K_{i,t}}}/{a_{K_{i,t}}}  \to D^{(0)}_\infty$  a.s..
Now  we can write
\[
  \frac{{a_{K_{i,t}}}}{a_{b_{n_i(t)}}}= \left( \frac{K_{i,t}}{b_{n_i(t)}} \right)^{\sigma_0} \frac{L_0 \left(  \frac{K_{i,t}}{b_{n_i(t)}} b_{n_i(t)}  \right)}{L_0( b_{n_i(t)})}.
\]
Recalling that for any slowly varying function 
$L_0(x_n y_n)/L_0(y_n) \to 1$ whenever $y_n \to +\infty$ and $x_n \to x>0$ (see
Theorem B.1.4 in \cite{deHaanFerreira}) and that 
 $K_{i,n}/b_{n_i} \to D^{(i)}_\infty$ a.s., it is easy to see that 
\[
  \frac{{a_{K_{i,n}}}}{a_{b_{n_i}}} \to  \left(D^{(i)}_\infty\right)^{\sigma_0} \quad a.s.
\]
In order to proof (ii), we note that 
\[
\frac{D_{t}}{d_{n(t)}}=\frac{\Pi^{(0)}_{K_{t}}}{a_{K_{t}}}  \frac{{a_{K_{t}}}}{a_{b_{n(t)}}}.
\]
Now, using again that
 $L_1(x_n y_n)/L_1(y_n) \to 1$ whenever $y_n \to +\infty$ and $x_n \to x>0$, we get
\[
\frac{b_{n_i(t)}}{b_{n(t)}}= \left( \frac{{n_i(t)}}{{n(t)}}\right)^{\sigma_1} \frac{L_1\left( \frac{n_i(t)}{n(t)}  n(t) \right)  }{L_1 (n(t)) }
\to w_i^{\sigma_1}.
\]
Hence
\[
\frac{K_{t}}{b_{n(t)}}=\sum_{i=1}^{I} \frac{K_{i,t}}{b_{n_i(t)}} \frac{b_{n_i(t)}}{b_{n(t)}} 
 \to \sum_{i=1}^I D^{(i)}_\infty    w_i^{\sigma_1}  \quad a.s.
\]
Hence, it follows that $K_{t} \to +\infty$ a.s..
To conclude we can follow the same line of the first part of the proof. 
\end{proof}


\begin{proof}[Proof of Proposition \ref{PropAsym2}] 
Part (i) follows immediately by taking the limit for $t \to + \infty$ 
in the expression of $\P\left\{ D_{i,t}=k\right\}$ and  $\P\left\{ D_{i}=k\right\}$ given in 
Proposition \ref{Prop:clusterdistribution}.
For part (ii), note that since $|\Pi^{(i)}_n|/b_n$ converges a.s. 
to a strictly positive random variable $D^{(i)}_\infty$  with $b_n \to +\infty$, then 
$|\Pi^{(i)}_n|$ diverges to $+\infty$ a.s. Hence, $D_{i,t}$ converges 
a.s. to the same limit of $|\Pi^{(0)}_n|$, i.e. to $K_0$. Since, 
both $D_{i,t}$ and $K_0$ are integer valued random variables the thesis follows. 
The proof for $D_t$ is similar. 
\end{proof}

\begin{proof}[Proof of Proposition \ref{Prop:AsymGnedin}] 
It is enough to apply Proposition \ref{PropAsym2}
and the fact that 
if  $(\Pi_n)_n$ is the Gnedin's partition  of parameters $(\gamma,\zeta)$,  $| \Pi_{n}|$ converges almost surely to a random variable $K$ with distribution \eqref{DistXbis}. Algebraic manipulations  give the thesis. 
\end{proof}

\begin{proof}[Proof of Proposition \ref{asymptoticPY}]
The statement is essentially a  corollary  of Proposition \ref{PropAsym1}, except for 
the fact that we now want to show that all the convergences are in $L^p$. We shall use  a classical dominated convergence argument. 
Let us  give the  details for the proof of  statement (iii). The other cases are similar. 
Assume that we are dealing with a   $HPYDP(\sigma_0;\sigma_1,\theta_1)$. Recall that in this case 
$ |\Pi_{n}^{(0)}|/\log(n)$ converges almost surely and in $L^p$ (for every $p>0$) to  $\theta_0$, 
while  $ |\Pi_{n}^{(i)}|/n^{\sigma_1}$ converges almost surely and in $L^p$ (for every $p>0$) to 
$S_{\sigma_1,\theta_1}^{(i)}$ for $i=1,\dots,I$,  where $S_{\sigma_1,\theta_1}^{(i)}$
are independent and identically distributed random variables with density  $g_{\sigma_1,\th_1}$.
 Proposition   \ref{PropAsym1} applies  with $a_n=\log(n)$, $\sigma_0=0$, $L_0(x)=\log(n)$, 
 $b_n=n^{\sigma_1}$, $\sigma_1=\sigma_1$, $L_1(x)=1$, 
$D_{\infty}^{(i)}=S_{\sigma_1,\theta_1}^{(i)}$, $D_{\infty}^{(0)}=\theta_0$.
 Hence, it remains to prove that all  the  convergences hold also in $L^p$. 
  We already know that $K_{i,t}=\left\lvert\Pi^{(i)}_{n_i(t)}\right\lvert \to +\infty$ a.s., and,  
 since $\left\lvert\Pi^{(0)}_{n_i(t)}\right\lvert/a_{n_i(t)} \to \theta_0$ in $L^p$ 
 for every $p>0$, it is easy to check that 
  \[
\frac {\Pi^{(0)}_{K_{i,t}}}{a_{K_{i,t}}}  \to D^{(0)}_\infty \quad \text{in $L^p$ for every $p>0$}.
  \]
Using $\log(x) \leq x$ for every $x>0$, write
  \[
0 \leq  \frac{{a_{K_{i,t}}}}{a_{b_{n_i(t)}}}= \frac{\log \left(  \frac{K_{i,t}}{n_i(t)^{\sigma_1}}{n_i(t)^{\sigma_1}}  \right)}{\log({n_i(t)}^{\sigma_1})}
  = \frac{\log \left(  \frac{K_{i,t}}{n_i(t)^{\sigma_1}}  \right)}{\log({n_i(t)}^{\sigma_1})}+1
   \leq  \frac{ \frac{K_{i,t}}{n_i(t)^{\sigma_1}}  }{\log({n_i(t)}^{\sigma_1})}+1.
\]
Hence, for every $p>0$, 
\[
 \left |  \frac{{a_{K_{i,t}}}}{a_{b_{n_i(t)}}} \right|^p \leq C_p  \left[
  \left| \frac{K_{i,t}}{n_i(t)^{\sigma_1}} \frac{1}{\log({n_i(t)}^{\sigma_1})}  \right|^p+1  \right].
\]
Since we already know from the proof of Proposition  \ref{PropAsym1}  that  ${{a_{K_{i,t}}}}/{a_{b_{n_i(t)}}}$ converges
a.s. to $1$ and 
$K_{i,t}/{n_i(t)^{\sigma_1}}$ 
converges in $L^p$ for every $p>0$, by dominated convergence theorem  it follows that  
 ${{a_{K_{i,t}}}}/{a_{b_{n_i(t)}}}$ converges il $L^p$ 
to $1$  for every $p>0$. Since $p$ is arbitrary, combining all these results one gets
 \[
\frac{D_{i,t}}{d_{n_i(t)}}=\frac{\Pi^{(0)}_{K_{i,t}}}{a_{K_{i,t}}}  \frac{{a_{K_{i,t}}}}{a_{b_{n_i(t)}}} \to D^{(0)}_\infty \quad \text{in $L^p$ for every $p>0$}.
\]
Arguing in a similar way, one proves that  also
${D_{t}}/{d_{n(t)}} \to D^{(0)}_\infty$ in $L^p$ for every $p>0$.
\end{proof}

\begin{proof}[Proof of Corollary \ref{corolASY}]
Let $S_{\sigma,\theta}$ be a random variable with density
\eqref{mittlefftilted}. 
Then 
\[
\E\left[S_{\sigma,\theta}^p\right]=
\frac{\Gamma(\theta+1)}{\Gamma\left(\frac{\theta}{\sigma}+1\right)} 
\int_0^{+\infty} s^{\theta/\sigma+p} g_\sigma(s)ds
=\frac{\Gamma(\theta+1)}{\Gamma(\frac{\theta}{\sigma}+1)} 
\frac{\Gamma(p+\theta/\sigma+1)}{\Gamma(\theta + p\sigma+1)},
\]
where in the second part we use \eqref{momentML}. 
Using the previous expression with $p=r$, we get
\[
\E\left[ S_{\sigma_0,\th_0}^r \right]=
\frac{\Gamma(\theta_0+1)\Gamma(\theta_0/\sigma_0 + r+1)    }{\Gamma({\theta_0}/{\sigma_0}+1)\Gamma(\theta_0+r\sigma_0 +1)} 
 \]
 and for  $p=r\sigma_0$
  \[
 \E\left[  (S_{\sigma_1,\th_1}^{(i)})^{r\sigma_0}\right]=
\frac{\Gamma(\theta_1+1)\Gamma(\theta_1/\sigma_1 + r\sigma_0+1) }{\Gamma({\theta_1}/{\sigma_1}+1)\Gamma(\theta_1+r\sigma_1\sigma_0 +1)}.
 \]
 Now the thesis follows easily from Proposition \ref{asymptoticPY}. For example, in case (i), we have that 
 $D_{i,t}/n^{\sigma_0 \sigma_1}$ converges in $L^r$ for every $r>0$ to  $S_{\sigma_0,\th_0} \left(S_{\sigma_1,\th_1}^{(i)}\right)^{\sigma_0}$, 
 hence  $\E[D_{i,t}^r]  \simeq n^{\sigma_0 \sigma_1} \E\left[ S_{\sigma_0,\th_0}^r (S_{\sigma_1,\th_1}^{(i)})^{\sigma_0r}\right]$ and the thesis follows. 
 The other cases can be obtained in a similar way. 
 \end{proof}

\subsection{Proofs of the results in Section \ref{Sec_Gibbs}}

In order to derive the  full conditionals  of Section \ref{Sec_Gibbs}, we start 
from the joint distribution of $[\bx, \bphi, \bc, \bd]$, that is
\begin{equation}\label{joint1}
p(\bx, \bphi, \bc, \bd)=p(\bx| \bphi, \bc, \bd) p(\bphi | \CD)  p(\bc, \bd),
\end{equation}
where 
\begin{equation}\label{joint2}
 p(\bx| \bphi, \bc, \bd) =\prod_{i \in \CJ} \prod_{j=1}^{n_{i\cdot\cdot}} f \left(Y_{i,j}|\phi_{d_{i,c_{i,j}}}\right)
 \quad   \text{and} \quad 
  p(\bphi| \CD)=\prod_{d \in \CD} h(\phi_d).  
\end{equation}
From \eqref{joint1}, the marginal distribution of  $[\bx,\bc, \bd]$ factorizes as follows 
\begin{equation}\label{joint3}
   p\left(\bx, \bc, \bd\right) =  p(\bc, \bd) p(\bx| \bc,\bd) =
 p(\bc, \bd) \prod_{d \in \CD} \int \prod_{(i,j): d_{i,c_{i,j}}=d } f(Y_{i,j}|\phi)h(\phi) d\phi.
\end{equation}

Recalling that $d^*_{i,j}=d_{i,c_{i,j}}$, we can write 
\begin{equation}\label{joint1bis}
p(\bx, \bphi, \bc, \bd^*)=p(\bx| \bphi,\bd^*) p(\bphi | \CD)  p(\bc, \bd^*),
\end{equation}
where 
\begin{equation}\label{joint2bis}
  p(\bx| \bphi, \bd^*) =\prod_{i \in \CJ} \prod_{j=1}^{n_{i\cdot\cdot}} f (Y_{i,j}|\phi_{d^*_{i,j}})
\end{equation}
and  $\CD=\{ d^*_{ij} : i \in \CJ, j =1,\dots, n_{i\cdot\cdot} \}$. 
From \eqref{joint1bis},  
$\bx$ and $\bc$ are conditionally independent given $[\bd^*,\bphi]$ and since 
\[
\{(i,j): d^*_{i,j}=d \} = \{(i,j): d_{i,c_{i,j}}=d \},
\]
we obtain that
\begin{equation}\label{joint4}
   p(\bx, \bc, \bd^*) = p(\bc, \bd^*) p(\bx| \bd^*) =
p(\bc, \bd^*) \prod_{d \in \CD} \int \prod_{(i,j): d^*_{i,j}=d } f(Y_{i,j}|\phi)h(\phi) d\phi.
\end{equation}

\begin{proof}[Proof of \eqref{fullcd-final}.]

From \eqref{joint1bis},\eqref{joint2bis} and \eqref{joint4}, we have
\[
 p\left( \bx^{ \urcorner ij},\bc,\bd^*\right)=p(\bc,\bd^*) p( \bx^{ \urcorner ij}|{\bd^*}^{ \urcorner ij}),
\]
which shows that 
$ p( \bx^{ \urcorner ij}|\bc,\bd^*)= p( \bx^{ \urcorner ij}|{\bd^*}^{ \urcorner ij})$.
Hence we can write 
\[
\begin{split}
p(\bx, c_{i,j},d^*_{i,j},  \bc^{ \urcorner ij}, & {\bd^*}^{ \urcorner ij})  =
  p(\bc^{ \urcorner ij}, {\bd^*}^{ \urcorner ij})
p(c_{i,j},d^*_{i,j}|\bc^{ \urcorner ij}, {\bd^*}^{ \urcorner ij})  
 p( \bx^{ \urcorner ij}|\bc,{\bd})
 p(Y_{i,j}| \bx^{ \urcorner ij},\bc, \bd^* ) 
\\
&
=p(\bc^{ \urcorner ij}, {\bd^*}^{ \urcorner ij})
 p(c_{i,j},d^*_{i,j}|\bc^{ \urcorner ij}, {\bd^*}^{ \urcorner ij})    p( \bx^{ \urcorner ij}|\bc^{ \urcorner ij},{\bd^*}^{ \urcorner ij}) 
 p(Y_{i,j}| \bx^{ \urcorner ij},
\bc, \bd^* ).  \\
\end{split}
\]
This shows that 
\begin{equation}\label{fulltd_1}
  p(c_{i,j},d^*_{i,j}| \bx, \bc^{ \urcorner ij}, {\bd^*}^{ \urcorner ij}) \propto  
p(c_{i,j},d^*_{i,j}|\bc^{ \urcorner ij}, {\bd^*}^{ \urcorner ij}) p(Y_{i,j}| \bx^{ \urcorner ij},{\bd^*}^{ \urcorner ij},d^*_{i,j}).
\end{equation}
From \eqref{joint4} and  \eqref{Ml}, we have that 
\begin{equation}\label{marginalcondx}
   p(Y_{i,j}| \bx^{ \urcorner ij}, {\bd^*}^{ \urcorner ij},d^*_{i,j} )
   =f_{d^*_{i,j}}(\{Y_{i,j}\}).
\end{equation}
Moreover, by exchangeability and using the predictive distribution  given in 
Section \ref{Sec:1-1}, it follows that 
\begin{equation}\label{conditionaltabledish}
\begin{split}
&
p(c_{i,j}=c^{old},d^*_{i,j}=d_{i,c^{old}}|\bc^{ \urcorner ij}, {\bd^*}^{ \urcorner ij})= \omega_{n_{i\cdot\cdot}-1,c^{old}}(\bc_i^{\urcorner ij}),
\\
&
p(c_{i,j}=c^{new},d^*_{i,j}=d^{old}|\bc^{ \urcorner ij}, {\bd^*}^{ \urcorner ij})= \nu_{n_{i\cdot\cdot}-1}(\bc_i^{\urcorner ij}) \tilde
 \omega_{m_{\cdot \cdot}^{ \urcorner ij},d^{old}}(\bd^{\urcorner ij}),
\\
&
p(c_{i,j}=c^{new},d^*_{i,j}=d^{new}|\bc^{ \urcorner ij}, {\bd^*}^{ \urcorner ij})= \nu_{n_{i\cdot\cdot}-1}(\bc_i^{\urcorner ij}) \tilde
 \nu_{m_{\cdot \cdot}^{ \urcorner ij}}(\bd^{\urcorner ij}).
 \end{split}
\end{equation}

Combining \eqref{fulltd_1}, \eqref{marginalcondx} and
 \eqref{conditionaltabledish}   we obtain \eqref{fullcd-final}.

\end{proof}

\begin{proof}[Proof of \eqref{full-dishes}]
Using \eqref{joint3}, we have
\begin{equation}\label{fulltd_12}
  p(d_{i,c}=d| \bx, \bc, {\bd}^{ \urcorner ic}) \propto  
p(d_{i,c}=d |\bc, {\bd}^{ \urcorner ic}) p(\{ Y_{i,j}: \,\ ij \in \CS_{ic} \} | \{Y_{i',j'}: \,\, i'j' \in \CS_d \setminus \CS_{ic}
 \}, \bc,\bd^{\urcorner ic},d),
\end{equation}
where
\[
\CS_{ic}=\{ (i,j) : c_{i,j}=c\} \,\, \text{and}  \,\, \CS_d=\{ (i',j'): d^*_{i',j'}=d \}.\]
Note that $i$ in $\CS_{ic}$ is fixed and $j$ is such that $c_{i,j}=c$.  From  \eqref{joint3}, we get 
\[
 p(\{ Y_{i,j}: \,\ (i,j) \in \CS_{ic} \} | \{Y_{i',j'}: \,\, (i',j') \in \CS_d \setminus \CS_{ic}
 \}, \bc,\bd^{\urcorner ic},d)= f_d( \{Y_{i,j}: (i,j) \in \CS_{ic} \}  ).
\]
Moreover
\[
\begin{split}
& p(d_{i,c}=d^{new} |\bc, {\bd}^{ \urcorner ic})=\tilde \nu_{m_{\cdot\cdot}^{ \urcorner ic} }(\bd^{ \urcorner ic}), \\
&   p(d_{i,c}=d^{old} |\bc, {\bd}^{ \urcorner ic})=\tilde \omega_{m_{\cdot\cdot}^{ \urcorner ic}, d^{old} }(\bd^{ \urcorner ic}). \\
\end{split}
\]

\end{proof}

\begin{proof}[Proof of \eqref{phifull_out}]
From 
 \eqref{joint1}-\eqref{joint2} one gets
\begin{equation}\label{phifull_out2}
p(\bphi,|\bx,\bc,\bd) \propto \prod_{d \in \CD} h(\phi_d) \prod_{(i,j): d^*_{i,j}=d} f(Y_{i,j}|\phi_d).
\end{equation}
\end{proof}

\begin{proof}[Proof of \eqref{cdpredictive00}-\eqref{cdpredictive}]
Arguing as in the proof of \eqref{joint4},  one gets
\[
   p(\bx, \bc, \bd^*,c_{i,n_i+1},d_{i,n_i+1}^*) = p(\bx| \bd^*) p\left(\bc, \bd^*,c_{i,n_i+1},d_{i,n_i+1}^*\right),
\]
and then 
\[
 p\left(c_{i,n_i+1},d_{i,n_i+1}^*|\bx, \bc, \bd^*\right)= p\left(c_{i,n_i+1},d_{i,n_i+1}^*| \bc, \bd^*\right).
\]
The explicit expression for 
$p\left(c_{i,n_i+1},d_{i,n_i+1}^*| \bc, \bd^*\right)$ follows 
arguing as in the proof of \eqref{conditionaltabledish}
 by replacing $c_{i,j}$, $\bc_i^{ \urcorner ij} , {\bd^*_i}^{ \urcorner ij}$, $n_{i\cdot\cdot}-1$ and 
 $m_{\cdot\cdot}^{ \urcorner ij}$,
  with $c_{i,n_i+1}$, $\bc_i$, $\bd^*_i$, $n_{i \cdot \cdot}$ and $m_{\cdot\cdot}$. 
\end{proof}

\newpage

\section{Prior sensitivity analysis}\label{App:NumRes}
\begin{figure}[h!]
\caption{Prior distribution, for $i=1,2$, of the marginal, $\P\{ D_{i,t}=k\}$, for $k=1,\ldots,50$ (left); and total, $\P\{ D_{t}=k\}$, for $k=1,\ldots,100$ (right); number of clusters, for the following processes: i) $HDP(\theta_0,\theta_1,H_0)$ with $\theta_0=\theta_1 = 43.3$ (black solid); ii) $HPYP(\sigma_0,\theta_0,\sigma_1,\theta_1,H_0)$ with $(\sigma_0,\theta_0) = (\sigma_1,\theta_1) = (0.25,29.9)$ (blue dashed) and $(\sigma_0,\theta_0) = (\sigma_1,\theta_1) = (0.67,8.53)$ (blue dotted); iii) $HGP(\gamma_0,\zeta_0,\gamma_1,\zeta_1,H_0)$ with $(\gamma_0,\zeta_0) = (\gamma_1,\zeta_1) = (15,1450)$ (red dashed), and $(\gamma_0,\zeta_0) = (\gamma_1,\zeta_1) = (3.2,290)$ (red dotted). The values of the parameters are chosen in such a way that $E[D_{i,t}]= 25$, with $n_i=50$ and $n=n_1+n_2=100$.}\label{all}
\begin{center}
  \begin{tabular}{cc}
   \includegraphics[width=5cm]{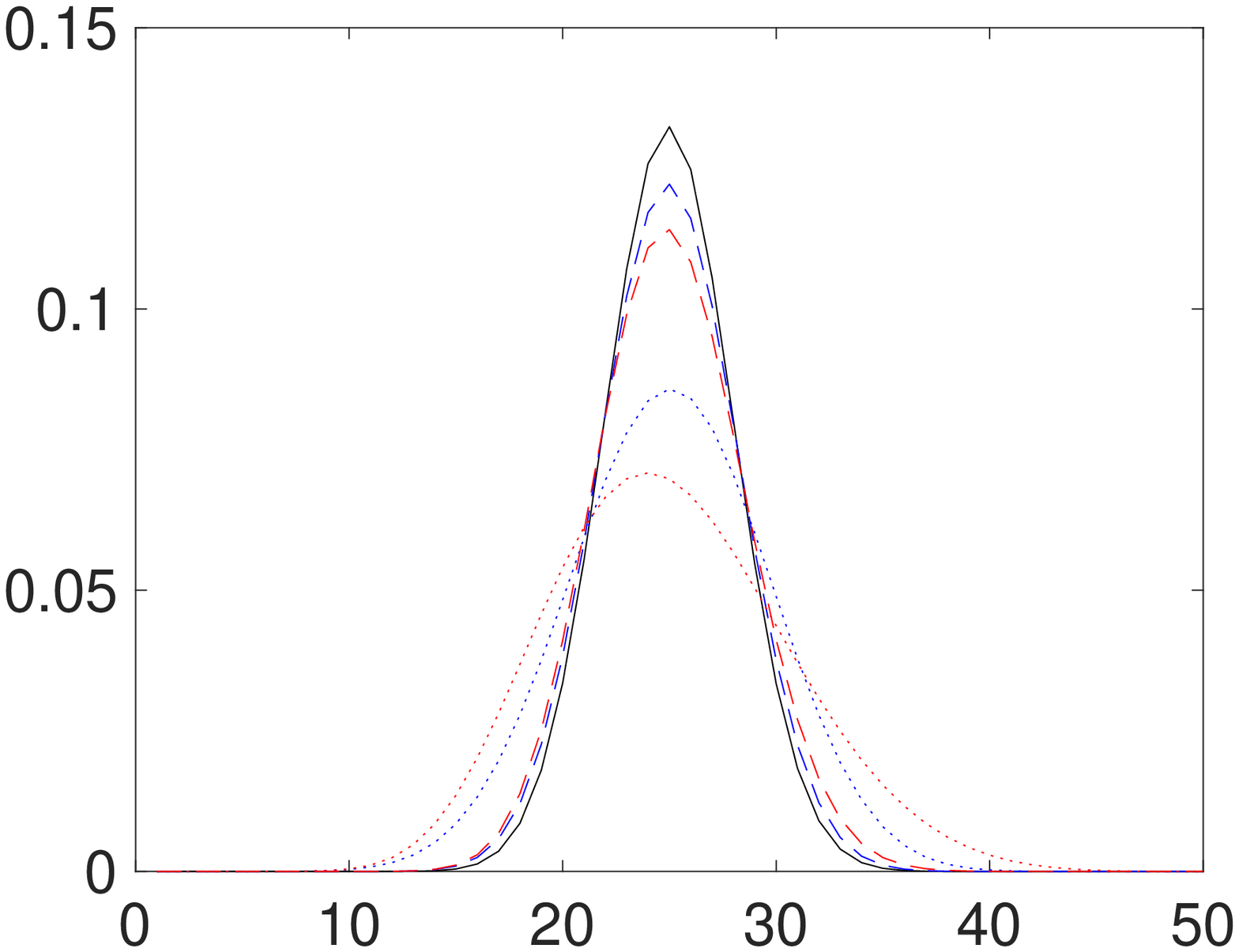}\hspace{-15pt} &
   \includegraphics[width=5cm]{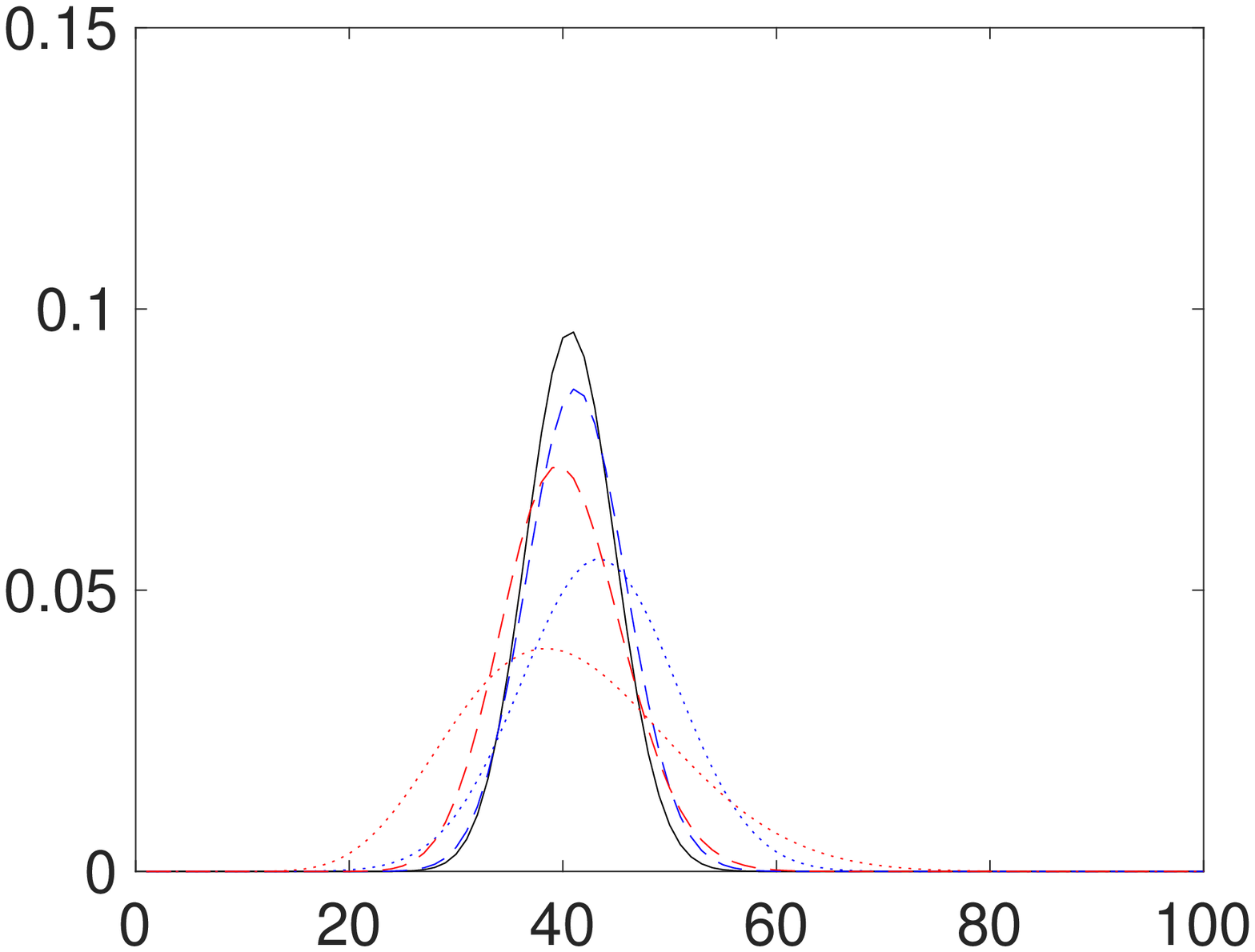} \\
  \end{tabular}
\end{center}
\end{figure}

\begin{table}[h]
\caption{Marginal and total expected number of clusters ($E(D_{i,t})$ and $E(D_{t})$, respectively) and marginal and total variance of the number of clusters ($V(D_{i,t})$ and $V(D_{t})$, respectively), when $I=2$, $n_i=50$, $i=1,2$, for three HSSMs (first column) and five parameter settings (columns $\sigma_i$, $\theta_i$, $\gamma_i$ and $\zeta_i$).}\label{Tab:priorEV}
\begin{tabular}{l|cccc|cccc}
\hline
HSSM & $\sigma_i$ & $\theta_i$ &$\gamma_i$& $\zeta_i$  & $E[D_{i,t}]$ & $V[D_{i,t}]$ & $E[D_t]$ &  $V(D_t)$\\
\hline
$HDP(\theta_0,\theta_1,H_0)$ &     & 43.3 &     &    &25.0 & 9.1  & 40.8 & 17.2\\
$HPYP(\sigma_0,\theta_0,\sigma_1,\theta_1,H_0)$& 0.25& 29.9 &     &    &25.0 & 10.6 & 41.3 & 21.5\\
$HGP(\gamma_0,\zeta_0,\gamma_1,\zeta_1,H_0)$ &     &      & 15  &1450 &25.0 & 12.1 & 40.1 & 30.5\\
\hline
$HPYP(\sigma_0,\theta_0,\sigma_1,\theta_1,H_0)$& 0.67& 8.53 &     &    &25.0 & 21.1 & 43.3 & 50.9\\
$HGP(\gamma_0,\zeta_0,\gamma_1,\zeta_1,H_0)$ &     &      & 3.2 & 290&25.0 & 30.8 & 40.6 & 99.1\\
\hline
\end{tabular}
\end{table}

\clearpage

\begin{figure}[h!]
  \caption{Prior marginal $\P\{D_{it} = k\}$ (left column) and global $\P\{D_t = k\}$ (right column) number of clusters for different processes.}
  \label{EffectsSymHierarchy}
 \begin{center}
  \begin{tabular}{cc}
   \includegraphics[width=5cm]{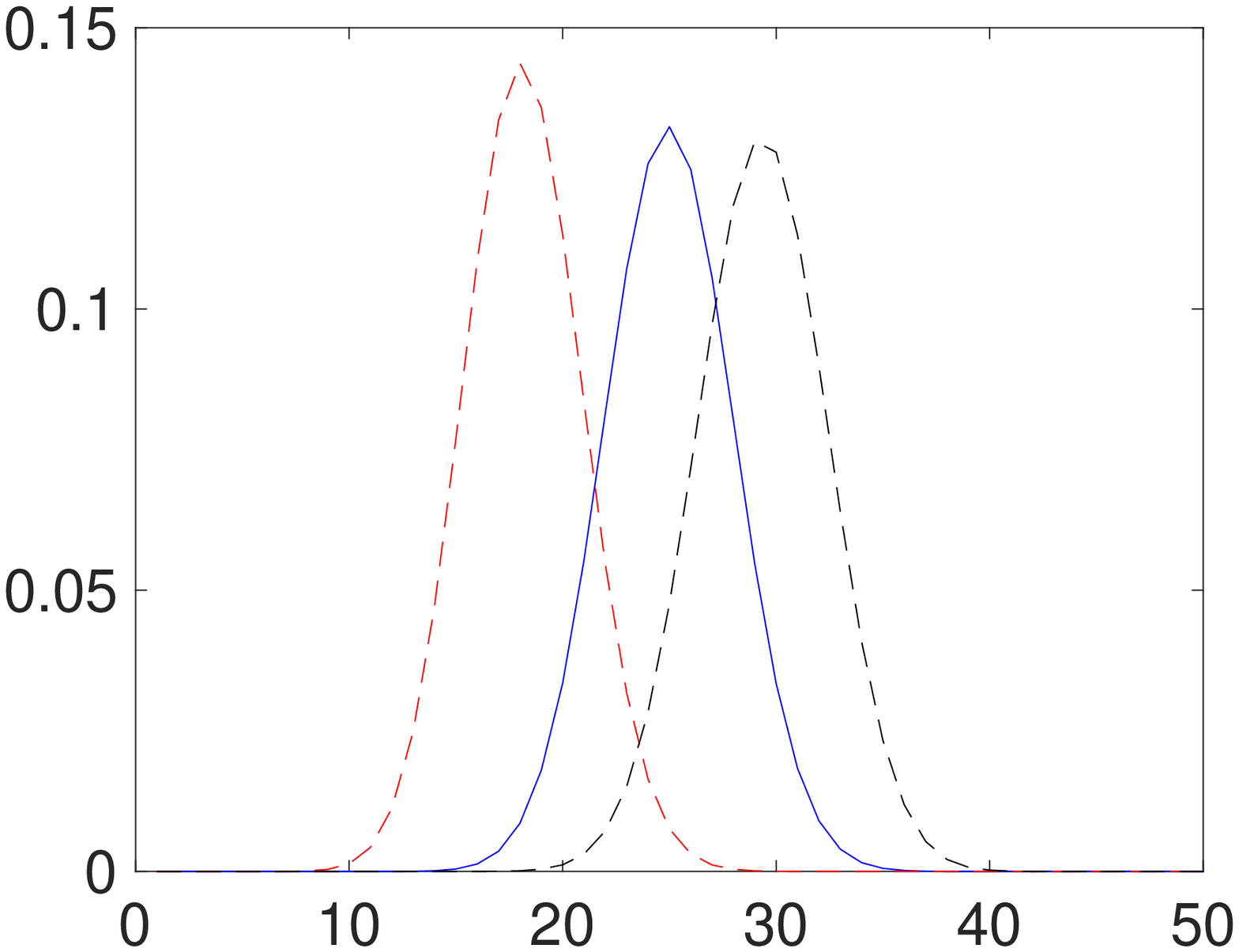}\hspace{-15pt} &
   \includegraphics[width=5cm]{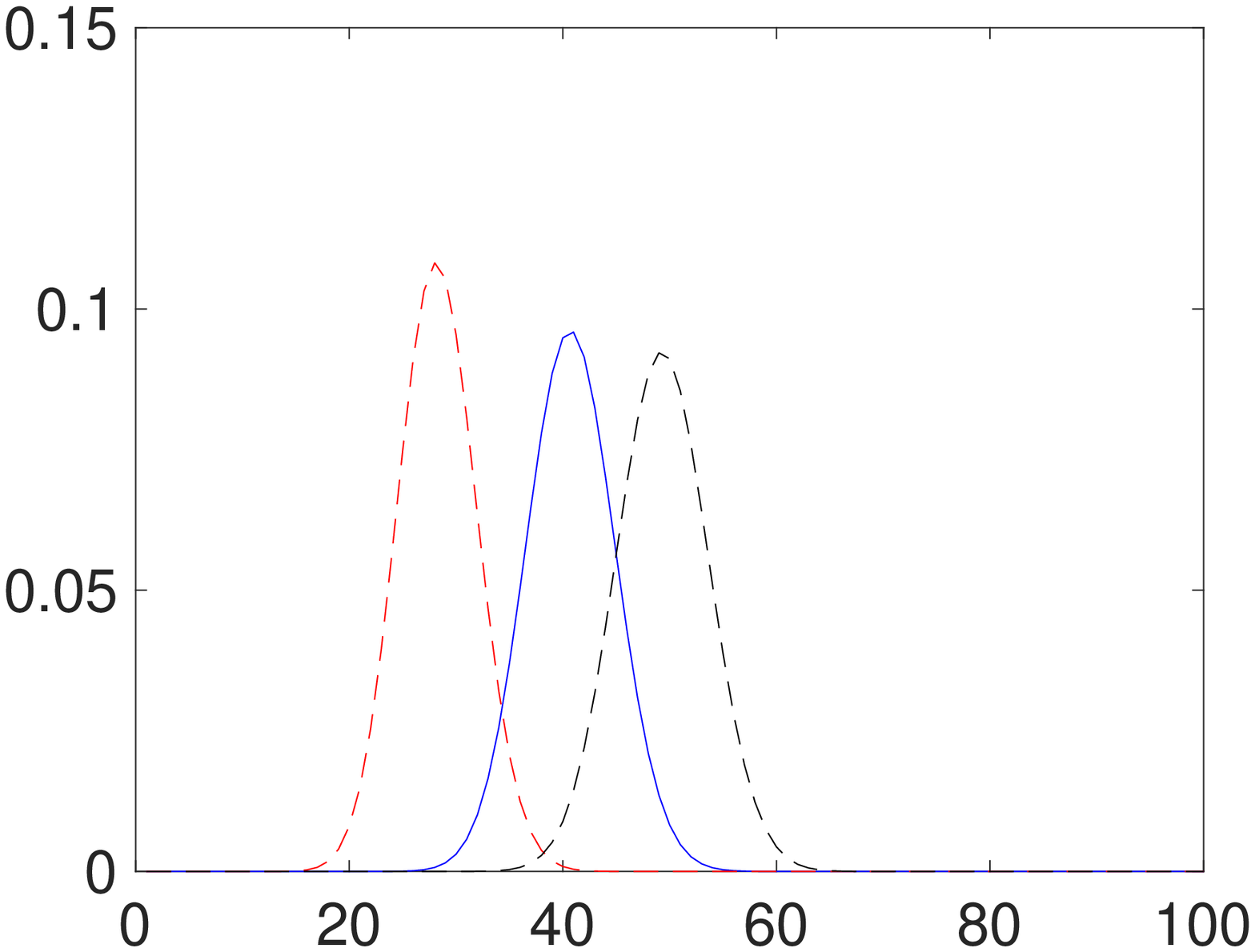}     \\
  \end{tabular}
 \end{center}
\begin{minipage}[]{0.84\textwidth}
\begin{small}
(i) HDP with $\theta_0 =\theta_i = 23.3$ (red dashed), $\theta_0 = \theta_i = 63.3$ (black dashed) and $\theta_0 = \theta_i = 43.3$ (homogeneous case, solid blue).
\end{small}
\end{minipage}
 \begin{center}
  \begin{tabular}{cc}
   \includegraphics[width=5cm]{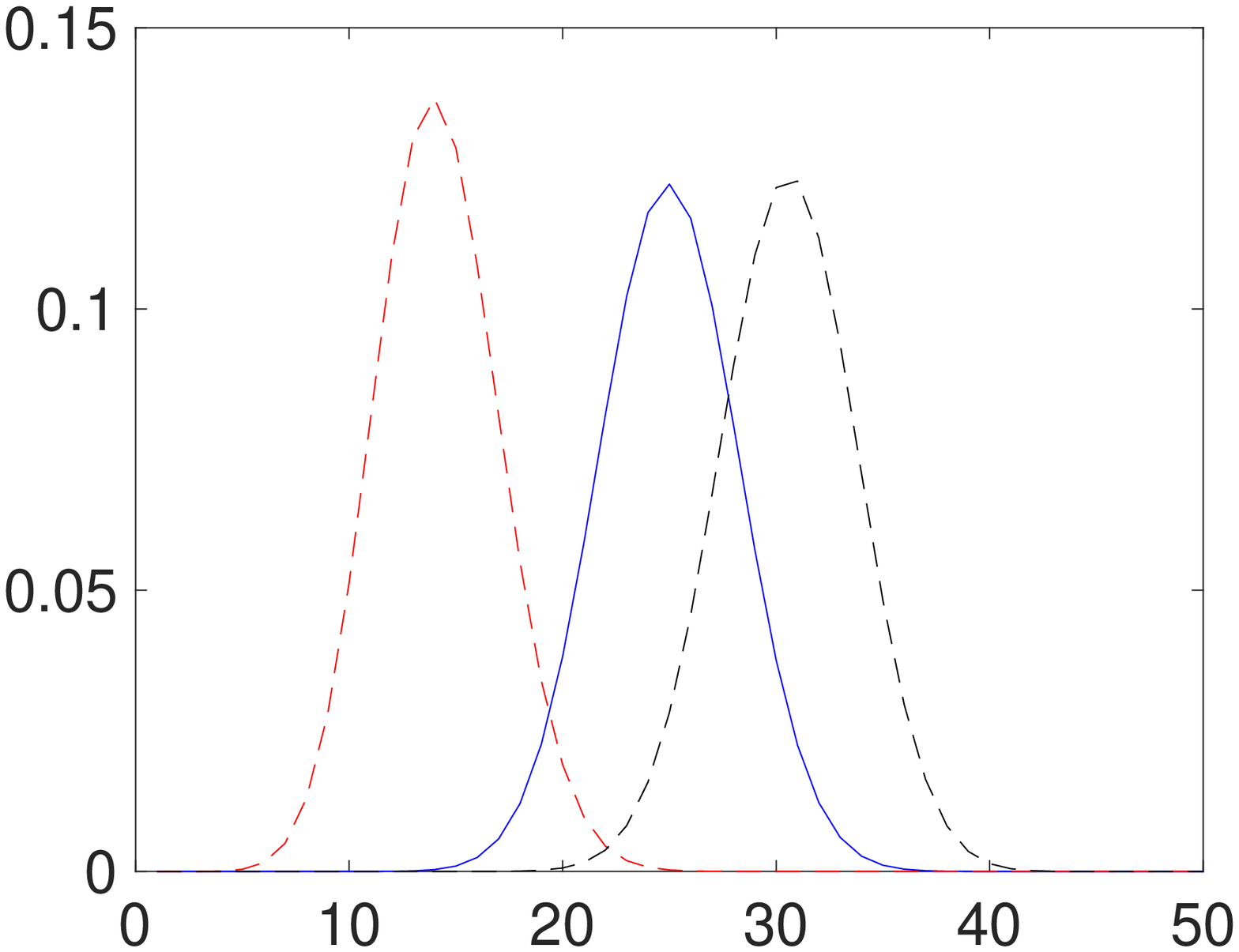}\hspace{-15pt} &
      \includegraphics[width=5cm]{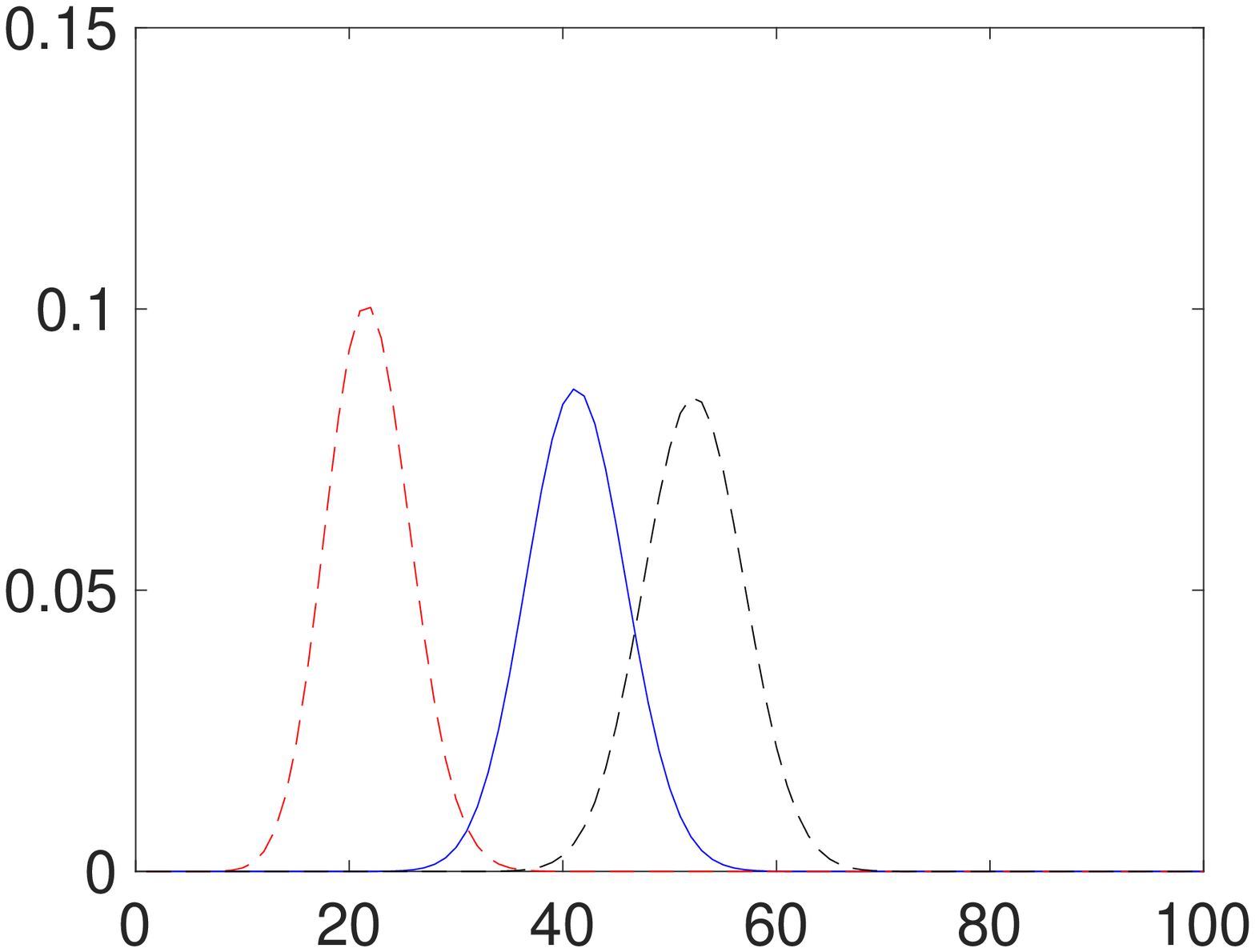}     \\
  \end{tabular}
 \end{center}
\begin{minipage}[]{0.84\textwidth}
\begin{small}
(ii) HPYP with $(\theta_0,\sigma_0) = (\theta_i,\sigma_i) = (9.9,0.25)$ (red dashed), $(\theta_0,\sigma_0) = (\theta_i,\sigma_i) =(49.9,0.25)$ (black dashed) and $(\theta_0,\sigma_0) = (\theta_i,\sigma_i) = (29.9,0.25)$ (homogeneous case, solid blue).
\end{small}
\end{minipage}
 \begin{center}
  \begin{tabular}{cc}
   \includegraphics[width=5cm]{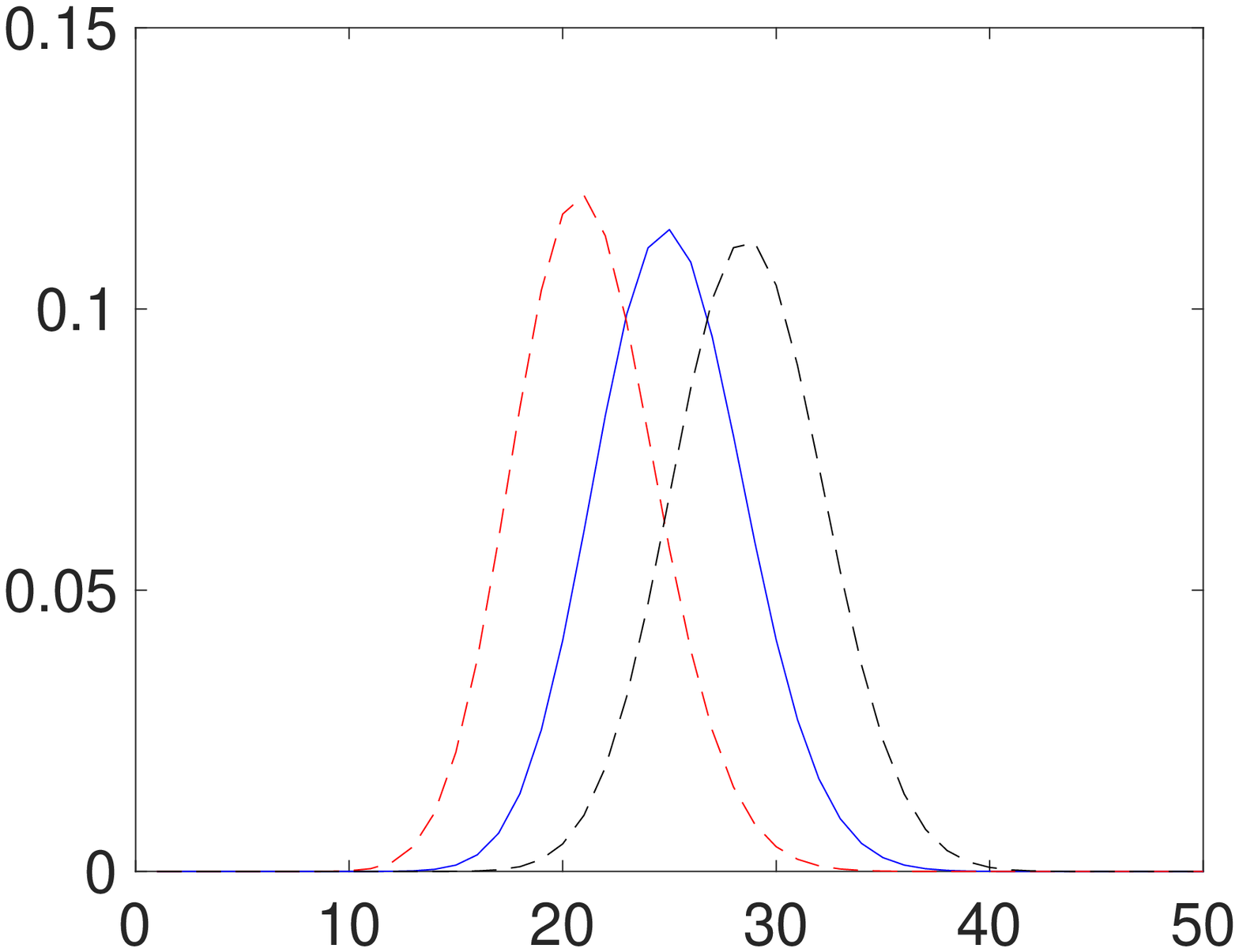} \hspace{-15pt} &
      \includegraphics[width=5cm]{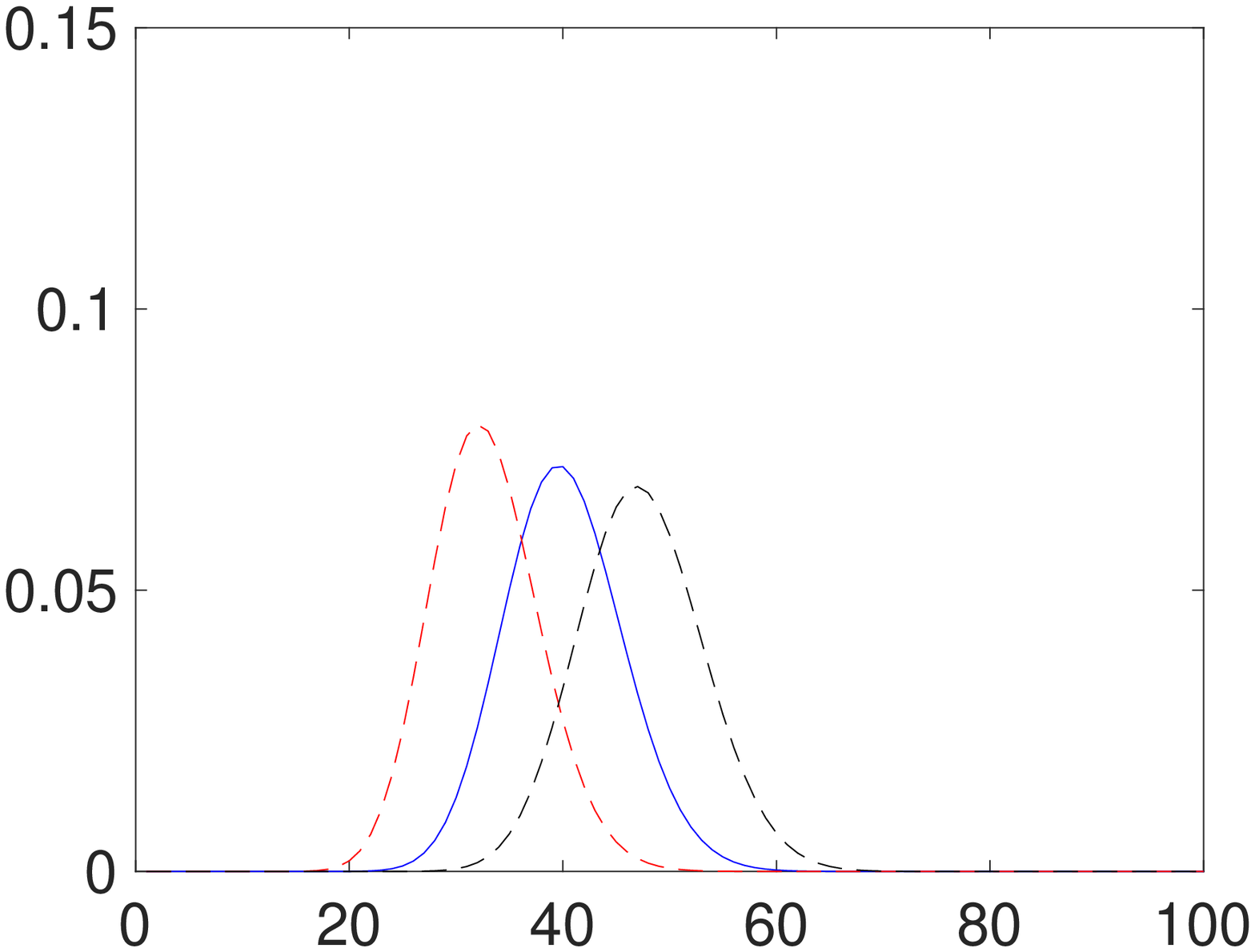}     \\
  \end{tabular}
 \end{center}
\begin{minipage}[]{0.84\textwidth}
\begin{small}
(iii) HGP with $(\gamma_0,\zeta_0) = (\gamma_i,\zeta_i) = (15,1050)$  (red dashed), $(\gamma_0,\zeta_0) = (\gamma_i,\zeta_i) = (15,1950)$ (black dashed) and $(\gamma_0,\zeta_0) = (\gamma_i,\zeta_i) = (15,1450)$ (homogeneous case, solid blue).
\end{small}
\end{minipage}
\end{figure}

\clearpage

\begin{figure}[h!]
  \caption{Prior marginal $\P\{D_{it} = k\}$ (left column) and global $\P\{D_t = k\}$ (right column) number of clusters for different processes.}
  \label{EffectsSymHierarchy_scale}
 \begin{center}
  \begin{tabular}{cc}
   \includegraphics[width=5cm]{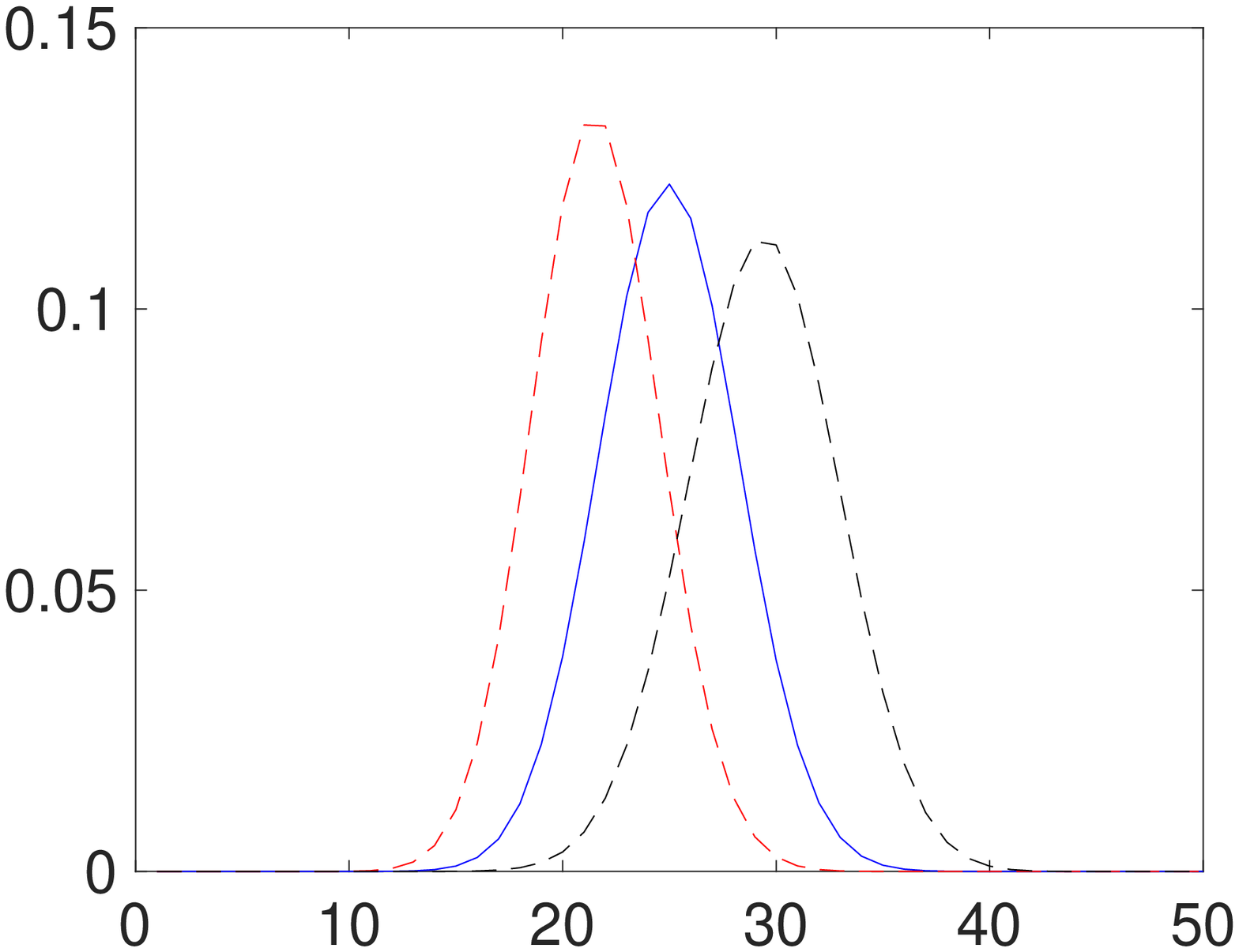}\hspace{-15pt} &
      \includegraphics[width=5cm]{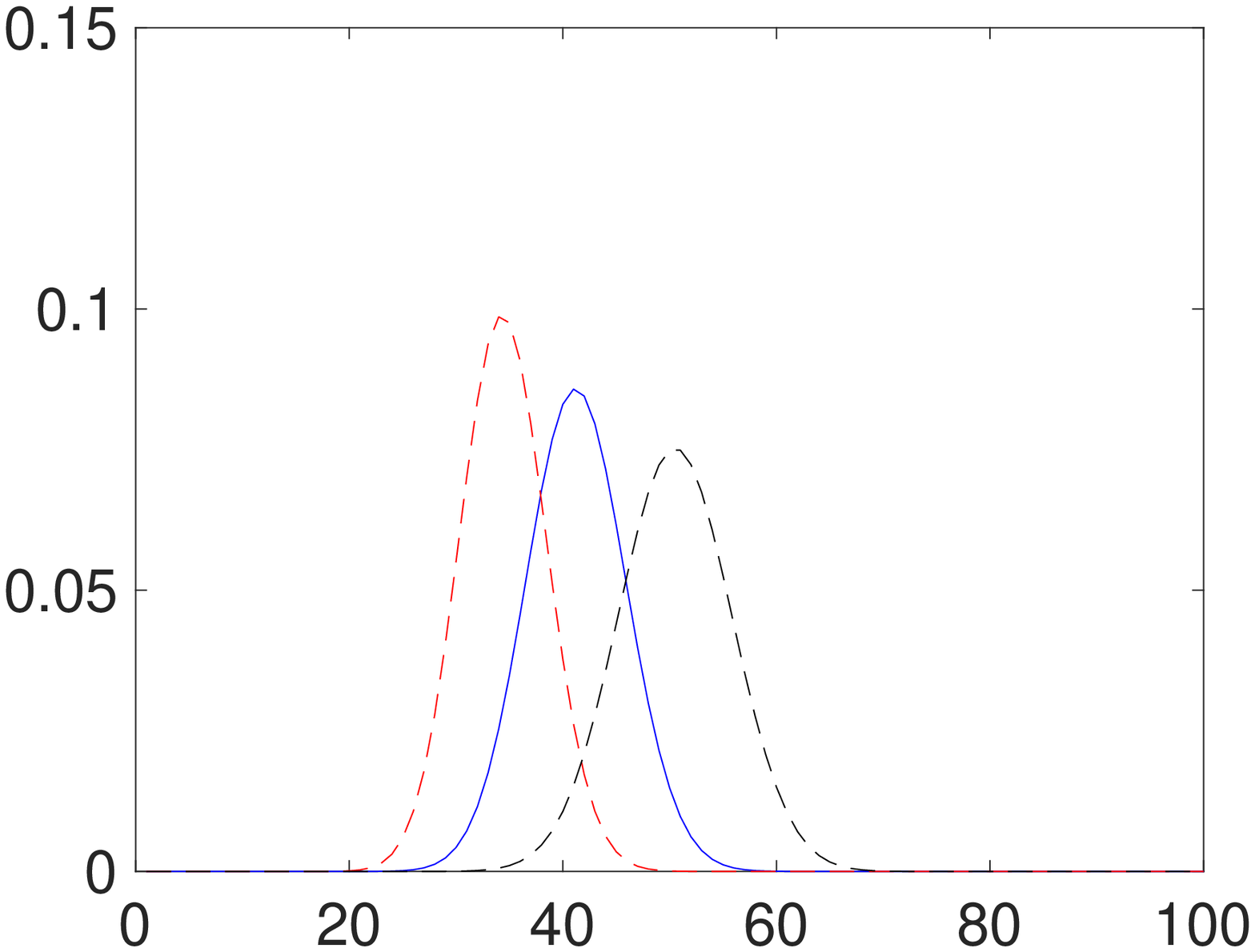}     \\
  \end{tabular}
 \end{center}
\begin{minipage}[]{0.84\textwidth}
\begin{small}
(i) HPYP with $(\theta_0,\sigma_0) = (\theta_i,\sigma_i) = (29.9,0.05)$ (red dashed), $(\theta_0,\sigma_0) = (\theta_i,\sigma_i) =(29.9,0.45)$ (black dashed) and $(\theta_0,\sigma_0) = (\theta_i,\sigma_i) = (29.9,0.25)$ (homogeneous case, solid blue).
\end{small}
\end{minipage}
 \begin{center}
  \begin{tabular}{cc}
   \includegraphics[width=5cm]{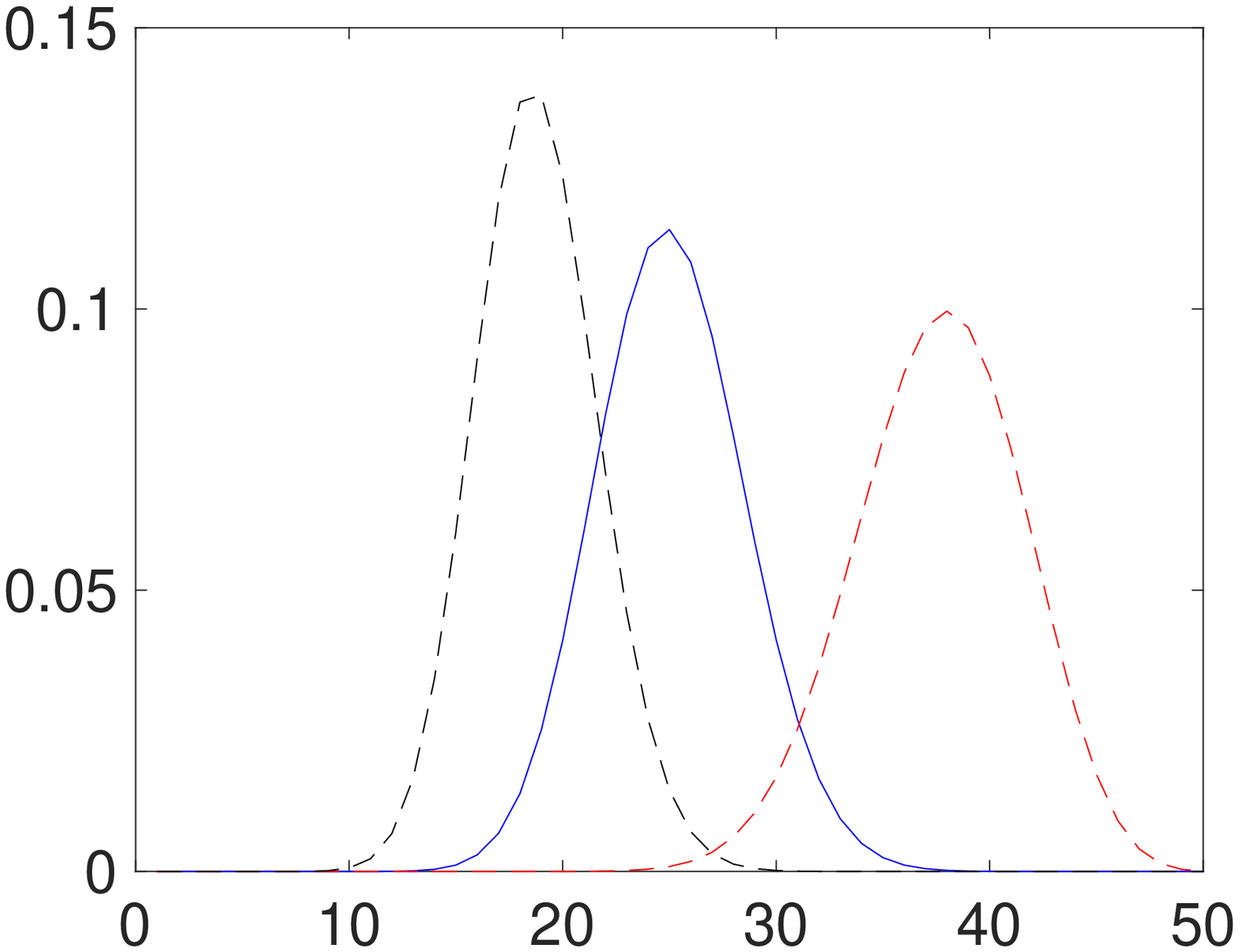} \hspace{-15pt} &
      \includegraphics[width=5cm]{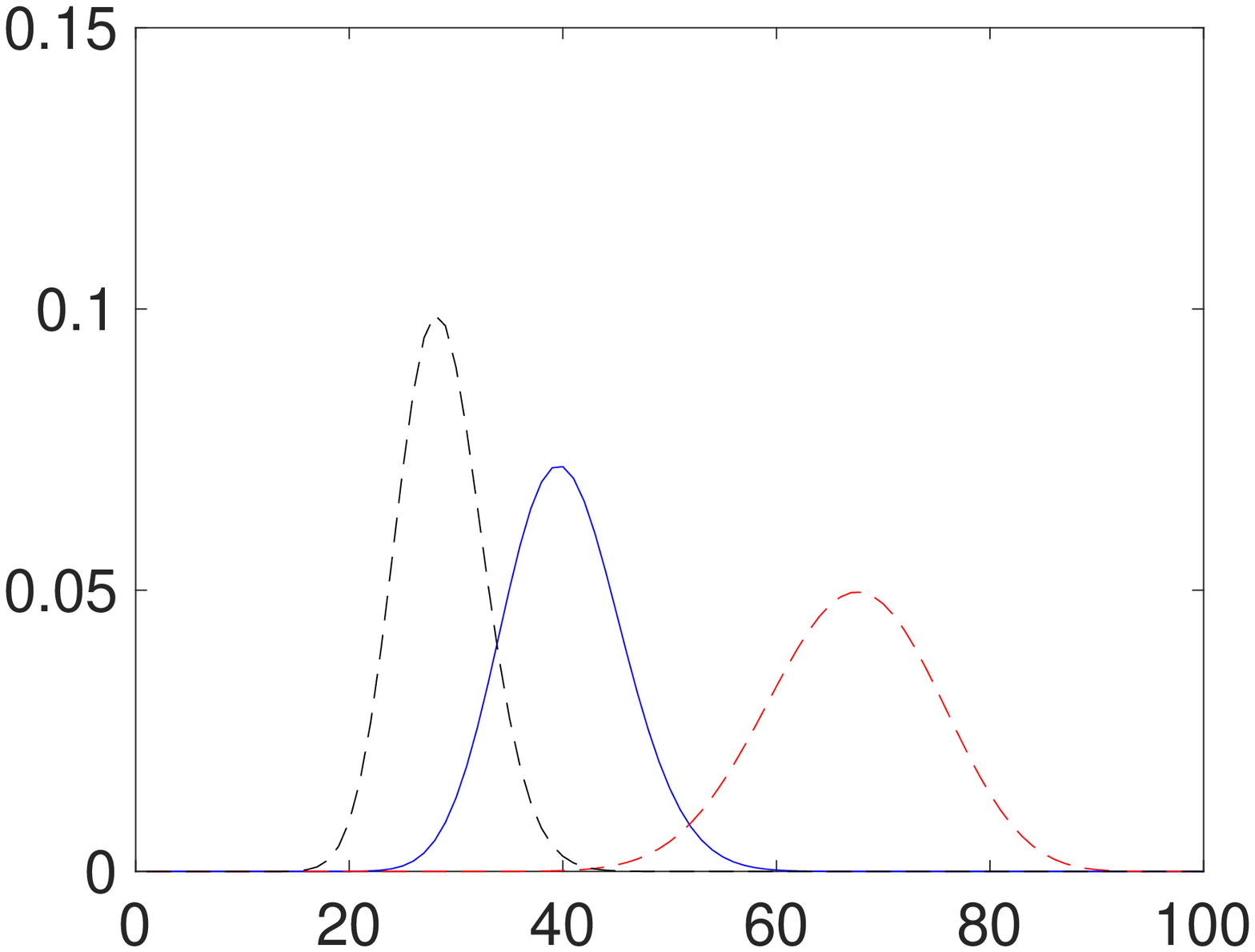}     \\      
  \end{tabular}
 \end{center}
\begin{minipage}[]{0.84\textwidth}
\begin{small}
(ii) HGP with $(\gamma_0,\zeta_0) = (\gamma_i,\zeta_i) = (5,1450)$ (red dashed), $(\gamma_0,\zeta_0) = (\gamma_i,\zeta_i) = (25,1450)$ (black dashed) and symmetric case $(\gamma_0,\zeta_0) = (\gamma_i,\zeta_i) = (15,1450)$ (homogeneous case, solid blue).
\end{small}
\end{minipage}
\end{figure}

\clearpage

\begin{figure}[h!]
  \caption{Prior marginal $\P\{D_{it} = k\}$ (left column) and global $\P\{D_t = k\}$ (right column) number of clusters for different processes.}\label{EffectsTopHierarchy}
 \begin{center}
  \begin{tabular}{cc}
   \includegraphics[width=5cm]{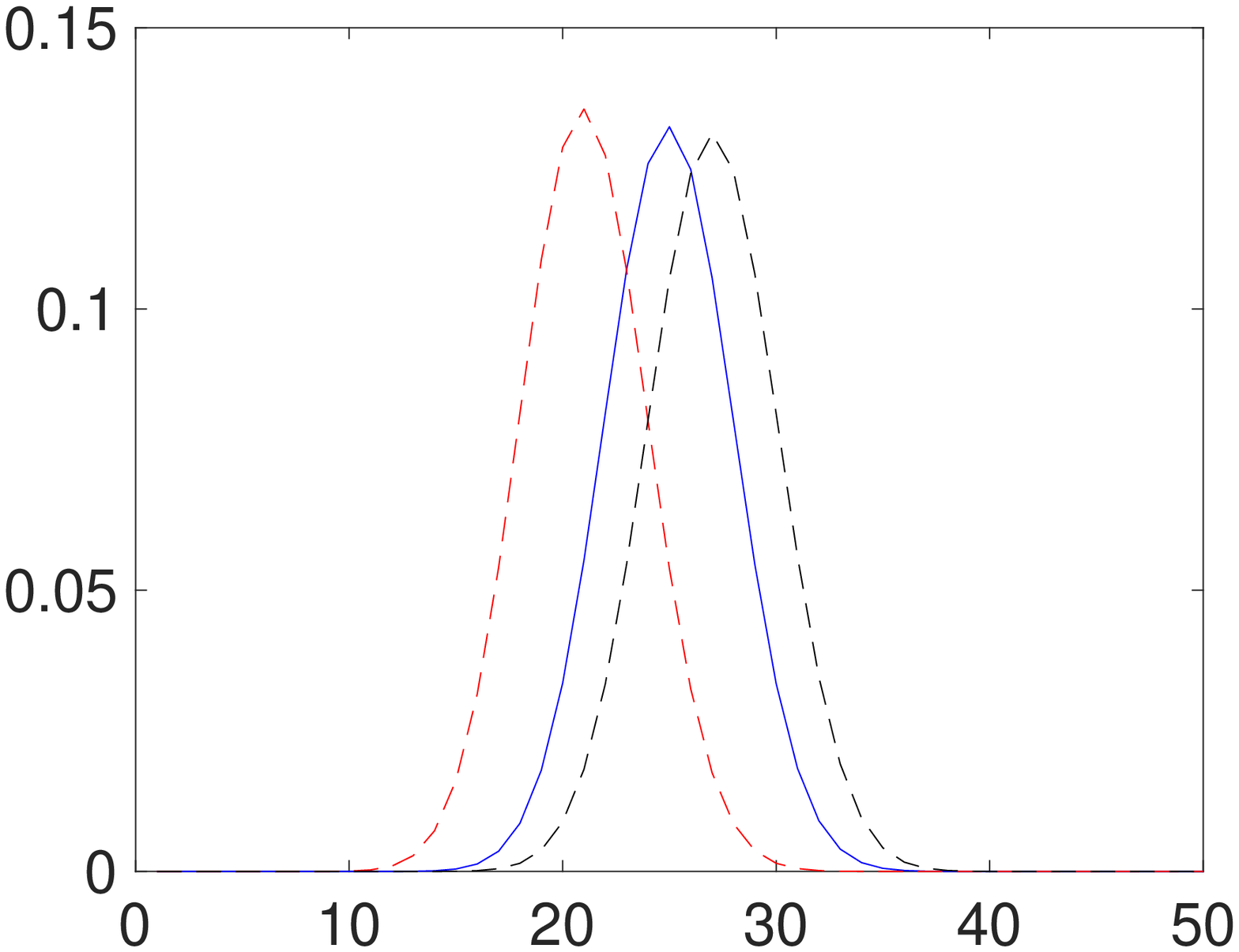}\hspace{-15pt} &
   \includegraphics[width=5cm]{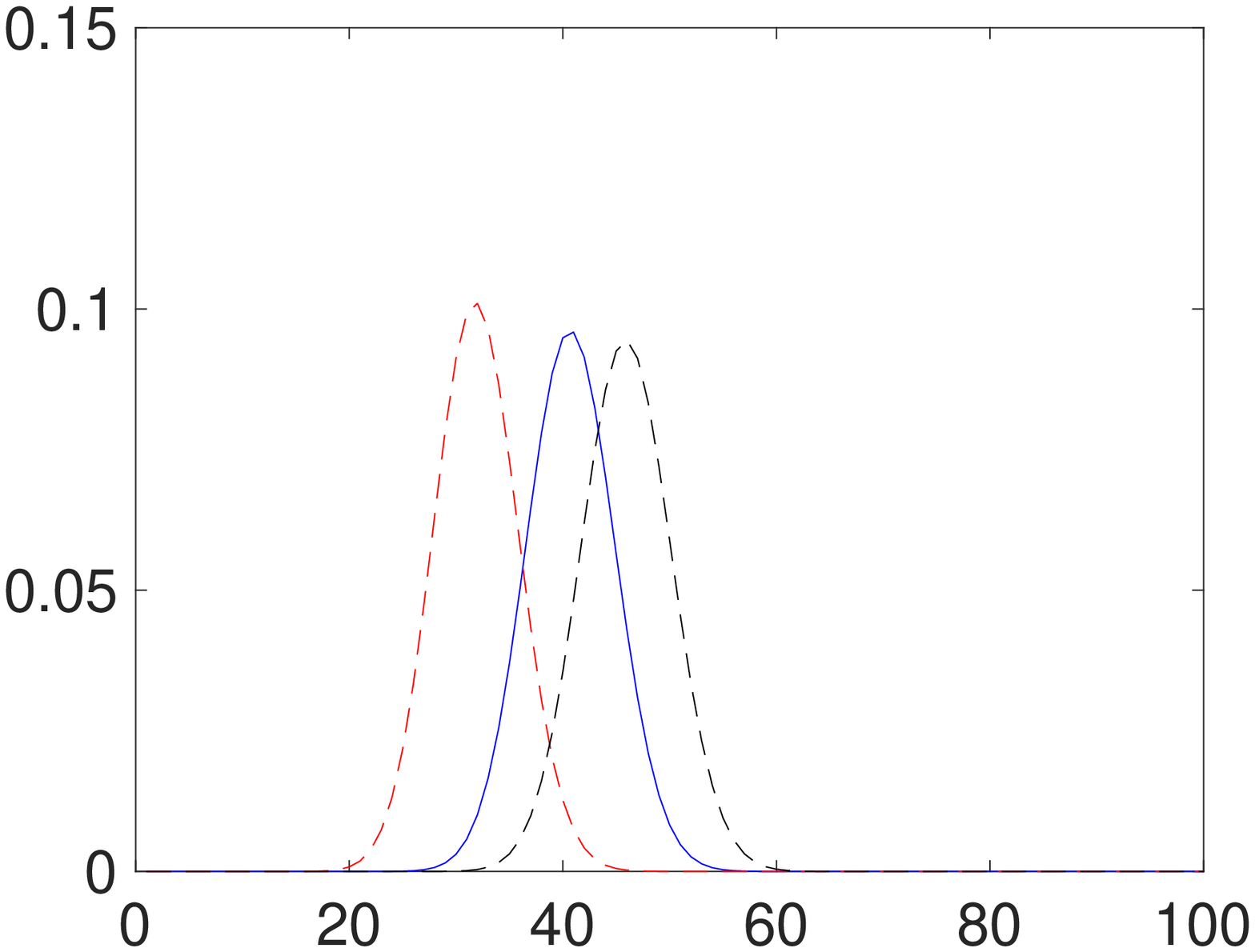}     \\
  \end{tabular}
 \end{center}
\begin{minipage}[]{0.84\textwidth}
\begin{small}
(i) HDP with $\theta_0 = 23.3$, $\theta_i = 43.3$ (red dashed), $\theta_0 = 63.3$, $\theta_i = 43.3$ (black dashed) and $\theta_0 = \theta_i = 43.3$ (homogeneous case, solid blue).
\end{small}
\end{minipage}
 \begin{center}
  \begin{tabular}{cc}
  \includegraphics[width=5cm]{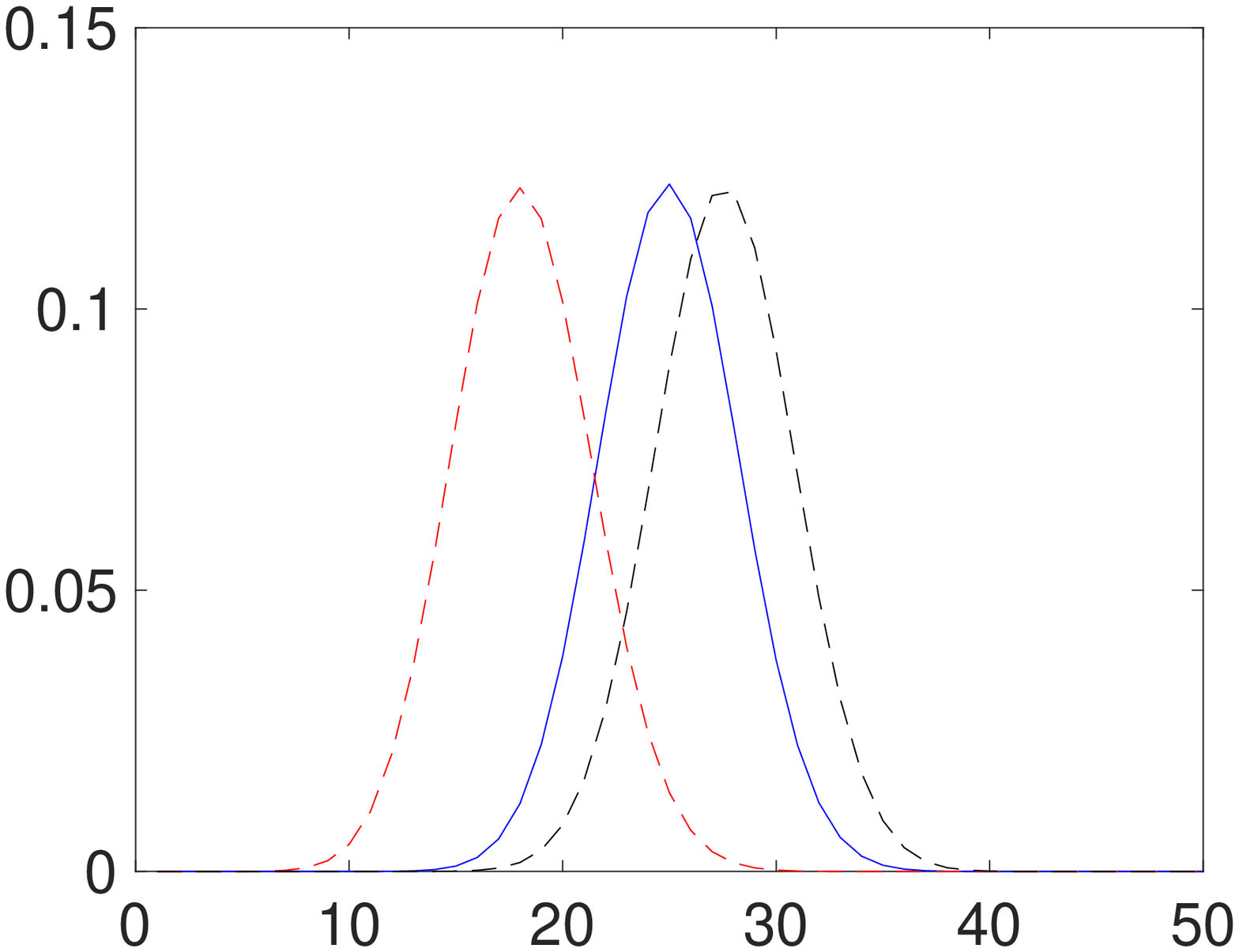}\hspace{-15pt} &
      \includegraphics[width=5cm]{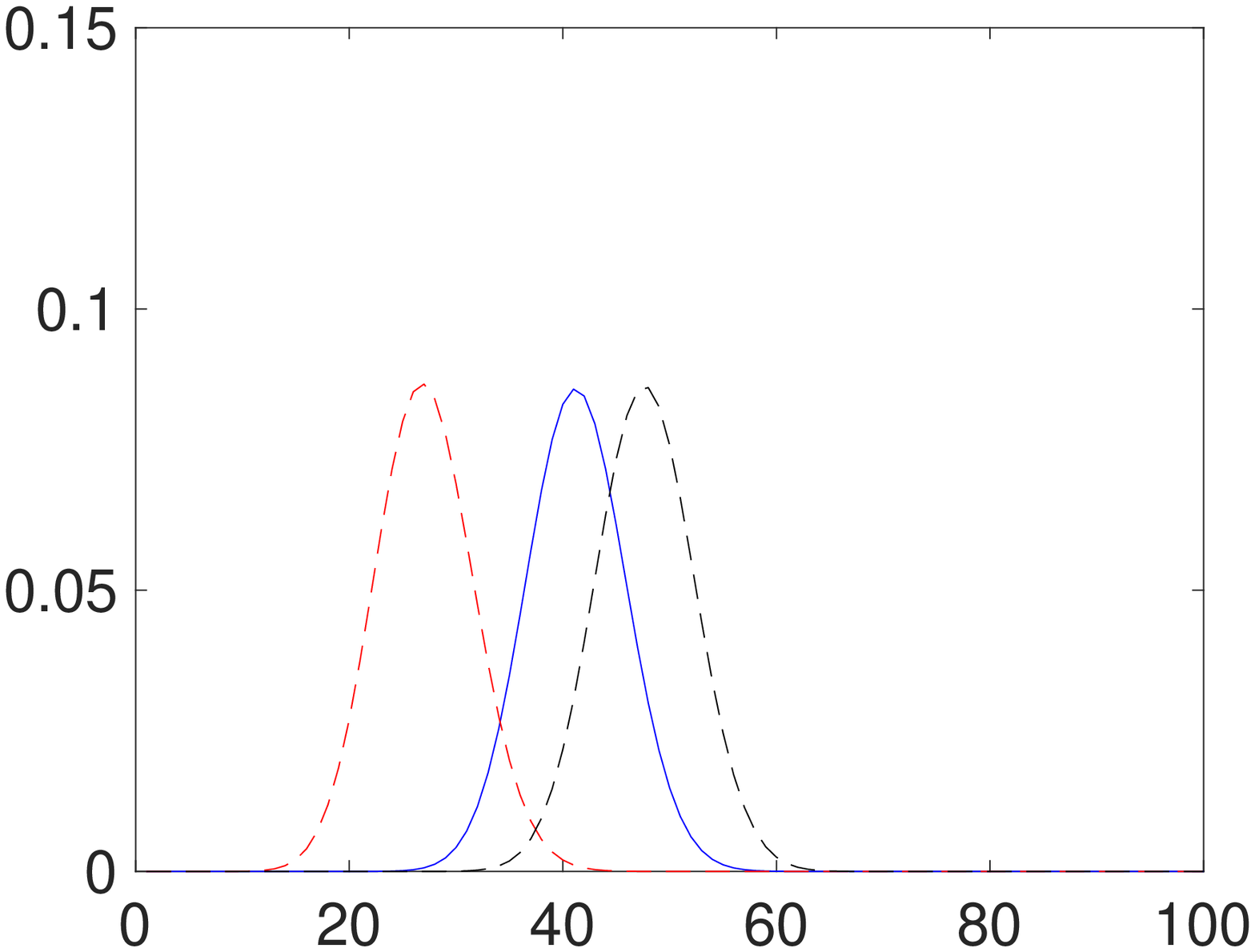}     \\
  \end{tabular}
 \end{center}
\begin{minipage}[]{0.84\textwidth}
\begin{small}
(ii) HPYP with $(\theta_0,\sigma_0) = (9.9,0.25)$, $(\theta_i,\sigma_i) = (29.9,0.25)$ (red dashed), $(\theta_0,\sigma_0) = (49.9,0.25)$, $(\theta_i,\sigma_i) = (29.9,0.25)$ (black dashed) and $(\theta_0,\sigma_0) = (\theta_i,\sigma_i) = (29.9,0.25)$ (homogeneous case, solid blue).
\end{small}
\end{minipage}
 \begin{center}
  \begin{tabular}{cc}
   \includegraphics[width=5cm]{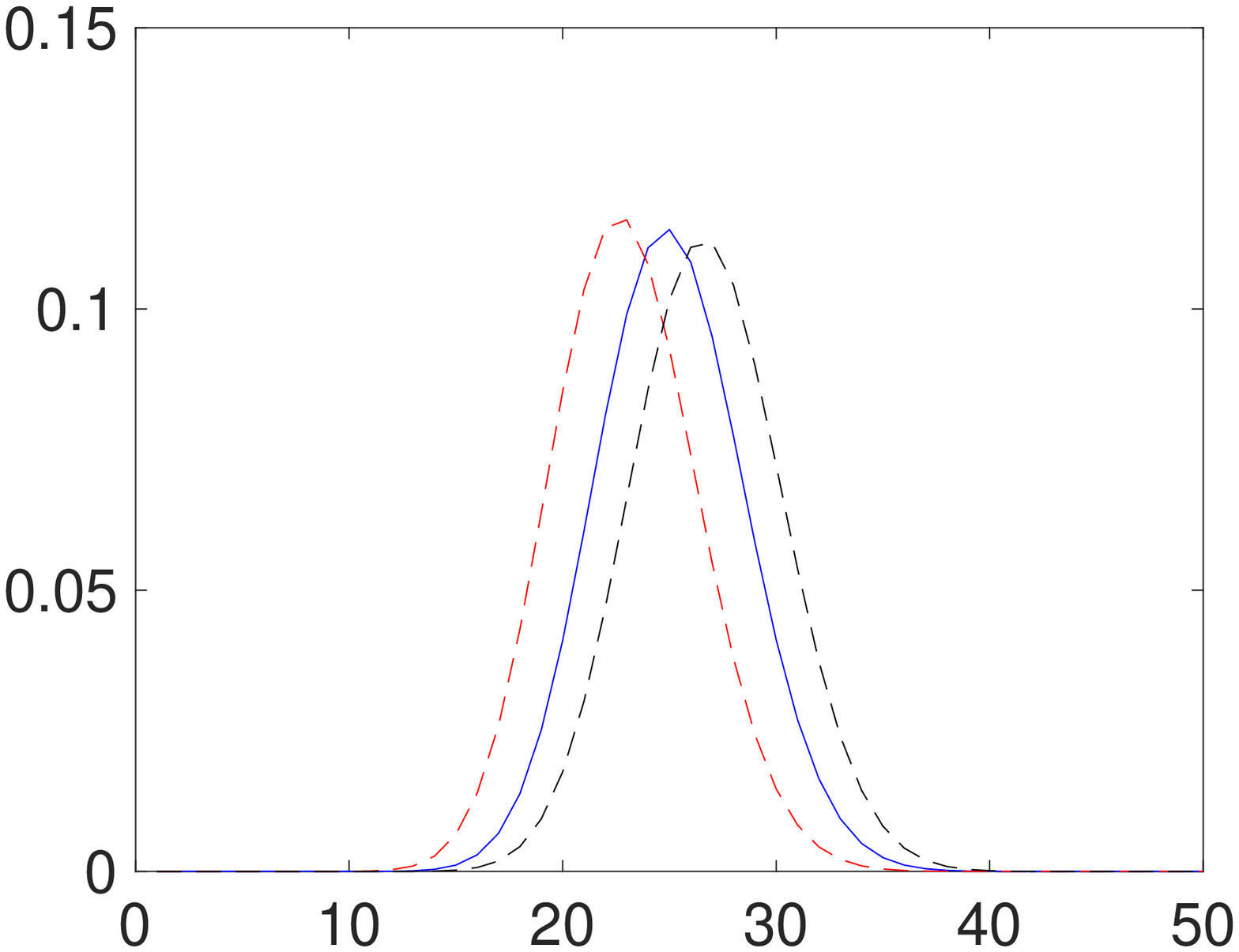} \hspace{-15pt} &
      \includegraphics[width=5cm]{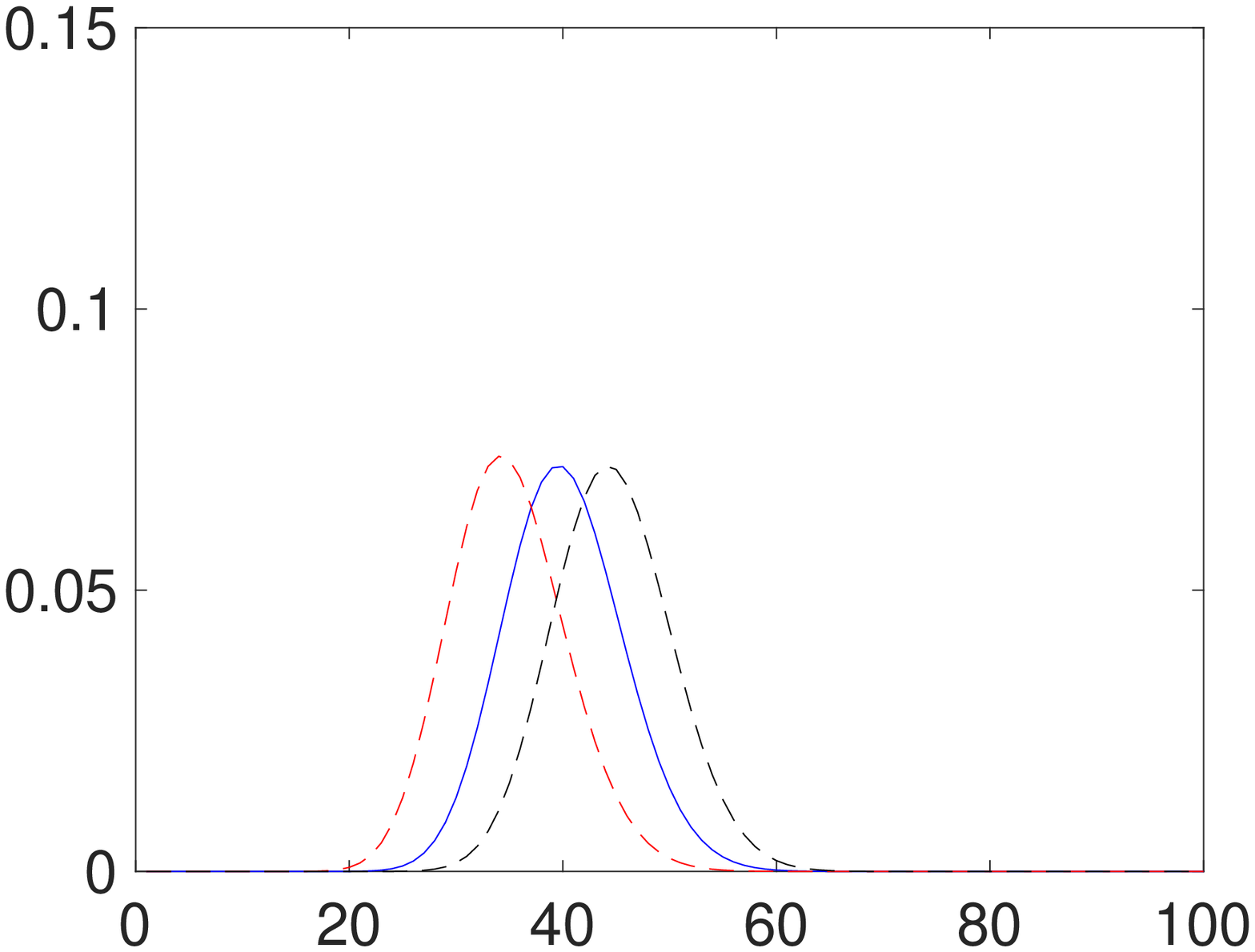}     \\    
  \end{tabular}
 \end{center}
\begin{minipage}[]{0.84\textwidth}
\begin{small}
(iii) HGP with $(\gamma_0,\zeta_0) = (15, 1050)$, $(\gamma_i,\zeta_i) = (15,1450)$ (red dashed), $(\gamma_0,\zeta_0) = (15,1950)$, $(\gamma_i,\zeta_i) = (15, 1450)$ (black dashed) and $(\gamma_0,\zeta_0) = (\gamma_i,\zeta_i) = (15,1450)$ (homogeneous case, solid blue).
\end{small}
\end{minipage}
\end{figure}

\newpage

\section{Further numerical results}\label{App:EmpRes}
\begin{figure}[h!]
 \centering
  \caption[]{Comparison of HDP (black solid), HPYP (black dashed), HGP (black dotted), HDPYP (gray solid), HPYDP (gray dashed), HGDP (dotted gray) and HGPYP (dashed-dotted gray) when $\mathbb{E}[D_{i,t}] = 5$ (left panel) and when $\mathbb{E}[D_{i,t}] = 10$ (right panel) and $t=50$.}
  \label{all_Experm_1}
  \begin{tabular}{cc}
   \includegraphics[width=4cm]{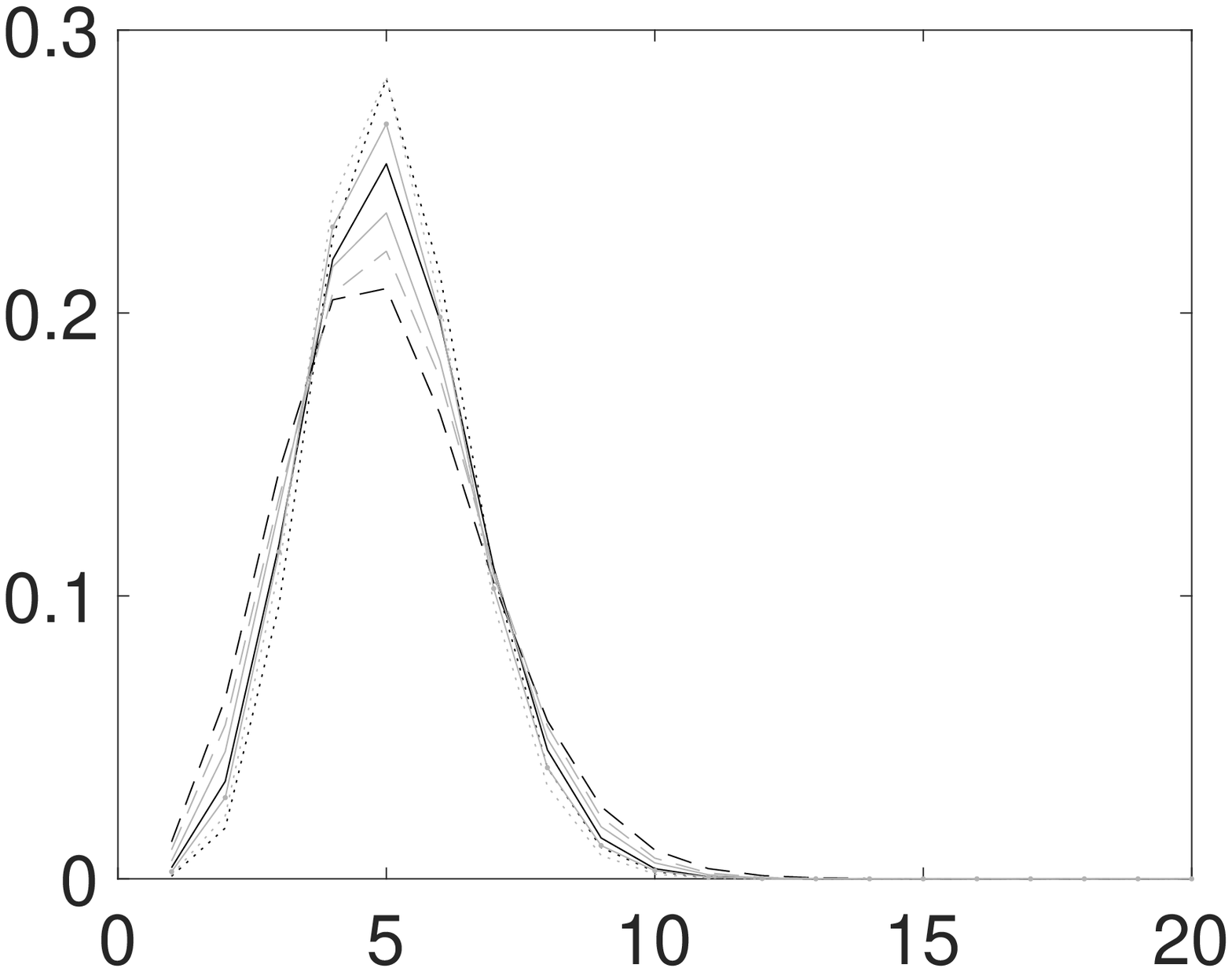} &
   \includegraphics[width=4cm]{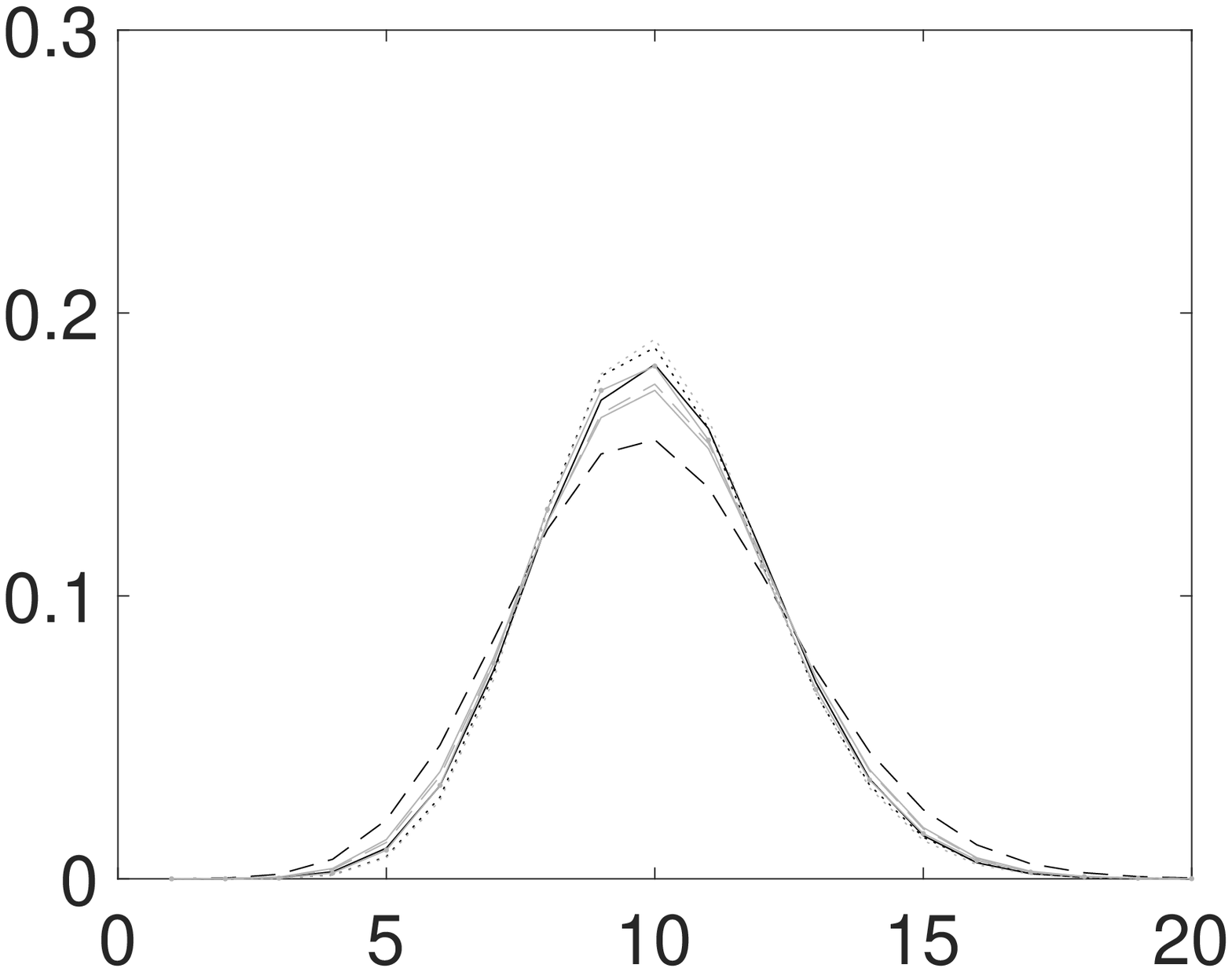}
  \end{tabular}
\end{figure}

\begin{table}[h]
\caption{Marginal and total expected number of clusters ($E(D_{i,t})$ and $E(D_{t})$, respectively), number of groups, $I$ and number of observations per group, $n_{i}$ for seven HSSMs following different parameter ($\sigma_i$, $\theta_i$, $\gamma_i$ and $\zeta_i$) and experimental (panel (a) and (b)) settings.}\label{Tab:priorSIM}
\begin{scriptsize}
\setlength{\tabcolsep}{2pt}
\begin{tabular}{l|cc|cc|cc|cc||cccc}
\hline
&\multicolumn{12}{c}{(a) Three-component normal mixtures}\\
\hline
HSSM & $\sigma_0$ & $\sigma_1$& $\theta_0$ &$\theta_1$ &$\gamma_0$&$\gamma_1$& $\zeta_0$  &$\zeta_1$  & $E[D_{i,t}]$ & $V[D_{i,t}]$ & $I$ &  $n_{i}$\\
\hline
$HDP(\theta_0,\theta_1)$                   &     &      & 3.50 &3.50  &     &      &    &    &5.00& 2.46  & 2 & 50\\
$HPYP(\sigma_0,\theta_0,\sigma_1,\theta_1)$& 0.25& 0.25 & 2.00 & 2.00 &     &      &    &    &5.00 & 3.53 & 2 & 50\\
$HGP(\gamma_0,\zeta_0,\gamma_1,\zeta_1)$  &      &      &      &      &13.50& 13.50& 140& 140&5.00 & 2.04 & 2 & 50\\
$HDPYP(\theta_0,\sigma_1,\theta_1)$       &      &  0.23& 3.30 &  2.00&     &      &    &    &5.00 & 2.81 & 2 & 50\\
$HPYDP(\sigma_0,\theta_0,\theta_1)$       &  0.22&     &  2.00 &  3.85&     &      &    &    &5.00 & 3.13 & 2 & 50\\
$HGDP(\gamma_0,\zeta_0,\theta_1)$          &      &     &       & 3.30 &14.40&      &135&    &5.00 & 1.97 & 2 & 50\\
$HGPYP(\gamma_0,\zeta_0,\sigma_1,\theta_1)$&      & 0.23&       & 2.00 &14.71&      &130&    &5.00 & 2.24 & 2 & 50\\
\hline
&\multicolumn{12}{c}{(b) Two-component normal mixtures}\\
\hline
$HDP(\theta_0,\theta_1)$                   &     &      & 9.10 &9.10  &     &      &    &    &10.00& 4.78 & 2 & 50\\
$HPYP(\sigma_0,\theta_0,\sigma_1,\theta_1)$& 0.25& 0.25 & 5.73 & 5.73 &     &      &    &    &10.00& 6.53 & 2 & 50\\
$HGP(\gamma_0,\zeta_0,\gamma_1,\zeta_1)$  &      &      &      &      &18.00& 18.00& 425& 425&10.00 & 4.54 & 2 & 50\\
$HDPYP(\theta_0,\sigma_1,\theta_1)$       &      & 0.22 & 6.50 &  6.20&     &      &    &    &10.00& 5.28 & 2 & 50\\
$HPYDP(\sigma_0,\theta_0,\theta_1)$       &  0.22&     &  10.40&  6.20&     &      &    &    &10.00& 5.16 & 2 & 50\\
$HGDP(\gamma_0,\zeta_0,\theta_1)$          &      &     &       & 9.00&18.00&      &400&     &10.00& 4.37 & 2 & 50\\
$HGPYP(\gamma_0,\zeta_0,\sigma_1,\theta_1)$&      &0.21 &       & 5.80&18.00&      &400&    & 10.00& 4.84 & 2 & 50\\
\hline
\end{tabular}
\end{scriptsize}
\end{table}

\clearpage

\begin{figure}[h!]
  \caption{Global co-clustering matrix for the three-component normal (panel (a)) and the two-component normal (panel (b)) mixture experiments.}
  \label{Fig:coclust}
 \centering
  \setlength{\tabcolsep}{-2pt} 
  \begin{tabular}{cccc}
      \multicolumn{4}{c}{(a) Three-component normal mixture experiment}\\
      \footnotesize{HDP}& \footnotesize{HPYP}& \footnotesize{HGP}&\vspace{-3pt}\\
      \includegraphics[width=3.5cm]{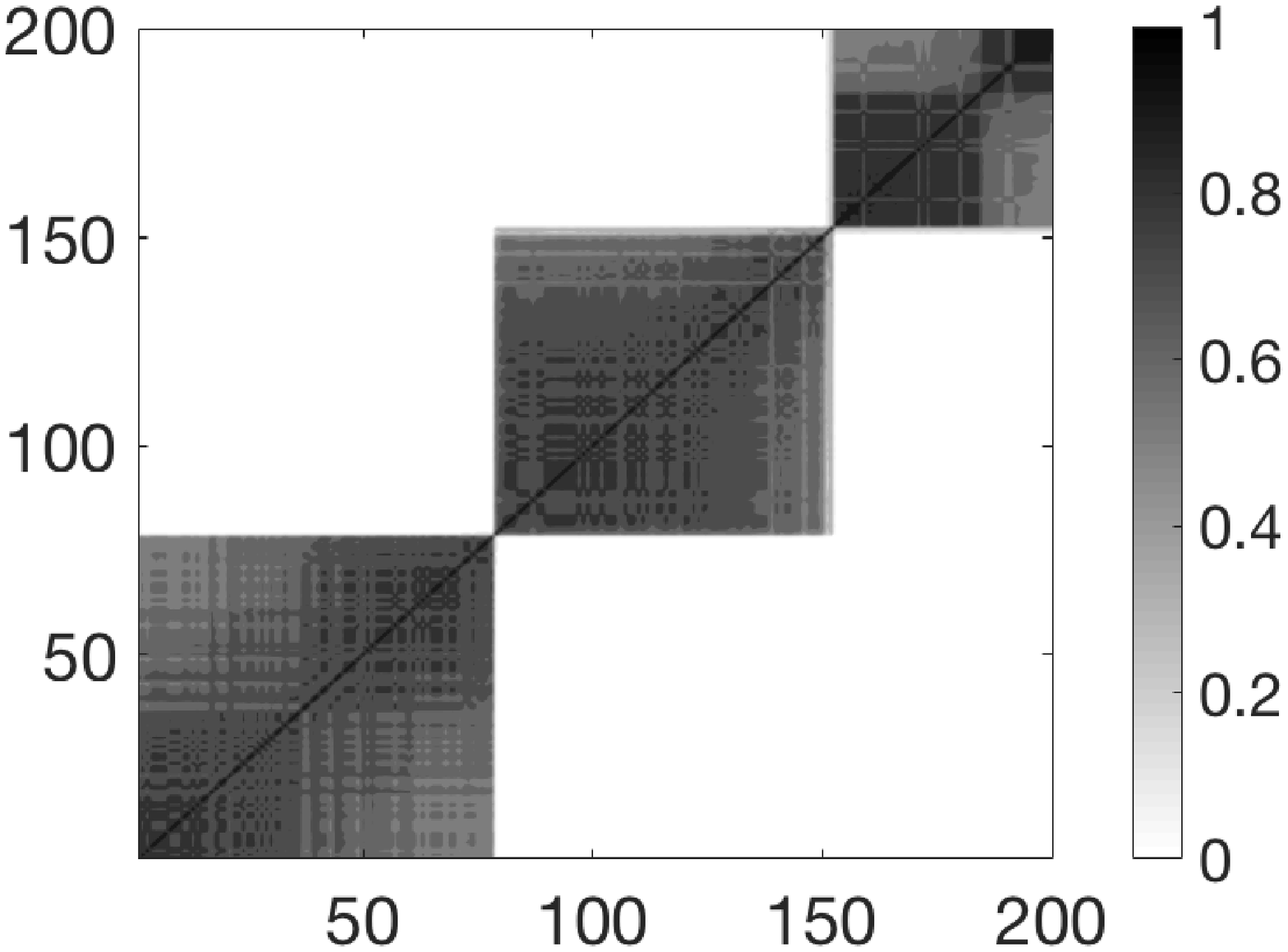}&      \includegraphics[width=3.5cm]{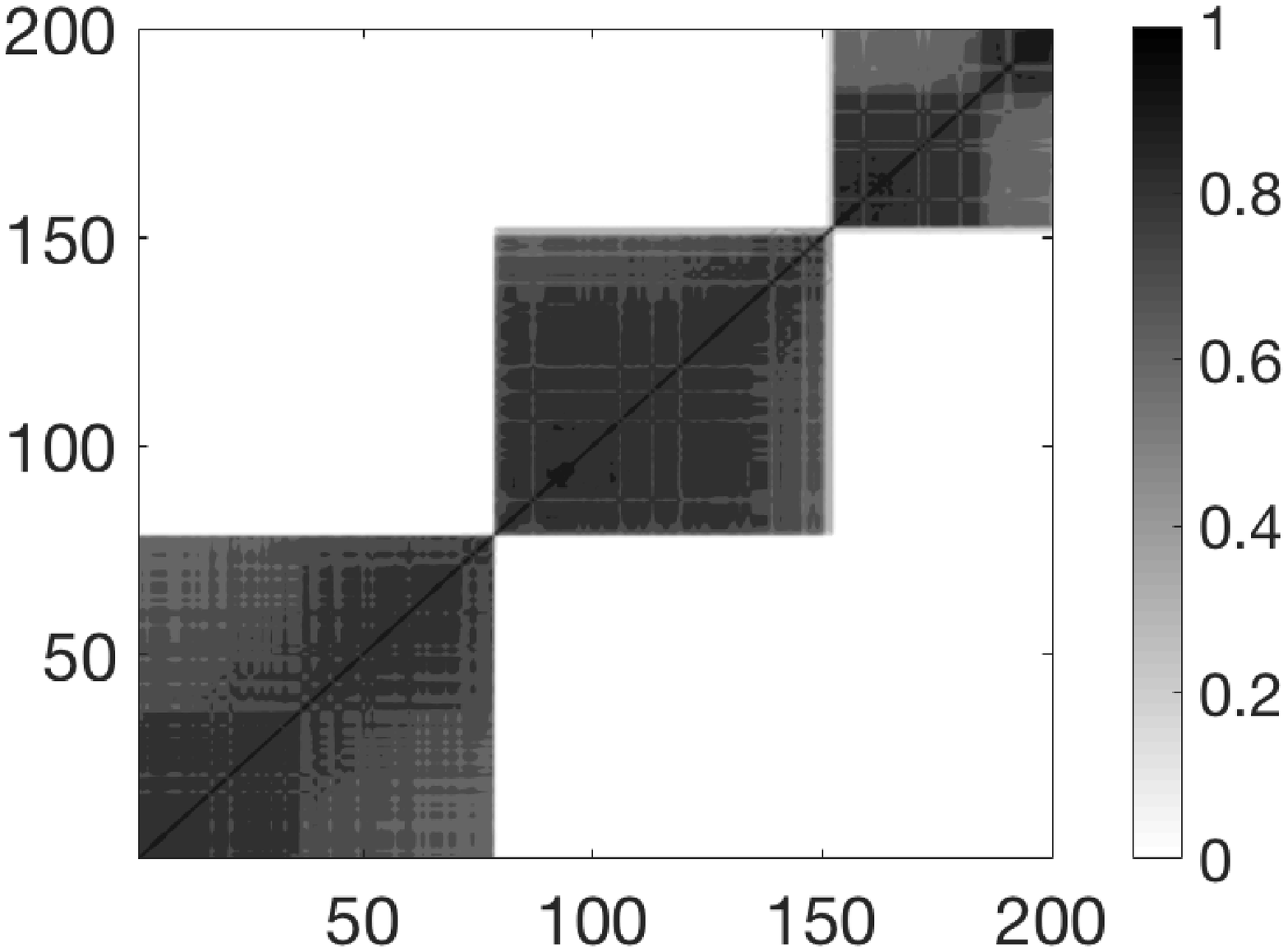}&      \includegraphics[width=3.5cm]{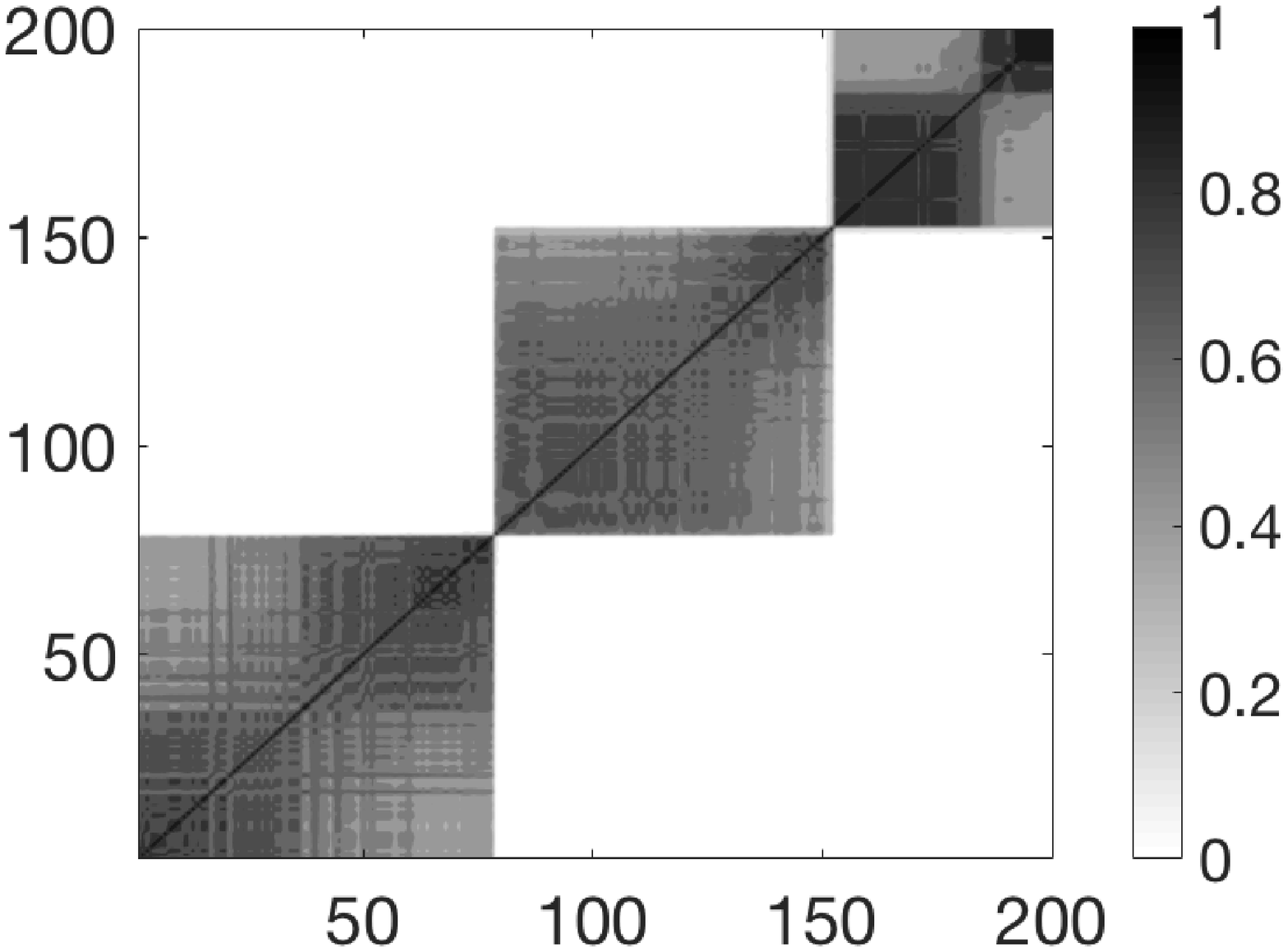}&\\      
      \footnotesize{HDPYP}& \footnotesize{HPYDP}& \footnotesize{HGDP}&\footnotesize{HGPYP}\vspace{-3pt}\\
      \includegraphics[width=3.5cm]{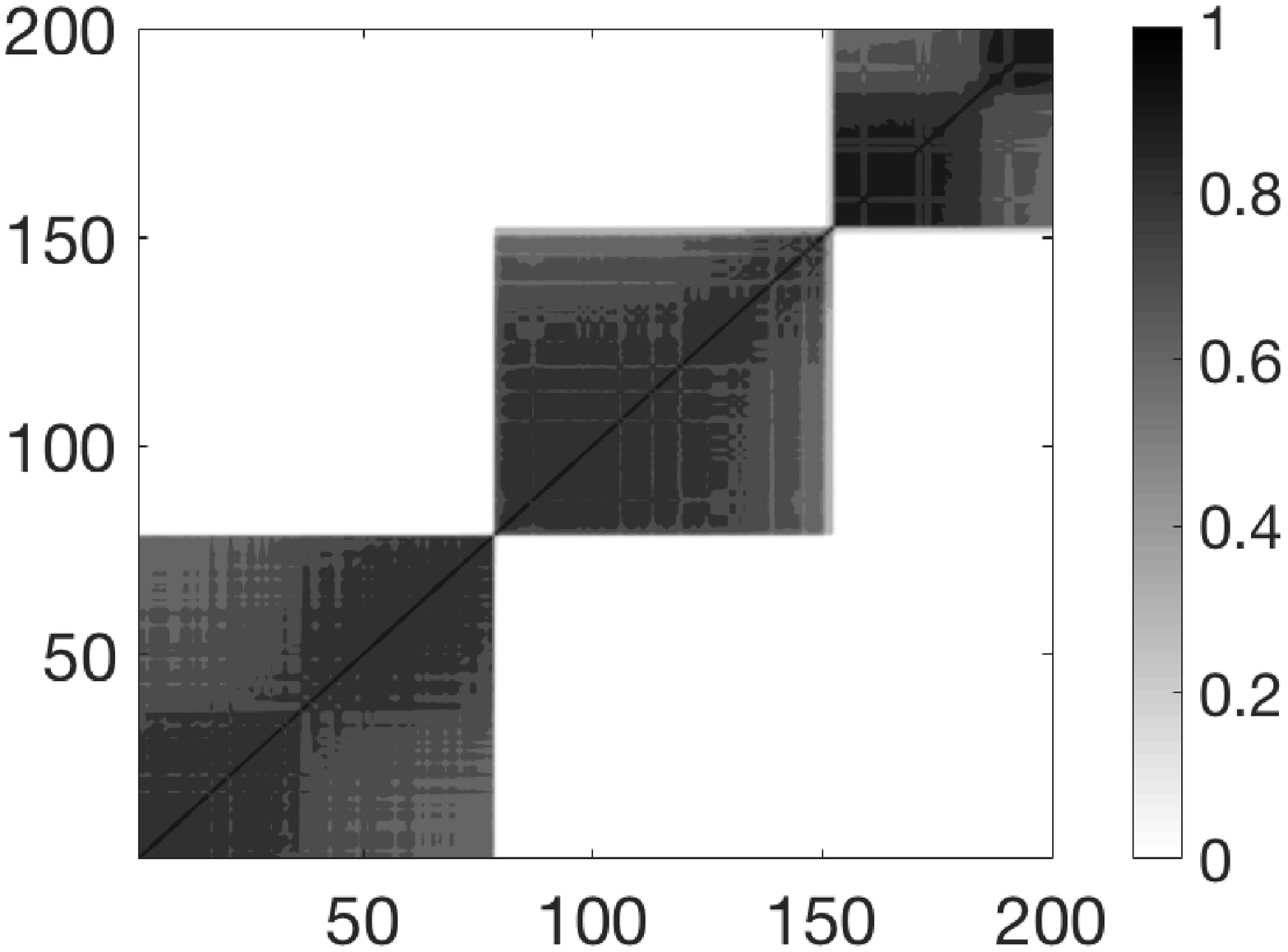}&         \includegraphics[width=3.5cm]{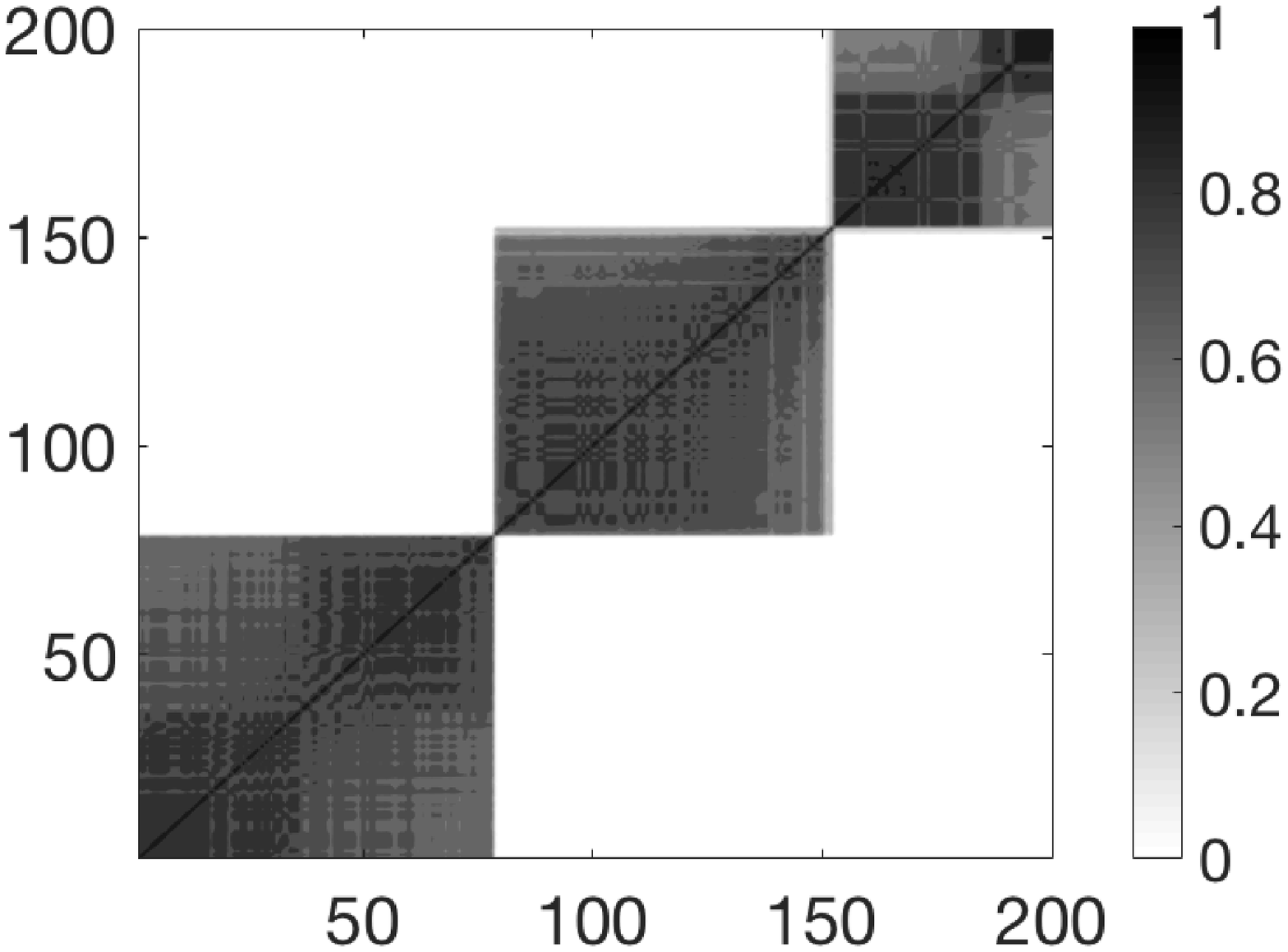}&     
      \includegraphics[width=3.5cm]{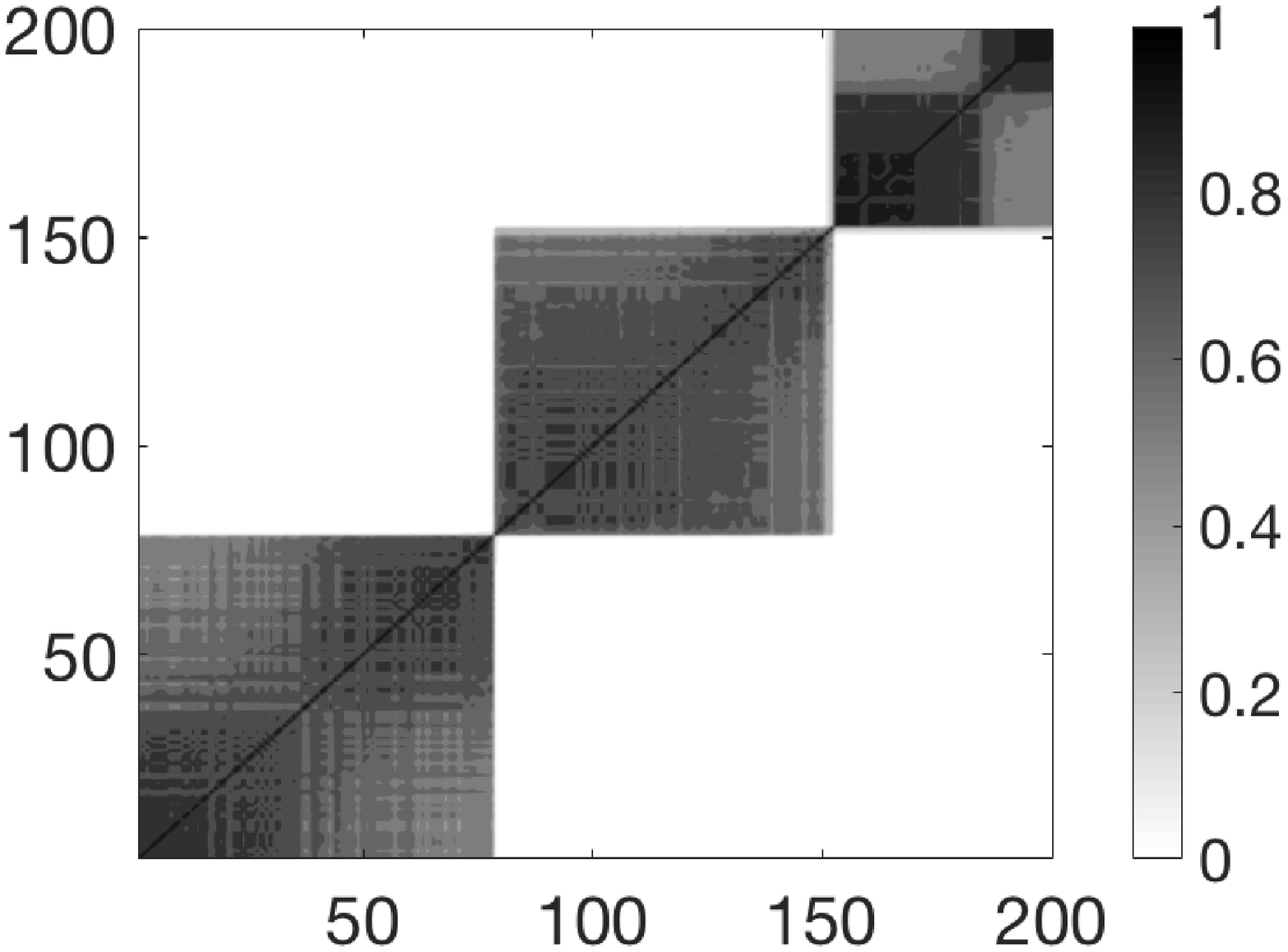}&\includegraphics[width=3.5cm]{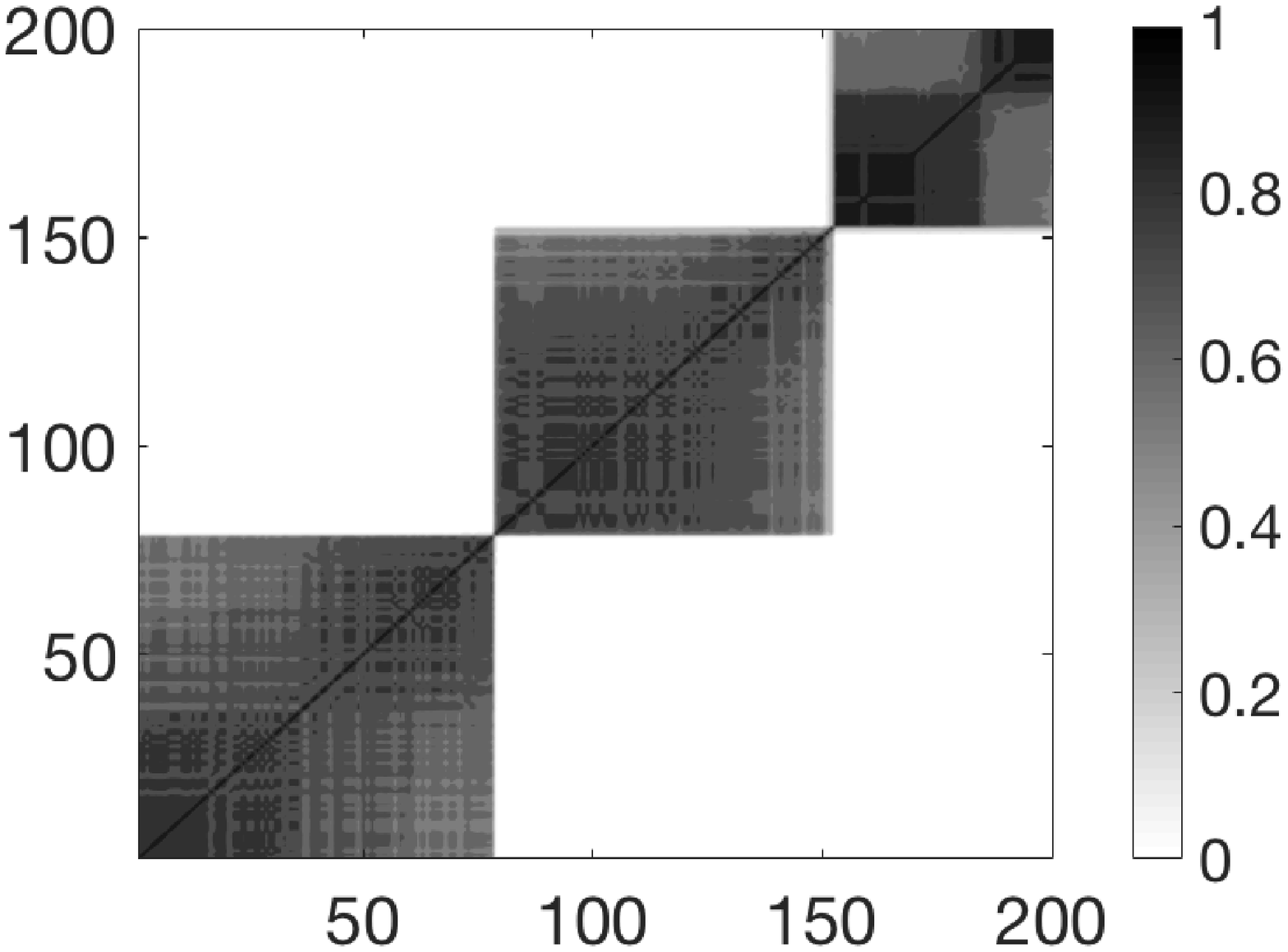}\vspace{20pt}\\
      \multicolumn{3}{c}{(b) Two-component normal mixture experiment}\\
      \footnotesize{HDP}& \footnotesize{HPYP}& \footnotesize{HGP}&\vspace{-3pt}\\
      \includegraphics[width=3.5cm]{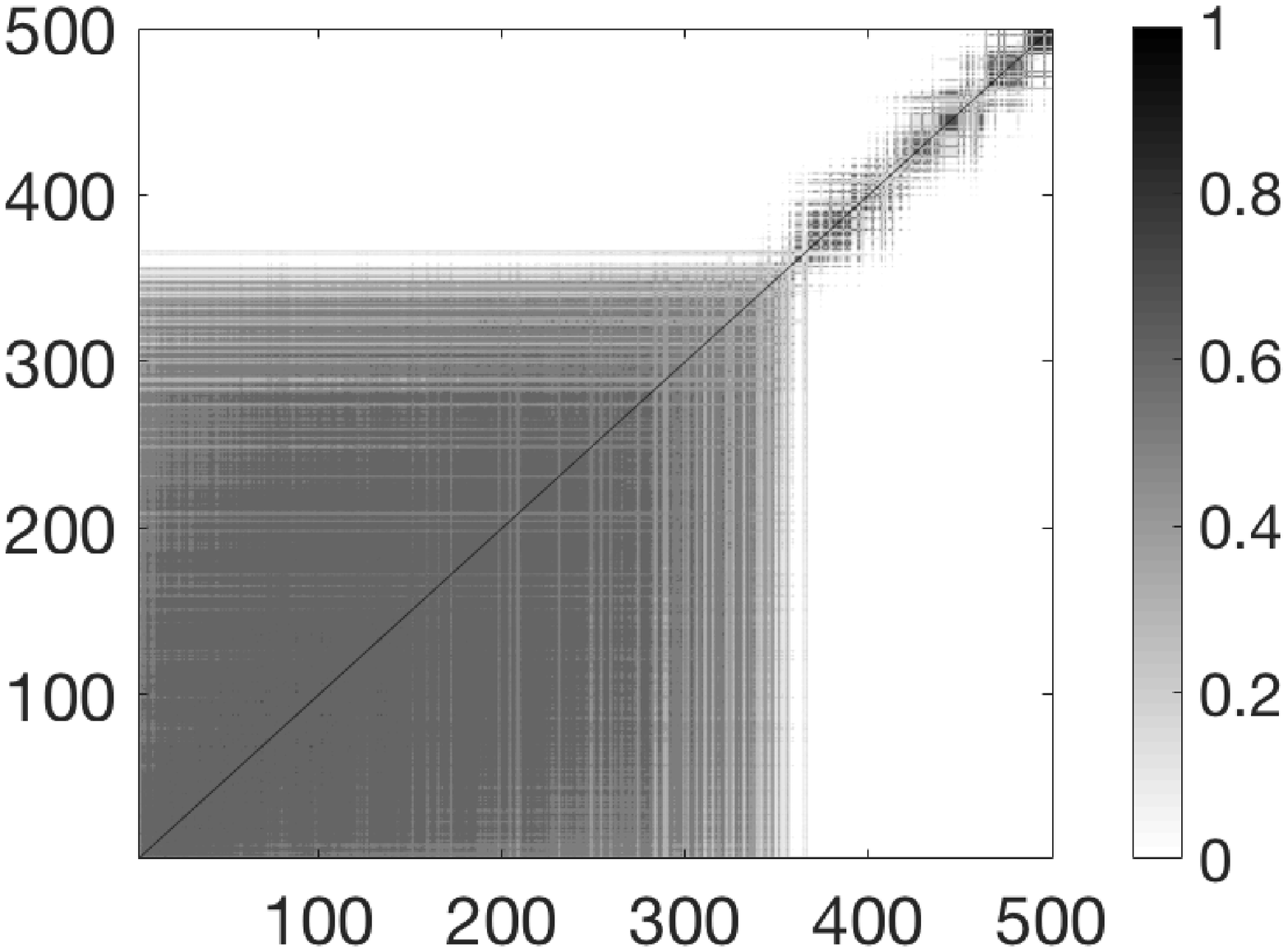}&      \includegraphics[width=3.5cm]{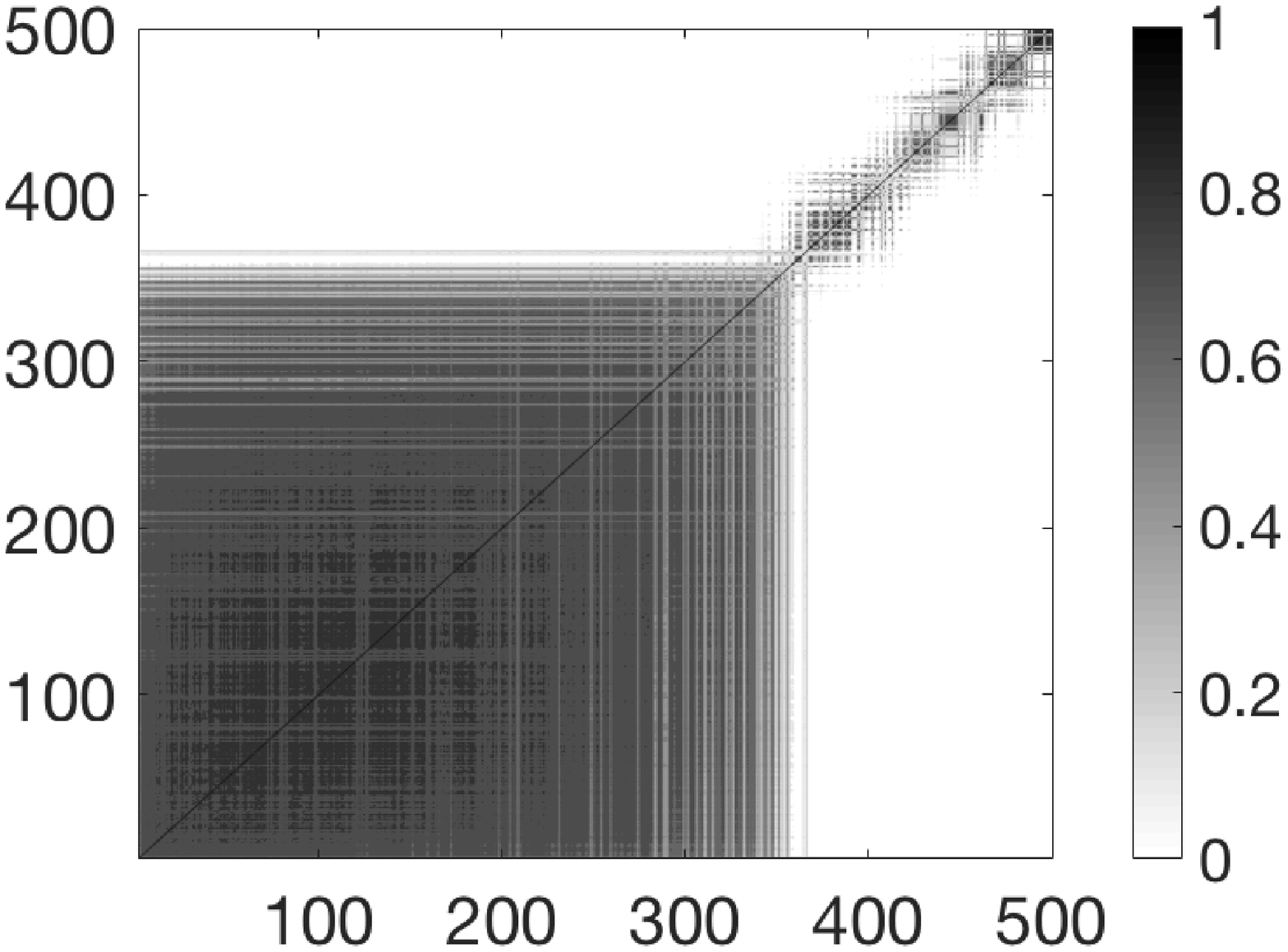}&      \includegraphics[width=3.5cm]{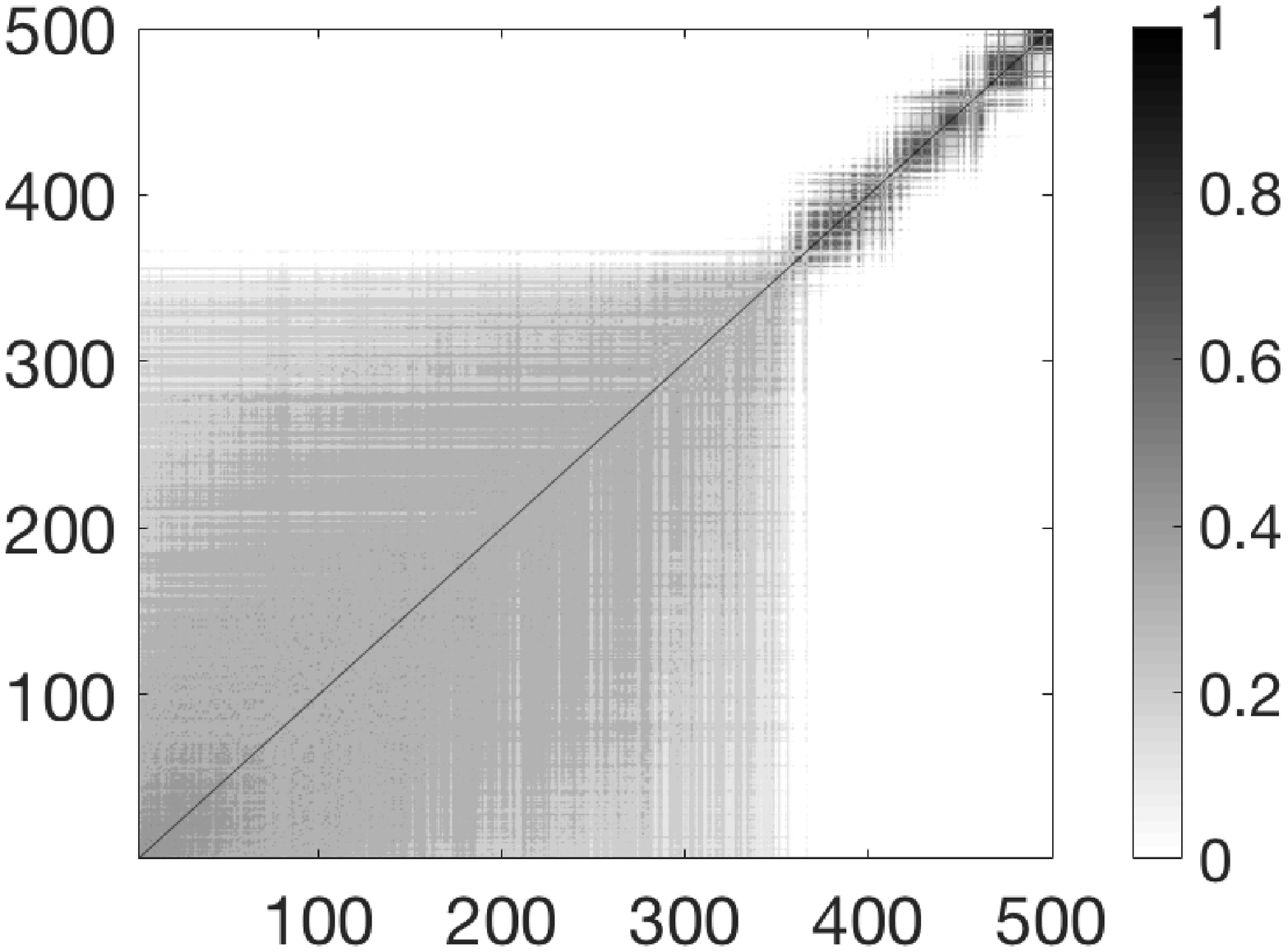}\\      
      \footnotesize{HDPYP}& \footnotesize{HPYDP}& \footnotesize{HGDP}&\footnotesize{HGPYP}\vspace{-3pt}\\
      \includegraphics[width=3.5cm]{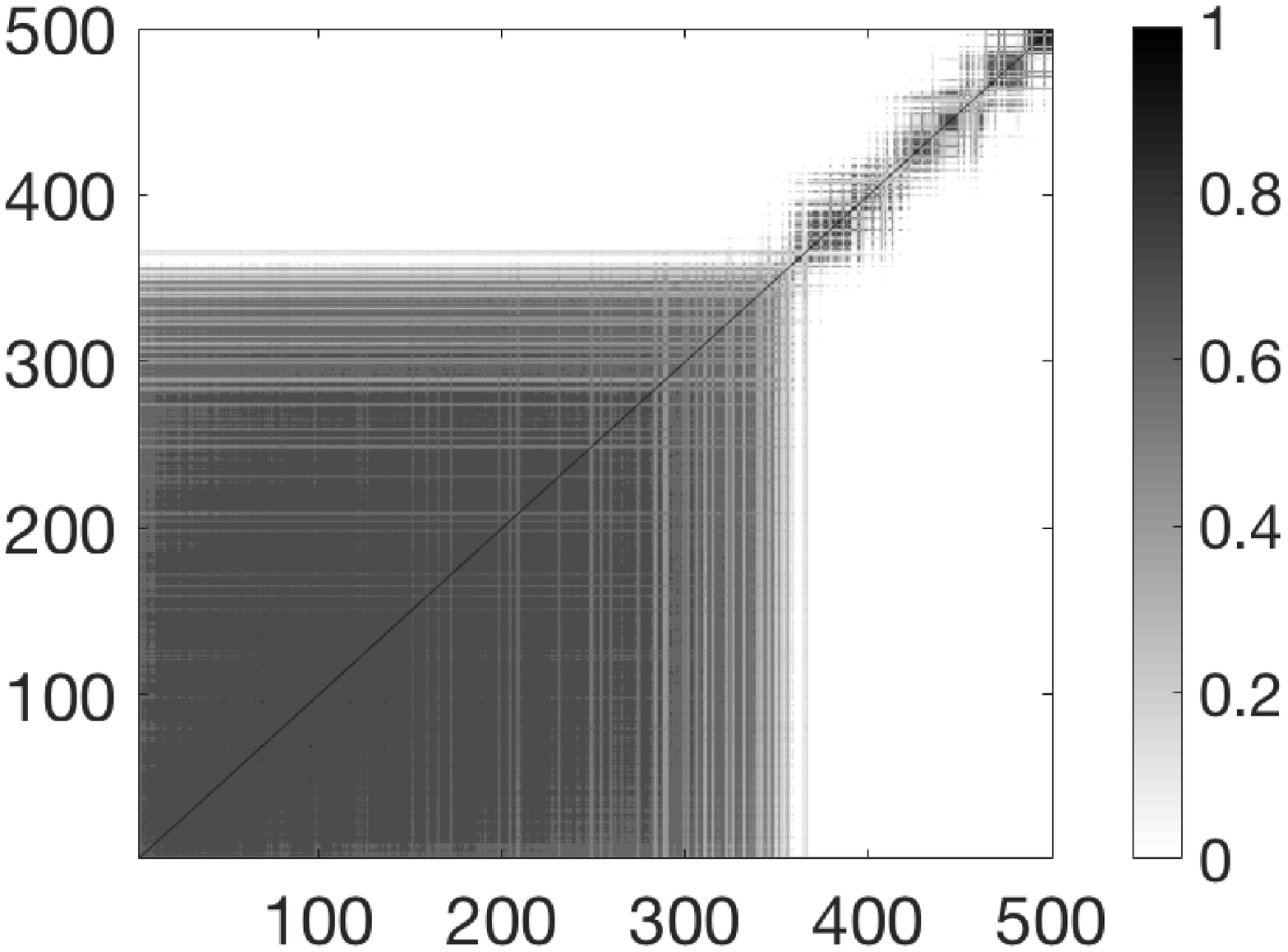}&         \includegraphics[width=3.5cm]{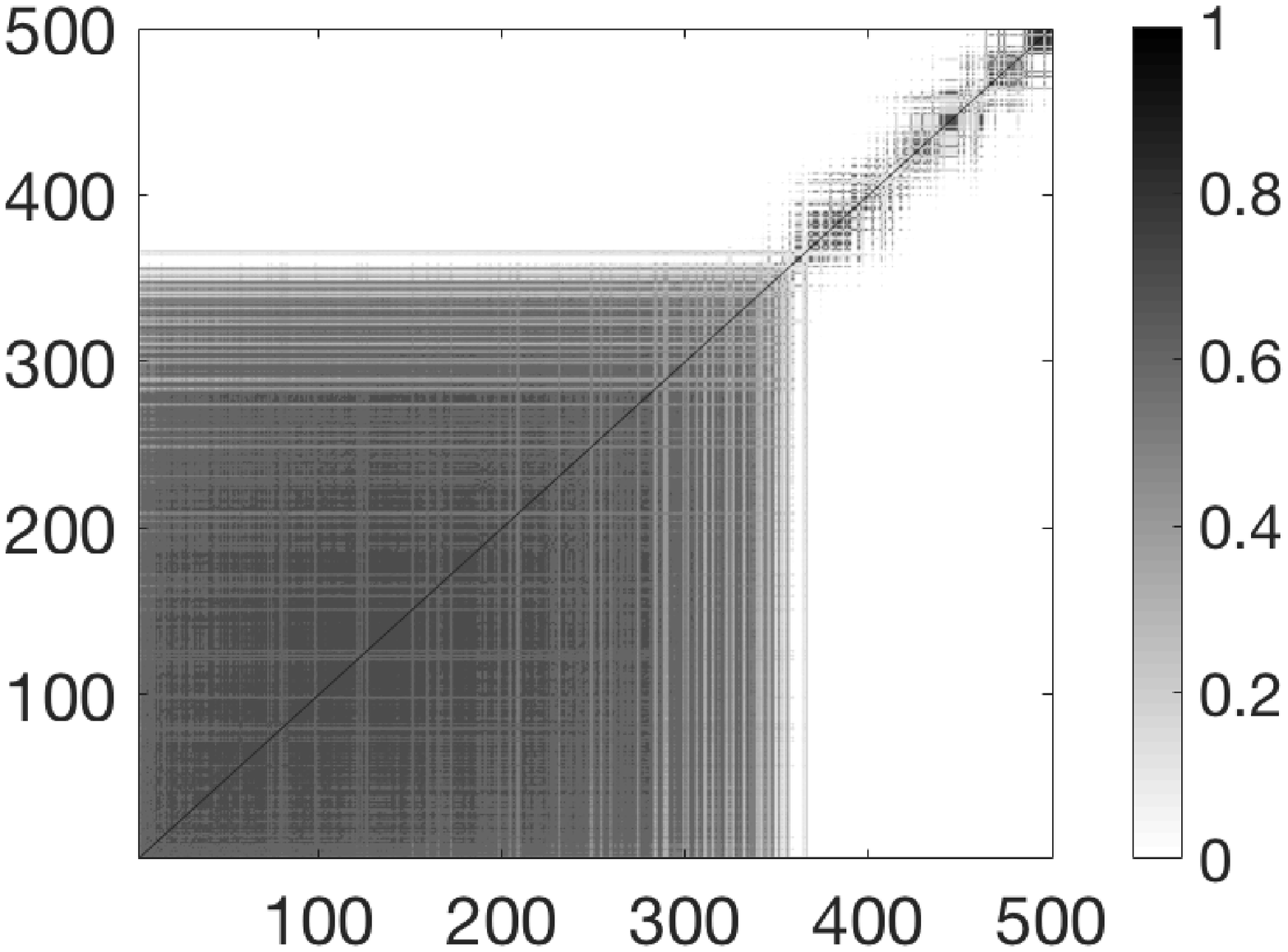}&     
      \includegraphics[width=3.5cm]{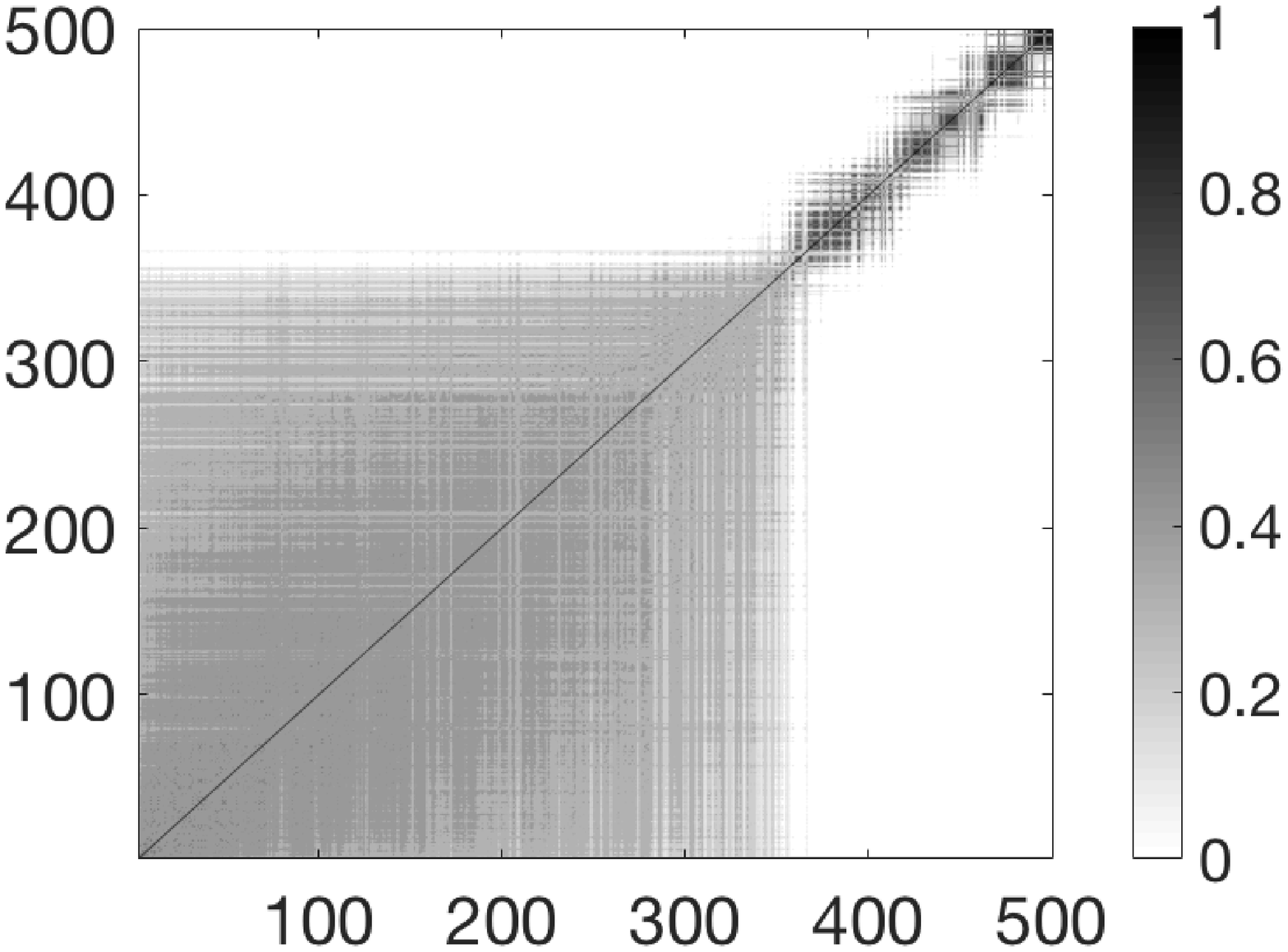}&\includegraphics[width=3.5cm]{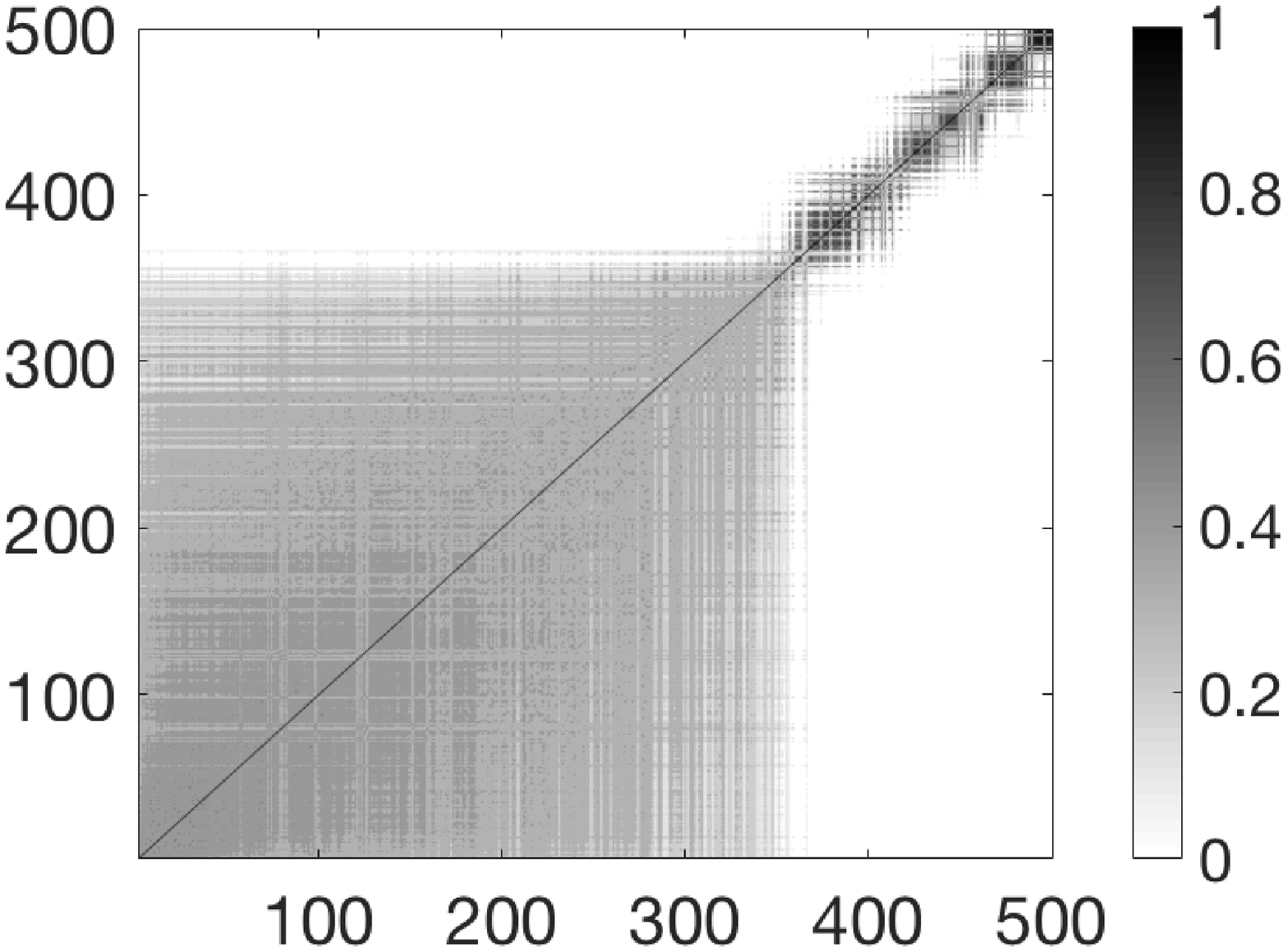}\\
  \end{tabular}
\end{figure}

\clearpage

\begin{figure}[p]
  \caption{Restaurant co-clustering matrix for the for the three-component (panel (a)) and two-component (panel (b)) normal mixture experiments. Red lines denote the co-clustering within (blocks on the minor diagonal) and between restaurants (blocks out of the minor diagonal).}
  \label{Fig:coclust_r}
 \centering
  \setlength{\tabcolsep}{-2pt} 
  \begin{tabular}{cccc}
      \multicolumn{4}{c}{(a) Three-component normal mixture experiment}\\
      \footnotesize{HDP}& \footnotesize{HPYP}& \footnotesize{HGP}&\vspace{-3pt}\\
      \includegraphics[width=3.5cm]{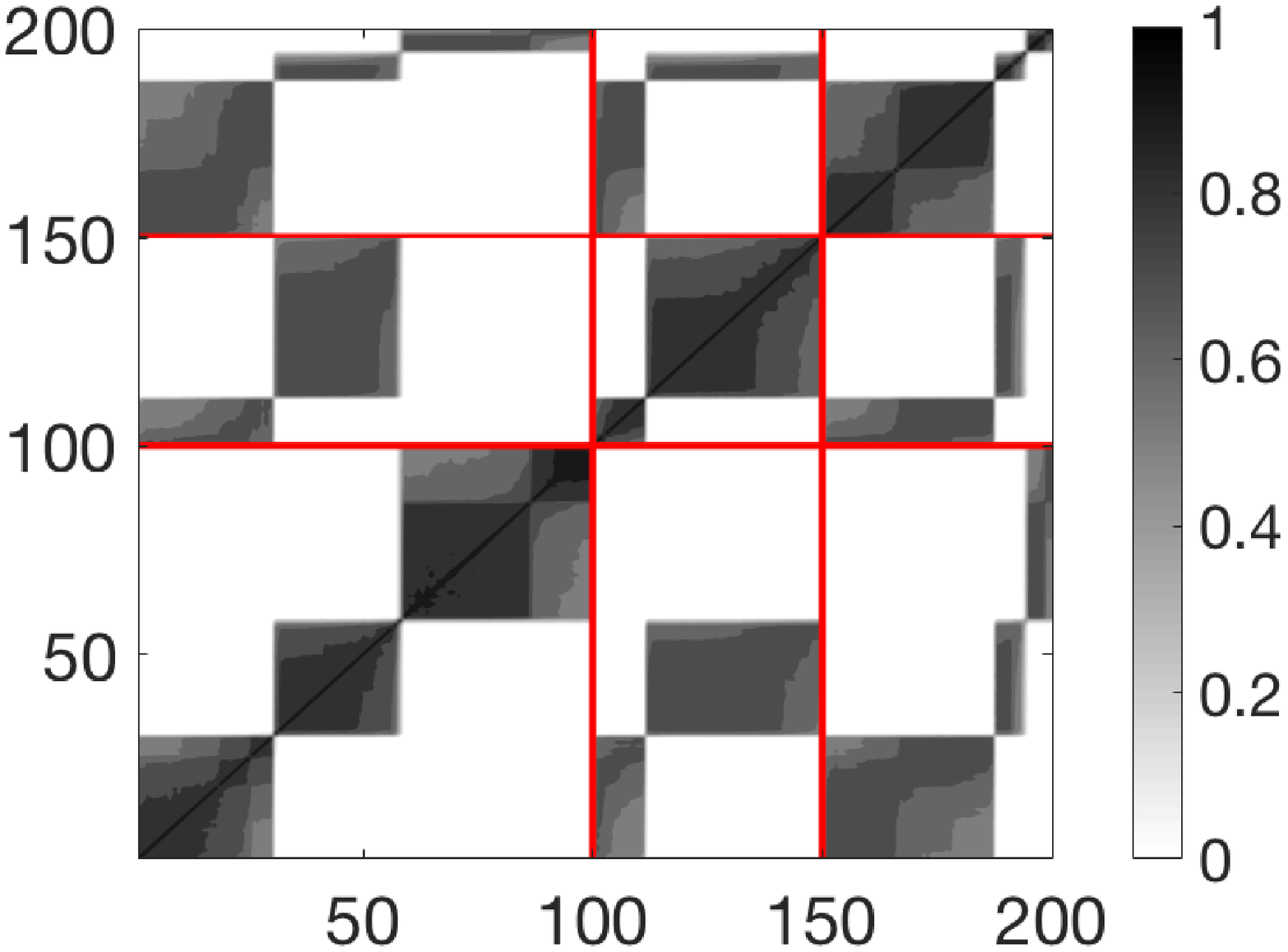}&      \includegraphics[width=3.5cm]{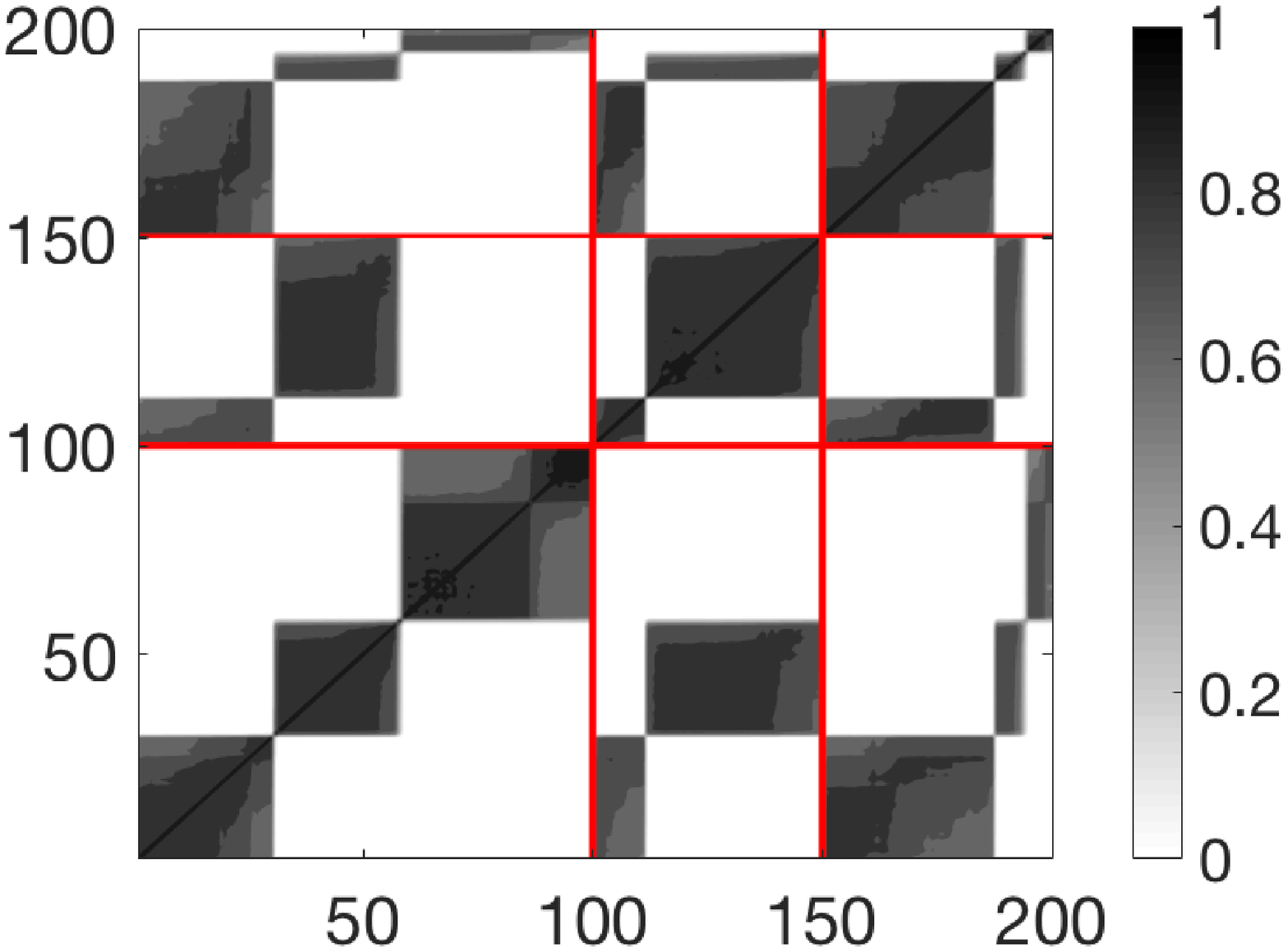}&      \includegraphics[width=3.5cm]{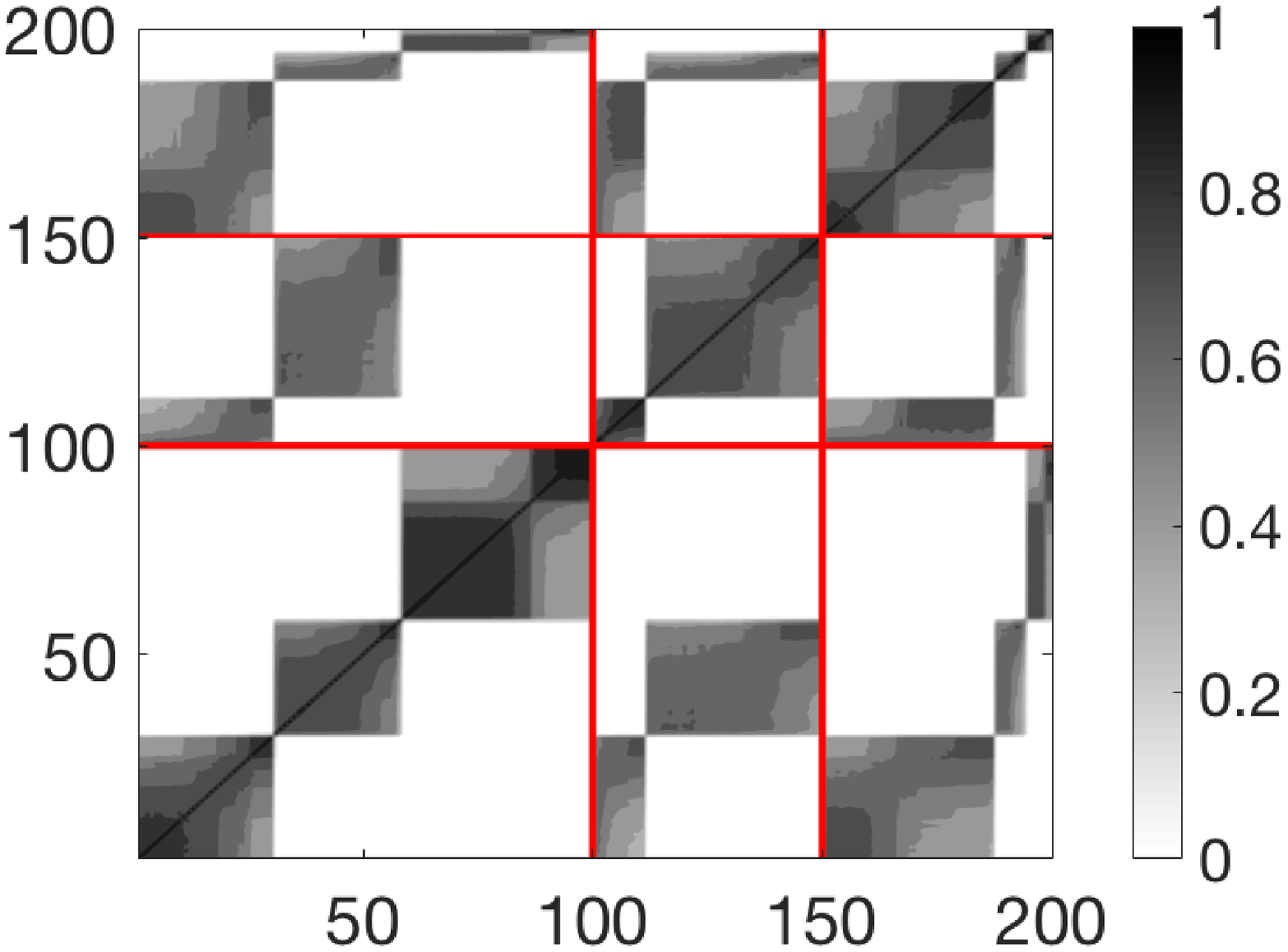}&\\      
      \footnotesize{HDPYP}& \footnotesize{HPYDP}& \footnotesize{HGDP}&\footnotesize{HGPYP}\vspace{-3pt}\\
      \includegraphics[width=3.5cm]{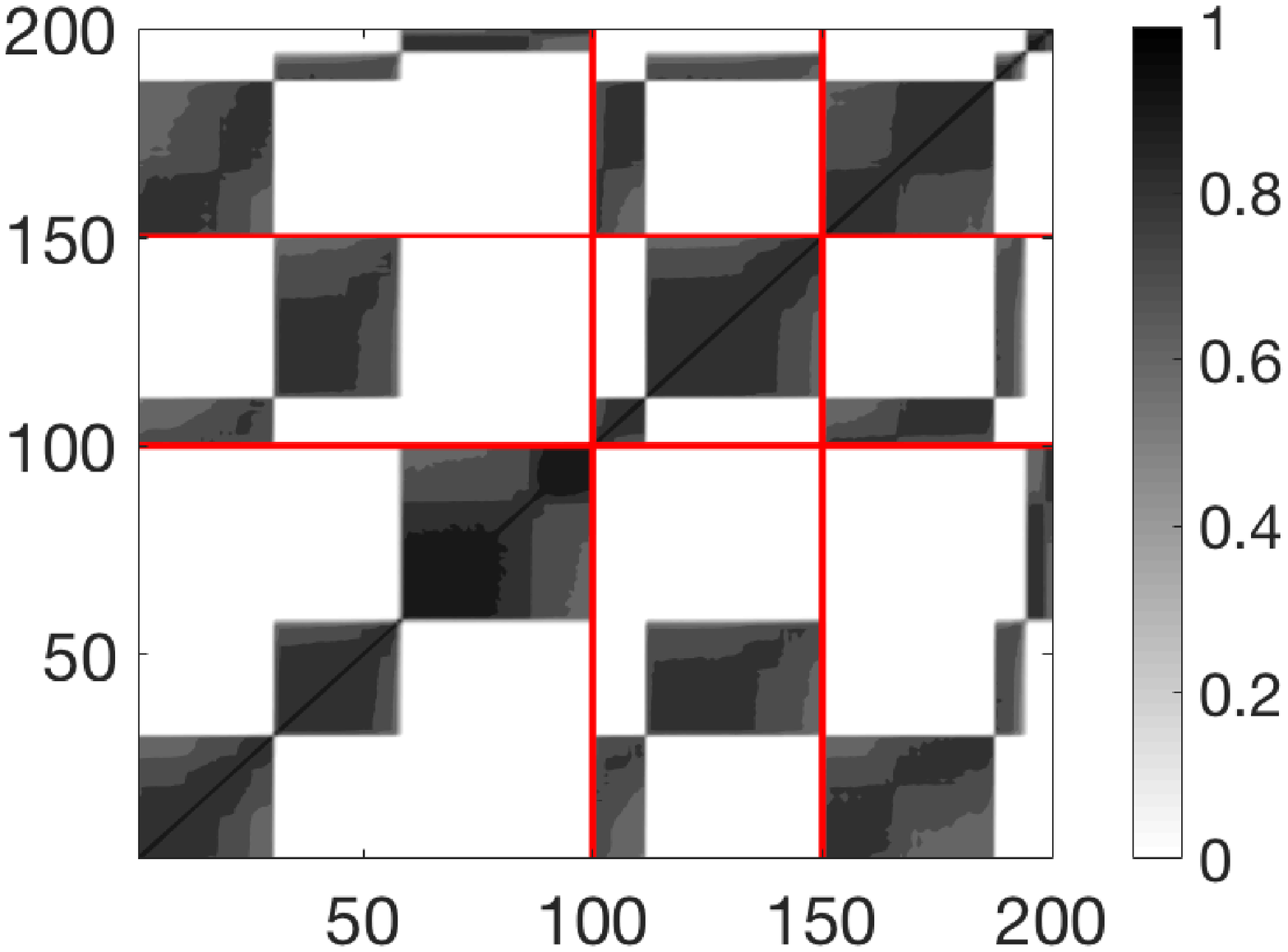}&         \includegraphics[width=3.5cm]{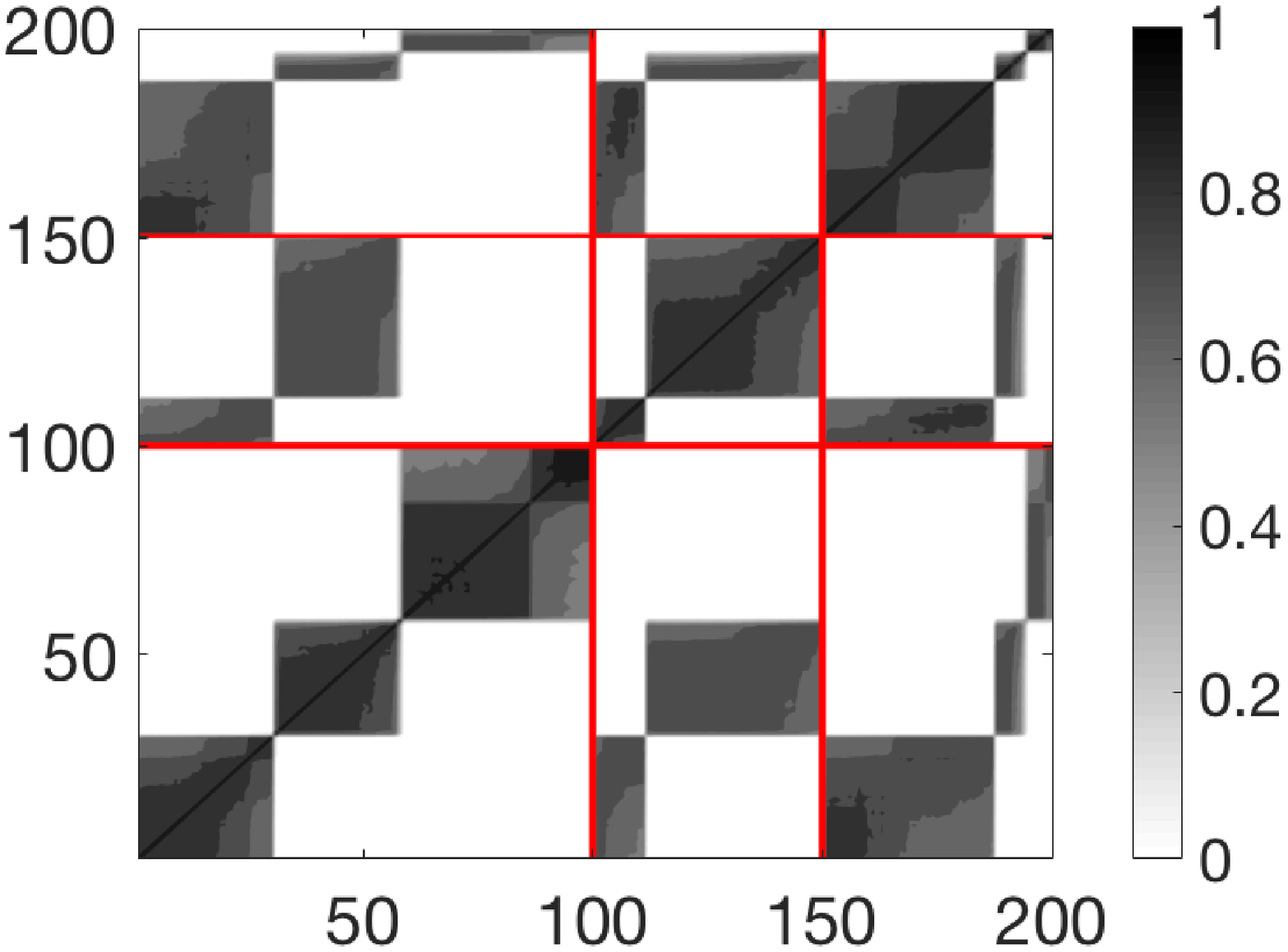}&     
      \includegraphics[width=3.5cm]{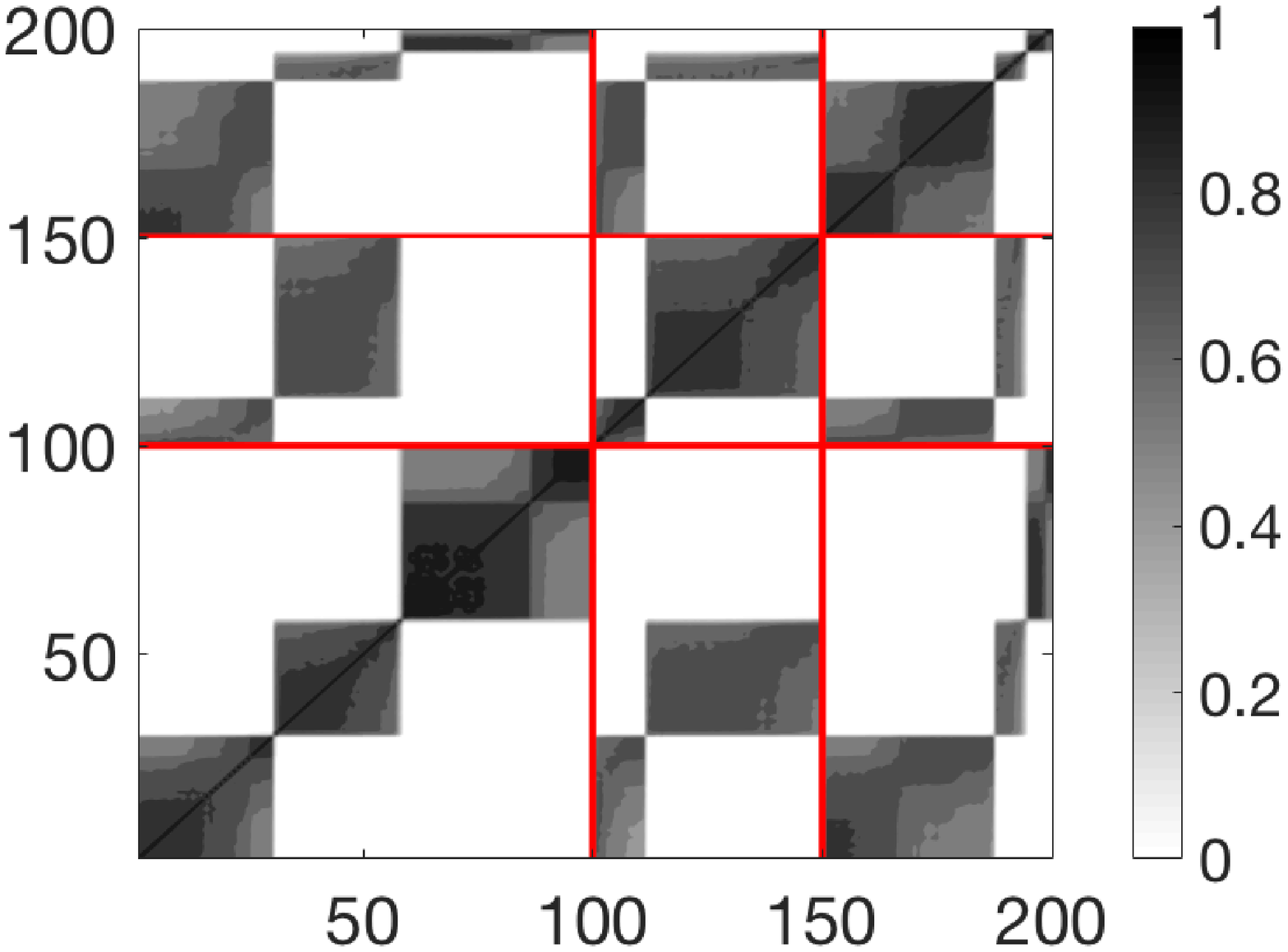}&\includegraphics[width=3.5cm]{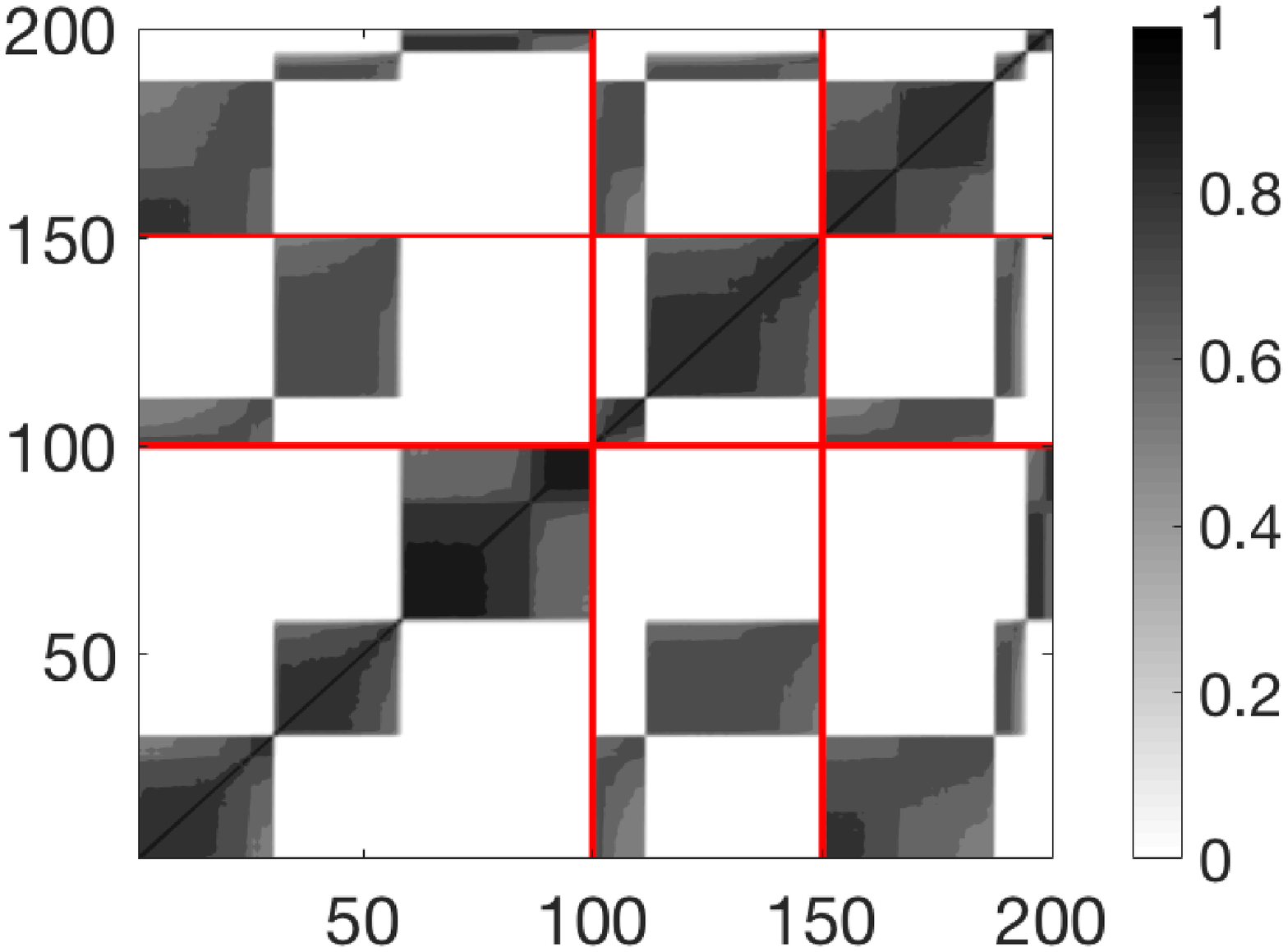}\vspace{20pt}\\
      \multicolumn{4}{c}{(b) Two-component normal mixture experiment}\\
      \footnotesize{HDP}& \footnotesize{HPYP}& \footnotesize{HGP}&\vspace{-3pt}\\
      \includegraphics[width=3.5cm]{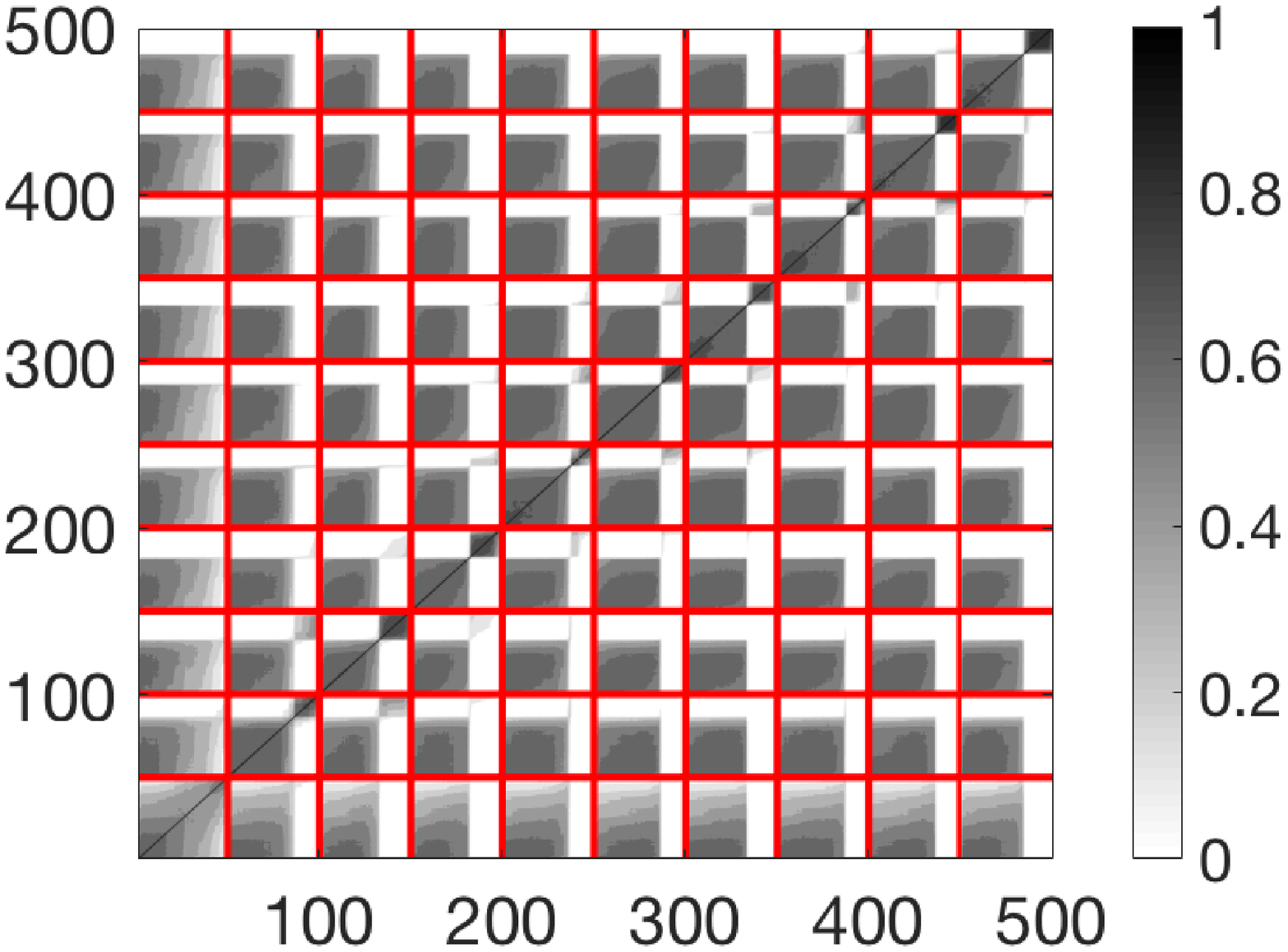}&      \includegraphics[width=3.5cm]{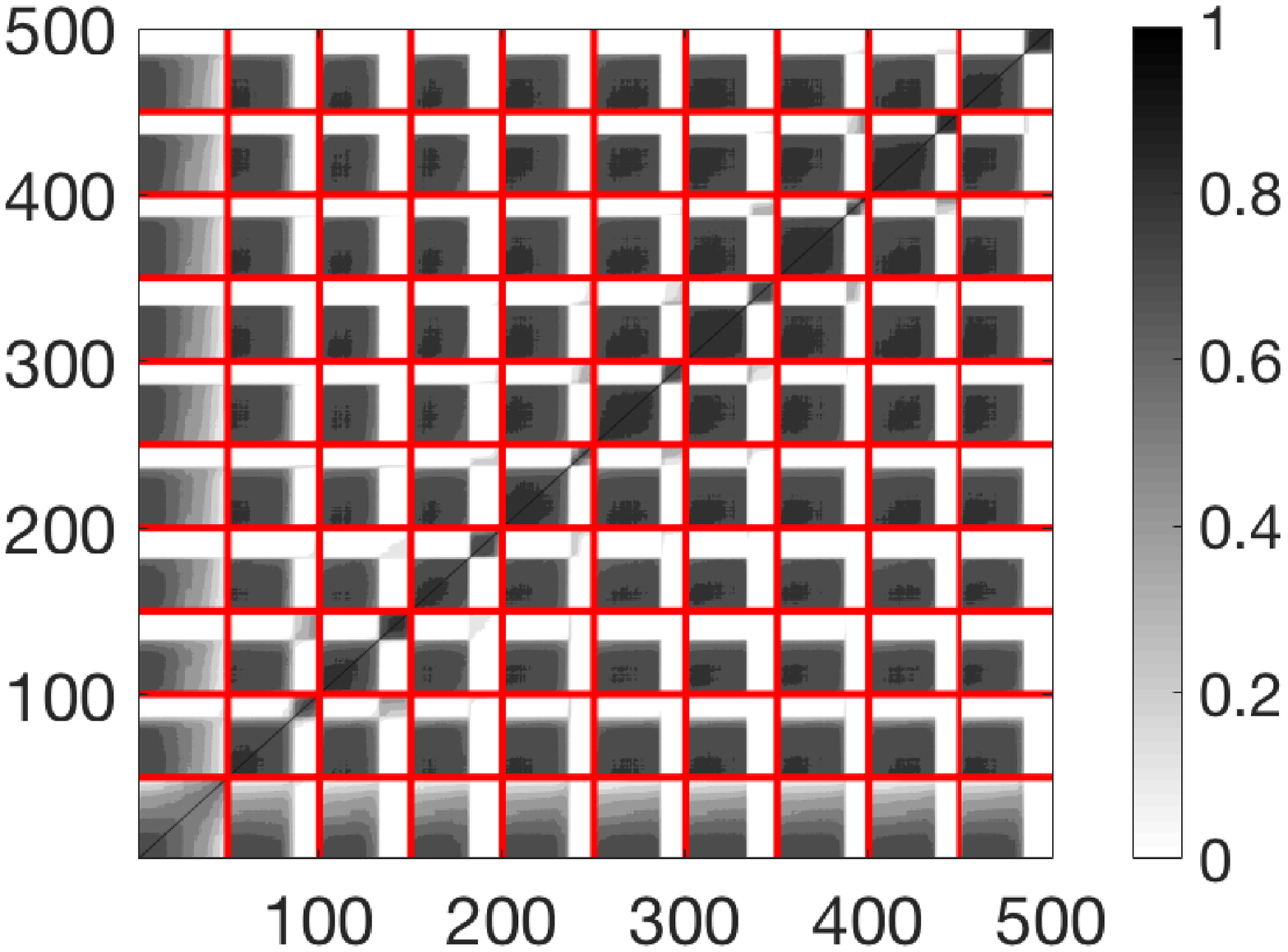}&      \includegraphics[width=3.5cm]{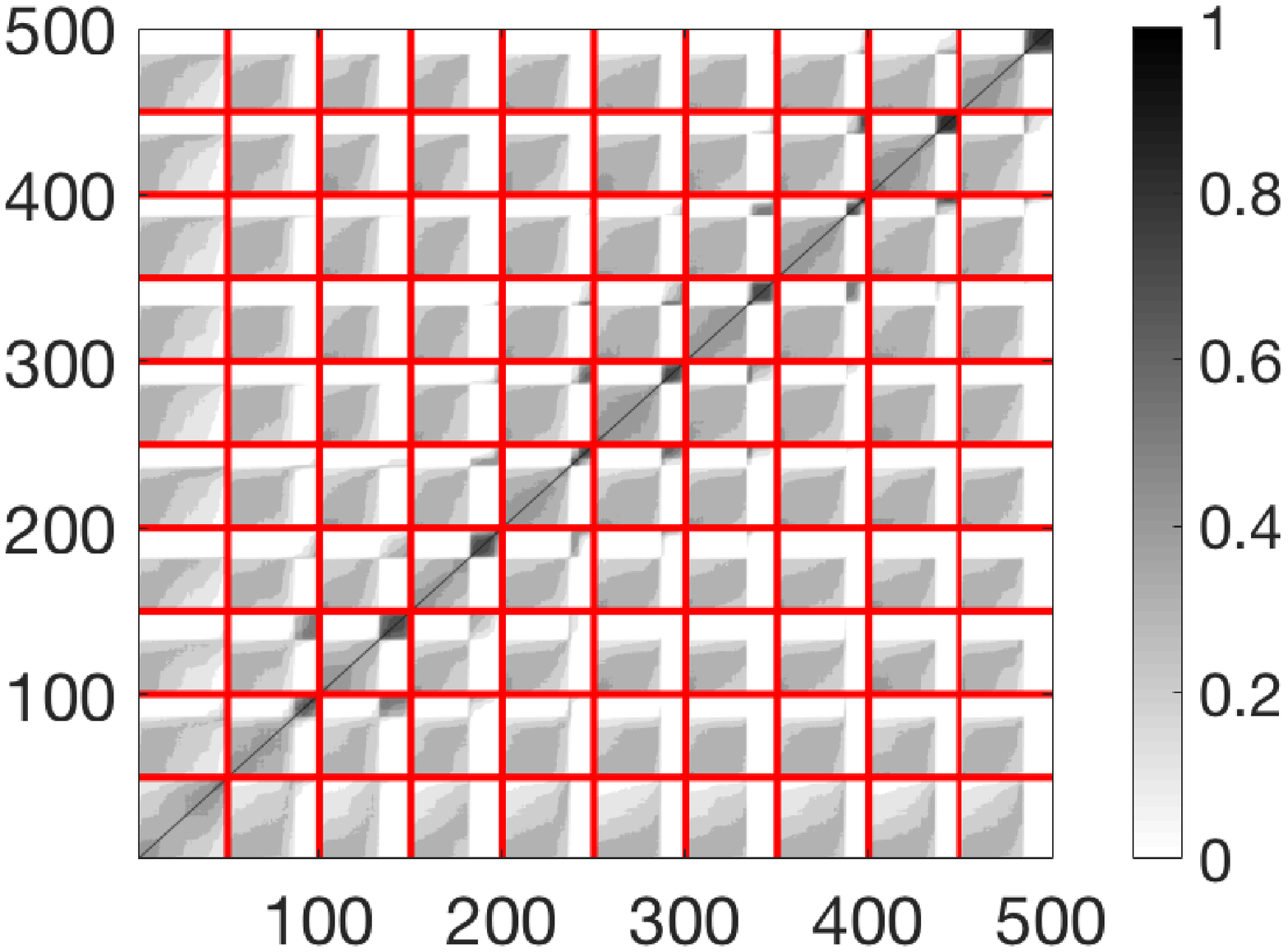}&\\      
      \footnotesize{HDPYP}& \footnotesize{HPYDP}& \footnotesize{HGDP}&\footnotesize{HGPYP}\vspace{-3pt}\\
      \includegraphics[width=3.5cm]{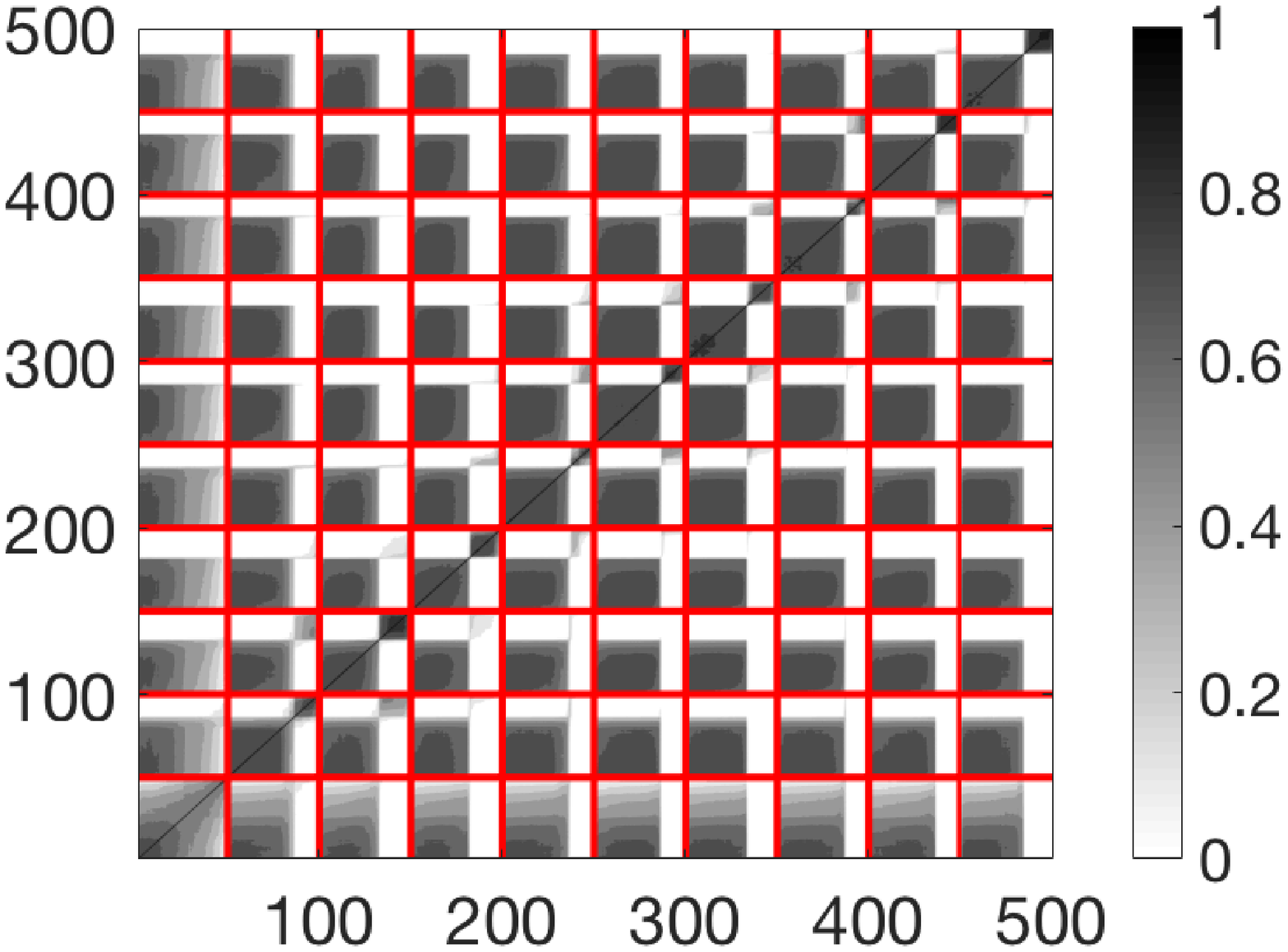}&         \includegraphics[width=3.5cm]{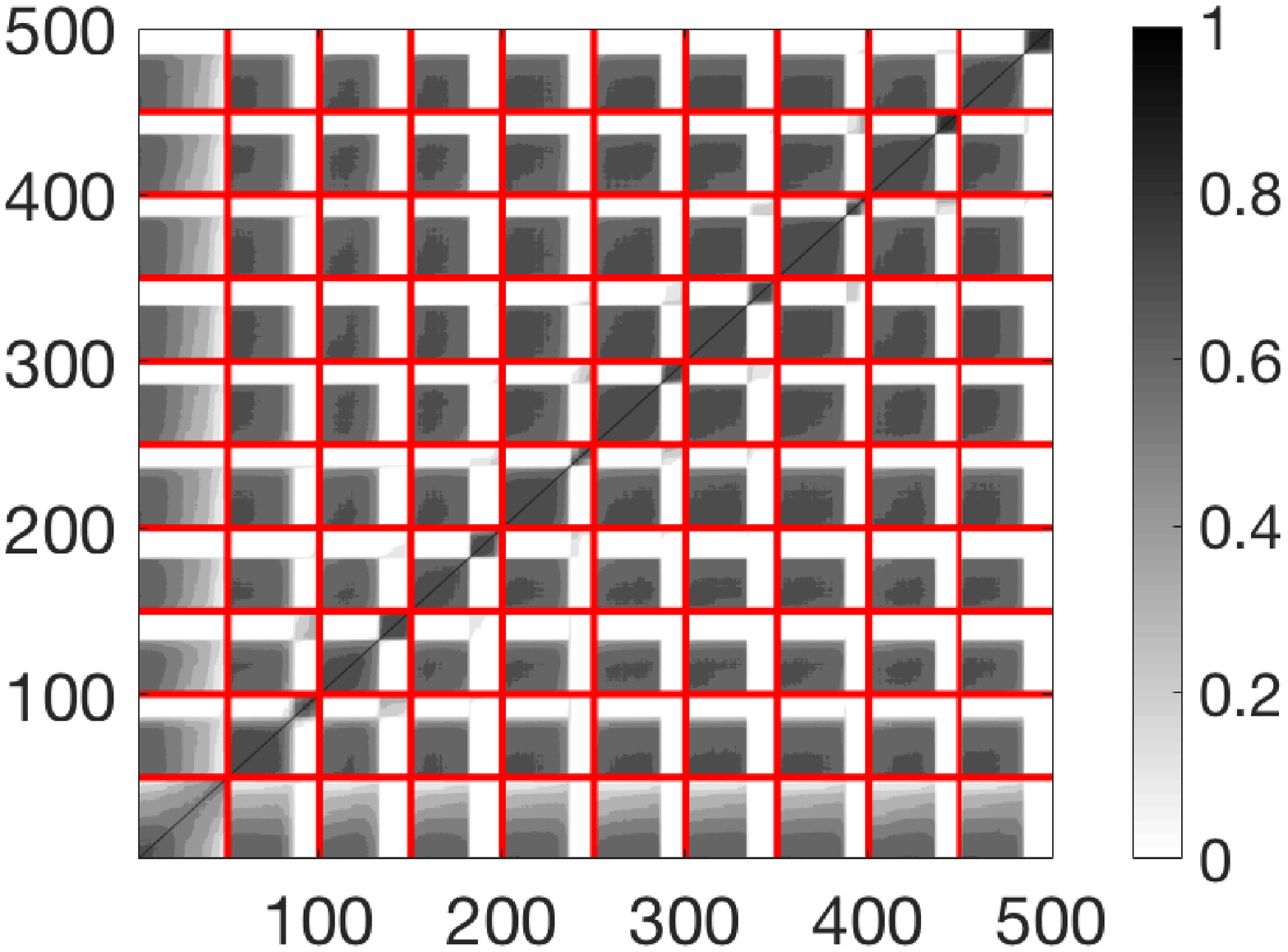}&     
      \includegraphics[width=3.5cm]{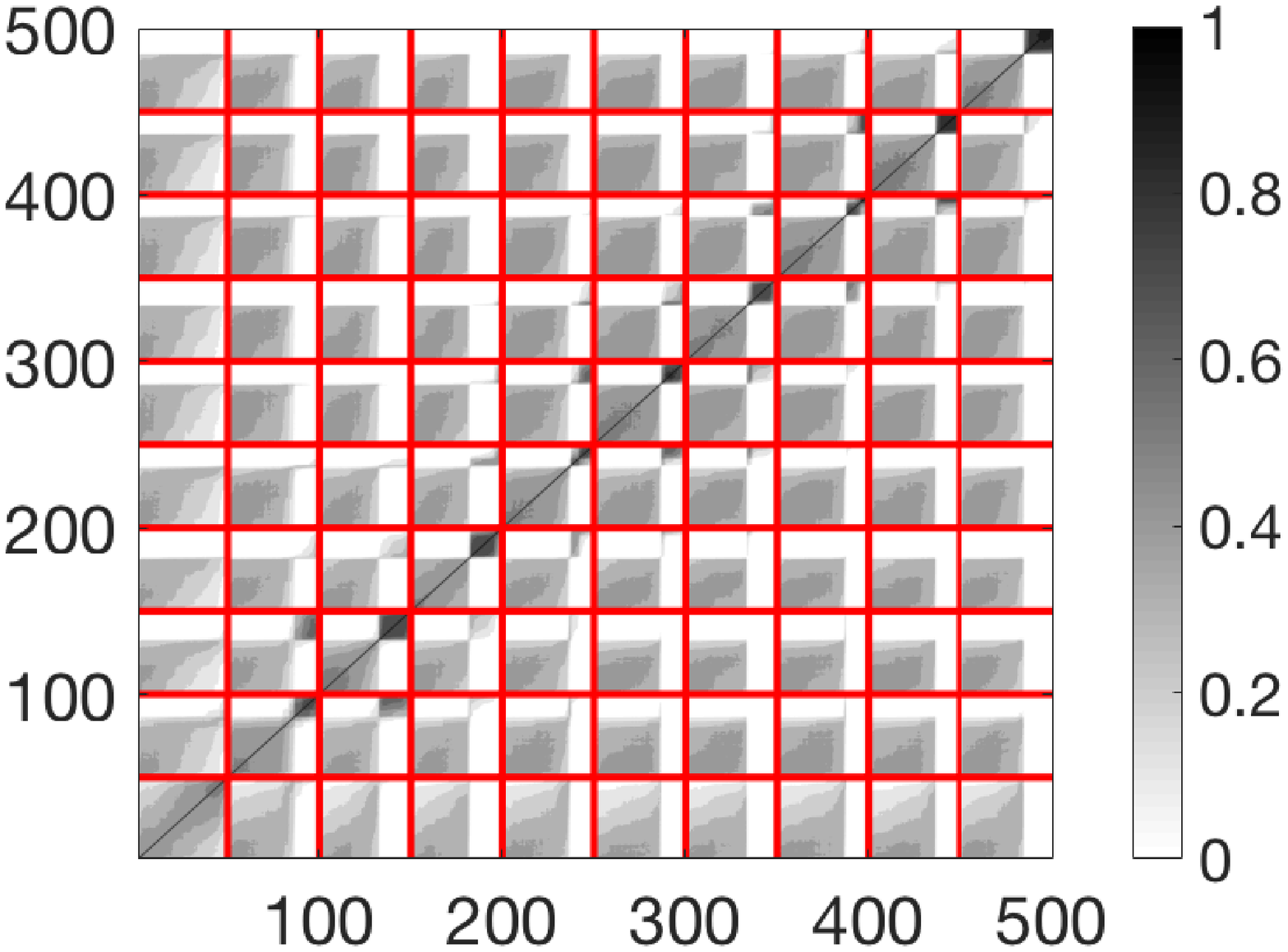}&\includegraphics[width=3.5cm]{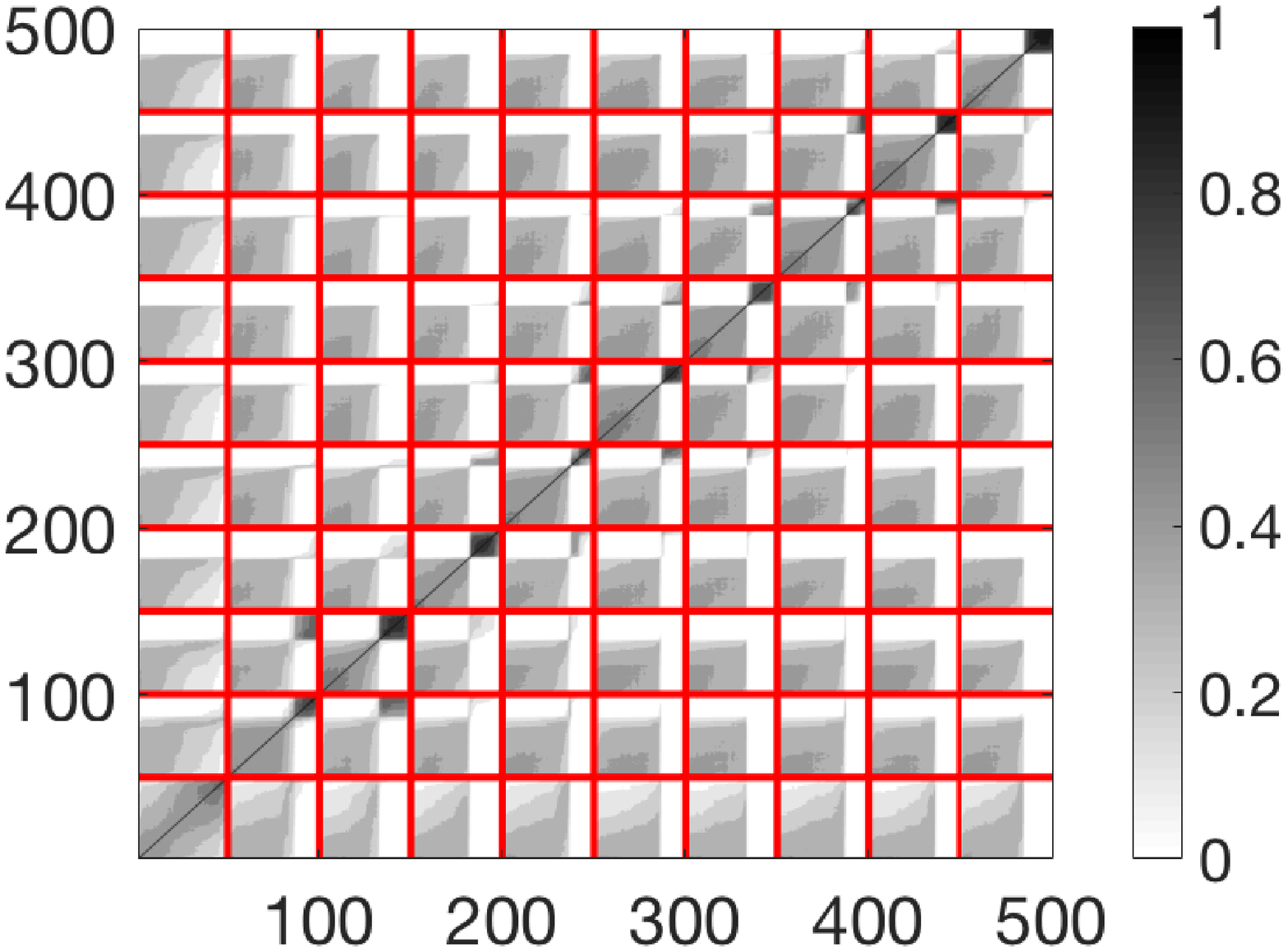}\\
  \end{tabular}
\end{figure}

\begin{figure}[p]
  \caption{Predictive density for the three-component (two-component) normal mixture experiment. Columns: the predictive for the first (first), second (fifth) ad third (tenth) restaurant. In each panel the predictive for HDP and HDPYP (first row), the HPYP and HPYDP (second row) and HGP, HGPYP and HGDP (third row).}
  \label{Fig:simpredictive}
 \centering
  \setlength{\tabcolsep}{2pt} 
  \begin{tabular}{ccc}
      \multicolumn{3}{c}{(a) Three-component normal mixture experiment}\\
      \includegraphics[width=3.6cm]{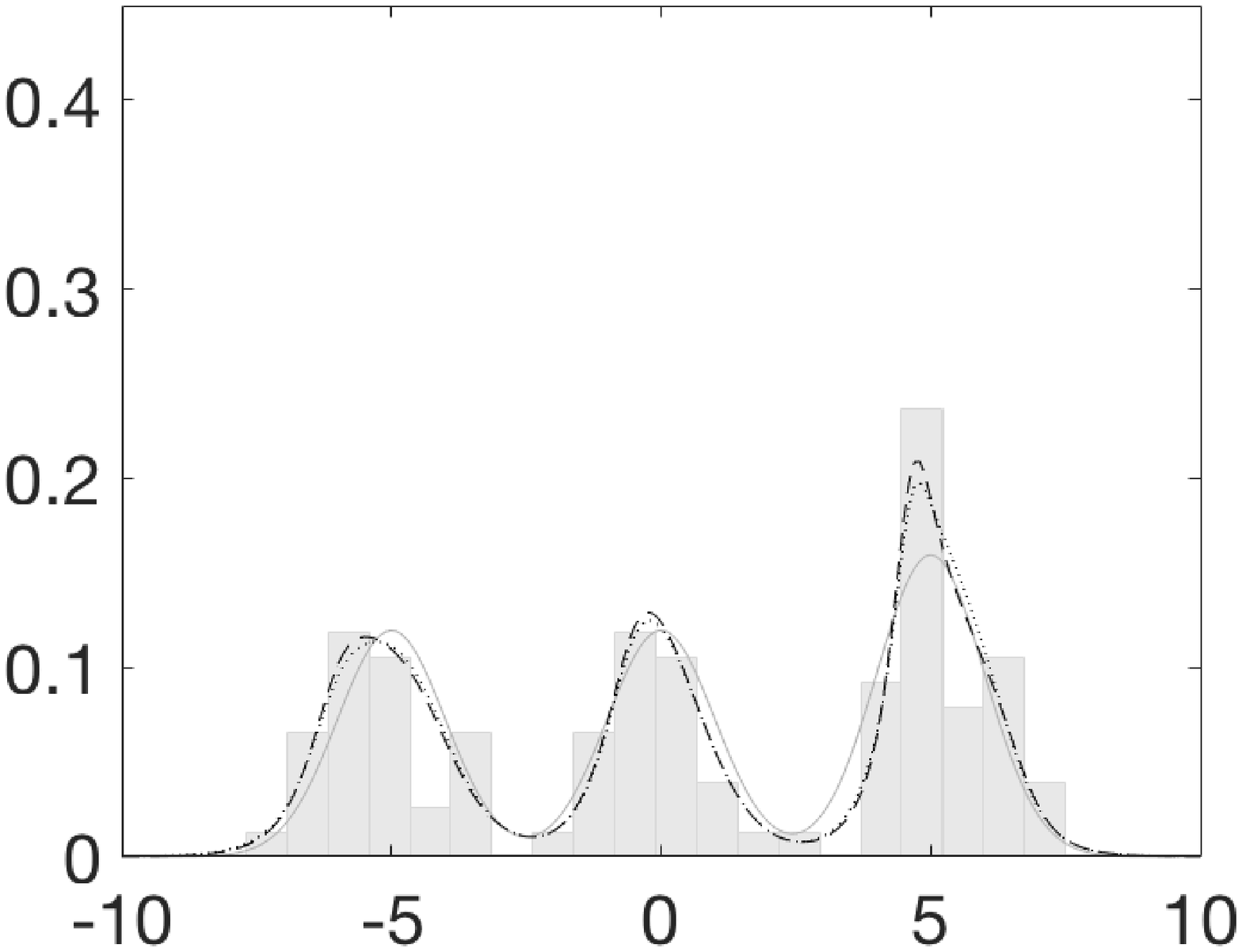}&
      \includegraphics[width=3.6cm]{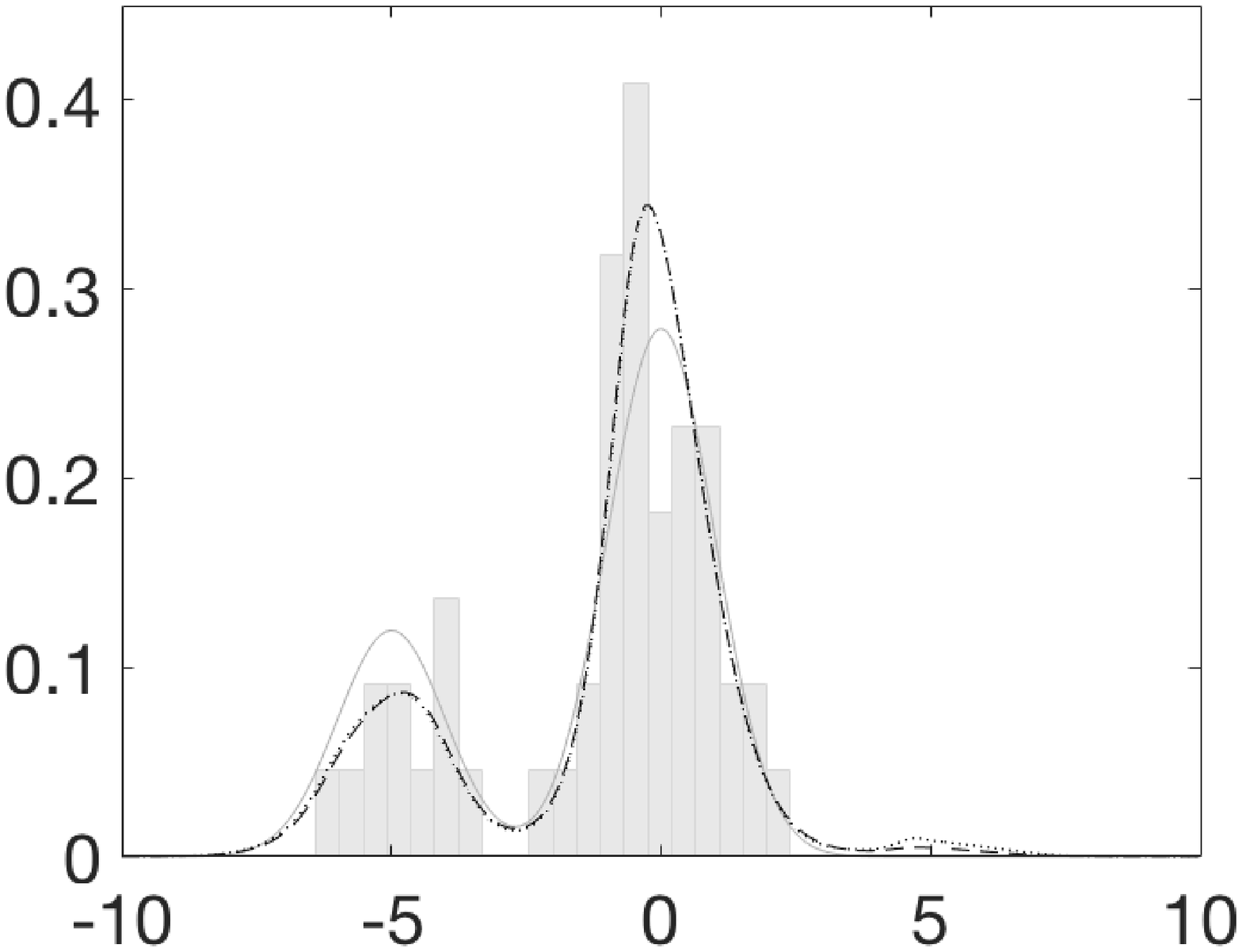}&
      \includegraphics[width=3.6cm]{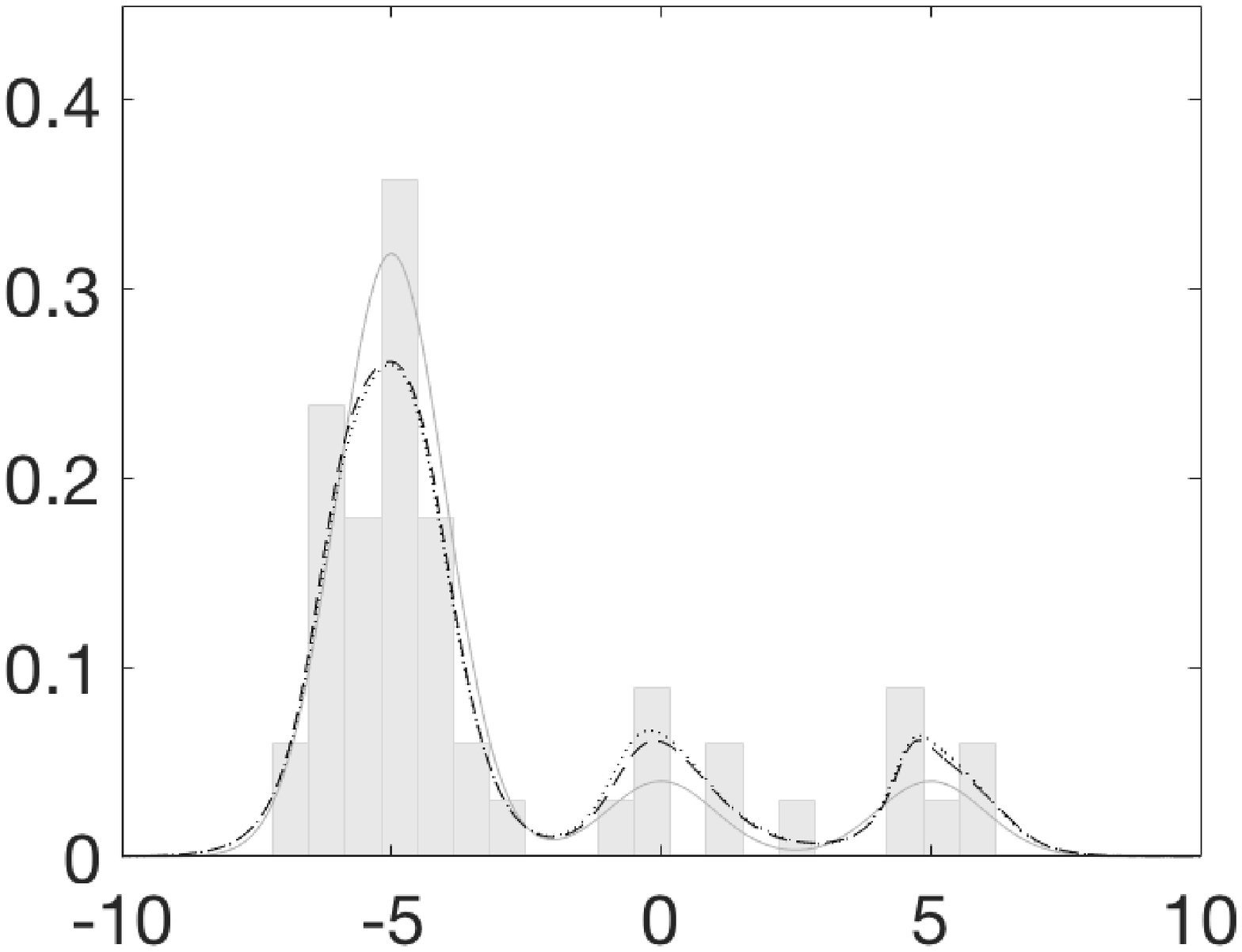}\\
      \includegraphics[width=3.6cm]{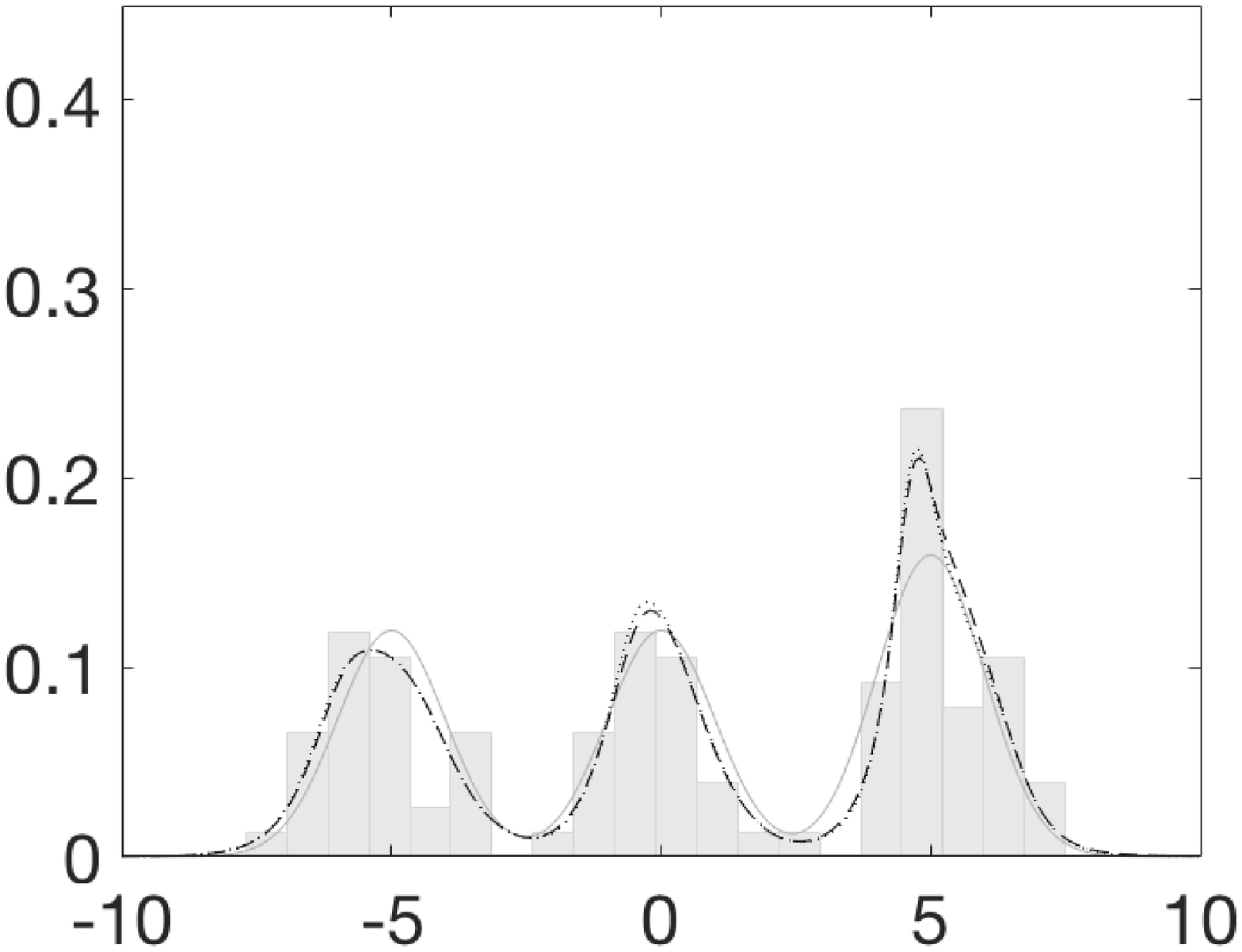}&
      \includegraphics[width=3.6cm]{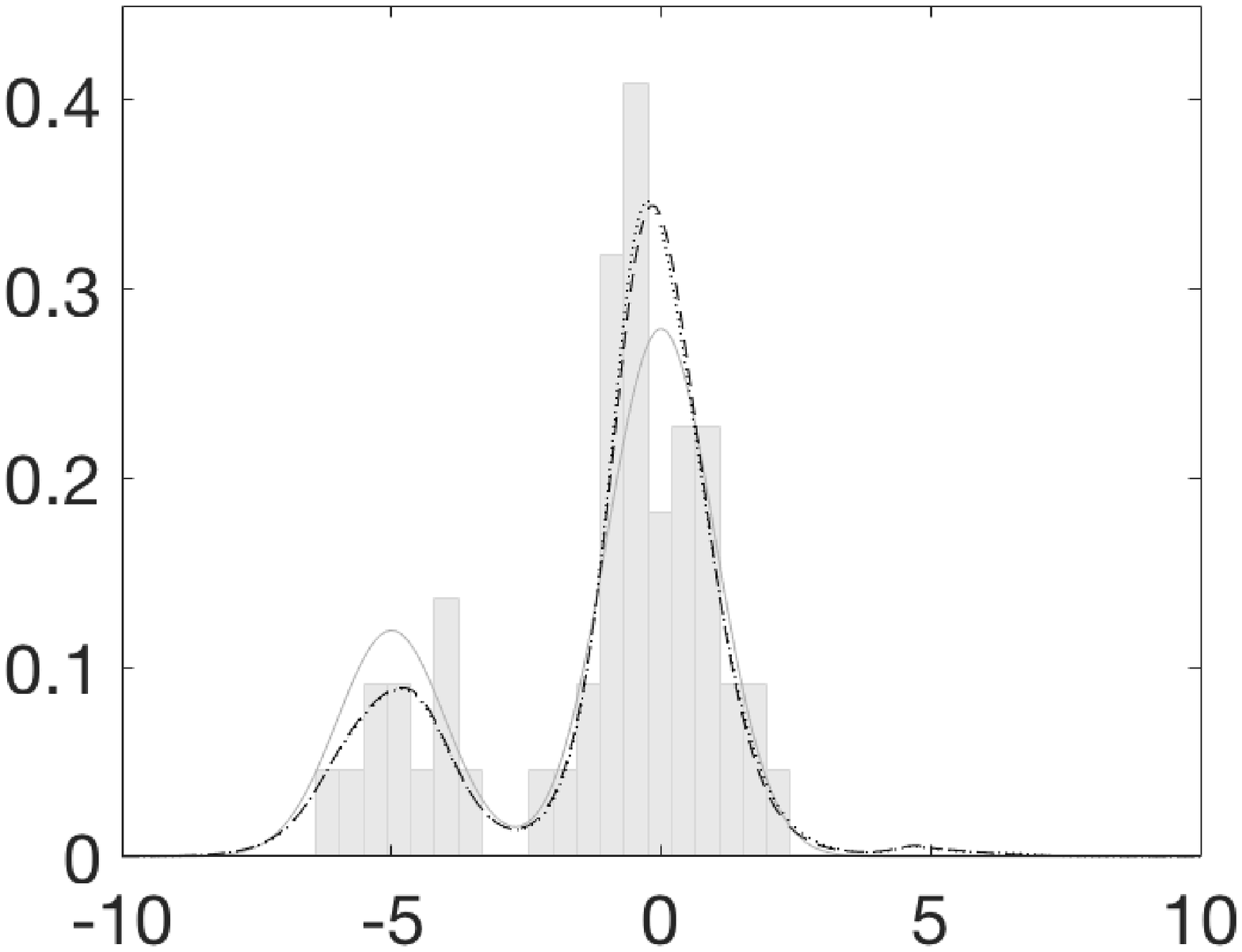}&
      \includegraphics[width=3.6cm]{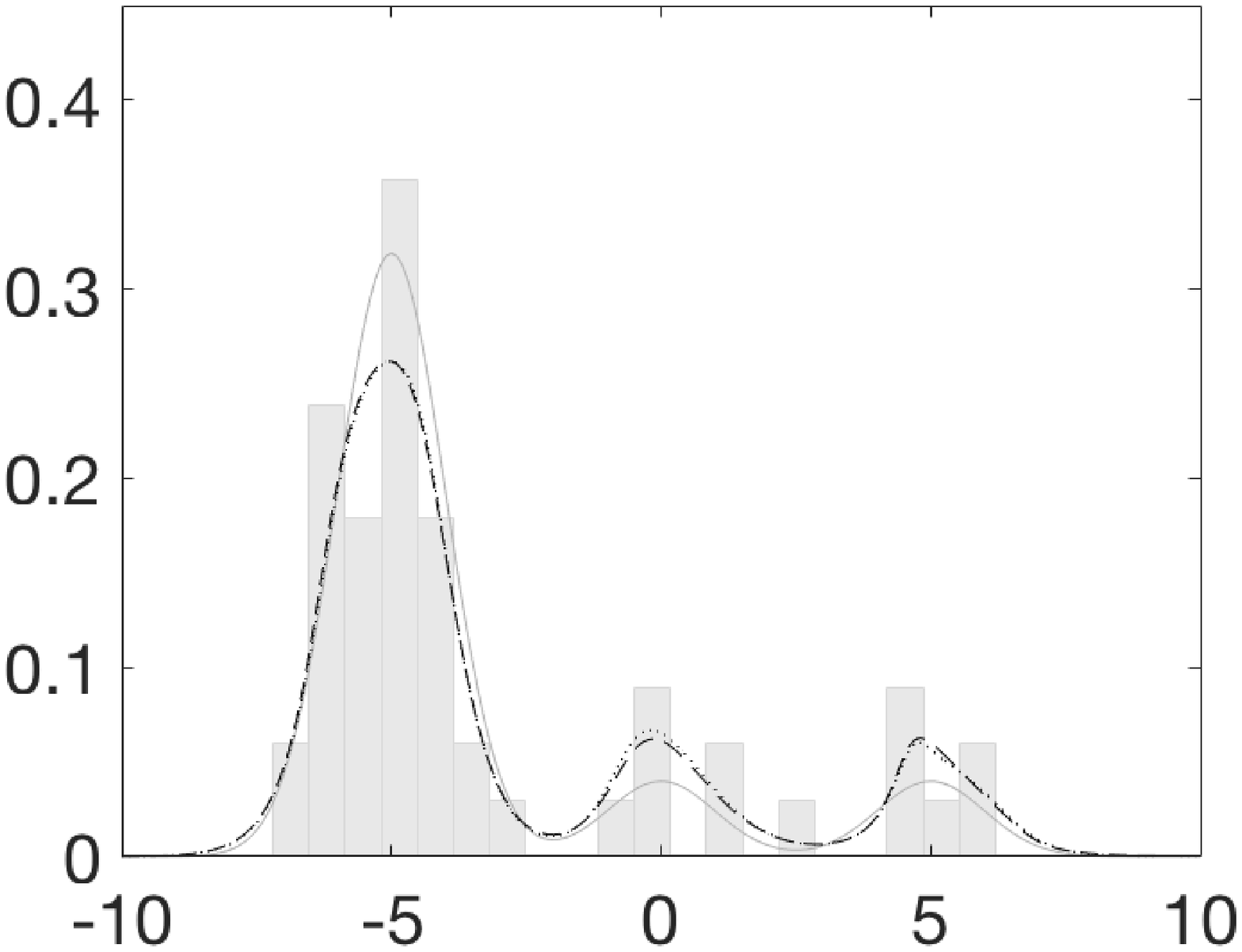}\\
      \includegraphics[width=3.6cm]{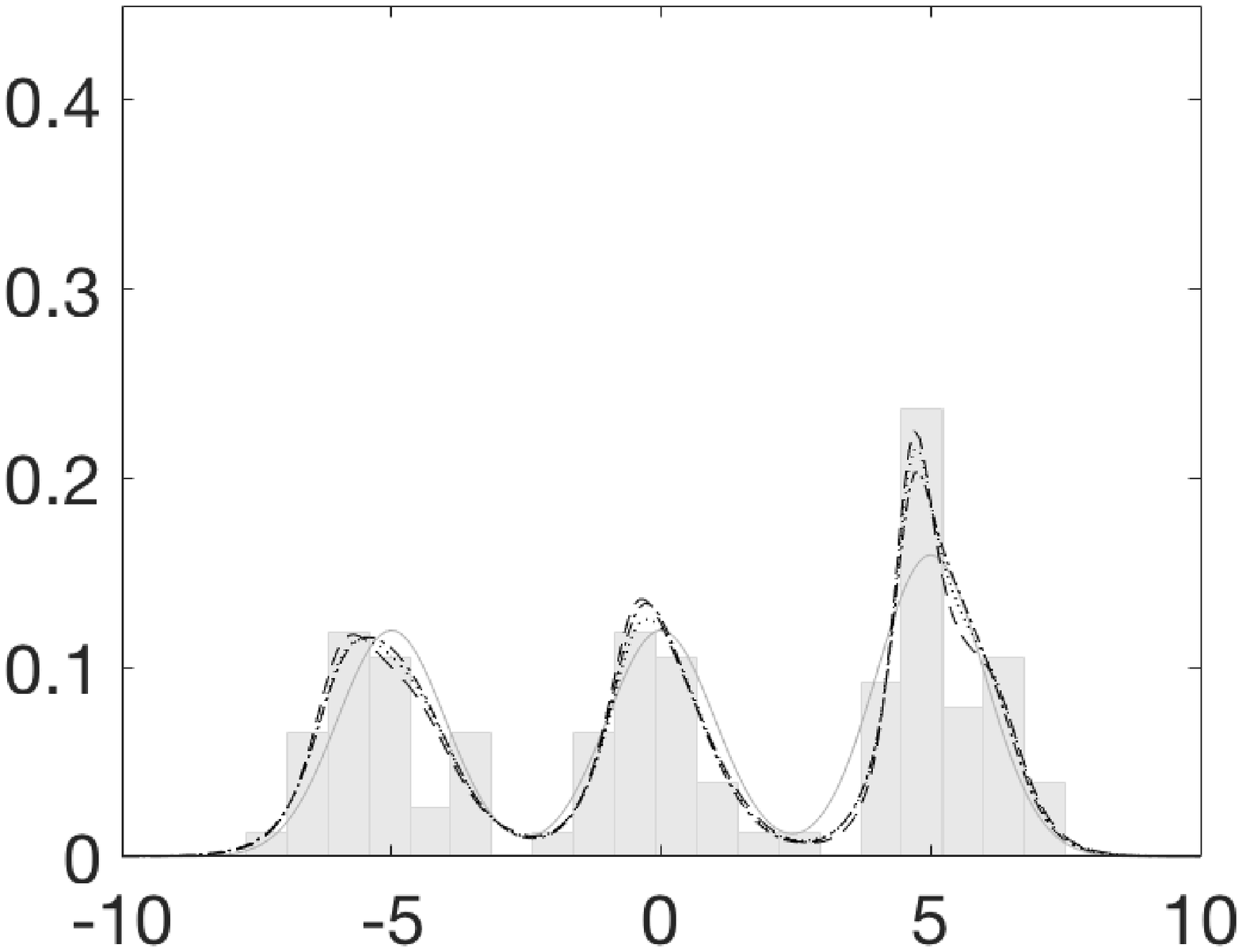}&
      \includegraphics[width=3.6cm]{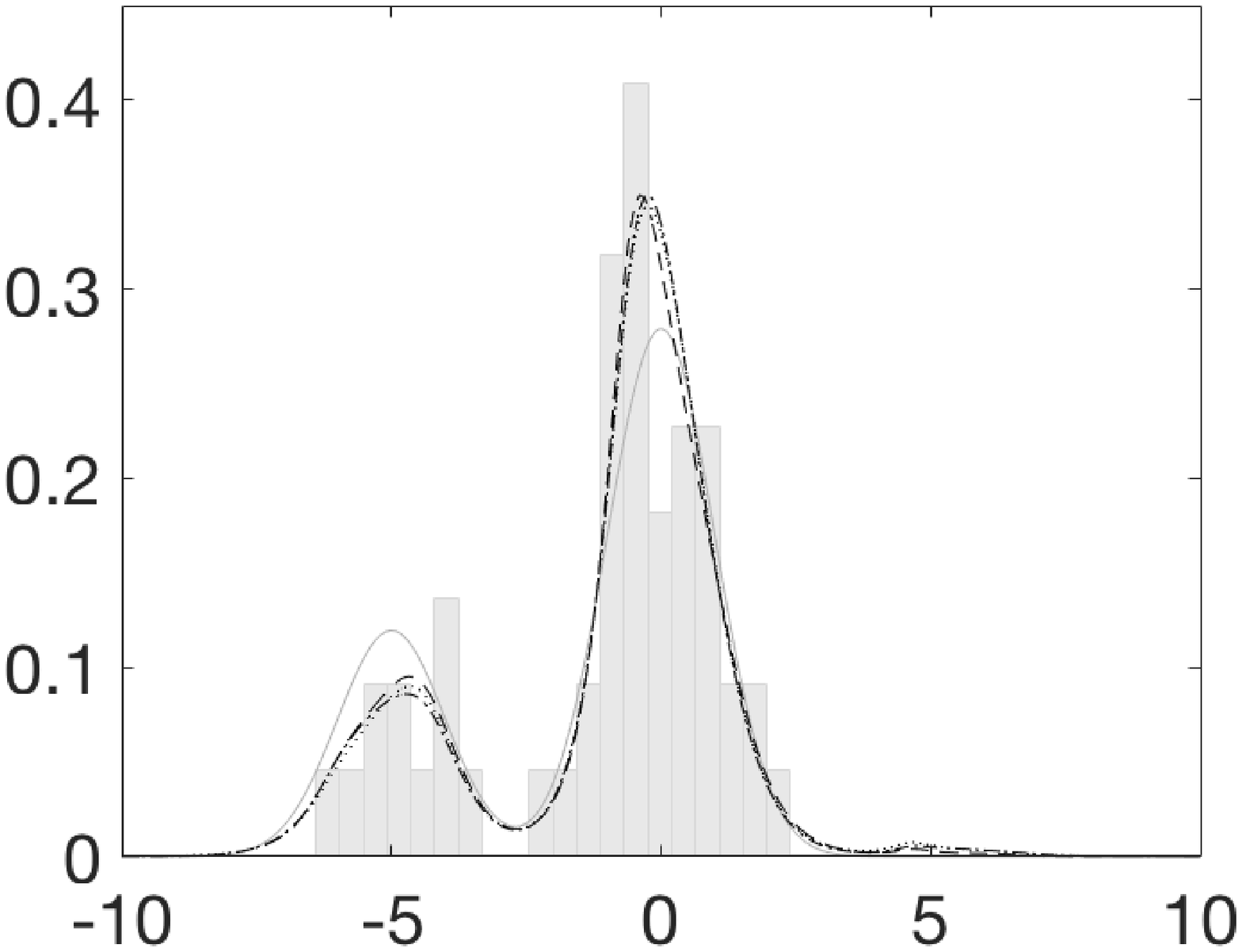}&
      \includegraphics[width=3.6cm]{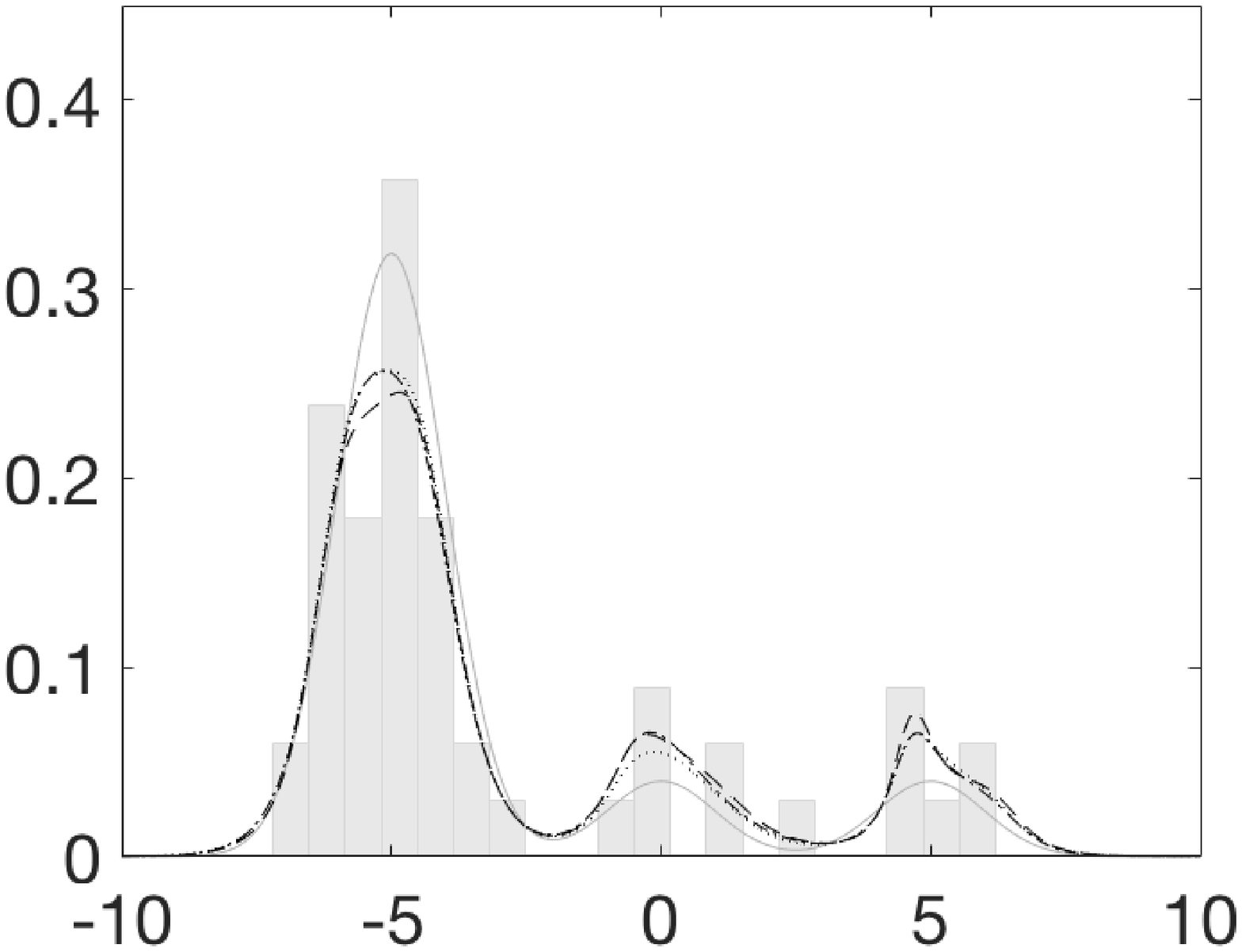}\vspace{10pt}\\
      \multicolumn{3}{c}{(b) Two-component normal mixture experiment}\\
      \includegraphics[width=3.6cm]{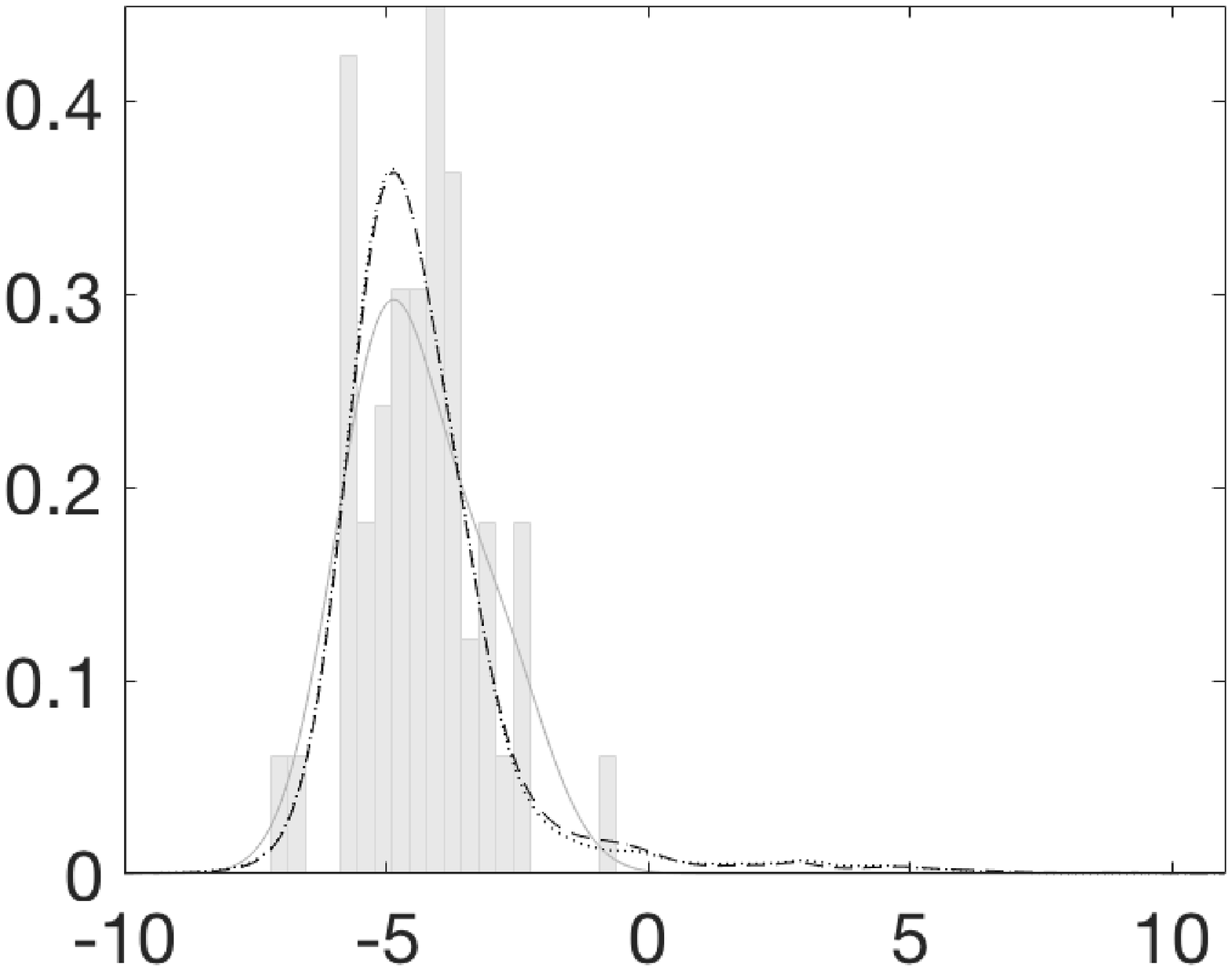}&
      \includegraphics[width=3.6cm]{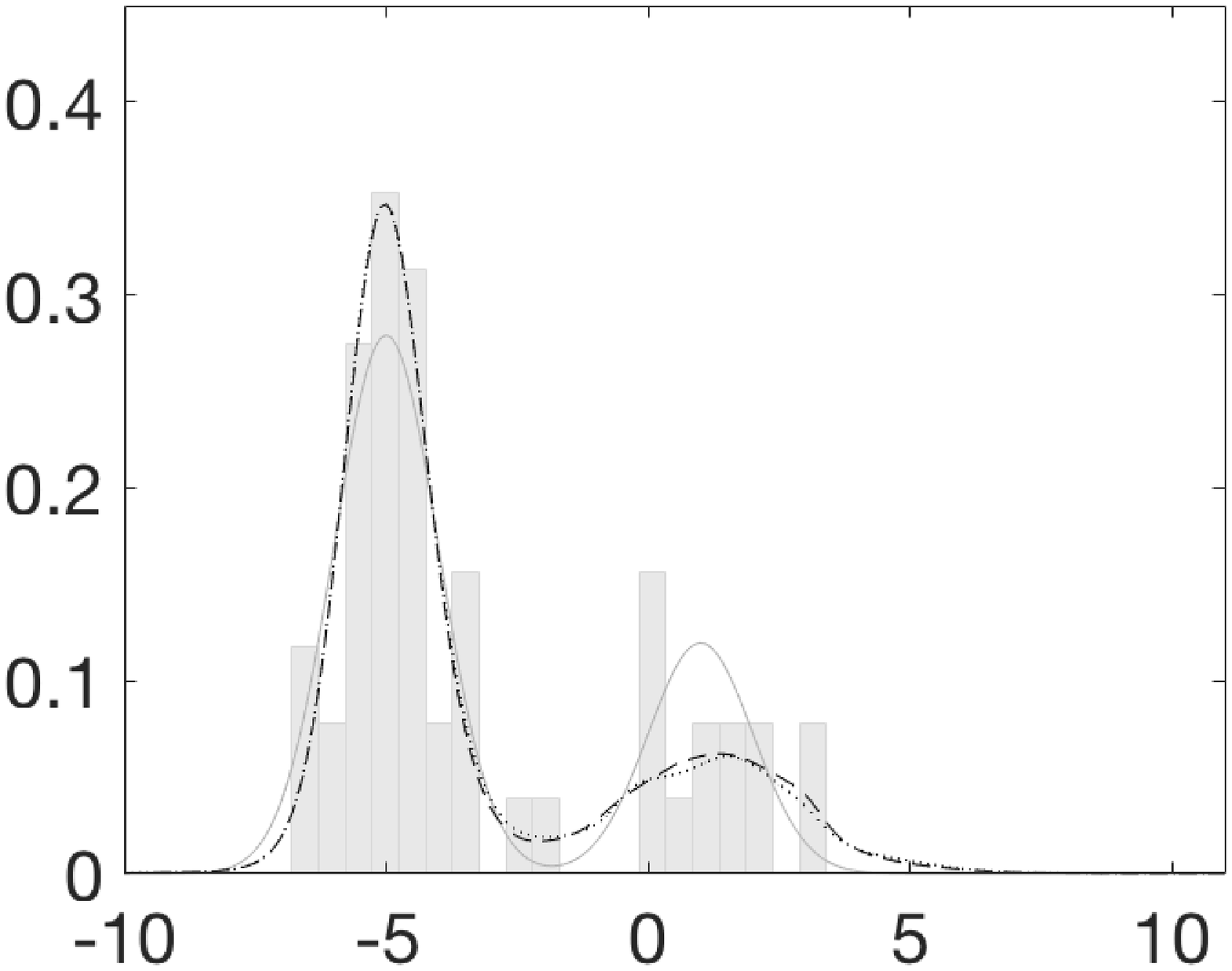}&
      \includegraphics[width=3.6cm]{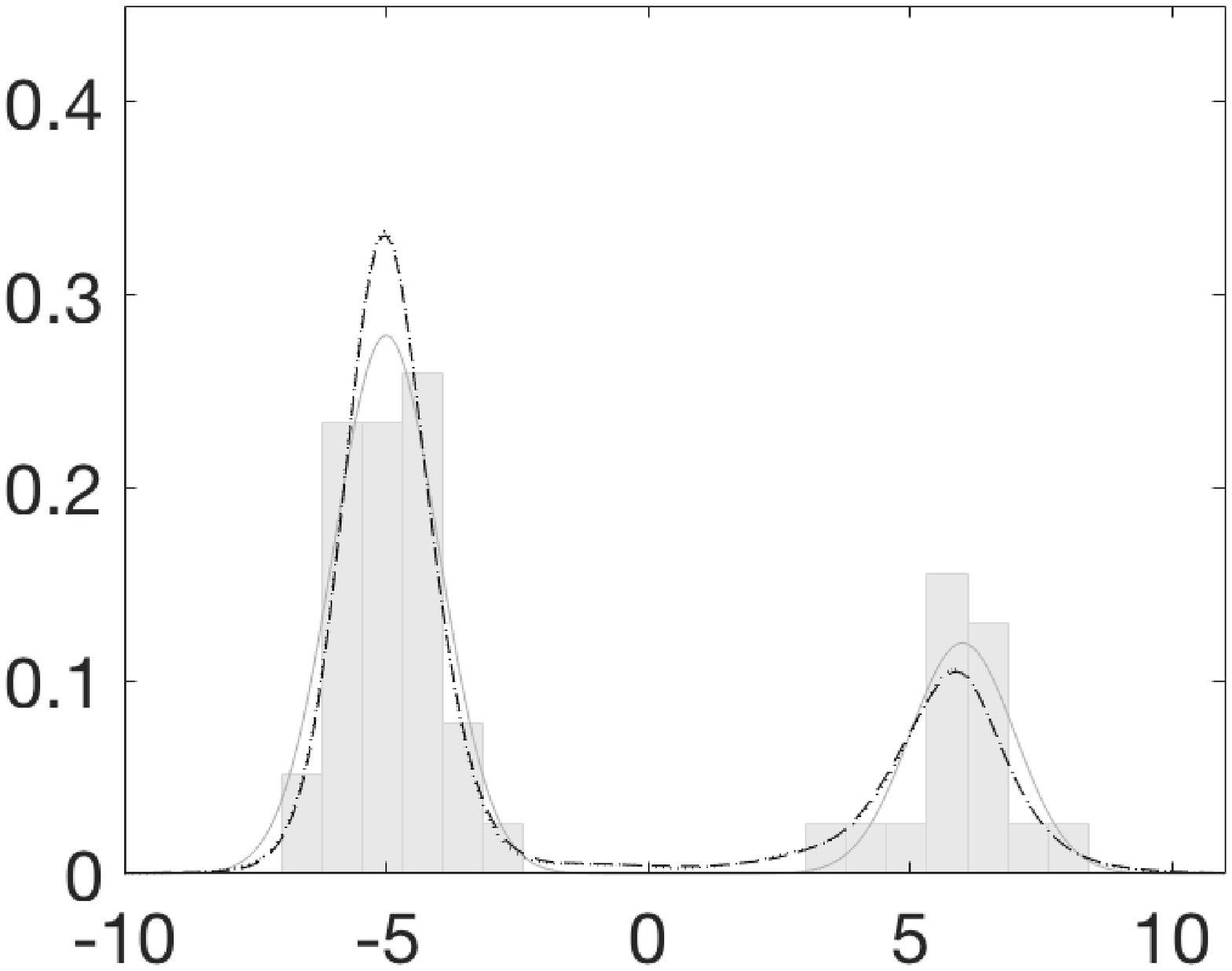}\\
      \includegraphics[width=3.6cm]{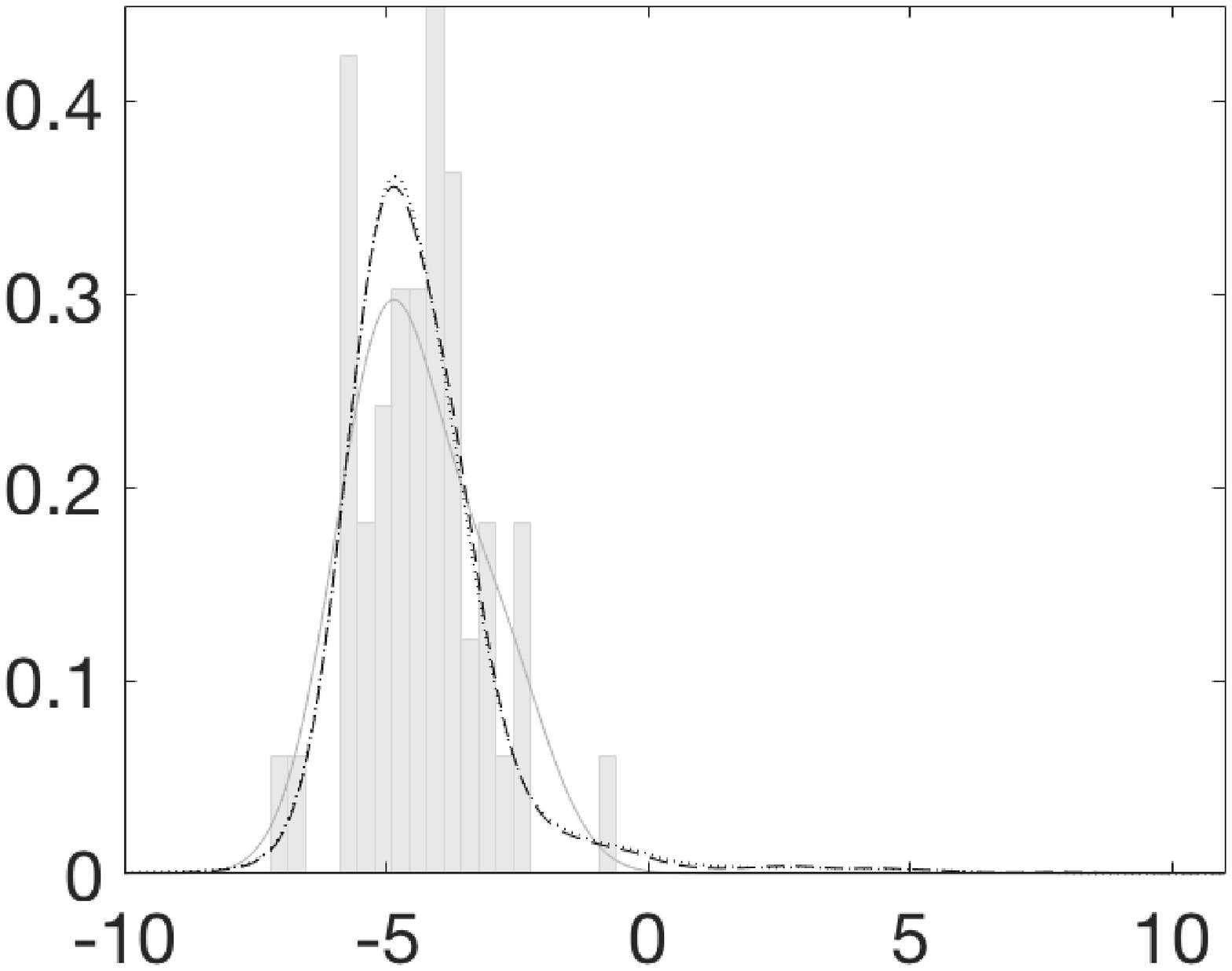}&
      \includegraphics[width=3.6cm]{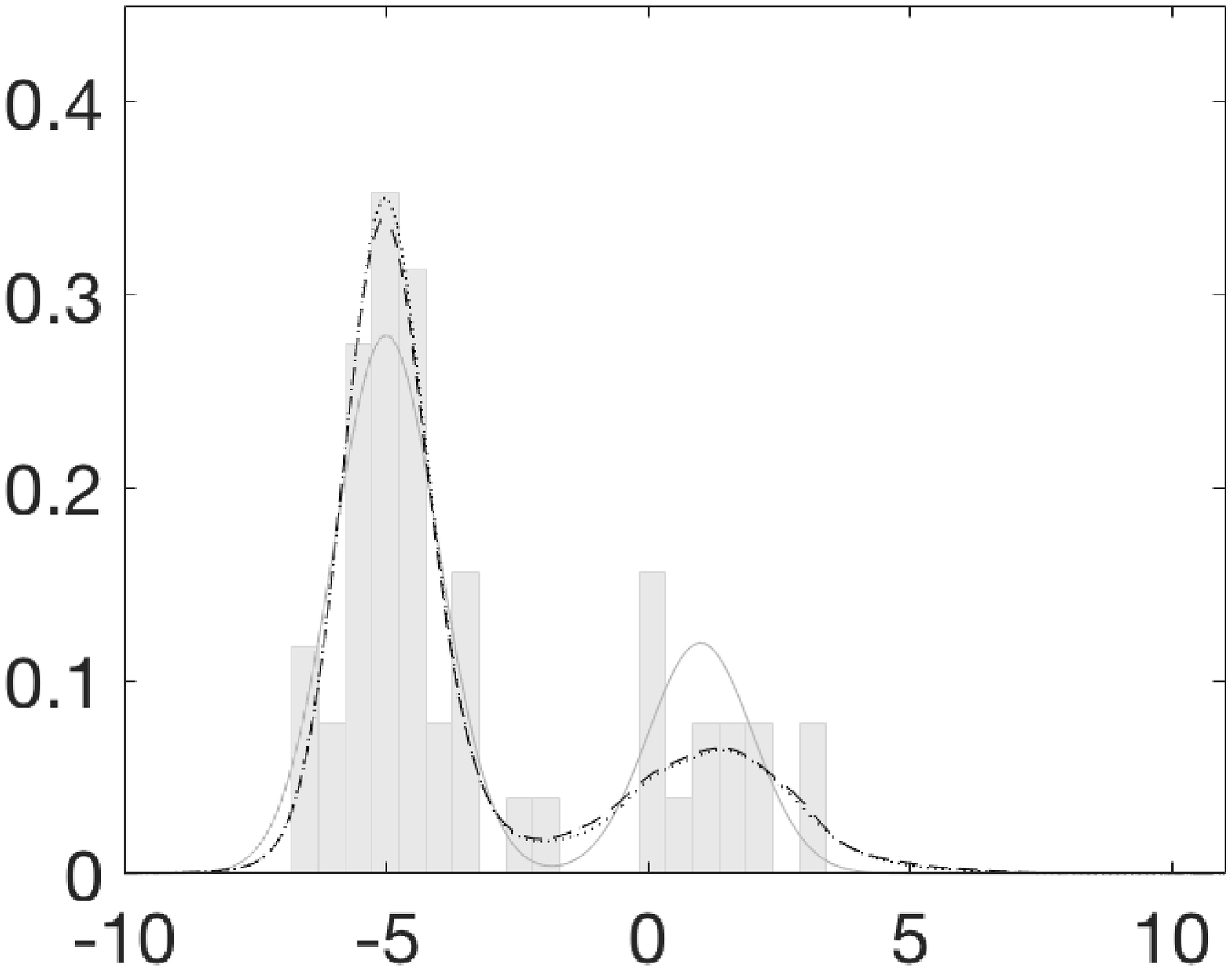}&
      \includegraphics[width=3.6cm]{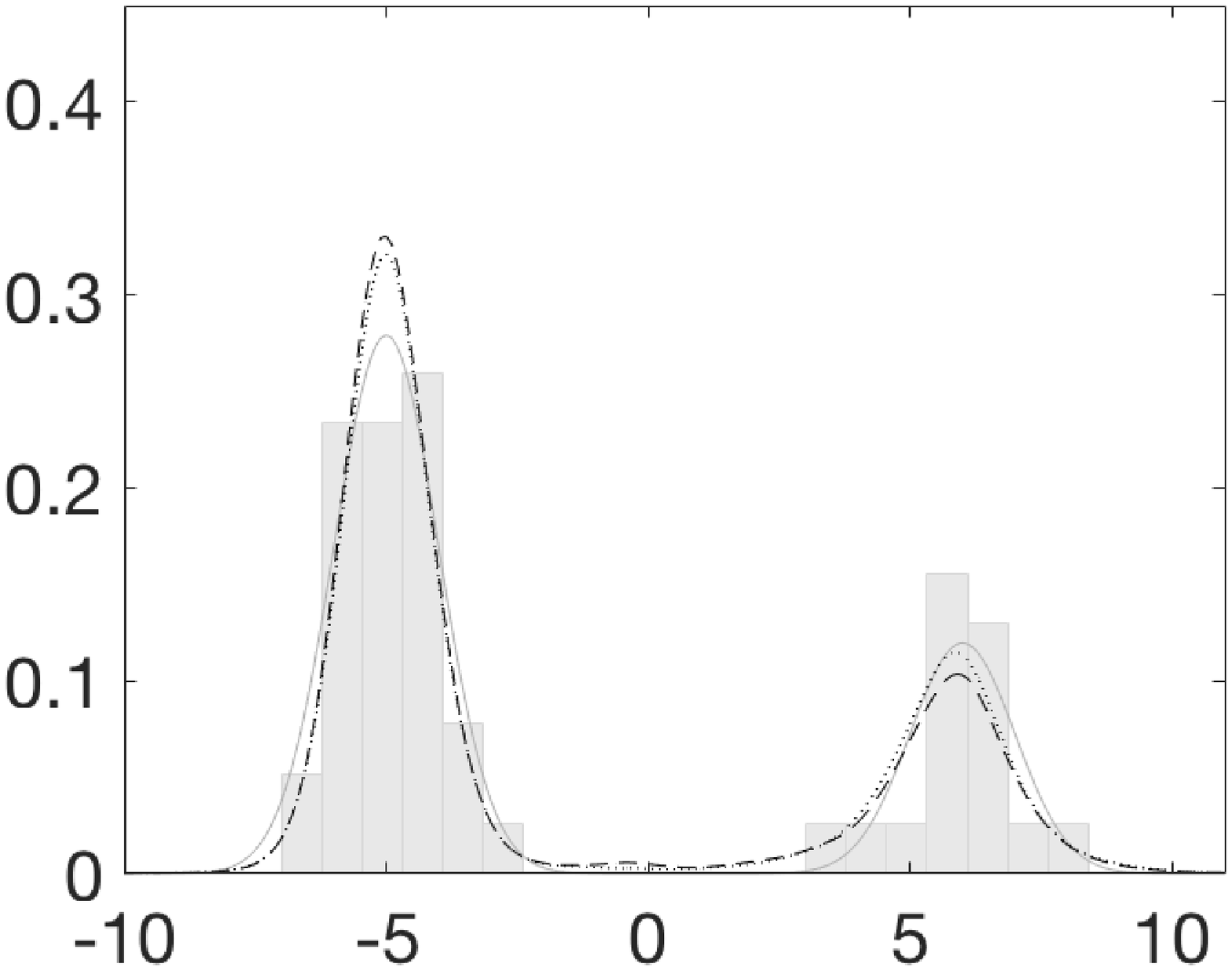}\\
      \includegraphics[width=3.6cm]{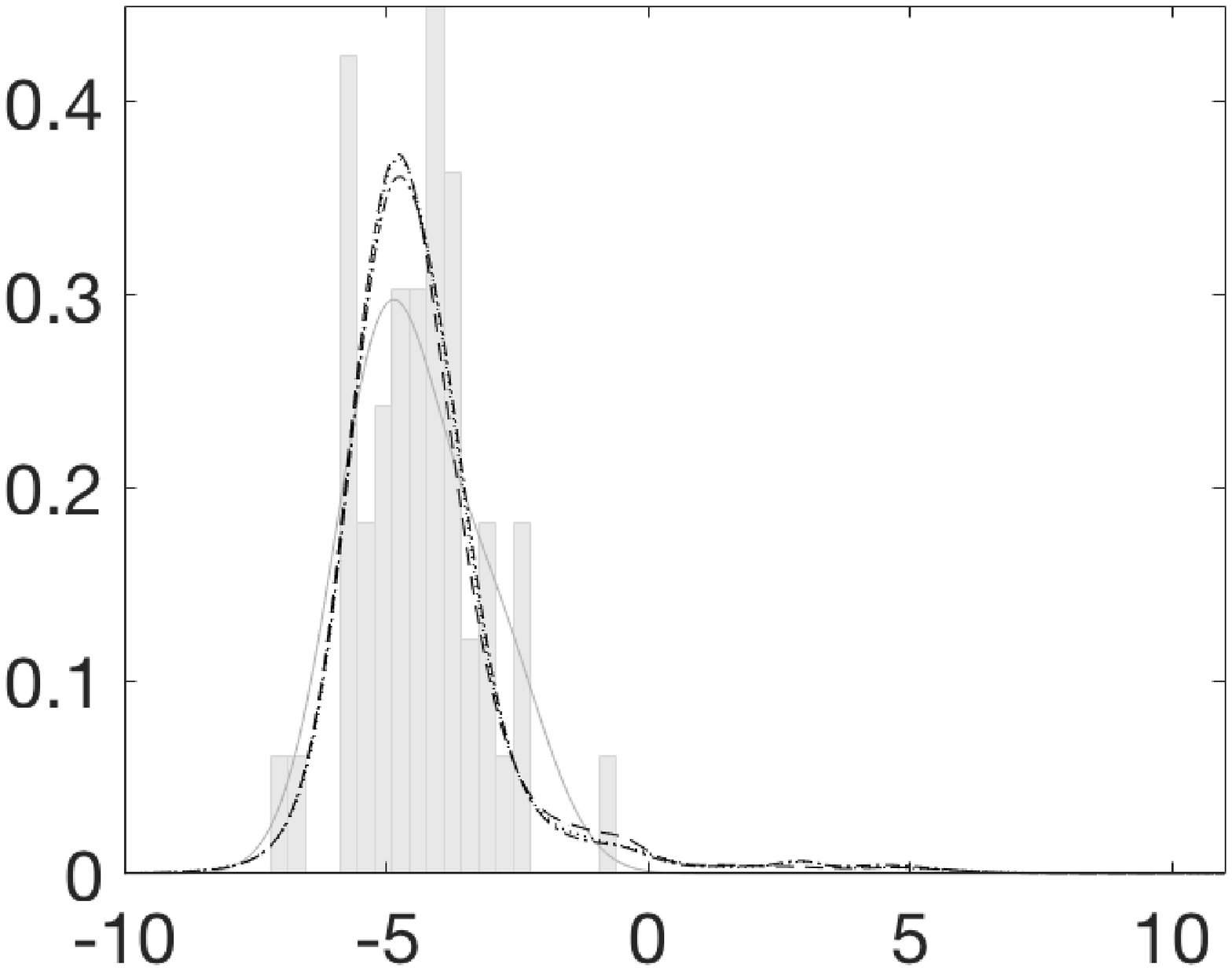}&
      \includegraphics[width=3.6cm]{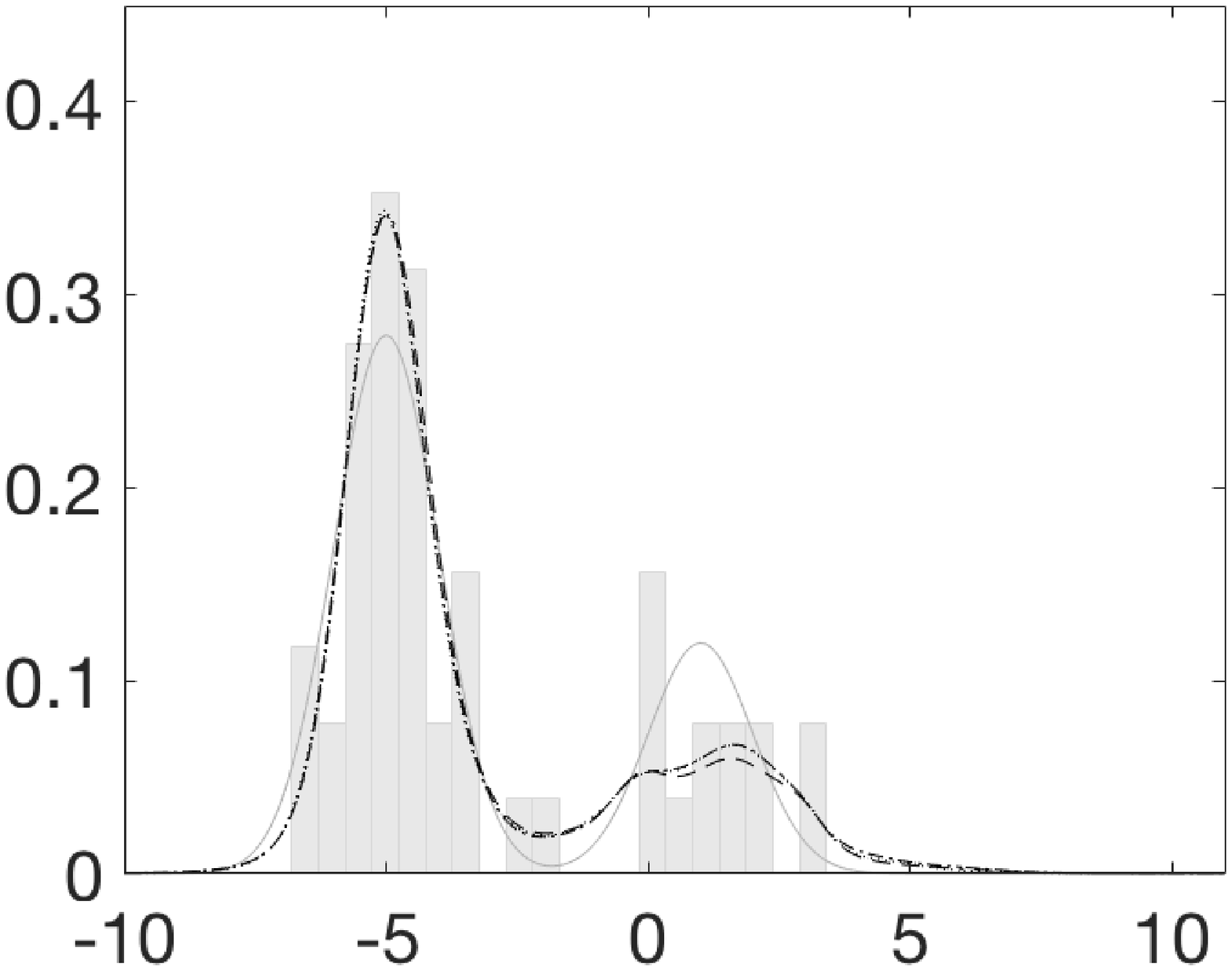}&
      \includegraphics[width=3.6cm]{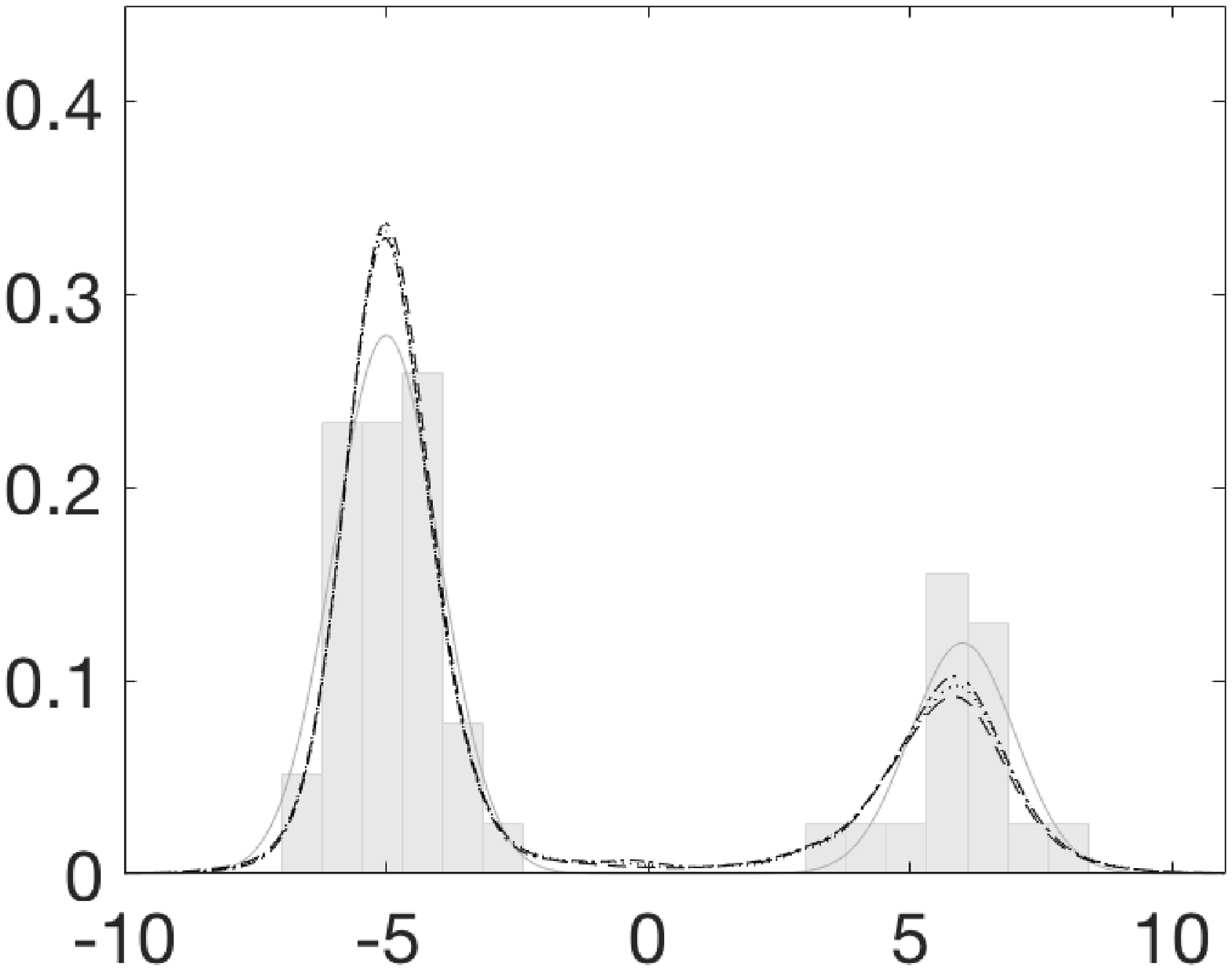}\\
       \end{tabular}
\end{figure}

\clearpage

\begin{figure}[h!]
  \caption{Posterior total number of clusters for the three-component normal (panel (a)) and the two-component normal (panel (b)) mixture experiments.}
  \label{Fig:SimNumberClust}
 \centering
  \setlength{\tabcolsep}{-2pt} 
  \begin{tabular}{cccc}
      \multicolumn{4}{c}{(a) Three-component normal mixture experiment}\\
      \footnotesize{HDP}& \footnotesize{HPYP}& \footnotesize{HGP}&\vspace{-3pt}\\
      \includegraphics[width=3.5cm]{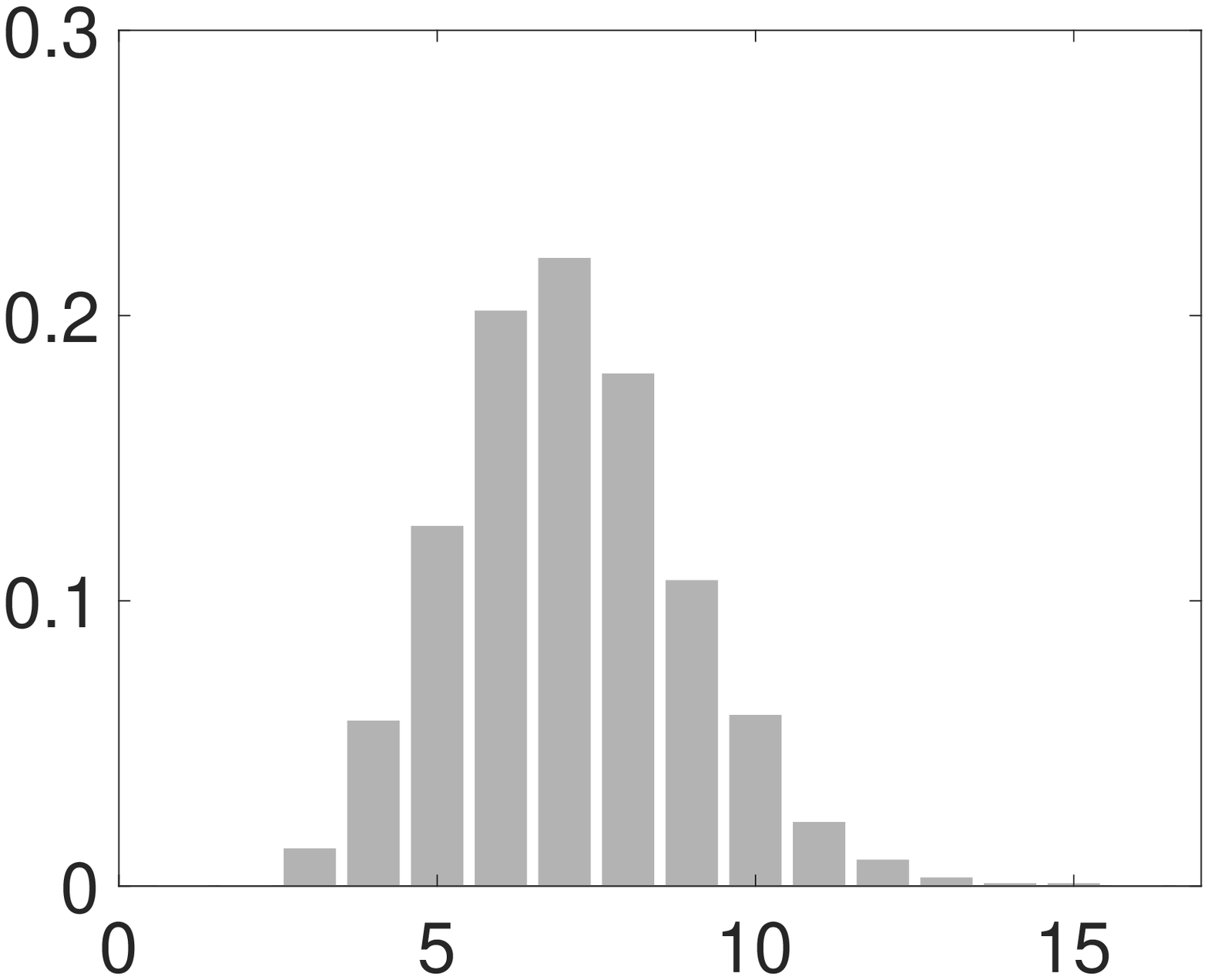}&      \includegraphics[width=3.5cm]{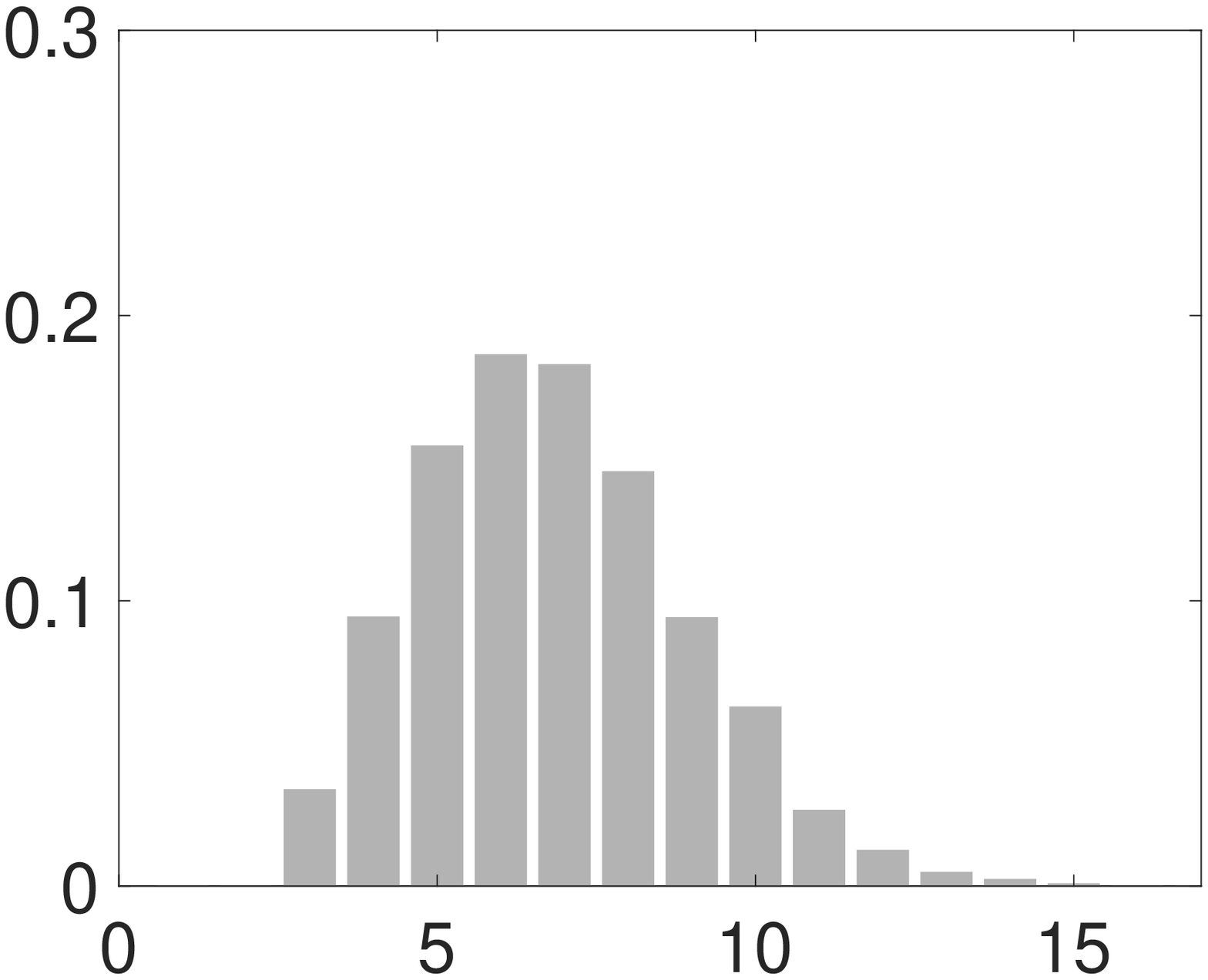}&      \includegraphics[width=3.5cm]{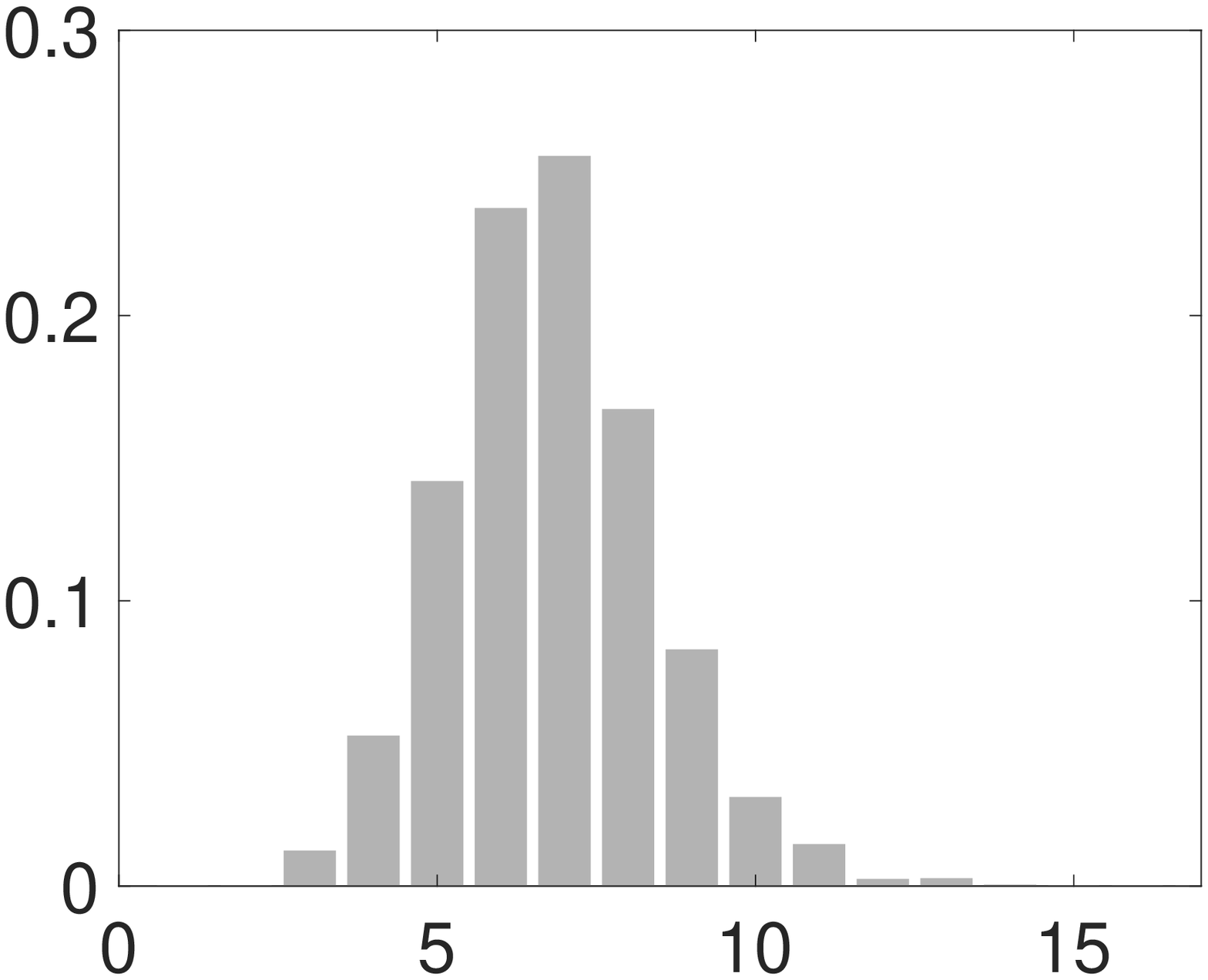}&\\      
      \footnotesize{HDPYP}& \footnotesize{HPYDP}& \footnotesize{HGDP}&\footnotesize{HGPYP}\vspace{-3pt}\\
      \includegraphics[width=3.5cm]{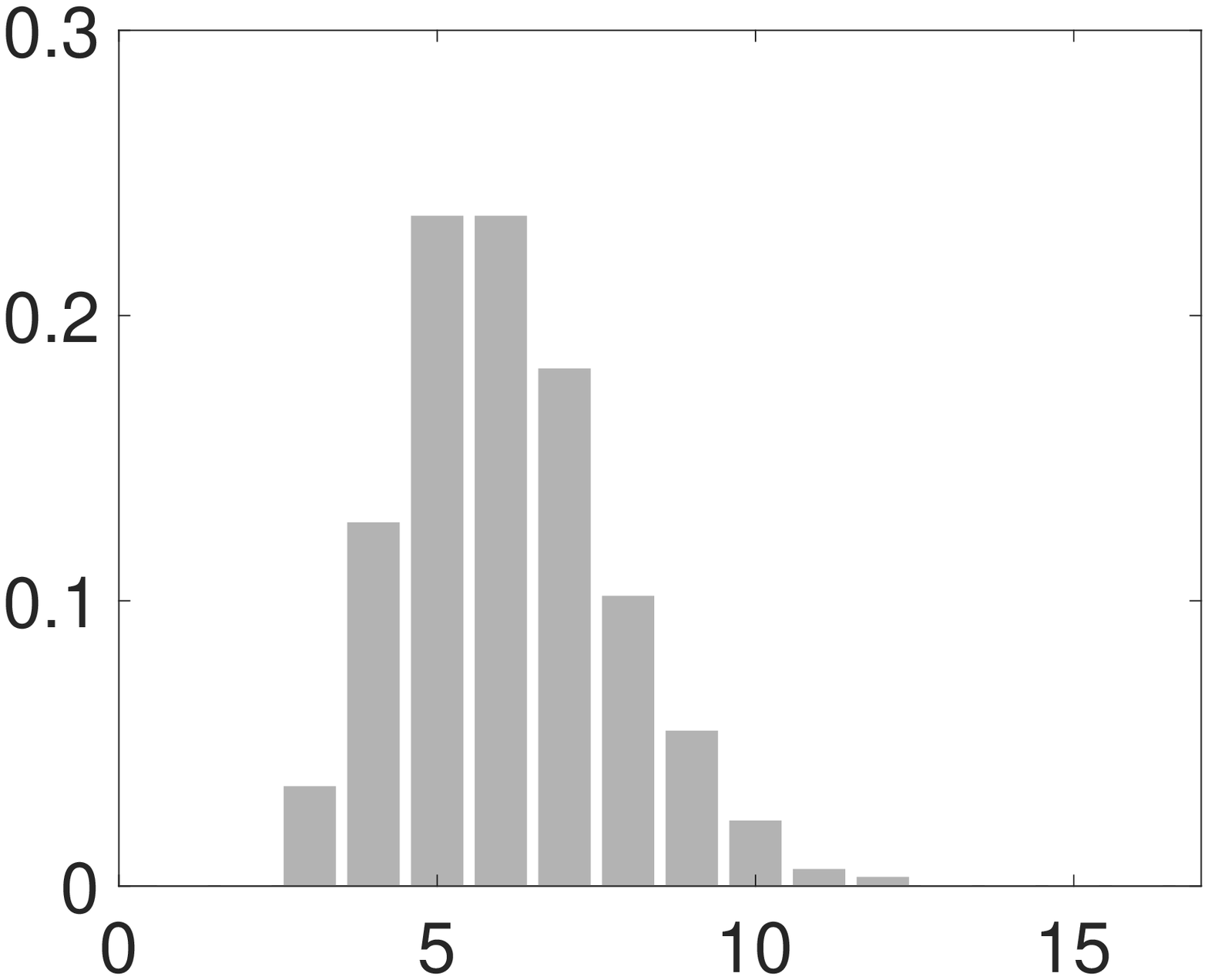}&         \includegraphics[width=3.5cm]{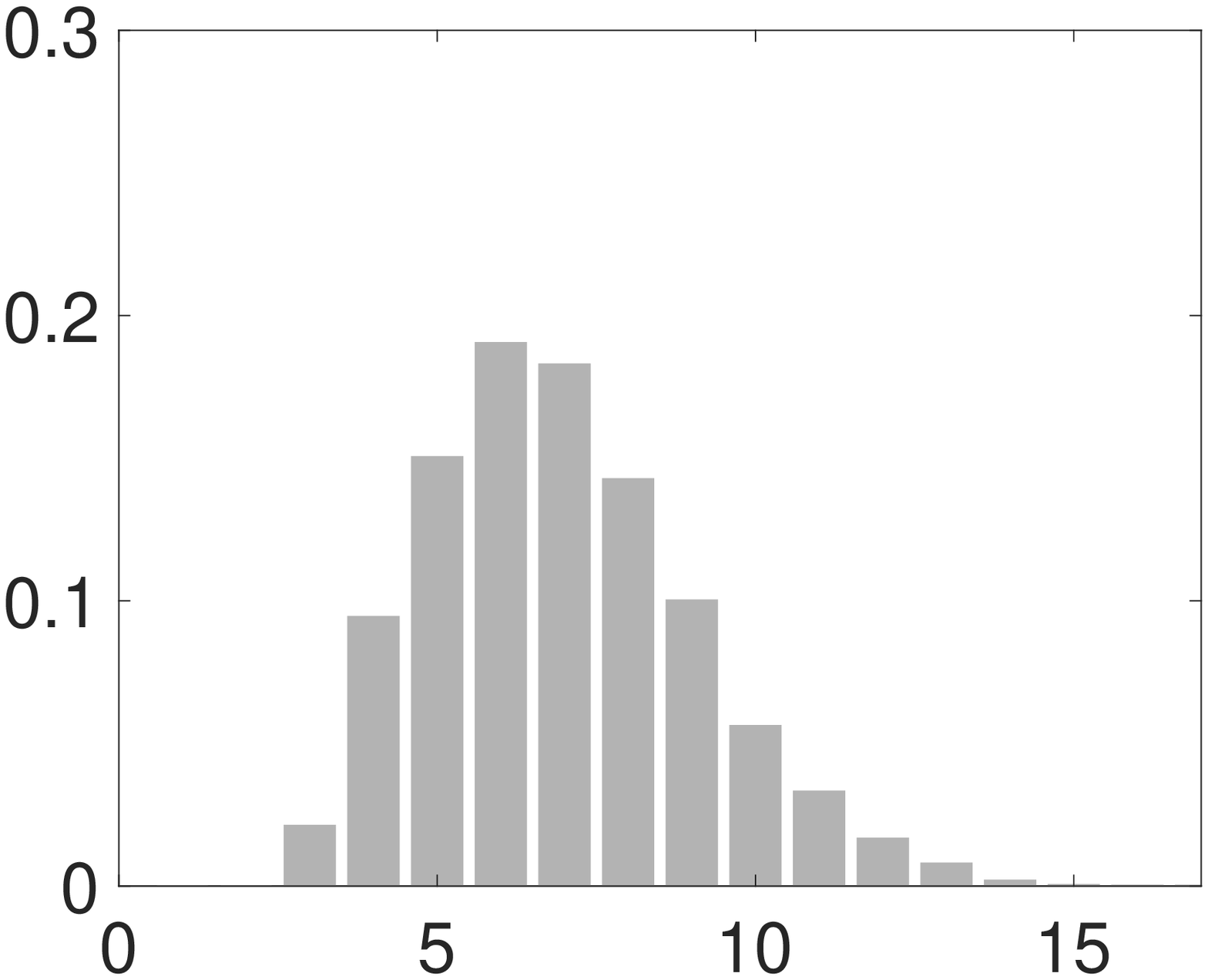}&     
      \includegraphics[width=3.5cm]{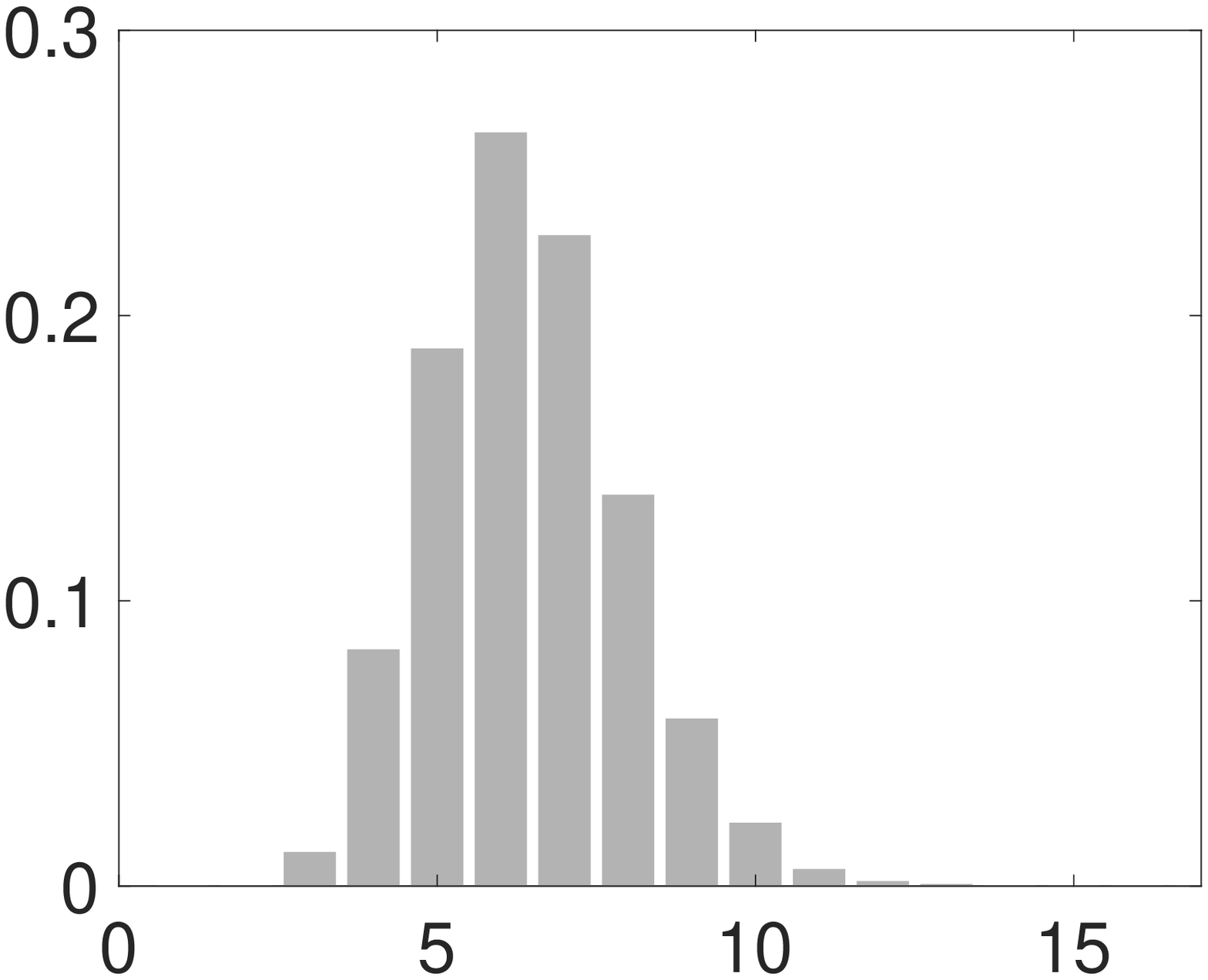}&      \includegraphics[width=3.5cm]{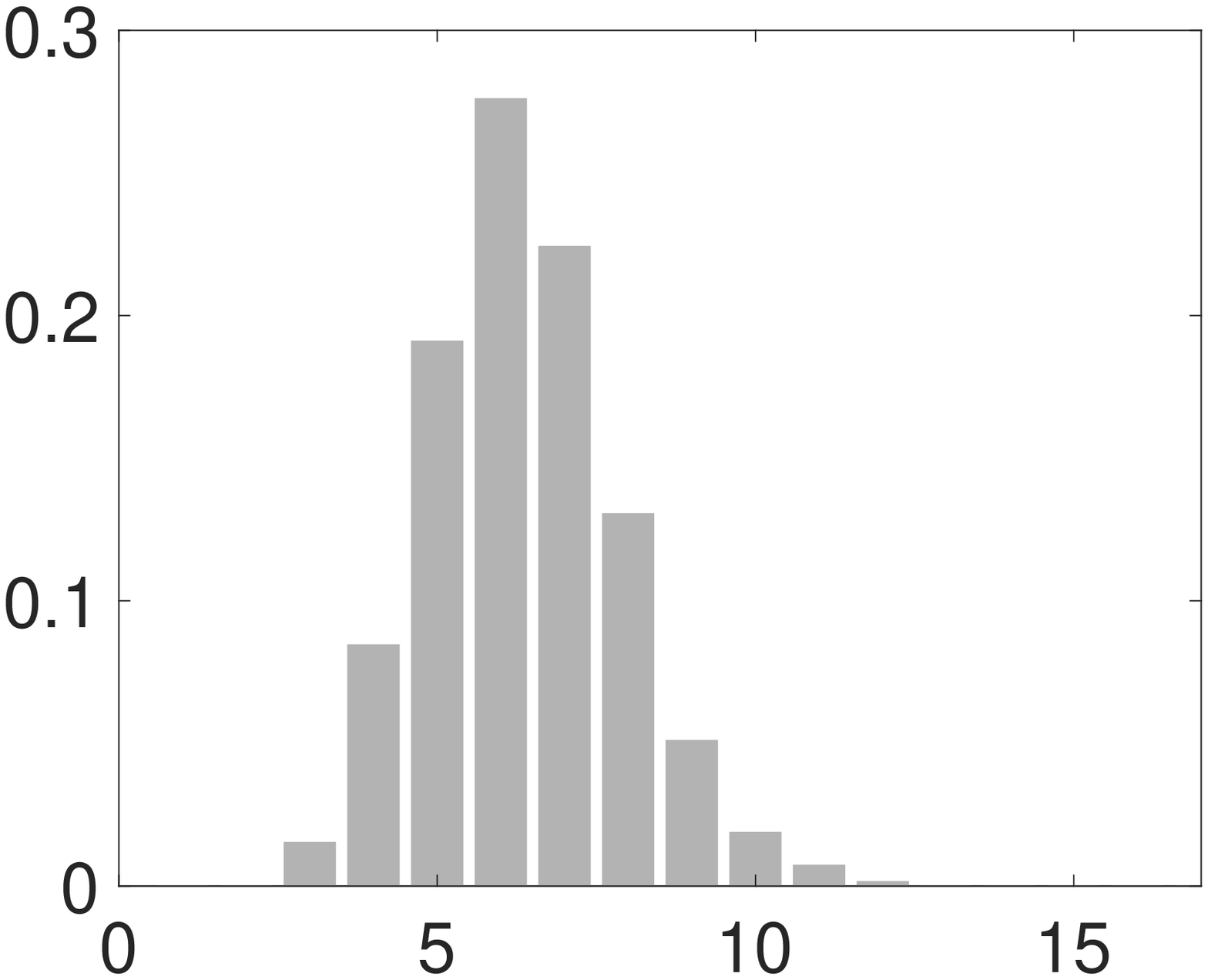}\vspace{10pt}\\
      \multicolumn{4}{c}{(b) Two-component normal mixture experiment}\\
      \footnotesize{HDP}& \footnotesize{HPYP}& \footnotesize{HGP}\vspace{-3pt}\\
      \includegraphics[width=3.5cm]{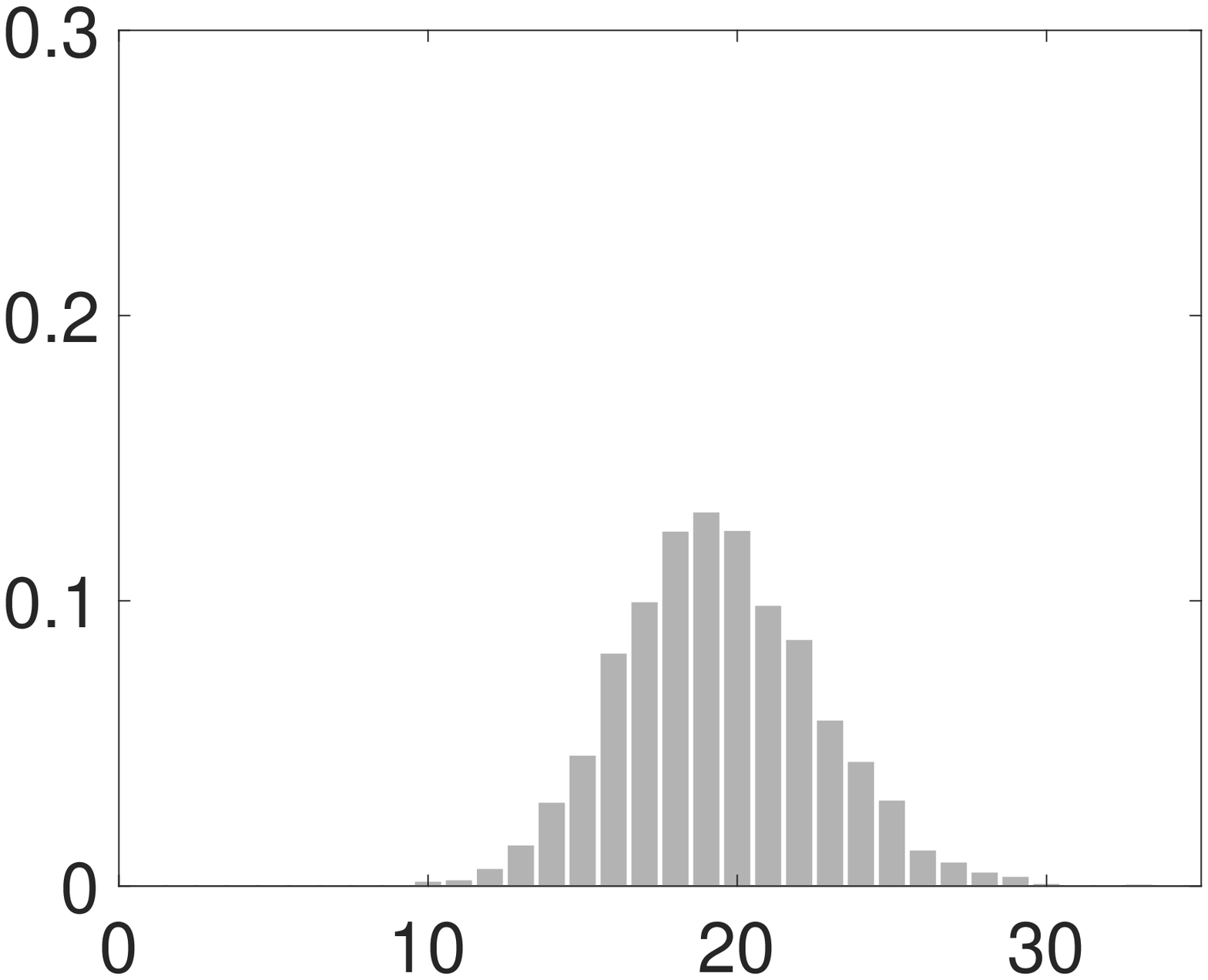}&      \includegraphics[width=3.5cm]{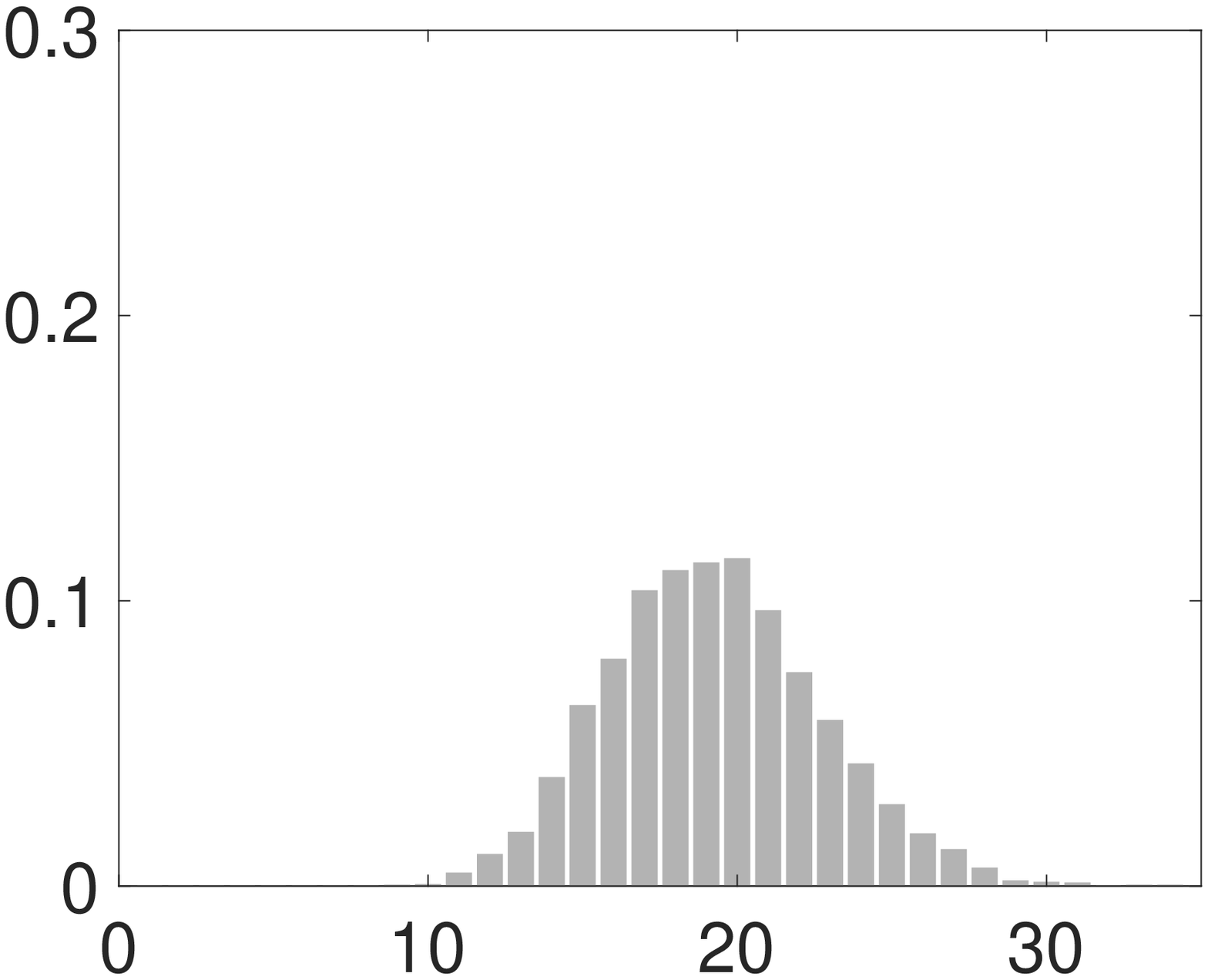}&      \includegraphics[width=3.5cm]{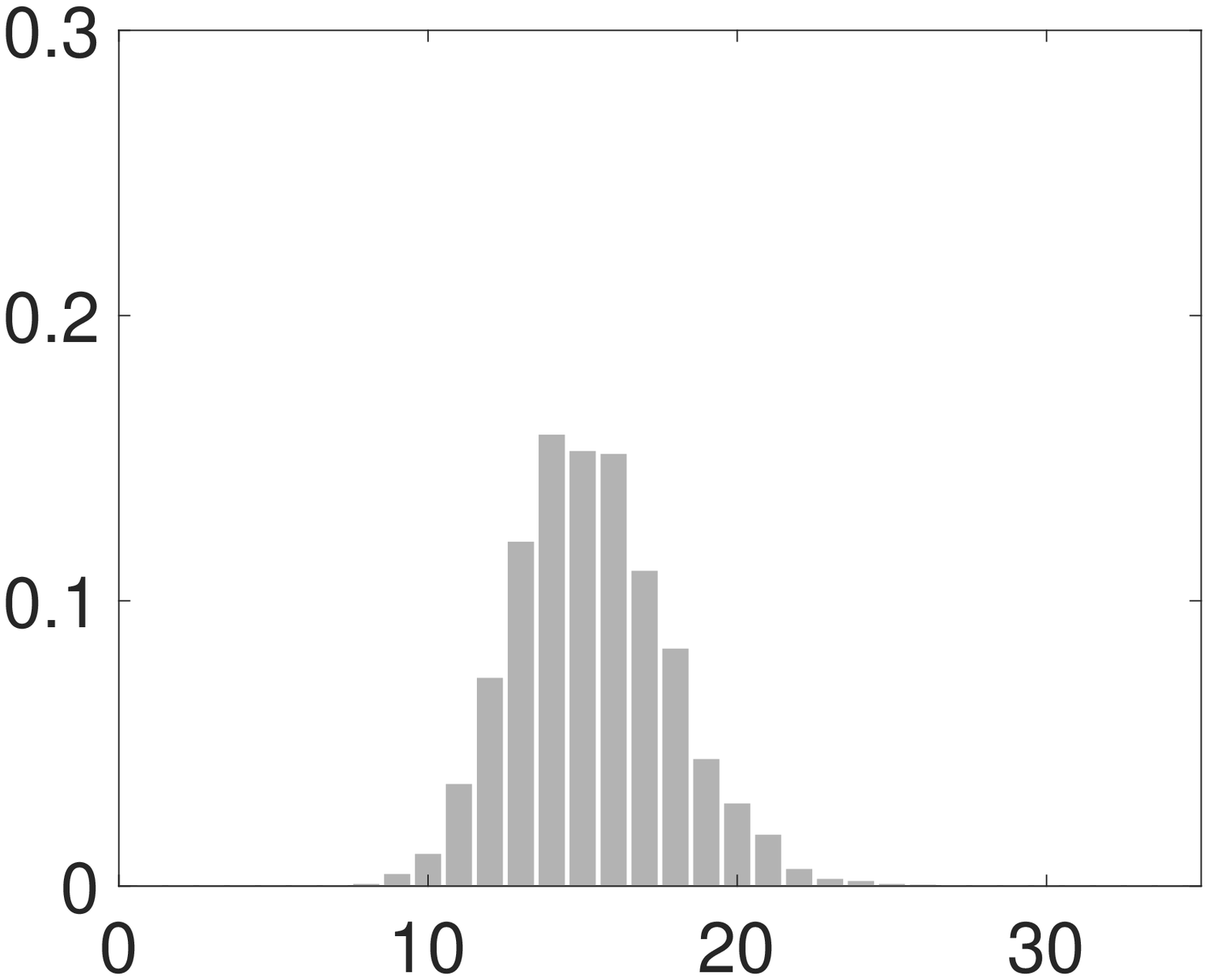}&\\      
      \footnotesize{HDPYP}& \footnotesize{HPYDP}& \footnotesize{HGDP}&\footnotesize{HGPYP}\vspace{-3pt}\\      
      \includegraphics[width=3.5cm]{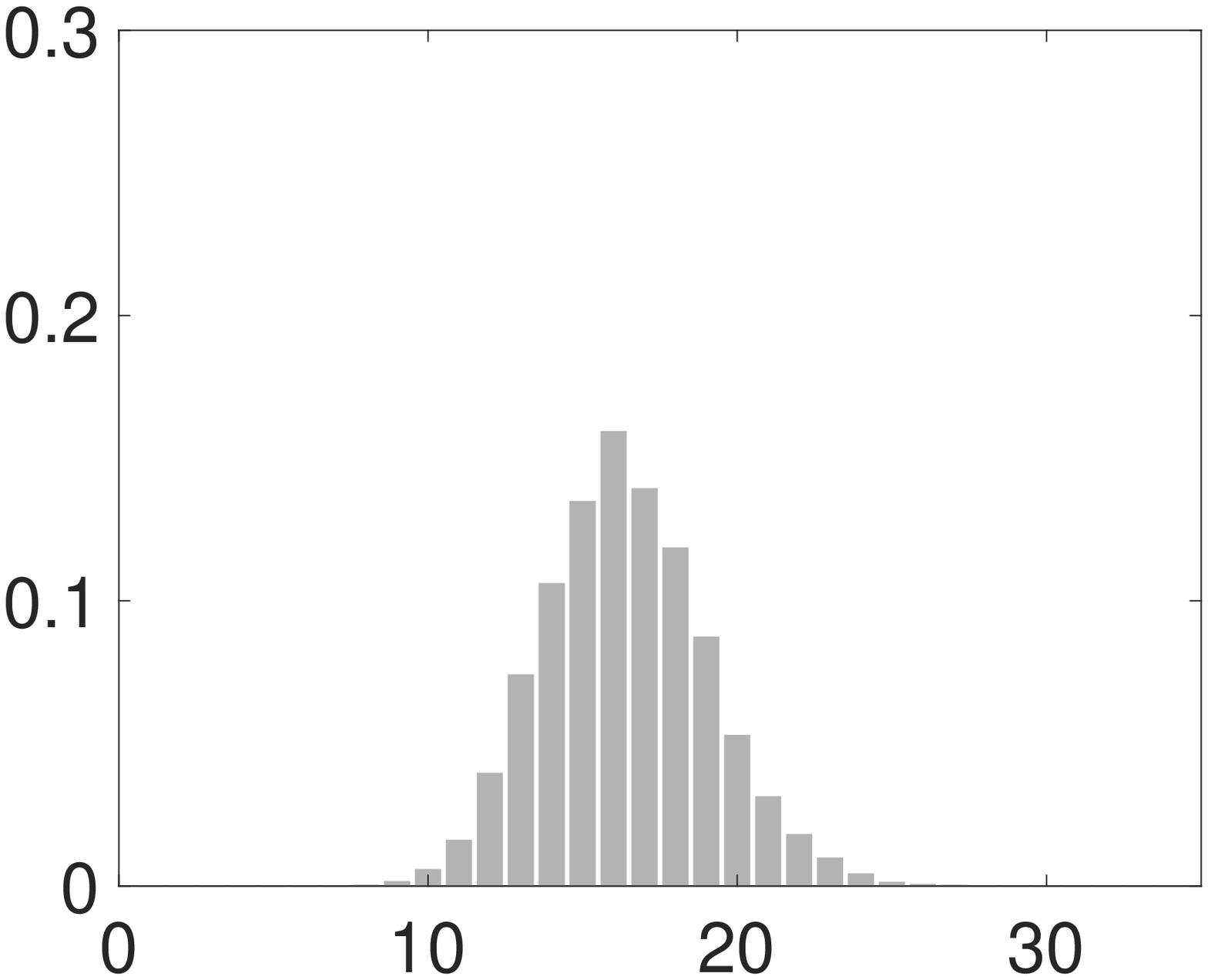}&         \includegraphics[width=3.5cm]{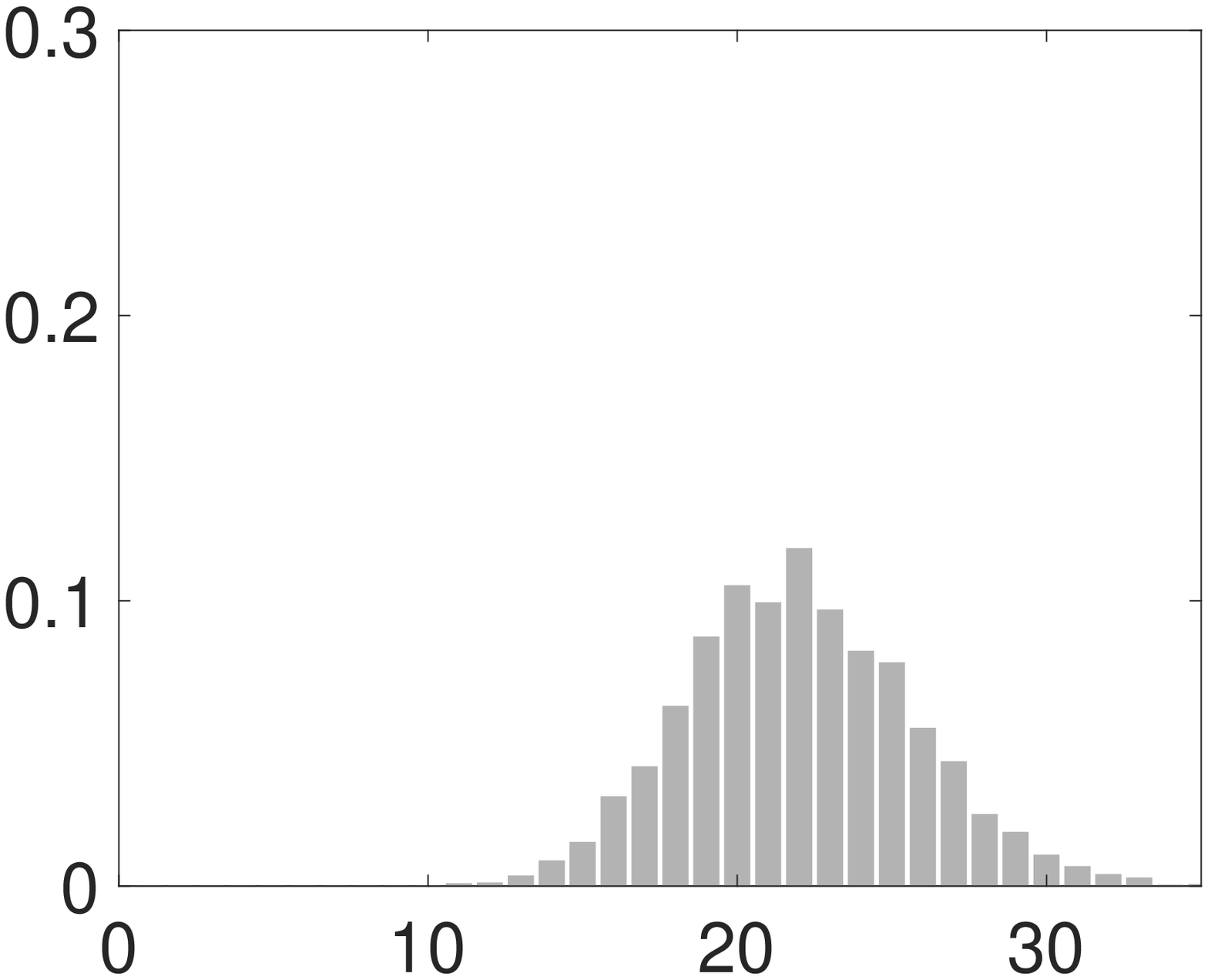}&     
      \includegraphics[width=3.5cm]{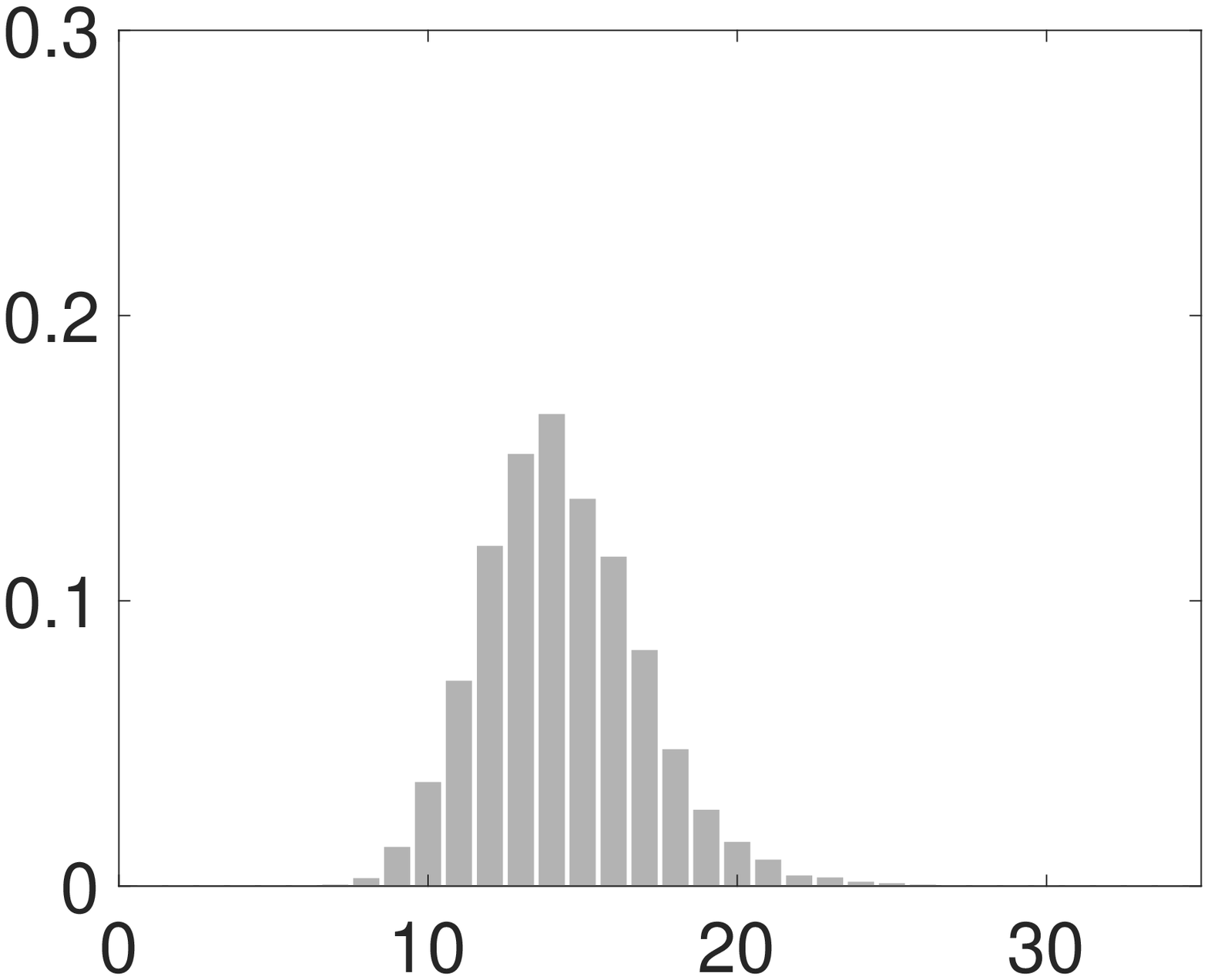}&\includegraphics[width=3.5cm]{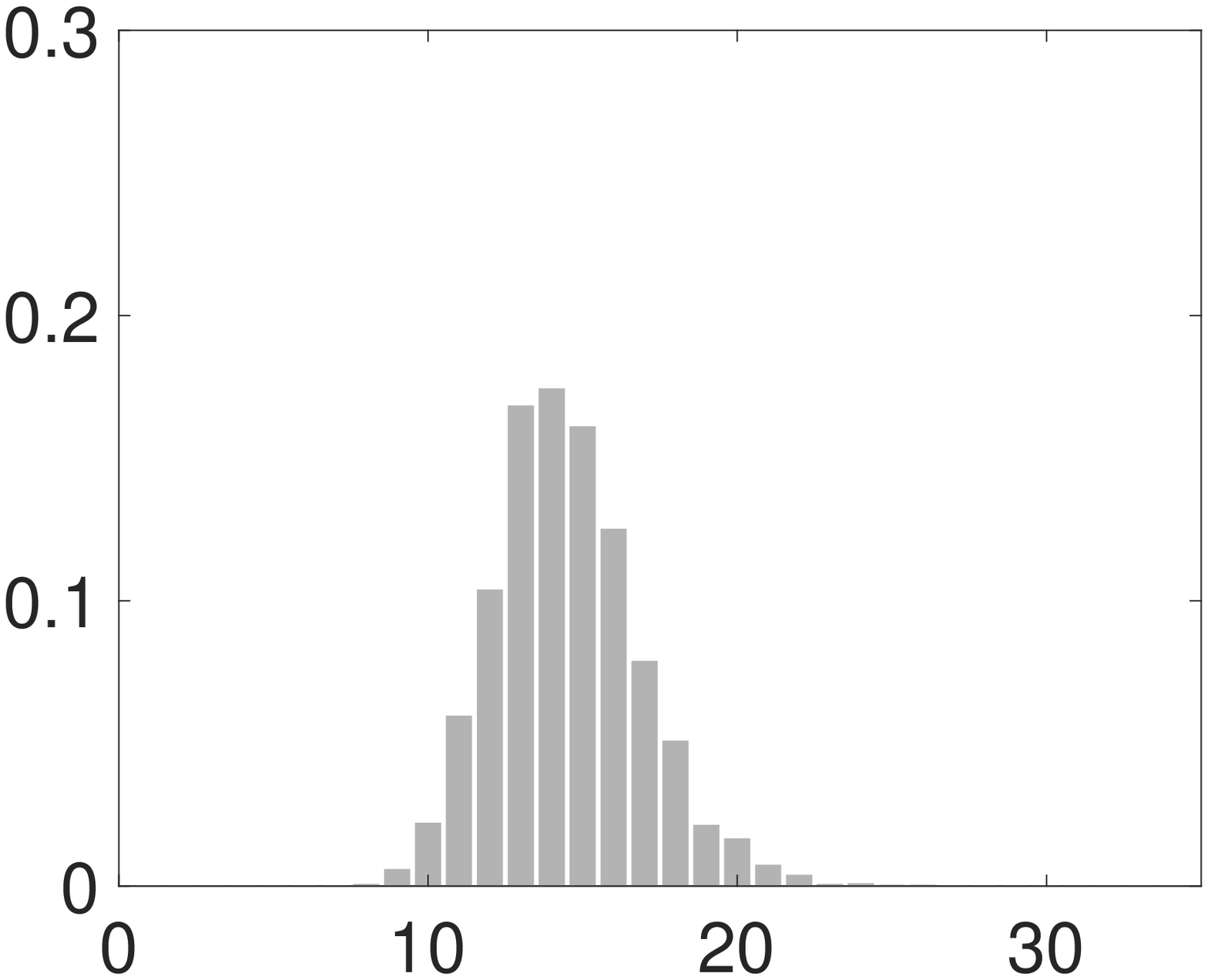}\\
  \end{tabular}
\end{figure}



\end{document}